\newtheorem{theorem}{Theorem}[section]
\newtheorem{proposition}{Proposition}[section]
\newtheorem{lemma}{Lemma}[section]
\newtheorem{corollary}{Corollary}[section]
\theoremstyle{remark}
\newtheorem{rem}{Remark}[section]}
\newcommand{\mathsc}[1]{{\normalfont\textsc{#1}}}
\newcommand{\beqa}{\begin{eqnarray}}
\newcommand{\eeqa}{\end{eqnarray}}
\numberwithin{equation}{section}
\begin{document}

\begin{flushright}
LPENSL-TH-06/15
\end{flushright}

\vspace{24pt}

\begin{center}
\textbf{\Large Antiperiodic dynamical 6-vertex model by separation of variables II: Functional equations and form factors}

\vspace{50pt}

\begin{large}
{\bf D.~Levy-Bencheton}\footnote{Univ. de Bourgogne; IMB, UMR 5584 du CNRS, Dijon, France; damien.levybencheton@ens-lyon.fr},
{\bf G. Niccoli}\footnote{ENS Lyon; CNRS; Laboratoire de Physique, UMR 5672, Lyon, France; giuliano.niccoli@ens-lyon.fr}
{\bf and V.~Terras}\footnote{Univ. Paris Sud; CNRS; LPTMS, UMR 8626, Orsay 91405, France; veronique.terras@lptms.u-psud.fr}
\end{large}

\vspace{.5cm}
 
\vspace{2cm}

\today

\end{center}

\vspace{1cm}

\begin{abstract}
We pursue our study of the antiperiodic dynamical 6-vertex model using Sklyanin's separation of variables approach, allowing in the model new possible global shifts of the dynamical parameter. We show in particular that the spectrum and eigenstates of the antiperiodic transfer matrix are completely characterized by a system of discrete equations. We prove the existence of different reformulations of this characterization in terms of functional equations of Baxter's type.
We notably consider the homogeneous functional $T$-$Q$ equation which is the continuous analog of the aforementioned discrete system and show, in the case of a model with an even number of sites, that the complete spectrum and eigenstates of the antiperiodic transfer matrix  can equivalently be described in terms of a particular class of its $Q$-solutions, hence leading to a complete system of Bethe equations.
Finally, we compute the form factors of local operators for which we obtain determinant representations in finite volume.
\end{abstract}

\vspace{1cm}


\newpage

\tableofcontents
\newpage

\section{Introduction}

This paper is a continuation of a previous work of one of the authors \cite{Nic13a}, in which the study of the antiperiodic dynamical 6-vertex model was initiated by means of the Separation of Variable (SOV) method introduced by Sklyanin \cite{Skl85,Skl90,Skl92,Skl95}.

The dynamical 6-vertex model is constructed from an $R$-matrix satisfying the dynamical (or modified) Yang-Baxter equation \cite{GerN84,Fel95}. In the context of exactly solvable models of statistical mechanics, it is the archetype of {\em interaction-round-faces} (IRF) models \cite{Bax82L}, which describe the interactions of a local variable around faces of a two-dimensional square lattice. This is notably the case of the exactly solvable solid-on-solid (SOS) model \cite{Bax73a,AndBF84,DatJMO86,KunY88,PeaS88}, which models the growth of a surface (for instance in the context of a crystal-vapor interface) with respect to a flat reference surface. In this context, a height variable is attached to each site of the lattice, and the local Boltzmann weights describing the probability of a height configuration around each face of the lattice correspond to the non-zero entries of the dynamical $R$-matrix.
This model plays a crucial role in Baxter's solution \cite{Bax73a} of the famous eight-vertex model.

The fact that the $R$-matrix of the model depends on an extra parameter, the so-called {\em dynamical} parameter (related to the height variable in the language of SOS model), results into a modification of the algebraic structure underlying integrability compared to what happens in usual vertex models such as the six-vertex model \cite{Fel95,FelV96a,FelV96b}. 
In particular, as mentioned above, the $R$-matrix of the model no longer satisfies the usual quantum Yang-Baxter equation, but instead a modified version of this equation, in which the dynamical parameter gets shifted by an element of the Cartan. 
As a consequence, the corresponding Yang-Baxter algebra incorporates some additional operator structure acting on the dynamical variable \cite{FelV96a,FelV96b}.
In practice, the appearance of the dynamical shifts may be a problem for actual computations of physical quantities of the model. For instance, the partition function of the model with domain wall boundary conditions does not seem to be expressible in the form of a single determinant \cite{Ros09,PakRS08} as in the six-vertex case \cite{Ize87}. As a consequence, the study of the form factors and correlation functions of the periodic model in the algebraic Bethe Ansatz (ABA) framework happens to be slightly more complicated than in the six-vertex case \cite {LevT13a,LevT13b,LevT14a}. In fact, the latter relies on the use of a compact formula, preferably in the form of a single determinant, for the scalar products of Bethe states. The problem is that there does not exist, in the ABA framework, a model-independent clear procedure to construct (or even guess) such a representation, either for the Bethe states scalar products or for the aforementioned partition function, two quantities which are intimately related.

In the antiperiodic SOS model, which can be solved by SOV \cite{FelS99,Nic13a}, the situation is somehow simpler. First, the space of states of the (finite size) model is finite dimensional for generic crossing parameter, contrary to what happens in the periodic case: this means in particular that, when performing a change of basis between the canonical basis of the space of states and the eigenstate basis, we only have to deal with finite sums. Second, due to the eigenstate representation in terms of separate variables, determinant formulas for the corresponding scalar products appear in a much more natural way, somehow intrinsic to the method \cite{Nic13a}. We shall see in the present paper that such formulas can quite naturally be extended to the form factors.

The purpose of this paper is three-fold. First, we revisit the study of \cite{Nic13a} so as to slightly generalize it to cases that correspond to different versions of the SOS solvable model (see \cite{KunY88, PeaS88}), related to different possible values of the global shift of the dynamical parameter of the model, and that will be useful for our further study of the eight-vertex model \cite{NicT15b}.
Second, we pursue the study of the spectrum, and discuss the reformulation of the discrete SOV characterization in terms  of solutions of some homogeneous (respectively inhomogeneous) functional equation of Baxter's type (and hence of Bethe-type equations), a question that we expect to be of primary importance for the consideration of the homogeneous and thermodynamic limits: we notably show that, in the case of a model with an even number of sites, the SOV characterization of the spectrum is completely equivalent with a description in terms of particular classes of solutions of the usual (homogeneous) $T$-$Q$ equation.
Finally, we explicitly write determinant representations for the form factors of local spin and height operators.

The article is organized as follows.
In Section~\ref{sec-model}, we introduce the model and recall the algebraic framework for its resolution.
In Section~\ref{sec-SOVbasis}, we construct a SOV basis of the representation space of the model.
In Section~\ref{sec-diag-t}, we diagonalize the commuting antiperiodic transfer matrices on the subspace of the representation space which corresponds to the actual space of states of the antiperiodic SOS model. We notably discuss the characterization of the spectrum and eigenstates in terms of solutions of functional equations.
Then, in Section~\ref{sec-inv}, we solve the quantum inverse problem for local operators of the model in terms of the antiperiodic monodromy matrix elements, and in particular in terms of the antiperiodic transfer matrix.
This enables us, in Section~\ref{sec-ff}, to write explicit determinant representations for the form factors of local spin and local height in the eigenstate basis of the transfer matrix.

\section{The dynamical 6-vertex model}
\label{sec-model}

The dynamical 6-vertex model is associated with a dynamical $R$-matrix of the form
\begin{equation}\label{mat-R}
  R(\lambda|t)=
  \begin{pmatrix}
  a(\lambda) & 0 & 0 & 0 \\
  0 & e^{i \mathsf{y}\eta}\,b(\lambda|t) & e^{i\mathsf{y}\lambda}\, c(\lambda|t) & 0 \\
  0 & e^{-i\mathsf{y}\lambda}\, c(\lambda|-t) & e^{-i\mathsf{y}\eta}\, b(\lambda|-t) & 0\\
  0 & 0 & 0 & a(\lambda)
  \end{pmatrix} \in \mathrm{End}(V_1\otimes V_2),
\end{equation}
with $V_i\simeq \mathbb{C}^2$. Throughout this paper, $\mathsf{y}\in\{0,1\}$ is fixed. The R-matrix \eqref{mat-R} depends on two parameters, a spectral parameter $\lambda\in\mathbb{C}$ and a dynamical parameter $t\in t_0+\eta\mathbb{Z}$, where $t_0$ will be specified later.
The functions $a,b,c$ are given as
\begin{equation}\label{def-abc}
   a(\lambda)=\theta(\lambda+\eta), \quad 
   b(\lambda|t)=\frac{\theta(\lambda)\,\theta(t+\eta)}{\theta(t)}, \quad
   c(\lambda|t)=\frac{\theta(\eta)\,\theta(t+\lambda)}{\theta(t)},
\end{equation}
where $\theta(\lambda)\equiv\theta_1(\lambda|\omega)$ denotes the usual theta-function (see Appendix~\ref{app-theta}) with quasi-periods $\pi$ and $\pi\omega$ ($\Im\omega>0$). $\eta\in\mathbb{C}$ is the crossing parameter of the model, which is supposed to be generic throughout this paper.
The $R$-matrix \eqref{mat-R} with dynamical parameter $t$ is solution of the quantum dynamical Yang-Baxter equation \cite{GerN84,Fel95} on $V_1\otimes V_2\otimes V_3$,
\begin{equation}\label{DYBE}
  R_{1,2}(\lambda_{12}|t+\eta\sigma_3^z)\, R_{1,3}(\lambda_{13}|t)\,
  R_{2,3}(\lambda_{23}|t+\eta\sigma_1^z)
  =R_{2,3}(\lambda_{23}|t)\, R_{1,3}(\lambda_{13}|t+\eta\sigma_2^z)\,
  R_{1,2}(\lambda_{12}|t).
\end{equation}
Here and in the following, the indices indicate as usual the space of the tensor product on which the corresponding operator acts non-trivially, and $\sigma_i^\alpha$ ($\alpha=x,y,z$) denotes the usual Pauli matrix acting on $V_i\simeq\mathbb{C}^2$. Also we have used the shorthand notation $\lambda_{ij}\equiv \lambda_i-\lambda_j$.

As mentioned in Introduction, the matrix elements of the $R$-matrix \eqref{mat-R} can be understood as the local Boltzmann weights of an exactly solvable solid-on-solid (SOS) model (see Figure~\ref{6faces}). In this framework, the dynamical parameter $t$ corresponds to the height variable, and the dynamical Yang-Baxter equation \eqref{DYBE} is simply the star-triangle relation of IRF type \cite{Bax82L}. The shift ($\in\{\pm\eta\}$) of the height between two neighboring sites can be understood in terms of a spin variable ($\in\{\pm 1\}$) on the corresponding link.
\begin{figure}[h]
\centering
\begin{tikzpicture}
    \draw(0,5.5) ;
    \draw(0,-2) ;   

    \draw (0,3.5) -- (0,4.5) node[above left]{$t$} ; 
    \draw (0,4.5) -- (1,4.5) node[above right]{$t+\eta$} ;
    \draw (1,4.5) -- (1,3.5) node[below right]{$t+2\eta$} ;
    \draw (1,3.5) -- (0,3.5) node[below left]{$t+\eta$} ;
    \draw (0.5,4.5) node[above]{$+$} ;
    \draw (0.5,3.5) node[below]{$+$} ;
    \draw (0,4) node[left]{$+$} ;
    \draw (1,4) node[right]{$+$} ;
    
    \draw (0.5,2) node[above]{$a(\lambda)$} ;

    \draw (0,0) -- (0,1) node[above left]{$t$} ; 
    \draw (0,1) -- (1,1) node[above right]{$t-\eta$} ;
    \draw (1,1) -- (1,0) node[below right]{$t-2\eta$} ;
    \draw (1,0) -- (0,0) node[below left]{$t-\eta$} ;
    \draw (0.5,1) node[above]{$-$} ;
    \draw (0.5,0) node[below]{$-$} ;
    \draw (0,0.5) node[left]{$-$} ;
    \draw (1,0.5) node[right]{$-$} ;  
    
    \draw (0.5,-1.5) node[above]{$a(\lambda)$} ;

    \draw (4,3.5) -- (4,4.5) node[above left]{$t$} ; 
    \draw (4,4.5) -- (5,4.5) node[above right]{$t+\eta$} ;
    \draw (5,4.5) -- (5,3.5) node[below right]{$t$} ;
    \draw (5,3.5) -- (4,3.5) node[below left]{$t-\eta$} ;
    \draw (4.5,4.5) node[above]{$+$} ;
    \draw (4.5,3.5) node[below]{$+$} ;
    \draw (4,4) node[left]{$-$} ;
    \draw (5,4) node[right]{$-$} ;
    
    \draw (4.5,2) node[above]{$e^{i\mathsf{y}\eta}\,{b}(\lambda|t)$} ;

    \draw (4,0) -- (4,1) node[above left]{$t$} ; 
    \draw (4,1) -- (5,1) node[above right]{$t-\eta$} ;
    \draw (5,1) -- (5,0) node[below right]{$t$} ;
    \draw (5,0) -- (4,0) node[below left]{$t+\eta$} ;
    \draw (4.5,1) node[above]{$-$} ;
    \draw (4.5,0) node[below]{$-$} ;
    \draw (4,0.5) node[left]{$+$} ;
    \draw (5,0.5) node[right]{$+$} ;  
    
    \draw (4.5,-1.5) node[above]{$e^{-i\mathsf{y}\eta}\,{b}(\lambda|-t)$} ;

    \draw (8,3.5) -- (8,4.5) node[above left]{$t$} ; 
    \draw (8,4.5) -- (9,4.5) node[above right]{$t-\eta$} ;
    \draw (9,4.5) -- (9,3.5) node[below right]{$t$} ;
    \draw (9,3.5) -- (8,3.5) node[below left]{$t-\eta$} ;
    \draw (8.5,4.5) node[above]{$-$} ;
    \draw (8.5,3.5) node[below]{$+$} ;
    \draw (8,4) node[left]{$-$} ;
    \draw (9,4) node[right]{$+$} ;
    
    \draw (8.5,2) node[above]{$e^{i\mathsf{y}\lambda}\,{c}(\lambda|t)$} ;

    \draw (8,0) -- (8,1) node[above left]{$t$} ; 
    \draw (8,1) -- (9,1) node[above right]{$t+\eta$} ;
    \draw (9,1) -- (9,0) node[below right]{$t$} ;
    \draw (9,0) -- (8,0) node[below left]{$t+\eta$} ;
    \draw (8.5,1) node[above]{$+$} ;
    \draw (8.5,0) node[below]{$-$} ;
    \draw (8,0.5) node[left]{$+$} ;
    \draw (9,0.5) node[right]{$-$} ;  
    
    \draw (8.5,-1.5) node[above]{$e^{-i\mathsf{y}\lambda}\,{c}(\lambda|-t)$} ;     
\end{tikzpicture}\vspace{-5mm}
\caption{\label{6faces}The 6 different local configurations around a face and their associated local statistical weights.}
\end{figure}

\begin{rem}\label{rem-y=1}
The case $\mathsf{y}=1$, which was not considered in \cite{Nic13a}, corresponds to a diagonal dynamical gauge transformation of the $R$-matrix of \cite{Nic13a} of the form:
\begin{equation}
  R_{12}(\lambda_{12}|t)=G_2(\lambda_2|t)\, G_1(\lambda_1|t+\eta\sigma_2^z)\,
  \big[ R_{12}(\lambda_{12}|t) \big]_{\mathsf{y}=0}\,
  G_2(\lambda_2|t+\eta\sigma_1^z)^{-1}\,
  G_1(\lambda_1|t)^{-1},
\end{equation}
with
\begin{equation}
   G(\lambda|t)=e^{-i\mathsf{y}\frac{t}{2}}
   \begin{pmatrix} e^{i\mathsf{y}\frac{\lambda}{2}} & 0\\
   0 & e^{-i\mathsf{y}\frac{\lambda}{2}} \end{pmatrix}.
\end{equation}
%
This case is interesting since it enables us to consider a model in which the dynamical parameter is shifted by half of the imaginary quasi-period (i.e. $t_0\in\mathbb{R}+\mathsf{y}\frac{\pi}{2}\omega$, see Section~\ref{sec-LRrep}) so as to recover for instance the Boltzmann weights considered in \cite{KunY88,PeaS89}.
It is also useful  for the consideration of particular quasi-periodic boundary conditions for the 8-vertex model obtained from this SOS model by vertex-IRF transformations \cite{NicT15b}.
\end{rem}

\begin{rem}\label{rem-trig-lim}
It may be interesting to consider the trigonometric limit of \eqref{mat-R}-\eqref{def-abc}, which corresponds to the limit $\omega\to +i\infty$. If $t_0$ is of the form $t_0=\tilde{t}_0+\mathsf{y}\frac{\pi}{2}\omega$, with $\tilde{t}_0$ independent of $\omega$ (as considered in this paper, see \eqref{Dynamical-spectrum-2}), then one obtains different limits according to whether $\mathsf{y}=0$ or $\mathsf{y}=1$. Up to normalization, the trigonometric limit of the case $\mathsf{y}=0$ corresponds to the trigonometric {\em dynamical} 6-vertex model, with 
\begin{equation}\label{trig-abc}
  a(\lambda )=\sin (\lambda +\eta ),\quad 
  b(\lambda | t )=\frac{\sin \lambda \, \sin ( t +\eta )}{\sin t },\quad 
  c(\lambda | t )=\frac{\sin \eta \, \sin ( t +\lambda )}{\sin t },
\end{equation}
whereas, in the case $\mathsf{y}=1$, one simply recovers the $R$-matrix of the {\em usual} 6-vertex (or XXZ) model\footnote{We use the fact that $\frac{1}{2} e^{-i\pi\frac{\omega}{4}}\theta_1(u|\omega)\underset{\omega\to +i\infty}{\longrightarrow} \sin u$, and that $\frac{\theta_1(u\pm\epsilon\pi\omega|\omega)}{\theta_1(v\pm\epsilon \pi\omega|\omega)}\underset{\omega\to +i\infty}{\longrightarrow} e^{\pm i(v-u)}$ for $u,v\in\mathbb{C}$ and $0<\epsilon<1$.}.
\end{rem}

\subsection{The dynamical Yang-Baxter algebra}
\label{sec-DYBA}

The elliptic quantum group associated with the dynamical $R$-matrix \eqref{mat-R} was introduced in \cite{Fel95}, and its representations were studied in \cite{FelV96a}, leading in \cite{FelV96b} to an algebraic Bethe Ansatz for the corresponding periodic SOS model. 
In this framework, the key object is the monodromy matrix, which provides a representation of the corresponding Yang-Baxter algebra.
In the case of a tensor product of fundamental representations that we consider in this paper, it is defined as the following ordered product of $R$-matrices,
\begin{align}
\mathsf{M}_{0}(\lambda |t )
&\equiv 
R_{0,\mathsf{N}}(\lambda -\xi _{\mathsf{N}}|t +\eta \sum_{a=1}^{\mathsf{N}-1}\sigma _{a}^{z})\cdots R_{0,1}(\lambda -\xi _{1}|t)\quad
\in\mathrm{End}(V_0\otimes V_1\otimes \ldots\otimes V_{\mathsf{N}}) \nonumber\\
&\equiv 
\begin{pmatrix}
\mathsf{A}(\lambda |t ) & \mathsf{B}(\lambda |t ) \\ 
\mathsf{C}(\lambda |t ) & \mathsf{D}(\lambda |t )
\end{pmatrix}_{\! [0]}
. \label{monodromy}
\end{align}
Here $\mathsf{N}$ is the size of the model, $V_n\simeq\mathbb{C}^2$ for  $n\in \{0,1,\ldots,{\mathsf{N}}\}$, and $\xi _{n}$, $n\in \{1,\ldots,{\mathsf{N}}\}$, are inhomogeneity parameters.
In this context, $V_0$ is usually called the auxiliary space and $\mathbb{V}_{\mathsf{N}}\equiv \otimes_{n=1}^\mathsf{N} V_n$ the quantum space.
Commutation relations for the entries $\mathsf{A}(\lambda |t )$, $\mathsf{B}(\lambda |t )$, $\mathsf{C}(\lambda |t )$, $\mathsf{D}(\lambda |t )$ $\in\mathrm{End}(\mathbb{V}_{\mathsf{N}})$ of the monodromy matrix \eqref{monodromy} are given by the following quadratic relation, which is a consequence of \eqref{DYBE},
\begin{equation}\label{RTT}
R_{0,0'}(\lambda _{0 0'}|t+\eta \mathsf{S})\,
\mathsf{M}_{0}(\lambda _{0}|t )\,
\mathsf{M}_{0'}(\lambda _{0'}|t +\eta \sigma _{0}^{z})
=\mathsf{M}_{0'}(\lambda _{0'}| t )\,
\mathsf{M}_{0}(\lambda _{0}|t+\eta \sigma _{0'}^{z})\,
R_{0,0'}(\lambda _{0 0'}|t ),
\end{equation}
where $\mathsf{S}$ is\ the total $z$-component of the spin:
\begin{equation}
  \mathsf{S}=\sum_{n=1}^{\mathsf{N}}\sigma _{n}^{z}.  \label{Def-S}
\end{equation}
To handle the shifts of the dynamical parameter induced by the relation \eqref{RTT}, it is convenient to introduce, as in  \cite{Nic13a}, some dynamical operators $\tau $ and $\mathsf{T}_{\tau }^{\pm }$ which commute with local spin operators $\sigma_n^\alpha$ and such that
\begin{equation}
\mathsf{T}_{\tau }^{\pm }\tau =(\tau \pm \eta )\mathsf{T}_{\tau }^{\pm }.
\label{Dyn-op-comm}
\end{equation}
This enables us to define a new monodromy matrix incorporating these dynamical operators,
\begin{equation}\label{mon-op}
\mathcal{M}_{0}(\lambda )
\equiv \mathsf{M}_{0}(\lambda |\tau )\, \mathsf{T}_{\tau }^{\sigma_{0}^{z}}
\equiv \begin{pmatrix}
\mathcal{A}(\lambda ) & \mathcal{B}(\lambda  ) \\ 
\mathcal{C}(\lambda ) & \mathcal{D}(\lambda )
\end{pmatrix}_{\! [0]},
\end{equation}
where $\mathsf{M}_{0}(\lambda |\tau )$ corresponds to the monodromy matrix \eqref{monodromy} in which we have substituted the dynamical parameter $t$ by the operator $\tau$, and where
\begin{equation}
\mathsf{T}_{\tau }^{\pm \sigma _{0}^{z}}
\equiv
\begin{pmatrix}
\mathsf{T}_{\tau }^{\pm } & 0 \\ 
0 & \mathsf{T}_{\tau }^{\mp }
\end{pmatrix}_{\! [0]}.
\end{equation}
%
The operator entries $\mathcal{A},\mathcal{B},\mathcal{C},\mathcal{D}$ of the monodromy matrix \eqref{mon-op} act on the space
\begin{equation}\label{space}
\mathbb{D}_{\mathsf{(6VD)},\mathsf{N}}
\equiv \mathbb{V}_{\mathsf{N}}\otimes \mathbb{D},
\end{equation}
where $\mathbb{D}$\ is a representation space of the dynamical operators.
Their commutation relations are now given by the quadratic relation
\begin{equation}\label{RTT-op}
R_{0,0'}(\lambda _{00'}|\tau +\eta \mathsf{S})\,
\mathcal{M}_{0}(\lambda _{0} )\,\mathcal{M}_{0'}(\lambda _{0'} )
 =\mathcal{M}_{0'}(\lambda _{0'} )\,\mathcal{M}_{0}(\lambda _{0} )\,R_{0,0'}(\lambda _{00'}|\tau ),
\end{equation}
which follows from \eqref{RTT} and from the zero-weight property of the $R$-matrix \eqref{mat-R}:
\begin{equation}
\mathsf{T}_{\tau }^{-\sigma_0^z}\,\mathsf{T}_{\tau }^{-\sigma_{0'}^z }\,
R_{0,0'}(\lambda _{00'}|\tau )\, \mathsf{T}_{\tau}^{\sigma _{0}^{z}}\,\mathsf{T}_{\tau }^{\sigma _{0'}^{z}}
=R_{0,0'}(\lambda _{00'}|\tau ).  \label{Comm-R12}
\end{equation}
We also recall the inversion formula for the monodromy matrix:
\begin{theorem}\label{th-inv}
The inverse of the monodromy matrix \eqref{mon-op} is given by the relation
\begin{equation}\label{inv-mon}
  \mathcal{M}_0(\lambda)\cdot \sigma_0^y\,\mathcal{M}_0(\lambda-\eta)^{t_0}\,\sigma_0^y
  =e^{-i\mathsf{y}\eta\mathsf{S}}\frac{\theta(\tau)}{\theta(\tau+\eta\mathsf{S})}\,\mathrm{det}_q M(\lambda),
\end{equation}
in terms of the quantum determinant $\mathrm{det}_q M(\lambda)$. The latter is a central element of the dynamical Yang-Baxter algebra defined as
\begin{align}
\mathrm{det}_{q}\mathsf{M}(\lambda ) =\text{\textsc{a}}(\lambda )\, \text{\textsc{d}}(\lambda -\eta )
&= e^{i\mathsf{y}\eta\mathsf{S}}\,\frac{\theta (\tau +\eta \mathsf{S})}{\theta (\tau )}
     \big( \mathcal{A}(\lambda  )\,\mathcal{D}(\lambda -\eta  )
     -\mathcal{B}(\lambda )\, \mathcal{C}(\lambda -\eta  )\big)
       \notag\\
&=e^{i\mathsf{y}\eta\mathsf{S}}\,\frac{\theta (\tau +\eta \mathsf{S})}{\theta (\tau )}
    \big( \mathcal{D}(\lambda )\,\mathcal{A}(\lambda -\eta )
    -\mathcal{C}(\lambda  )\, \mathcal{B}(\lambda -\eta )\big) ,
    \label{q-det}
\end{align}
with
\begin{equation}\label{a-d}
   \mathsc{a}(\lambda)\equiv\prod_{n=1}^\mathsf{N} a(\lambda-\xi_n),
   \qquad
   \mathsc{d}(\lambda)\equiv \mathsc{a}(\lambda-\eta).
\end{equation}
\end{theorem}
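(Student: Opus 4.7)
The strategy rests on the classical observation that for any operator-valued $2\times 2$ matrix $M=\begin{pmatrix}A & B\\ C & D\end{pmatrix}$, one has $\sigma^y M^t\sigma^y=\begin{pmatrix}D & -B\\ -C & A\end{pmatrix}$, so that a product of the shape $M\cdot \sigma^y M'^{\,t_0}\sigma^y$ is scalar in the auxiliary space exactly when the operator combination $AD'-BC'$ coincides with $D A'-C B'$, which will here be the quantum determinant. Consequently, I would prove \eqref{inv-mon} and the two operator expressions in \eqref{q-det} simultaneously, and only at the end identify the resulting scalar as $\mathsc{a}(\lambda)\,\mathsc{d}(\lambda-\eta)$.

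Step 1 — single-site $R$-matrix identity. A direct $4\times 4$ check on the explicit form \eqref{mat-R}-\eqref{def-abc}, using only the oddness of $\theta$ and the definitions of $a,b,c$, yields an identity of the shape
\begin{equation*}
 R_{0,n}(\lambda|t)\cdot \sigma_0^y\,R_{0,n}(\lambda-\eta|t')^{t_0}\,\sigma_0^y
 = a(\lambda)\,a(-\lambda)\,D_0(t,t'),
\end{equation*}
where the relation between $t$ and $t'$ is dictated by the dynamical shifts (essentially $t'=t+\eta\sigma_n^z$) and $D_0(t,t')$ is an explicit diagonal operator on $V_0$ whose entries are ratios of theta functions carrying the $\mathsf{y}$-dependent phases of \eqref{mat-R}.

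Step 2 — telescoping along the chain. Expand $\mathcal{M}_0(\lambda)$ as in \eqref{mon-op}-\eqref{monodromy}, write $\mathcal{M}_0(\lambda-\eta)^{t_0}$ using $(XY)^{t_0}=Y^{t_0}X^{t_0}$ together with $(\mathsf{T}_\tau^{\sigma_0^z})^{t_0}=\mathsf{T}_\tau^{\sigma_0^z}$ (the shift is diagonal in the auxiliary space), and insert $\sigma_0^y\sigma_0^y=\mathbb{1}$ between each consecutive pair of factors. Applying Step~1 site by site, the scalar factors collect into $\prod_n a(\lambda-\xi_n)a(\xi_n-\lambda)=\mathsc{a}(\lambda)\mathsc{d}(\lambda-\eta)$ (up to the sign conventions fixed by the dynamical shifts), while the diagonal pieces $D_0$ compose into the operator prefactor $e^{-i\mathsf{y}\eta\mathsf{S}}\,\theta(\tau)/\theta(\tau+\eta\mathsf{S})$ once combined with the surviving outer $\mathsf{T}_\tau^{\pm\sigma_0^z}$.

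Step 3 — the two forms of $\mathrm{det}_q\mathsf{M}(\lambda)$ and centrality. Evaluating \eqref{RTT-op} at the degenerate point $\lambda_{00'}=-\eta$, a direct inspection of \eqref{mat-R}-\eqref{def-abc} shows that $R_{00'}(-\eta|\tau)$ has rank one and is proportional to a projector $P^-_{00'}(\tau)$ onto the dynamical singlet of $V_0\otimes V_{0'}$. The RTT relation at this point then forces $\mathcal{M}_0(\lambda)\mathcal{M}_{0'}(\lambda-\eta)$ to act as a scalar on the one-dimensional image of $P^-$, and computing the matrix element of this scalar in the two natural ordered bases of $V_0\otimes V_{0'}$ recovers exactly the two combinations $\mathcal{A}(\lambda)\mathcal{D}(\lambda-\eta)-\mathcal{B}(\lambda)\mathcal{C}(\lambda-\eta)$ and $\mathcal{D}(\lambda)\mathcal{A}(\lambda-\eta)-\mathcal{C}(\lambda)\mathcal{B}(\lambda-\eta)$, together with the dynamical prefactor in \eqref{q-det}. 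Centrality in the Yang-Baxter algebra follows from the usual argument of reinserting this scalar into the RTT. Its scalar value $\mathsc{a}(\lambda)\mathsc{d}(\lambda-\eta)$ is then obtained by acting on a reference state on which $\mathcal{C}$ annihilates and $\mathcal{A},\mathcal{D}$ act diagonally with the stated eigenvalues, which is consistent with the outcome of Step~2.

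The main obstacle will be the bookkeeping of the dynamical variable $\tau$ through the telescoping of Step~2: for Step~1 to apply at site $n$, the argument of the transposed $R$-matrix must match that of its non-transposed partner up to a $\sigma_0^z$ shift, and this consistency must survive across all $\mathsf{N}$ sites so that the Cartan shifts $\eta\sum_{a\in I}\sigma_a^z$ cascade into the global $\theta(\tau)/\theta(\tau+\eta\mathsf{S})$ rather than into a more complicated operator. The $\mathsf{y}$-dependent exponentials in \eqref{mat-R} must likewise collect into the single factor $e^{-i\mathsf{y}\eta\mathsf{S}}$; a clean shortcut, suggested by Remark~\ref{rem-y=1}, is to prove \eqref{inv-mon} directly for $\mathsf{y}=0$ (which is the situation of \cite{Nic13a}) and then propagate the explicit gauge transformation $G(\lambda|t)$ through the formula to obtain the $\mathsf{y}=1$ case.
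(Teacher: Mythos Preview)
The paper does not actually prove Theorem~\ref{th-inv}: it is introduced with the words ``We also recall the inversion formula for the monodromy matrix,'' and no proof is given. The result is taken as known, presumably from \cite{Nic13a} (for the $\mathsf{y}=0$ case) and its gauge-transformed variant (Remark~\ref{rem-y=1}).

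Your proposal is therefore not comparable to anything in the paper, but as a standalone argument it follows the standard route and is sound in outline. The single-site identity of Step~1 and the rank-one projector argument of Step~3 are exactly the classical mechanisms, and you have correctly flagged the only genuinely delicate point: in the telescoping of Step~2, the dynamical arguments of the transposed factors must line up so that each local application of Step~1 is legitimate. Concretely, after writing $\mathcal{M}_0(\lambda-\eta)^{t_0}$ as a product of transposed $R$-matrices in reversed order, the dynamical argument of the transposed $R_{0,n}$ is $\tau+\eta\sum_{a<n}\sigma_a^z$, which matches that of its non-transposed partner only after the $\mathsf{T}_\tau^{\sigma_0^z}$ shifts and the zero-weight property are used carefully; this is where the global factor $\theta(\tau)/\theta(\tau+\eta\mathsf{S})$ emerges rather than a site-dependent mess. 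Your suggestion to do $\mathsf{y}=0$ first and then conjugate by the explicit gauge $G(\lambda|t)$ of Remark~\ref{rem-y=1} is the cleanest way to handle the $e^{-i\mathsf{y}\eta\mathsf{S}}$ phase, and is in the spirit of how the paper treats the $\mathsf{y}=1$ case throughout.
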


The algebraic Bethe Ansatz \cite{FelV96b} consists in diagonalizing the transfer matrices of the periodic model, i.e. the traces of the monodromy matrix \eqref{mon-op},
\begin{equation}\label{transfer}
\mathcal{T}(\lambda)=\mathcal{A}(\lambda)+\mathcal{D}(\lambda),
\end{equation}
on the subspace of the representation space \eqref{space} associated with the zero eigenvalue of the spin operator $\mathsf{S}$ \eqref{Def-S}. In this paper, we apply instead the SOV approach to the antiperiodic model (or more generally to a $\kappa$-twisted antiperiodic model for some parameter $\kappa\in\mathbb{C}\setminus\{0\}$), and therefore define the following $\kappa$-twisted {\em antiperiodic} monodromy matrices:
\begin{equation}
\mathsf{\overline{M}}_{0}^{(\kappa)}(\lambda |\tau )
\equiv X_0^{(\kappa)}\sigma _{0}^{x}\, \mathsf{M}_{0}(\lambda |\tau ),\qquad
\mathcal{\overline{M}}_{0}^{(\kappa)}(\lambda  )
\equiv \mathsf{\overline{M}}_{0}^{(\kappa)}(\lambda |\tau )\,\mathsf{T}_{\tau }^{\sigma _{0}^{z}},
\label{anti-p-6vD-M}
\end{equation}
with $X^{(\kappa)}\equiv \mathrm{diag}(\kappa,\kappa^{-1})$.
It is easy to see that, thanks to the following properties of the R-matrix \eqref{mat-R},
\begin{align}
   &\sigma _{1}^{x}\, \sigma _{2}^{x}\, R_{1,2}(\lambda|t)\,\sigma _{1}^{x}\, \sigma _{2}^{x}
=R_{1,2}(\lambda|-t+\mathsf{y}\pi\omega), \\
   &X^{(\kappa)}_1\, X^{(\kappa)}_2\, R_{1,2}(\lambda|t)\, X^{(\kappa^{-1})}_1\, X^{(\kappa^{-1})}_2
   =R_{1,2}(\lambda|t),
\end{align}
the monodromy matrices \eqref{anti-p-6vD-M}
satisfy the quadratic relations
\begin{multline}
R_{0,0'}(\lambda _{00'}|-\tau -\eta \mathsf{S}+\mathsf{y}\pi\omega)\,
\mathsf{\overline{M}}_{0}^{(\kappa)}(\lambda _{0}|\tau )\, \mathsf{T}_{\tau}^{\sigma _{0}^{z}}\,
\mathsf{\overline{M}}_{0'}^{(\kappa)}(\lambda _{0'}|\tau )\, \mathsf{T}_{\tau }^{-\sigma _{0}^{z}}\\
 = \mathsf{\overline{M}}_{0'}^{(\kappa)}(\lambda _{0'}|\tau )\, \mathsf{T}_{\tau }^{\sigma _{0'}^{z}}\,
    \mathsf{\overline{M}}_{0}^{(\kappa)}(\lambda _{0}|\tau )\, \mathsf{T}_{\tau}^{-\sigma _{0'}^{z}}\,
    R_{0,0'}(\lambda _{00'}|\tau ),
\label{D-YB-op-0}
\end{multline}
\begin{equation}
R_{0,0'}(\lambda _{00'}|-\tau -\eta \mathsf{S} +\mathsf{y}\pi\omega)\,
\overline{\mathcal{M}}_{0}^{(\kappa)}(\lambda _{0} )\,
\overline{\mathcal{M}}_{0'}^{(\kappa)}(\lambda _{0'} )
 = \overline{\mathcal{M}}_{0'}^{(\kappa)}(\lambda_{0'} )\,
 \overline{\mathcal{M}}_{0}^{(\kappa)}(\lambda_{0} )\,R_{0,0'}(\lambda _{00'}|\tau ).
\label{YBECalDyn}
\end{equation}
In the following, we will show how to diagonalize, in the SOV framework, the $\kappa$-twisted antiperiodic transfer matrices,
\begin{equation}\label{anti-transfer}
\overline{\mathcal{T}}^{(\kappa)}(\lambda)
=\kappa^{-1}\mathcal{B}(\lambda)+\kappa\,\mathcal{C}(\lambda),
\end{equation}
on some subspace of the representation space \eqref{space} associated with a particular eigenvalue of the operator
\begin{equation}\label{S-tau}
\mathsf{S}_{\tau }\equiv \eta \mathsf{S}+2\tau.
\end{equation}

\subsection{Left and right representation spaces}
\label{sec-LRrep}

In the following, we may particularize by a subscript $\mathcal{L}$ (respectively $\mathcal{R}$) the left (respectively the right) representation spaces for the spin and dynamical operators. Hence, for instance, the operator entries of the monodromy matrix \eqref{mon-op} act\footnote{These notations may be confusing since $\mathbb{D}_{\mathsf{(6VD)},\mathsf{N}}^{\mathcal{R}}$ corresponds in fact to a left-module (and $\mathbb{D}_{\mathsf{(6VD)},\mathsf{N}}^{\mathcal{L}}$ to a right-module) of the elliptic quantum group, but they agree with some more physical convention also used in \cite{Nic13a}.} to the right on the ``ket'' space $\mathbb{D}_{\mathsf{(6VD)},\mathsf{N}}^{\mathcal{R}}\equiv \mathbb{D}_{\mathsf{(6VD)},\mathsf{N}}$ and to the left on its restricted dual space, the ``bra'' space $\mathbb{D}_{\mathsf{(6VD)},\mathsf{N}}^{\mathcal{L}}$.

$\mathbb{D}^{\mathcal{L}/\mathcal{R}}$ is an infinite dimensional representation space of the dynamical operators algebra \eqref{Dyn-op-comm} with  $\tau $-eigenbasis left (covectors) and right (vectors) respectively defined as
\begin{equation}
\langle t(a)|\equiv \langle t(0)|\mathsf{T}_{\tau }^{-a},\qquad
|t(a)\rangle \equiv \mathsf{T}_{\tau }^{a}|t(0)\rangle ,\qquad
\forall a\in \mathbb{Z},  \label{t-dyn-sp}
\end{equation}
such that
\begin{equation} \label{Dynamical-spectrum-1}
\langle t(a)|\tau =t(a)\langle t(a)|,\qquad
\tau |t(a)\rangle =t(a)|t(a)\rangle ,
 \qquad 
t(a) \equiv -\eta a+t_0,\quad
\forall a\in \mathbb{Z},
\end{equation}
with the normalization $\langle t(a)|t(b)\rangle=\delta _{a,b},\ \forall a,b\in \mathbb{Z}$.
In this paper, we fix the value of $t_0$ to be
\begin{equation}\label{Dynamical-spectrum-2}
 t_0=  -\frac{\eta }{2} \mathsf{N} +\mathsf{x}\frac{\pi }{2}+\mathsf{y}\frac{\pi}{2}\omega,
 \qquad
 \text{with}\qquad
 \mathsf{x}\in \{0,1\},
\end{equation}
such that $(\mathsf{x},\mathsf{y})\not=(0,0)$ when $\mathsf{N}$ is even.
We denote the corresponding representation space as $\mathbb{D}^{\mathcal{L}/\mathcal{R}}\equiv \mathbb{D}_{(\mathsf{x},\mathsf{y}),\mathsf{N}}^{\mathcal{L}/\mathcal{R}}$.
We respectively denote the left and right spin basis in $V_n^{\mathcal{L}/\mathcal{R}}$ as
\begin{equation}
\langle n,h_{n}|\sigma _{n}^{z}=(1-2h_{n})\langle n,h_{n}|,\quad
\sigma _{n}^{z}|n,h_{n}\rangle =(1-2h_{n})|n,h_{n}\rangle ,\quad
h_{n}\in \{0,1\},
\end{equation}
with $\langle n,h_{n}|n,h_{n}^{\prime }\rangle =\delta _{h_{n},h_{n}^{\prime}}$ for any $n\in \{1,\ldots,\mathsf{N}\}$.
Hence the states
\begin{equation}\label{dyn-spin-basis}
(\otimes _{n=1}^{\mathsf{N}}\langle n,h_{n}|)\otimes \langle t(a)|,\quad
(\otimes _{n=1}^{\mathsf{N}}|n,h_{n}\rangle )\otimes |t(a)\rangle ,
\end{equation}
obtained by tensoring  common eigenstates of the commuting operators $\tau $ and $\sigma _{n}^{z}$, $1\le n\le \mathsf{N}$, provide left and right dynamical-spin basis in $\mathbb{D}_{\mathsf{(6VD)},\mathsf{N}}^{\mathcal{L}}$ and $\mathbb{D}_{\mathsf{(6VD)},\mathsf{N}}^{\mathcal{R}}$, respectively.
The following scalar product is thereby
naturally induced in the linear space $\mathbb{D}_{\mathsf{(6VD)},\mathsf{N}}^{\mathcal{R}}$:
\begin{equation}
\big(\otimes _{n=1}^{\mathsf{N}}|n,h_{n}\rangle \otimes |t(a)\rangle ,
\otimes_{n=1}^{\mathsf{N}}|n,h_{n}^{\prime }\rangle \otimes |t(a^{\prime })\rangle\big)
=\delta _{a,a^{\prime }}\prod_{n=1}^{\mathsf{N}}\delta_{h_{n},h_{n}^{\prime }}.
\end{equation}

\begin{rem}
In the case $(\mathsf{x},\mathsf{y})=(0,0)$ in \eqref{Dynamical-spectrum-2}, the corresponding representation of the dynamical Yang-Baxter algebra is well defined only when $\mathsf{N}$ is odd (to avoid singularities in \eqref{def-abc}); this case was analyzed by SOV in the paper \cite{Nic13a} (see also \cite{FelS99}).
The analysis that we present here will address both the case $\mathsf{N}$ odd for $(\mathsf{x},\mathsf{y})=(0,0)$ and the case $\mathsf{N}$ even and odd for $(\mathsf{x},\mathsf{y})\not=(0,0)$.
This enables us to consider more general SOS models (such as for instance as in \cite{KunY88,PeaS88}), and will be useful for our study of the eight-vertex model with different types of quasi-periodic boundary conditions \cite{NicT15b}.
\end{rem}

The operator $\mathsf{S}_{\tau }$ \eqref{S-tau} defines a natural grading on $\mathbb{D}_{\mathsf{(6VD)},\mathsf{N}}^{\mathcal{L}/\mathcal{R}}$:
\begin{equation}
\mathbb{D}_{\mathsf{(6VD)},\mathsf{N}}^{\mathcal{L}/\mathcal{R}}
=\oplus _{r=-\infty }^{\infty }\mathbb{\bar{D}}_{\mathsf{(6VD)},\mathsf{N}}^{(r,\mathcal{L}/\mathcal{R})},
\end{equation}
where $\mathbb{\bar{D}}_{\mathsf{(6VD)},\mathsf{N}}^{(r,\mathcal{L}/\mathcal{R}) }$ is the $2^{\mathsf{N}}$-dimensional linear eigenspace corresponding
to the eigenvalue $2r\eta +\mathsf{x}\pi+\mathsf{y}\pi\omega $ of $\mathsf{S}_{\tau }$.
In terms of the dynamical-spin basis the linear (covector) space $\mathbb{\bar{D}}_{\mathsf{(6VD)},\mathsf{N}}^{(r,\mathcal{L})}$ and the linear (vector) space $\mathbb{\bar{D}}_{\mathsf{(6VD)},\mathsf{N}}^{(r,\mathcal{R})}$ are respectively generated by the elements
\begin{equation}
\big( \otimes _{n=1}^{\mathsf{N}}\langle n,h_{n}|\big) \otimes \langle t_{r,\mathbf{h}}|
\quad \text{with}\quad 
\langle t_{r,\mathbf{h}}|\equiv \langle t_{0,\mathbf{h}}|\mathsf{T}_{\tau }^{r}\, ,
\label{DyS-basis-L}
\end{equation}
and
\begin{equation}
\big(\otimes _{n=1}^{\mathsf{N}}|n,h_{n}\rangle \big) \otimes |t_{r,\mathbf{h}}\rangle
\quad \text{with}\quad 
|t_{r,\mathbf{h}}\rangle \equiv \mathsf{T}_{\tau }^{-r}|t_{0,\mathbf{h}}\rangle .
\label{DyS-basis-R}
\end{equation}
Here we have set
\begin{equation}
t_{r,\mathbf{h}}\equiv -\frac{\eta }{2}\mathsf{s}_{\mathbf{h}}+\mathsf{x}\frac{\pi }{2}+\mathsf{y}\frac{\pi}{2}\omega+r\eta ,
\quad\text{with}\quad 
\mathsf{s}_{\mathbf{h}}\equiv\sum_{k=1}^{\mathsf{N}}(1-2h_{k})\quad\text{and}\quad
 \mathbf{h}\equiv (h_{1},\ldots,h_{\mathsf{N}}).
\label{DyS-basis-L+}
\end{equation}

\begin{proposition}\label{prop-Dr}
For each $r\in\mathbb{Z}$, the finite-dimensional vector spaces
$\mathbb{\bar{D}}_{\mathsf{(6VD)},\mathsf{N}}^{(r,\mathcal{L}/\mathcal{R})}$
are invariant under the action of the operators
\begin{equation}
\mathsf{A}(\lambda |\tau ),\text{ \ }\mathsf{D}(\lambda |\tau ),\text{ \ }
\mathcal{B}(\lambda  ),\text{ \ }\mathcal{C}(\lambda  ).
\end{equation}
\end{proposition}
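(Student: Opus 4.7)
The plan is to reduce the statement to showing that each of the four operators commutes with the grading operator $\mathsf{S}_\tau = \eta\mathsf{S} + 2\tau$. Since $\mathbb{\bar{D}}_{\mathsf{(6VD)},\mathsf{N}}^{(r,\mathcal{L}/\mathcal{R})}$ is by definition the $\mathsf{S}_\tau$-eigenspace for the eigenvalue $2r\eta + \mathsf{x}\pi + \mathsf{y}\pi\omega$, any operator commuting with $\mathsf{S}_\tau$ automatically preserves every such subspace. The whole argument then comes down to a weight computation based on just two inputs: the ice rule for the $R$-matrix, and the commutation relation \eqref{Dyn-op-comm} between $\tau$ and $\mathsf{T}_\tau^\pm$.

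First I would establish the $\mathsf{S}$-weight properties of the entries of $\mathsf{M}_0(\lambda|\tau)$. Inspection of \eqref{mat-R} shows that $[\sigma_1^z+\sigma_2^z, R_{12}(\lambda|t)]=0$ (every non-zero entry preserves $\sigma^z_1+\sigma^z_2$), so telescoping through the product \eqref{monodromy} defining $\mathsf{M}_0$ yields $[\sigma_0^z+\mathsf{S},\mathsf{M}_0(\lambda|\tau)]=0$. Decomposing this identity on the auxiliary basis gives
\begin{equation*}
  [\mathsf{S},\mathsf{A}(\lambda|\tau)] = [\mathsf{S},\mathsf{D}(\lambda|\tau)] = 0, \qquad
  [\mathsf{S},\mathsf{B}(\lambda|\tau)] = -2\,\mathsf{B}(\lambda|\tau), \qquad
  [\mathsf{S},\mathsf{C}(\lambda|\tau)] = +2\,\mathsf{C}(\lambda|\tau).
\end{equation*}
Moreover, each of these four operators depends on the dynamical variable only through $\tau$ itself (no $\mathsf{T}_\tau^\pm$ factor appears in \eqref{monodromy}), so all four commute with $\tau$. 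In particular $\mathsf{A}(\lambda|\tau)$ and $\mathsf{D}(\lambda|\tau)$ already commute with both $\eta\mathsf{S}$ and $2\tau$, hence with $\mathsf{S}_\tau$, and the first two cases are settled.

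The remaining cases $\mathcal{B}(\lambda)=\mathsf{B}(\lambda|\tau)\,\mathsf{T}_\tau^-$ and $\mathcal{C}(\lambda)=\mathsf{C}(\lambda|\tau)\,\mathsf{T}_\tau^+$ are where the design of the combined monodromy \eqref{mon-op} enters: the factor $\mathsf{T}_\tau^{\sigma_0^z}$ is tailored precisely so that the dynamical shift cancels the $\mathsf{S}$-weight carried by the auxiliary-space entry. Using $[\tau,\mathsf{T}_\tau^\pm]=\mp\eta\,\mathsf{T}_\tau^\pm$, a one-line calculation gives
\begin{equation*}
  [\mathsf{S}_\tau,\mathcal{B}(\lambda)]
  = \eta\,[\mathsf{S},\mathsf{B}(\lambda|\tau)]\,\mathsf{T}_\tau^{-} + 2\,\mathsf{B}(\lambda|\tau)\,[\tau,\mathsf{T}_\tau^{-}]
  = -2\eta\,\mathsf{B}(\lambda|\tau)\,\mathsf{T}_\tau^{-} + 2\eta\,\mathsf{B}(\lambda|\tau)\,\mathsf{T}_\tau^{-} = 0,
\end{equation*}
and an exactly analogous cancellation for $\mathcal{C}(\lambda)$. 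The same identities hold verbatim for the left action on covectors, since the commutators involved are intrinsic to the algebra and do not depend on whether $\mathsf{T}_\tau^\pm$ is applied on the right of a ket or on the right of a bra. I do not anticipate any genuine obstacle: the content of the proof is only the ice rule of \eqref{mat-R} plus the algebraic identity \eqref{Dyn-op-comm}, and the main work consists in checking that the shifts $\mathsf{T}_\tau^\mp$ appearing in the definition of $\mathcal{B},\mathcal{C}$ were introduced with exactly the right sign and magnitude to offset the $\mathsf{S}$-weights computed in the first step.
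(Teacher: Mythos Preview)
Your proof is correct and follows essentially the same route as the paper: both reduce the invariance to showing $[\mathsf{S}_\tau,\cdot]=0$ for the four operators, via the individual commutators with $\mathsf{S}$ and $\tau$. Your version is slightly more explicit in deriving the $\mathsf{S}$-weights from the zero-weight property of the $R$-matrix, whereas the paper simply states the commutators and proceeds; but the logical structure and key cancellation are identical.
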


\begin{proof}
The commutation relations
\begin{alignat}{2}
&[ \mathsf{A}(\lambda |\tau ),\mathsf{S}] =[\mathsf{A}(\lambda |\tau ),\tau ]
  =  &\ &[\mathsf{D}(\lambda |\tau ),\mathsf{S}] =[\mathsf{D}(\lambda |\tau ),\tau ]=0, \displaybreak[0]\\
&[\mathcal{B}(\lambda  ),\mathsf{S}] =2\mathcal{B}(\lambda  ),
& &[\mathcal{B}(\lambda  ),\tau ]=-\eta \mathcal{B}(\lambda ), \displaybreak[0]\\\
&[ \mathcal{C}(\lambda ),\mathsf{S}] =-2\mathcal{C}(\lambda  ),
& &[\mathcal{C}(\lambda  ),\tau ]=\eta \mathcal{C}(\lambda ),
\end{alignat}
imply that
\begin{equation}
 [ \mathsf{A}(\lambda |\tau ),\mathsf{S}_{\tau }]=[\mathsf{D}(\lambda|\tau ),\mathsf{S}_{\tau }]
 =[\mathcal{B}(\lambda  ),\mathsf{S}_{\tau}]=[\mathcal{C}(\lambda  ),\mathsf{S}_{\tau }]=0,
\end{equation}
which means that $\mathbb{\bar{D}}_{\mathsf{(6VD)},\mathsf{N}}^{(r,\mathcal{L}/\mathcal{R})}$ are invariant under the action of these operators.
\end{proof}

For a SOS model with free boundary conditions, the physical space of states corresponds to the whole representation space $\mathbb{D}_{\mathsf{(6VD)},\mathsf{N}}$. If we impose different types of boundary conditions, the physical space of states will correspond only to a subspace of $\mathbb{D}_{\mathsf{(6VD)},\mathsf{N}}$. For instance, in the case of periodic boundary conditions, it corresponds to the subspace of \eqref{space} associated with the zero eigenvalue of $\mathsf{S}$, whereas in the case of antiperiodic boundary conditions that we consider in this paper, it corresponds to the subspace of \eqref{space} associated with the eigenvalue $\mathsf{x}\pi+\mathsf{y}\pi\omega$ of $\mathsf{S}_\tau$, i.e. to $\mathbb{\bar{D}}_{\mathsf{(6VD)},\mathsf{N}}^{(0)}$. In fact, we shall see in Section~\ref{sec-diag-t} that the commutation of the {\em antiperiodic} transfer matrices is ensured on this subspace only. Nevertheless, the construction of the SOV basis that we present in the next section holds in the whole representation space $\mathbb{D}_{\mathsf{(6VD)},\mathsf{N}}$.

\begin{rem}
In the case of periodic boundary conditions, the space of states of the (finite-size) model is usually infinite dimensional, which may be a technical inconvenience for the study of the model. In fact, due to this reason, one usually deals with restricted models for which the crossing parameter is rational and the space of states is finite dimensional (such as for instance in the original paper \cite{Bax73a}). Let us stress that, for antiperiodic boundary conditions, we do not have this problem since the space of states $\mathbb{\bar{D}}_{\mathsf{(6VD)},\mathsf{N}}^{(0)}$ has dimension $2^\mathsf{N}$. 
\end{rem}

\section{SOV basis in left and right representation spaces}
\label{sec-SOVbasis}

For usual vertex models such as the 6-vertex model \cite{Skl90,Skl92,NicT10,GroN12,Nic13,Nic13b,NicT15}, the SOV approach to diagonalize the transfer matrix is based on the construction of a basis of the space of states of the model which explicitly diagonalizes the action of one particular generator of the Yang-Baxter algebra. In the dynamical case, however, the corresponding SOV basis for the antiperiodic transfer matrix \cite{FelS99,Nic13a} only partially diagonalizes the operator $\mathcal{D}(\lambda)$ (or $\mathcal{A}(\lambda)$). In other words, the latter still acts as a shift operator on some ``dynamical" part of the corresponding basis, whereas the operators $\mathcal{B}(\lambda)$ and $\mathcal{C}(\lambda)$ act as a sum of shift operators on the ``spin" part of the basis. In this section, we present the explicit construction of this SOV basis in the left and right representation spaces of the model.

In each of the subspaces $\mathbb{\bar{D}}_{\mathsf{(6VD)},\mathsf{N}}^{(r,\mathcal{L}/\mathcal{R})}$, we define the following {\em reference states}:
%
\begin{equation}\label{ref-states}
\langle r,\mathbf{0}|
\equiv \frac{1}{\mathsc{n}}
\big(\otimes _{n=1}^{\mathsf{N}}\langle n,h_{n}=0| \big)\otimes \langle t_{r,\mathbf{0}}|,
\qquad
|\mathbf{1},r\rangle
 \equiv \frac{1}{\mathsc{n}}
 \big( \otimes _{n=1}^{\mathsf{N}}|n,h_{n}=1\rangle \big)
\otimes |t_{r,\mathbf{1}}\rangle ,
\end{equation}
where we have used the notations 
$\mathbf{0}\equiv (h_{1}=0,\ldots,h_{\mathsf{N}}=0)$ and 
$\mathbf{1}\equiv (h_{1}=1,\ldots,h_{\mathsf{N}}=1)$, and where $\mathsc{n}$ is a fixed normalization constant that will be specified latter.
It is easy to see that these references states are $\mathsf{D}(\lambda |\tau )$
and $\mathsf{A}(\lambda |\tau )$-eigenstates with eigenvalues given in terms of $\mathsc{a}(\lambda)$ and $\mathsc{d}(\lambda)$  (see \eqref{a-d}),
\begin{alignat}{2}
 &\langle r,\mathbf{0}|\, \mathsf{A}(\lambda |\tau )=\text{\textsc{a}}(\lambda )\, \langle r,\mathbf{0}|,
 &\qquad
 &\langle r,\mathbf{0}|\, \mathsf{D}(\lambda |\tau )
 =e^{-i\mathsf{N}\mathsf{y}\eta}\,\frac{\theta(t_{ r,\mathbf{0}}-\eta )}{\theta (t_{ r,\mathbf{1}}-\eta )}\, \text{\textsc{d}}(\lambda )\, \langle r,\mathbf{0}|,
\\
 &\mathsf{D}(\lambda |\tau )\, |\mathbf{1},r\rangle = \text{\textsc{a}}(\lambda )\, |\mathbf{1},r\rangle,
 &\quad
 &\mathsf{A}(\lambda |\tau)\, |\mathbf{1},r\rangle =e^{-i\mathsf{N}\mathsf{y}\eta}\,\frac{\theta(t_{- r,\mathbf{0}}-\eta )}{\theta (t_{- r,\mathbf{1}}-\eta )}\, \text{\textsc{d}}(\lambda )\, |\mathbf{1},r\rangle,\end{alignat}
%
and are annihilated by the action of the operators $\mathcal{B}(\lambda )$.
Then, for each $\mathsf{N}$-tuple $\mathbf{h}\equiv (h_{1},\ldots,h_{\mathsf{N}})\in \{0,1\}^{\mathsf{N}}$, we construct a state $\langle r,\mathbf{h}|\in\mathbb{D}_{\mathsf{(6VD)},\mathsf{N}}^{\mathcal{L}}$ and a state $| \mathbf{h},r\rangle \in\mathbb{D}_{\mathsf{(6VD)},\mathsf{N}}^{\mathcal{R}}$ as
\begin{align}
 &\langle r,\mathbf{h}|\equiv 
 \langle r,\mathbf{0}| \prod_{n=1}^{\mathsf{N}}
 \left( \frac{\mathcal{C}(\xi_{n} )}{\text{\textsc{d}}(\xi _{n}-\eta )}\right)^{h_{n}},
\label{D-left-eigenstates}\\
 &|\mathbf{h},r\rangle 
 \equiv \prod_{n=1}^{\mathsf{N}}
 \left( \frac{\mathcal{C}(\xi _{n}-\eta  )}{\text{\textsc{d}}(\xi _{n}-\eta )}\right) ^{(1-h_{n})}|\mathbf{1},r\rangle .
 \label{D-right-eigenstates}
\end{align}
%

\begin{rem}
It is easy to see that, $\forall\, \mathbf{h}\in\{0,1\}^{\mathsf{N}}$,
\begin{alignat}{2}
&\langle r,\mathbf{h}|\, \tau =t_{r,\mathbf{h}}\, \langle r,\mathbf{h}|,
& \qquad
&\tau\, |\mathbf{h},r\rangle = t_{r,\mathbf{h}}\, |\mathbf{h},r\rangle,   \label{act-tau}\\
&\langle r,\mathbf{h}|\, \mathsf{S}=\mathsf{s}_{\mathbf{h}}\, \langle r,\mathbf{h}|,
&\qquad
&\mathsf{S}\, |\mathbf{h},r\rangle = \mathsf{s}_{\mathbf{h}}\, |\mathbf{h},r\rangle,
\label{act-S}
\end{alignat}
with $t_{r,\mathbf{h}}$ and $\mathsf{s}_{\mathbf{h}}$ given by \eqref{DyS-basis-L+},
which also implies that $\langle r,\mathbf{h}|\in \mathbb{\bar{D}}_{\mathsf{(6VD)},\mathsf{N}}^{(r,\mathcal{L})}$ and $|\mathbf{h},r\rangle \in \mathbb{\bar{D}}_{\mathsf{(6VD)},\mathsf{N}}^{(r,\mathcal{R})}$. 
\end{rem}

We have the following result:

\begin{theorem}\label{thm-D-basis}
Let $\Gamma=\pi\mathbb{Z}+\pi\omega\mathbb{Z}$ and let the inhomogeneity parameters $\xi _{1},\ldots,\xi _{\mathsf{N}}\in \mathbb{C}$ be such that
\begin{equation}
\forall\epsilon\in  \{-1,0,1\}, \quad \xi _{a}- \xi_{b}+\epsilon\eta \notin \Gamma
 \quad \text{if}\  a\neq b.
    \label{cond-inh}
\end{equation}
Then the set of states $\{ \langle r,\mathbf{h}|,\mathbf{h}\in\{0,1\}^{\mathsf{N}}\}$ defined by \eqref{D-left-eigenstates} form a basis of $\mathbb{\bar{D}}_{\mathsf{(6VD)},\mathsf{N}}^{(r,\mathcal{L})}$.
Similarly, the set of states $\{ | \mathbf{h},r\rangle,\mathbf{h}\in\{0,1\}^{\mathsf{N}}\}$ defined by \eqref{D-right-eigenstates} form a basis of $\mathbb{\bar{D}}_{\mathsf{(6VD)},\mathsf{N}}^{(r,\mathcal{R})}$.
Moreover, the action of  $\mathcal{D}(\lambda  )$ on the states $ \langle r,\mathbf{h}|$ in $\mathbb{D}_{\mathsf{(6VD)},\mathsf{N}}^{\mathcal{L}}$ (respectively  on the states $ |\mathbf{h},r\rangle$ in $\mathbb{D}_{\mathsf{(6VD)},\mathsf{N}}^{\mathcal{R}}$) is given by
\begin{align}
&\langle r,\mathbf{h}|\, \mathcal{D}(\lambda  )
=\mathsf{d}_{r-1,\mathbf{h}}(\lambda )\, \langle r-1,\mathbf{h}|,
  \label{L-eigen-1}\\
&\mathcal{D}(\lambda  )\, |\mathbf{h},r\rangle 
=\mathsf{d}_{r+1,\mathbf{h}}(\lambda )\, |\mathbf{h},r+1\rangle ,
  \label{R-eigen-1}
\end{align}
where
\begin{equation}
\mathsf{d}_{r,\mathbf{h}}(\lambda )\equiv e^{-i\mathsf{y}\eta\frac{s_\mathbf{h}-s_\mathbf{1}}{2}}\,
\frac{\theta(t_{r,\mathbf{h}})}{\theta (t_{r,\mathbf{1}})}\, 
\prod_{n=1}^{\mathsf{N}}\theta (\lambda -\xi _{n}^{(h_{n})}),
\qquad\text{with}\quad
\xi _{a}^{(h_{a})}=\xi _{a}-\eta h_{a}.
\label{EigenValue-D}
\end{equation}
\end{theorem}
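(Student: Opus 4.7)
The plan is to establish the three assertions of the theorem in order: the action of the diagonal operators on the reference states, the $\mathcal{D}$-action on the general basis states by induction on the number of $\mathcal{C}$'s, and finally linear independence to upgrade this to a basis.

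First I would verify the action on the reference states $\langle r,\mathbf{0}|$ and $|\mathbf{1},r\rangle$ by direct inspection of the ordered product \eqref{monodromy}. The $R$-matrix \eqref{mat-R} has the triangularity property that on $|{+}\rangle\otimes|{+}\rangle$ only the $a$- and $b$-entries of the auxiliary block contribute, so $\mathsf{A}(\lambda|\tau)$ acts on $\langle r,\mathbf{0}|$ diagonally with eigenvalue $\prod_n a(\lambda-\xi_n)=\mathsc{a}(\lambda)$, while $\mathsf{D}(\lambda|\tau)$ produces a product of $b$-factors $b\big(\lambda-\xi_n\,\big|\,t+\eta\sum_{k<n}\sigma^z_k\big)$ whose cumulative dynamical shifts telescope into the announced ratio $\theta(t_{r,\mathbf{0}}-\eta)/\theta(t_{r,\mathbf{1}}-\eta)$ once the $\mathsf{y}$-phases $e^{\pm i\mathsf{y}\eta}$ of \eqref{mat-R} are collected. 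The same calculation on $|\mathbf{1},r\rangle$ handles the second line. Annihilation by $\mathcal{B}(\lambda)$ is immediate from the off-diagonal vanishing of the $R$-matrix action on these extremal spin configurations.

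Second, I would derive from the entry-wise expansion of \eqref{RTT-op} a commutation relation of the schematic form
\begin{equation*}
\mathcal{D}(\lambda)\,\mathcal{C}(\mu)
 = \tilde\alpha(\lambda,\mu|\tau)\,\mathcal{C}(\mu)\,\mathcal{D}(\lambda)
 + \tilde\beta(\lambda,\mu|\tau)\,\mathcal{C}(\lambda)\,\mathcal{D}(\mu),
\end{equation*}
with $\tilde\alpha,\tilde\beta$ explicit ratios of the entries $a,b,c$ of \eqref{mat-R} evaluated at $\tau$-shifted arguments. Specializing $\mu=\xi_n$ (for the left states) or $\mu=\xi_n-\eta$ (for the right ones), the ``unwanted'' term in each commutation eventually applies $\mathcal{D}(\xi_n)$ to $\langle r,\mathbf{0}|$ (resp.\ $\mathcal{D}(\xi_n-\eta)$ to $|\mathbf{1},r\rangle$), which vanishes since $\mathsc{d}(\xi_n)=0$ (resp.\ $\mathsc{a}(\xi_n-\eta)=0$). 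An induction on the number of $\mathcal{C}$-factors then leaves only the ``wanted'' term; the $\tau$-shifts hidden in each $\mathcal{C}$ through $\mathsf{T}_\tau^{-}$ reassemble into the telescoping factor $\theta(t_{r,\mathbf{h}})/\theta(t_{r,\mathbf{1}})$, while the $\mathsf{y}$-dependent prefactors of \eqref{mat-R} collect into $e^{-i\mathsf{y}\eta(\mathsf{s}_\mathbf{h}-\mathsf{s}_\mathbf{1})/2}$, reproducing \eqref{EigenValue-D}. Hypothesis \eqref{cond-inh} guarantees that $\tilde\alpha$ is finite at each specialization $\mu\to\xi_n$ or $\mu\to\xi_n-\eta$, so the induction is well-defined.

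For the basis claim, dimension counting reduces matters to linear independence in the $2^\mathsf{N}$-dimensional space $\bar{\mathbb{D}}_{\mathsf{(6VD)},\mathsf{N}}^{(r,\mathcal{L})}$, which the states populate by \eqref{act-tau}--\eqref{act-S}. Assuming a relation $\sum_\mathbf{h}c_\mathbf{h}\,\langle r,\mathbf{h}|=0$, I would act on the right with $\mathcal{D}(\xi_m^{(k)})$ and use that, by \eqref{EigenValue-D}, $\mathsf{d}_{r,\mathbf{h}}(\xi_m^{(k)})=0$ iff $h_m=k$, which is precisely the content of \eqref{cond-inh} (no accidental coincidences $\xi_a-\xi_b+\epsilon\eta\in\Gamma$ for $a\neq b$). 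Iterating over $m$ and $k\in\{0,1\}$ produces a triangular system in the $c_\mathbf{h}$'s whose pivotal entries are products of nonvanishing theta-differences, forcing all $c_\mathbf{h}=0$; the same argument on the right follows by symmetry. In my view the main obstacle is the bookkeeping of the telescoping theta-ratios and $\mathsf{y}$-phases in the inductive step of the second part; once this is organized, the independence step is a Vandermonde-type consequence of \eqref{cond-inh}.
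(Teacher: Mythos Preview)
Your proposal is correct and follows essentially the same route as the paper: the $\mathcal{D}$-action is obtained from the $\mathcal{D}$-$\mathcal{C}$ commutation relation with the unwanted terms killed by $\mathsc{d}(\xi_n)=\mathsc{a}(\xi_n-\eta)=0$, and linear independence is proved by acting with $\mathcal{D}$ at the shifted inhomogeneities $\xi_n^{(\bar{k}_n)}$ to isolate a single $\mathbf{h}$. The paper carries out the independence step in one shot by applying the full product $\prod_{n=1}^{\mathsf{N}}\mathcal{D}(\xi_n^{(\bar{k}_n)})$ rather than iteratively, but this is the same argument packaged slightly differently.
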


\begin{proof}The action of the operator $\mathcal{D}(\lambda  )$ on the states $\langle r,\mathbf{h}|$ and $|\mathbf{h},r\rangle$ can respectively be computed by means of the following dynamical 6-vertex commutation relations issued from \eqref{RTT-op}:
\begin{multline}
\mathcal{C}(\mu  )\,\mathcal{D}(\lambda )
 =\Big[ \mathcal{D}(\lambda  )\,\mathcal{C}(\mu  )\,\theta (\lambda -\mu +\eta )\theta(\tau )
   \\
 -\mathcal{D}(\mu  )\,\mathcal{C}(\lambda  )\,
       e^{i\mathsf{y}(\lambda-\mu)}\theta (\eta)\theta (\tau +\lambda -\mu )\Big]  
       \frac{e^{i\mathsf{y}\eta}}{\theta (\lambda -\mu )\theta (\tau -\eta )},
\label{C-D-Comm-Left} 
\end{multline}
\begin{multline}
\mathcal{D}(\lambda  )\,\mathcal{C}(\mu  ) 
=\frac{e^{-i\mathsf{y}\eta}}{\theta (\mu-\lambda ) \theta (\tau +\eta )}
  \Big[\theta (\mu -\lambda +\eta )\theta (\tau )\,
         \mathcal{C}(\mu  )\, \mathcal{D}(\lambda  ) 
          \\
 -e^{-i\mathsf{y}(\mu-\lambda)}\theta (\eta )\theta (\tau -\mu +\lambda )\,
        \mathcal{C}(\lambda  )\, \mathcal{D}(\mu  )\Big]. 
        \label{C-D-Comm-Right}
\end{multline}
Hence, using the fact that $\text{\textsc{d}}(\xi_n )= \text{\textsc{a}}(\xi_n-\eta )=0$, $n=1,\ldots, N$, so that only the direct terms contribute in the above relations, 
we obtain \eqref{L-eigen-1} and \eqref{R-eigen-1}.

It remains to prove that the states $\langle r,\mathbf{h}|$, $\mathbf{h}\in\{0,1\}^\mathsf{N}$, form
a set of $2^{\mathsf{N}}$ independent states, i.e. a basis of $\mathbb{\bar{D}}_{\mathsf{(6VD)},\mathsf{N}}^{(r,\mathcal{L})}$ (the fact that the states $|\mathbf{h},r\rangle $, $\mathbf{h}\in\{0,1\}^\mathsf{N}$,  form a basis of $\mathbb{\bar{D}}_{\mathsf{(6VD)},\mathsf{N}}^{(r,\mathcal{L})}$ can be proven similarly).
Let us suppose that we have a relation of the form
\begin{equation}
\sum_{\mathbf{h}\in\{0,1\}^\mathsf{N}}\! c_{\mathbf{h}}\, \langle r,\mathbf{h}|=0,
 \label{LC-0}
\end{equation}
for some set of complex coefficients $c_{\mathbf{h}}\in\mathbb{C}$, $\mathbf{h}\in\{0,1\}^\mathsf{N}$.
Let $\mathbf{k}=(k_{1},\ldots,k_{\mathsf{N}})$ be any given $\mathsf{N}$-tuple in $\{0,1\}^{\mathsf{N}}$.
Then, by applying the product
\begin{equation}
\prod_{n=1}^{\mathsf{N}}\mathcal{D}(\xi _{n}^{( \bar{k}_{n})} )
\qquad \text{with}\quad \bar{k}_{n}\in\{0,1\},\quad \bar{k}_{n}=k_{n}+1\mod 2,
\end{equation}
to \eqref{LC-0}, we get the identity
\begin{equation}
c_\mathbf{k}\, \prod_{n=1}^{\mathsf{N}}\mathsf{d}_{r-n,\mathbf{k}}(\xi _{n}^{( \bar{k}_{n}) })\,
 \langle r-\mathsf{N},\mathbf{k}| = 0.
\end{equation}
Since $\mathsf{d}_{r-n,\mathbf{k}}(\xi _{n}^{( \bar{k}_{n}) })\not=0$, $1\le n\le \mathsf{N}$, and $ \langle r-\mathsf{N},\mathbf{k}| \not=0$, this implies that $c_{\mathbf{k}}=0$, which ends the proof of Theorem~\ref{thm-D-basis}. 
\end{proof}

We want to determine the action of the remaining generators of the Yang-Baxter algebra on the basis elements \eqref{D-left-eigenstates} and \eqref{D-right-eigenstates}, and in particular of the operators $\mathcal{B}(\lambda)$ and $\mathcal{C}(\lambda)$ constituting the antiperiodic transfer matrix. To do this, we shall rely on the following lemma:

\begin{lemma}\label{lem-period}
$\mathcal{D}(\lambda  )$, $\widetilde{\mathcal{B}}(\lambda  )\equiv e^{-i\mathsf{y}\lambda}\mathcal{B}(\lambda)$ and $\widetilde{\mathcal{C}}(\lambda  )\equiv e^{i\mathsf{y}\lambda}\mathcal{C}(\lambda )$ are 
entire functions of $\lambda$ which satisfy the following quasi-periodicity properties:
\begin{xalignat}{2}
&\mathcal{D}(\lambda +\pi  )=( -1)^{\mathsf{N}}\, \mathcal{D}(\lambda ),
&\quad
&\mathcal{D}(\lambda +\pi w )=\left(-e^{-2i\lambda-i\pi\omega}\right)^{\!\mathsf{N}}
e^{2i\sum_{n=1}^{\mathsf{N}}(\xi_{n}-\eta/2)+i\eta \mathsf{S}}\, \mathcal{D}(\lambda  ),
\label{Characteristic-D}\\
&\widetilde{\mathcal{B}}(\lambda +\pi  ) =( -1) ^{\mathsf{N}}\, \widetilde{\mathcal{B}}(\lambda  ),
&\ \
&\widetilde{\mathcal{B}}(\lambda +\pi w )=\left(-e^{-2i\lambda-i\pi\omega}\right)^{\!\mathsf{N}}
e^{2i\sum_{n=1}^{\mathsf{N}}(\xi_{n}-\eta /2)-i\mathsf{S}_{\tau }}\,\widetilde{\mathcal{B}}(\lambda  ),
\label{Characteristic-B} 
\\
&\widetilde{\mathcal{C}}(\lambda +\pi  )=( -1) ^{\mathsf{N}}\, \widetilde{\mathcal{C}}(\lambda  ),
&\quad
&\widetilde{\mathcal{C}}(\lambda +\pi w )=\left(-e^{-2i\lambda-i\pi\omega}\right)^{\!\mathsf{N}}
e^{2i\sum_{n=1}^{\mathsf{N}}(\xi_{n}-\eta /2)+i\mathsf{S}_{\tau }}\,\widetilde{\mathcal{C}}(\lambda ).
\label{Characteristic-C}
\end{xalignat}
\end{lemma}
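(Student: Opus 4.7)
The plan is to treat $\mathcal{D}(\lambda)$ by leveraging the explicit action on the SOV basis of Theorem \ref{thm-D-basis}, and $\widetilde{\mathcal{B}}(\lambda)$, $\widetilde{\mathcal{C}}(\lambda)$ by direct inspection of the monodromy matrix product (with a cleaner route via the commutation relations as a backup). First I would observe that every entry of the monodromy matrix \eqref{monodromy} is a finite sum of products of $\mathsf{N}$ $R$-matrix entries; each such entry is, up to a factor $e^{\pm i\mathsf{y}\lambda}$, of the form $\theta(\lambda-\xi_n+\gamma)\,f(\tau)$, hence entire in $\lambda$ (the only denominators $\theta(\tau')$ are $\lambda$-independent). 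In any term contributing to $\mathsf{B}(\lambda|\tau)$ (resp.\ $\mathsf{C}(\lambda|\tau)$) the auxiliary index performs one net flip from $-$ to $+$ (resp.\ from $+$ to $-$), so the $e^{\pm i\mathsf{y}\lambda}$ phases carried by the $c$-type entries accumulate to a single $e^{+i\mathsf{y}\lambda}$ (resp.\ $e^{-i\mathsf{y}\lambda}$) that is exactly canceled by the prefactor in the definition of $\widetilde{\mathcal{B}}$ (resp.\ $\widetilde{\mathcal{C}}$); in $\mathsf{D}$ the auxiliary flips balance out and no rescaling is needed. All three operators are thus entire in $\lambda$.

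For $\mathcal{D}$, Theorem \ref{thm-D-basis} yields $\mathcal{D}(\lambda)|\mathbf{h},r\rangle=\mathsf{d}_{r+1,\mathbf{h}}(\lambda)|\mathbf{h},r+1\rangle$, whose $\lambda$-dependence is $\prod_{n=1}^{\mathsf{N}}\theta(\lambda-\xi_n^{(h_n)})$. The quasi-periodicity of $\theta_1$ gives a global factor $(-1)^{\mathsf{N}}$ under $\lambda\mapsto\lambda+\pi$, and $(-e^{-2i\lambda-i\pi\omega})^{\mathsf{N}}\,e^{2i\sum_n(\xi_n-\eta h_n)}$ under $\lambda\mapsto\lambda+\pi\omega$. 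Rewriting $\sum_n h_n=(\mathsf{N}-\mathsf{s}_{\mathbf{h}})/2$, the exponent reorganizes as $2\sum_n(\xi_n-\eta/2)+\eta\,\mathsf{s}_{\mathbf{h}}$, and since $\mathsf{s}_{\mathbf{h}}$ is the $\mathsf{S}$-eigenvalue on $|\mathbf{h},r\rangle$ and the states $|\mathbf{h},r\rangle$ span the whole representation space by Theorem \ref{thm-D-basis}, these scalar identities lift to the operator identities \eqref{Characteristic-D}.

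For $\widetilde{\mathcal{B}}$ and $\widetilde{\mathcal{C}}$ there is no equally clean eigenbasis, so I would proceed term by term in the monodromy product. Under $\lambda\mapsto\lambda+\pi$ the $\mathsf{N}$ theta factors contribute $(-1)^{\mathsf{N}}$, while the $(-1)^{\mathsf{y}}$ arising from the odd net number of $e^{\pm i\mathsf{y}\lambda}$ factors exactly cancels the $(-1)^{\mathsf{y}}$ from the defining prefactor, establishing the $\lambda\mapsto\lambda+\pi$ half of \eqref{Characteristic-B}--\eqref{Characteristic-C}. Under $\lambda\mapsto\lambda+\pi\omega$, each theta factor $\theta(\lambda-\xi_n+\gamma_n)$ picks up $-e^{-2i(\lambda-\xi_n+\gamma_n)-i\pi\omega}$ and each $e^{\pm i\mathsf{y}\lambda}$ factor picks up $e^{\pm i\mathsf{y}\pi\omega}$, so every term acquires the expected scalar $(-e^{-2i\lambda-i\pi\omega})^{\mathsf{N}}e^{2i\sum_n(\xi_n-\eta/2)}$ times an operator-valued residual phase collecting the $-2i\gamma_n$ contributions.

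\textbf{The main obstacle} is identifying this residual phase with $e^{\mp i\mathsf{S}_{\tau}}=e^{\mp i(\eta\mathsf{S}+2\tau)}$: one must track, along each auxiliary-space path, the dynamical shift $\tau+\eta\sum_{a<n}\sigma_a^z$ of the $n$-th $R$-matrix (so that $\gamma_n\in\{\eta,0,\pm(\tau+\eta\sum_{a<n}\sigma_a^z)\}$ according to whether an $a$-, $b$-, or $c$-entry is selected), and show that the contributions across all paths conspire to depend only on $\tau$ and on the total spin $\mathsf{S}$. A cleaner alternative, once \eqref{Characteristic-D} is established, is to apply $\lambda\mapsto\lambda+\pi\omega$ to the commutation relation \eqref{C-D-Comm-Left}: using the quasi-periodicity of $\theta(\lambda-\mu)$, $\theta(\tau+\lambda-\mu)$, and of $\mathcal{D}(\lambda)$, and evaluating at a generic $\mu$ so that the coefficient of $\mathcal{C}(\mu)\mathcal{D}(\lambda)$ is invertible, one solves for the transformation law of $\widetilde{\mathcal{C}}(\lambda)$; the analogous manipulation using \eqref{C-D-Comm-Right} (or the symmetric role of $\mathcal{B}$) then yields \eqref{Characteristic-B}.
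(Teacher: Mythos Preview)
Your proposal is correct and, in the end, follows essentially the same route as the paper: entireness from the explicit $R$-matrix structure, the $\mathcal{D}$ quasi-periodicity read off from the SOV action \eqref{EigenValue-D}, and the $\widetilde{\mathcal{B}}$, $\widetilde{\mathcal{C}}$ quasi-periodicity inferred from the already-known $\mathcal{D}$ law via the Yang--Baxter commutation relations. The paper skips your term-by-term detour entirely and goes straight to the commutation-relation argument, rewriting \eqref{C-D-Comm-Left} with $\mathcal{C}(\lambda)\mathcal{D}(\mu)$ isolated on the left for $\widetilde{\mathcal{C}}$, and using the analogous $\mathcal{B}(\lambda)\mathcal{D}(\mu)$ relation (not \eqref{C-D-Comm-Right}) for $\widetilde{\mathcal{B}}$.
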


\begin{proof}
The fact that $\mathcal{D}(\lambda )$, $\mathcal{B}(\lambda  )$
and $\mathcal{C}(\lambda  )$ are entire functions of $\lambda$ is a simple consequence of the
definition of the dynamical 6-vertex monodromy matrix in terms of the
dynamical 6-vertex $R$-matrix \eqref{mat-R}.
The quasi-periodicity properties \eqref{Characteristic-D} of $\mathcal{D}(\lambda)$ follows directly from the explicit form  \eqref{L-eigen-1}, \eqref{R-eigen-1}, \eqref{EigenValue-D} of its action on the basis \eqref{D-left-eigenstates} or \eqref{D-right-eigenstates} and from the quasi-periodicity properties of the usual theta function $\theta(\lambda)\equiv\theta_1(\lambda|\omega)$.
The quasi-periodicity properties \eqref{Characteristic-C} of $\widetilde{\mathcal{C}}(\lambda)$ can be deduced from the quasi-periodicity properties \eqref{Characteristic-D} of $\mathcal{D}(\lambda)$ by comparing the quasi-periodicity properties in $\lambda$ of the two members of the commutation relation \eqref{C-D-Comm-Left} rewritten in the form
\begin{equation*}
\mathcal{C}(\lambda)\,\mathcal{D}(\mu)
= e^{-i\mathsf{y}(\lambda-\mu-\eta)}\,
\frac{\theta(\lambda-\mu)\, \theta(\tau+\eta)}{\theta(\lambda-\mu+\tau)\, \theta(\eta)}\,  
\mathcal{D}(\lambda  )\,\mathcal{C}(\mu ) 
- e^{-i\mathsf{y}(\lambda-\mu)}\, 
\frac{\theta (\lambda-\mu-\eta )\,  \theta (\tau)}{\theta(\lambda-\mu+\tau)\,\theta(\eta)}\,
         \mathcal{C}(\mu  )\, \mathcal{D}(\lambda ) .
\end{equation*}
Similarly, the quasi-periodicity properties of $\widetilde{\mathcal{B}}(\lambda)$ can be obtained from \eqref{Characteristic-D} and from the commutation relation
\begin{equation*}
  \mathcal{B}(\lambda)\,\mathcal{D}(\mu)
= e^{i\mathsf{y}(\lambda-\mu)}\,
\frac{\theta(\lambda-\mu+\eta)\, \theta(\tau+\eta\mathsf{S})}{\theta(\tau+\eta\mathsf{S}-\lambda+\mu)\, \theta(\eta)}\,  
\mathcal{B}(\mu  )\,\mathcal{D}(\lambda  ) 
- e^{i\mathsf{y}(\lambda-\mu-\eta)}\, 
\frac{\theta (\lambda-\mu )\,  \theta (\tau+\eta\mathsf{S}-\eta)}{\theta(\tau+\eta\mathsf{S}-\lambda+\mu)\,\theta(\eta)}\,
         \mathcal{D}(\lambda  )\, \mathcal{B}(\mu  ) .
\end{equation*}
\end{proof}

\begin{theorem}\label{thm-action}
Under the hypothesis \eqref{cond-inh} of Theorem~\ref{thm-D-basis}, the action of the operators $\mathsf{A}(\lambda |\tau )$, $\mathsf{D}(\lambda |\tau )$, $\mathcal{B}(\lambda  )$ and $\mathcal{C}(\lambda  )$ on the basis elements $\langle r,\mathbf{h} |$ of $\mathbb{\bar{D}}_{\mathsf{(6VD)},\mathsf{N}}^{(r,\mathcal{L})}$ (respectively on the basis elements $| \mathbf{h} ,r\rangle$ of $\mathbb{\bar{D}}_{\mathsf{(6VD)},\mathsf{N}}^{(r,\mathcal{R})}$) take the following form:

\textsf{I)} \underline{Left representations:}
\begin{align}
&\langle r,\mathbf{h} | \, \mathsf{D}(\lambda |\tau )
=\mathsf{d}_{r-1,\mathbf{h}}(\lambda )\,
\left( \langle r,\mathbf{0}|
\prod_{n=1}^{\mathsf{N}}\left(\frac{ \mathsf{T}_{\tau }^{-}\,\mathcal{C}(\xi _{n} )\, \mathsf{T}_{\tau }^{+}}{\mathsc{d}(\xi _{n}-\eta )}\right)
^{h_{n}}\right) \, ,  
\label{D-L-EigenV}
\\
&\langle r,\mathbf{h}|\,\mathcal{C}(\lambda  ) 
=\sum_{a=1}^{\mathsf{N}}
e^{i\mathsf{y}(\xi _{a}^{(h_{a})}-\lambda )}\,
\frac{\theta (t_{r,\mathbf{h}}-\lambda +\xi _{a}^{(h_{a})})}{\theta (t_{r,\mathbf{h}})}
  \prod_{b\neq a}\frac{\theta (\lambda -\xi_{b}^{(h_{b})})}{\theta (\xi _{a}^{(h_{a})}-\xi _{b}^{(h_{b})})}\,
  \text{\textsc{d}}(\xi _{a}^{(1-h_{a})})\,
  \langle r,\mathsf{T}_{a}^{+} \mathbf{h}|\, , 
 \label{C-SOV_D-left}
  \\
&\langle r,\mathbf{h}|\, \mathcal{B}(\lambda  )
 =\sum_{a=1}^{\mathsf{N}}e^{i\mathsf{y}(\xi _{a}^{(h_{a})}-\lambda )}\,
   \frac{\theta (t_{-r,\mathbf{h}}-\lambda +\xi_{a}^{(h_{a})})}{\theta (t_{-r,\mathbf{h}})} 
   \prod_{b\neq a}\!\frac{\theta (\lambda -\xi _{b}^{(h_{b})})}{\theta (\xi _{a}^{(h_{a})}-\xi_{b}^{(h_{b})})}
   \mathsc{a}_{\mathsf{x},\mathsf{y},r,\mathbf{h}}^{(-)}(\xi_{a}^{(1-h_{a})})\,
   \langle r,\mathsf{T}_{a}^{-}\mathbf{h}|\, , \hspace{-1mm}
\label{B-SOV_D-left}
\end{align}
and the action of $\mathsf{A}(\lambda |\tau )$ is completely determined by the quantum
determinant relations \eqref{q-det}.

\textsf{II)} \underline{Right representations:} 
\begin{align}
&\mathsf{D}(\lambda |\tau+\eta )\,  |\mathbf{h},r\rangle
=\mathsf{d}_{r+1,\mathbf{h}}(\lambda )\,
\left( \,
\prod_{n=1}^{\mathsf{N}}
\left( \frac{\mathsf{T}_{\tau }^{+}\,\mathcal{C}(\xi _{n}-\eta  )\, \mathsf{T}_{\tau }^-}{\mathsc{d}(\xi _{n}-\eta )}\right) ^{(1-h_{n})}
|\mathbf{1},r\rangle \right) \, ,  
\label{D-R-EigenV}
\\
&\mathcal{C}(\lambda  )\, |\mathbf{h},r\rangle 
 =\sum_{a=1}^{\mathsf{N}} e^{i\mathsf{y}(\xi _{a}^{(h_{a})}-\lambda )}\,
\frac{\theta (t_{r,\mathbf{h}}-\lambda +\xi _{a}^{(h_{a})})}{\theta(t_{r,\mathbf{h}})}
\prod_{b\neq a}\frac{\theta (\lambda -\xi_{b}^{(h_{b})})}{\theta (\xi _{a}^{(h_{a})}-\xi _{b}^{(h_{b})})}\,\text{\textsc{d}}(\xi _{a}^{(h_{a})})\,
|\mathsf{T}_{a}^{-}\mathbf{h},r\rangle\,  ,  
\label{C-SOV_D-right} 
 \\
&\mathcal{B}(\lambda  )\, |\mathbf{h},r\rangle  
=\sum_{a=1}^{\mathsf{N}} e^{i\mathsf{y}(\xi _{a}^{(h_{a})}-\lambda )}\,
\frac{\theta (t_{-r,\mathbf{h}}-\lambda +\xi _{a}^{(h_{a})})}{\theta(t_{-r,\mathbf{h}})}
\prod_{b\neq a}\frac{\theta (\lambda -\xi_{b}^{(h_{b})})}{\theta (\xi _{a}^{(h_{a})}-\xi _{b}^{(h_{b})})}\,
\mathsc{a}_{\mathsf{x},\mathsf{y},r,\mathbf{h}}^{(+)}(\xi _{a}^{(h_{a})})\,
 |\mathsf{T}_{a}^{+}\mathbf{h},r\rangle \, ,
\label{B-SOV_D-right}
\end{align}
and the action of $\mathsf{A}(\lambda |\tau )$ is uniquely determined by the quantum
determinant relation \eqref{q-det}.

In the above expressions, we have used the definition  \eqref{EigenValue-D} of $\mathsf{d}_{r,\mathbf{h}}(\lambda )$ and $\xi _{a}^{(h_{a})}$, and we have set
\begin{align}
&\mathsf{T}_{a}^{\pm }(h_{1},\ldots,h_{\mathsf{N}})
=(h_{1},\ldots,h_{a}\pm 1,\ldots,h_{\mathsf{N}}),
\\
&\mathsc{a}_{\mathsf{x},\mathsf{y},r,\mathbf{h}}^{(\pm )}(\lambda )
=(-1)^{\mathsf{x}+\mathsf{y}+\mathsf{x}\mathsf{y}}\, e^{2i\mathsf{y}r\eta}\,
\frac{\theta (t_{r,\mathbf{h}}\pm \eta )}{\theta (t_{-r,\mathbf{h}}\pm \eta  )}\,\mathsc{a}(\lambda ).
\end{align}
\end{theorem}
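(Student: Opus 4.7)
The plan is to combine Theorem~\ref{thm-D-basis}, the grading of Proposition~\ref{prop-Dr}, the quasi-periodicities of Lemma~\ref{lem-period}, and the quantum determinant \eqref{q-det}, deriving the four families of formulas in turn. First, the $\mathsf{D}$-actions \eqref{D-L-EigenV}--\eqref{D-R-EigenV} come essentially for free: writing $\mathsf{D}(\lambda|\tau)=\mathcal{D}(\lambda)\,\mathsf{T}_{\tau}^{+}$ from \eqref{mon-op}, applying \eqref{L-eigen-1}--\eqref{R-eigen-1}, and then inserting the identities $\mathsf{T}_{\tau}^{+}\mathsf{T}_{\tau}^{-}=\mathrm{id}$ between consecutive $\mathcal{C}(\xi_n)$-factors in \eqref{D-left-eigenstates}--\eqref{D-right-eigenstates} reproduces the displayed right-hand sides. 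The $\mathsf{A}(\lambda|\tau)$-action is then determined by $\mathsf{A}(\lambda|\tau)=\mathcal{A}(\lambda)\,\mathsf{T}_{\tau}^{-}$ and by \eqref{q-det}, which expresses $\mathcal{A}(\lambda)\mathcal{D}(\lambda-\eta)$ as $\mathcal{B}(\lambda)\mathcal{C}(\lambda-\eta)$ plus a central scalar; the $\mathcal{B}$- and $\mathcal{C}$-actions derived below then fix $\mathsf{A}(\lambda|\tau)$ uniquely.

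The core step is the interpolation argument for $\mathcal{C}(\lambda)$ (the left version \eqref{C-SOV_D-left} being entirely analogous). From Proposition~\ref{prop-Dr}, $\mathcal{C}(\lambda)$ commutes with $\mathsf{S}_{\tau}$ while raising $\mathsf{S}$ by $2$ and lowering $\tau$ by $\eta$; matching $(\mathsf{S},\tau)$-eigenvalues against \eqref{D-right-eigenstates} forces
\[
\mathcal{C}(\lambda)\,|\mathbf{h},r\rangle=\sum_{a:\,h_a=1} C^{(r)}_{\mathbf{h},a}(\lambda)\,|\mathsf{T}_a^{-}\mathbf{h},r\rangle .
\]
Lemma~\ref{lem-period} identifies $e^{i\mathsf{y}\lambda}C^{(r)}_{\mathbf{h},a}(\lambda)$ as an entire theta function of order $\mathsf{N}$ in an $\mathsf{N}$-dimensional space; a direct check of the $\lambda\to\lambda+\pi$ and $\lambda\to\lambda+\pi\omega$ transformations (using that $\mathsf{S}_{\tau}$ has eigenvalue $2r\eta+\mathsf{x}\pi+\mathsf{y}\pi\omega$ on $\mathbb{\bar{D}}_{\mathsf{(6VD)},\mathsf{N}}^{(r,\mathcal{R})}$) shows that the proposed right-hand side of \eqref{C-SOV_D-right} lies in the same space. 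It therefore suffices to match both sides at the $\mathsf{N}$ points $\lambda=\xi_c^{(h_c)}$, $c=1,\dots,\mathsf{N}$: only the $a=c$ summand on the right survives, collapsing to $\mathsc{d}(\xi_c^{(h_c)})\,|\mathsf{T}_c^{-}\mathbf{h},r\rangle$; on the left, the case $h_c=1$ is handled by reordering $\mathcal{C}$-factors in \eqref{D-right-eigenstates} via the $\mathcal{C}\mathcal{C}$-relation extracted from \eqref{RTT-op} to recognize $\mathcal{C}(\xi_c-\eta)\,|\mathbf{h},r\rangle=\mathsc{d}(\xi_c-\eta)\,|\mathsf{T}_c^{-}\mathbf{h},r\rangle$, while the case $h_c=0$ gives zero on both sides (the right because $\mathsc{d}(\xi_c)=0$, the left because the same commutation relation, taken at the degenerate shift $\eta$, forces $\mathcal{C}(\xi_c)\mathcal{C}(\xi_c-\eta)$ to annihilate any $|\mathbf{1},r\rangle$-derived state).

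For the $\mathcal{B}$-actions \eqref{B-SOV_D-right}--\eqref{B-SOV_D-left}, the grading similarly constrains the expansion to $|\mathsf{T}_a^{+}\mathbf{h},r\rangle$ with $h_a=0$, and the reversed sign of $\mathsf{S}_{\tau}$ in \eqref{Characteristic-B} is what produces the argument $t_{-r,\mathbf{h}}$ in the theta numerator of \eqref{B-SOV_D-right}. The main obstacle here is that $\mathcal{B}$ does not appear in \eqref{D-left-eigenstates}--\eqref{D-right-eigenstates}, so the interpolation values cannot be read off by a mere rearrangement. My plan is to extract them by commuting $\mathcal{B}(\lambda)$ across every $\mathcal{C}(\xi_n-\eta)$ factor in \eqref{D-right-eigenstates} using the $\mathcal{B}\mathcal{C}$-relation from \eqref{RTT-op}: the term in which $\mathcal{B}(\lambda)$ eventually reaches $|\mathbf{1},r\rangle$ vanishes by the reference-state annihilation property, while each inhomogeneous residue produced along the way is of the form (dynamical scalar)$\times\mathcal{A}(\cdot)\mathcal{D}(\cdot)$ and is evaluated on the basis by Theorem~\ref{thm-D-basis} together with the quantum determinant \eqref{q-det}; a theta-interpolation identity then re-sums the contributions into the form \eqref{B-SOV_D-right}, and the left action \eqref{B-SOV_D-left} follows symmetrically.
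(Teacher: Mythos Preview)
Your treatment of the $\mathsf{D}$-actions and of $\mathcal{C}(\lambda)$ matches the paper's argument: compute the action at the $\mathsf{N}$ points $\xi_a^{(h_a)}$ using the definition \eqref{D-right-eigenstates} and the vanishing $\mathcal{C}(\xi_c)\mathcal{C}(\xi_c-\eta)=0$, then interpolate via Lemma~\ref{lem-period}. This part is fine.

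For $\mathcal{B}(\lambda)$, however, your route diverges from the paper's and carries a gap. You propose to commute $\mathcal{B}(\lambda)$ through every $\mathcal{C}(\xi_n-\eta)$ in \eqref{D-right-eigenstates}, collect ``$\mathcal{A}(\cdot)\mathcal{D}(\cdot)$'' residues, and evaluate those via Theorem~\ref{thm-D-basis} together with \eqref{q-det}. The problem is that the cross terms produced by the $\mathcal{B}\mathcal{C}$-relation carry $\mathcal{A}$ and $\mathcal{D}$ at \emph{independent} arguments, whereas \eqref{q-det} only controls the specific combination $\mathcal{A}(\lambda)\mathcal{D}(\lambda-\eta)$; and Theorem~\ref{thm-D-basis} gives you $\mathcal{D}$ but not $\mathcal{A}$ on the basis. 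So your evaluation step is circular as stated: you would still have to commute $\mathcal{A}$ through the remaining $\mathcal{C}$'s down to the reference state, which is an additional ABA-type recursion you do not set up.

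The paper avoids this entirely by keeping the same ``special points then interpolate'' scheme used for $\mathcal{C}$. One applies the quantum-determinant identity $\mathcal{D}(\lambda)\mathcal{A}(\lambda-\eta)-\mathcal{C}(\lambda)\mathcal{B}(\lambda-\eta)=e^{-i\mathsf{y}\eta\mathsf{S}}\frac{\theta(\tau)}{\theta(\tau+\eta\mathsf{S})}\,\mathsc{a}(\lambda)\,\mathsc{d}(\lambda-\eta)$ at $\lambda=\xi_n$ to a state with $h_n=0$: the $\mathcal{D}(\xi_n)$-term vanishes by \eqref{L-eigen-1}--\eqref{EigenValue-D}, and the already-computed $\mathcal{C}(\xi_n)$-action \eqref{act-C} then isolates $\mathcal{B}(\xi_n-\eta)$ on the shifted state, giving \eqref{act-C-xi_n}--\eqref{act-B-xi_n} directly (the companion case uses the other line of \eqref{q-det}). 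With the values $\mathcal{B}(\xi_a^{(h_a)})$ in hand, Lemma~\ref{lem-period} and the interpolation formula \eqref{interpolation} yield \eqref{B-SOV_D-left}--\eqref{B-SOV_D-right} without any multi-step commutation. This is both shorter and free of the circularity in your proposal; I would recommend replacing your $\mathcal{B}$-paragraph with this argument.
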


\begin{proof}
The action \eqref{D-L-EigenV}, \eqref{D-R-EigenV} of $\mathsf{D}(\lambda |\tau )=\mathcal{D}(\lambda
 )\,\mathsf{T}_{\tau }^{+}$ and $\mathsf{D}(\lambda |\tau+\eta )=\mathsf{T}_{\tau }^{+}\,\mathcal{D}(\lambda
 )$ on $\langle r,\mathbf{h}|$ and $|\mathbf{h},r\rangle$ respectively follow directly from the formulae  \eqref{L-eigen-1}, \eqref{R-eigen-1} and from the commutation relations \eqref{Dyn-op-comm}. 

To determine the action of $\mathcal{B}(\lambda)$ and $\mathcal{C}(\lambda)$ on a left state $\langle r,\mathbf{h}|$ (respectively on a right state $|\mathbf{h},r\rangle$) associated with a given $\mathsf{N}$-tuple $\mathbf{h}=(h_1,\ldots,h_\mathsf{N})$,  we first compute these actions at the $\mathsf{N}$ special points $\lambda=\xi _{n}^{( h_{n}) }$, $n=1,\ldots,\mathsf{N}$.
From the definitions \eqref{D-left-eigenstates}, \eqref{D-right-eigenstates}, and the fact that the product $\mathcal{C}(\xi_n)\,\mathcal{C}(\xi_n-\eta)$ vanishes when evaluated at one of the inhomogeneity parameters $\xi_n$ (see \eqref{cancel-inh}), we get
\begin{equation}\label{act-C}
  \langle r,\mathbf{h}|\,\mathcal{C}(\xi_n^{(h_n)})
  =\mathsc{d}(\xi_n^{(1-h_n)}) \,  \langle r,\mathsf{T}_n^{+}\mathbf{h}| \, ,
  \qquad
  \mathcal{C}(\xi_n^{(h_n)})\, |\mathbf{h},r\rangle
  =\mathsc{d}(\xi_n^{(h_n)}) \, | \mathsf{T}_n^-\mathbf{h},r\rangle \, .
\end{equation}
The action of $\mathcal{B}(\xi _{n}^{\left( h_{n}\right) } )$ on the left and right
states \eqref{D-left-eigenstates} and \eqref{D-right-eigenstates} can be computed by means of the dynamical Yang-Baxter commutation relations \eqref{RTT-op} and the quantum determinant relations \eqref{q-det}. We obtain
\begin{align}
   \langle r, \mathbf{h} |\,\mathcal{B}(\xi_n^{(h_n)})
   &=-\mathsc{a}(\xi_n^{(1-h_n)})\,  \langle r, \mathsf{T}_n^-\mathbf{h} |\,
   e^{-i\mathsf{y}\eta\mathsf{S}}\,\frac{\theta(\tau)}{\theta(\tau+\eta\mathsf{S})},
   \notag\\
   &=(-1)^{\mathsf{x}+\mathsf{y}+\mathsf{x}\mathsf{y}}\,
   \mathsc{a}(\xi_n^{(1-h_n)})\, e^{2i\mathsf{y}r\eta}\,
   \frac{\theta(t_{r,\mathbf{h}}-\eta)}{\theta(t_{-r,\mathbf{h}}-\eta)}\,
   \langle r, \mathsf{T}_n^-\mathbf{h} |\, ,
   \label{act-C-xi_n}
\end{align}
and
\begin{align}
   \mathcal{B}(\xi_n^{(h_n)})\,|\mathbf{h},r\rangle 
   &=-\mathsc{a}(\xi_n^{(h_n)})\, e^{-i\mathsf{y}\eta\mathsf{S}}\frac{\theta(\tau)}{\theta(\tau+\eta\mathsf{S})}
          \,|\mathsf{T}_n^+\mathbf{h},r\rangle \, ,
          \notag\\
   &=   (-1)^{\mathsf{x}+\mathsf{y}+\mathsf{x}\mathsf{y}}\,
   \mathsc{a}(\xi_n^{(h_n)})\, e^{2i\mathsf{y}r\eta}\,
   \frac{\theta(t_{r,\mathbf{h}}+\eta)}{\theta(t_{-r,\mathbf{h}}+\eta)}\,
   | \mathsf{T}_n^+\mathbf{h} , r\rangle \, ,    
   \label{act-B-xi_n}
\end{align}
in which we have used the quasi-periodicity properties of the theta function.

Finally, the action of $\mathcal{B}(\lambda )$ and $\mathcal{C}(\lambda )$ to the left \eqref{C-SOV_D-left}-\eqref{B-SOV_D-left} and to the right 
\eqref{C-SOV_D-right}-\eqref{B-SOV_D-right} are obtained from \eqref{act-C-xi_n} and \eqref{act-B-xi_n} and from Lemma~\ref{lem-period} by means of interpolation formulae (see \eqref{interpolation}).
\end{proof}

\begin{rem}
It is easy to check that the representations of the Yang-Baxter generators in the SOV basis coincide with their representations in the so-called $F$-basis \cite{MaiS00} as obtained in \cite{AlbBFPR00}.
Such a relation between the SOV construction and the $F$-basis was first noticed in \cite{Ter99}.
\end{rem}

The basis elements \eqref{D-left-eigenstates} and \eqref{D-right-eigenstates} have been constructed independently in the left and right representation spaces respectively. It is nevertheless not difficult to compute their scalar product \cite{Nic13a}:

\begin{proposition}\label{prop-sc}
Under the hypothesis  \eqref{cond-inh}  of Theorem~\ref{thm-D-basis}, let us define, for each $\mathbf{h}\in\{0,1\}^{\mathsf{N}}$ and $r\in\mathbb{Z}$, the $\mathsf{N}\times\mathsf{N}$ matrices $\Theta^{(r,\mathbf{h})}$ of elements
\begin{equation}
  \big[ \Theta^{(r,\mathbf{h})} \big]_{ij}=\vartheta_{j-1}(\xi_i^{(h_i)}-\bar\xi_r ),
   \qquad
   \text{with}
   \quad
   \bar\xi_r=\frac{1}{\mathsf{N}}\left(\sum_{k=1}^\mathsf{N}\xi_k+t_{r,\mathbf{0}}\right),
\end{equation}
where the functions $\vartheta_j$ are defined as
\begin{equation}\label{def-theta_j}
  \vartheta_j(\lambda)=\sum_{n\in\mathbb{Z}} 
  e^{i\pi\mathsf{N}\omega(n+\frac12-\frac{j}{\mathsf{N}})^2+2i\mathsf{N}(n+\frac12-\frac{j}{\mathsf{N}})(\lambda-\frac{\pi}{2})},
  \quad 0\le j\le \mathsf{N}-1.
\end{equation}
Then, one can fix the normalization constant in \eqref{ref-states} such that, $\forall \mathbf{h},\tilde{\mathbf{h}}\in\{0,1\}^\mathsf{N}$, $\forall r,\tilde{r}\in\mathbb{Z}$,
\begin{equation}\label{sc-rh}
   \langle r,\mathbf{h} | \tilde{\mathbf{h}},\tilde{r} \rangle 
   = \delta_{r,\tilde{r}}\,\delta_{\mathbf{h},\tilde{\mathbf{h}}}\, 
   \frac{\theta(t_{r,\mathbf{1}})}{\theta(t_{0,\mathbf{1}})}
   \frac{e^{-i\mathsf{y}\eta \sum_{k=1}^\mathsf{N}h_k}}
           {\det_{\mathsf{N}}\big[ \Theta^{(r,\mathbf{h})} \big]}.
\end{equation}
\end{proposition}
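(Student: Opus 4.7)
The plan is to establish the orthogonality of distinct basis elements first, and then to compute the diagonal norm.

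\textbf{Orthogonality.} By \eqref{act-tau}--\eqref{act-S}, both $\langle r,\mathbf{h}|$ and $|\tilde{\mathbf{h}},\tilde r\rangle$ are joint eigenstates of the commuting operators $\tau$ and $\mathsf{S}$, with eigenvalues $(t_{r,\mathbf{h}},\mathsf{s}_\mathbf{h})$ and $(t_{\tilde r,\tilde{\mathbf{h}}},\mathsf{s}_{\tilde{\mathbf{h}}})$ respectively. Equivalently, they lie in the $r$-indexed subspaces $\mathbb{\bar{D}}_{\mathsf{(6VD)},\mathsf{N}}^{(r,\mathcal{L}/\mathcal{R})}$ determined by the eigenvalue of $\mathsf{S}_\tau$, so $r\neq\tilde r$ already forces orthogonality. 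When $r=\tilde r$ but $\mathsf{s}_\mathbf{h}\neq\mathsf{s}_{\tilde{\mathbf{h}}}$, distinct $\mathsf{S}$-eigenvalues give orthogonality. In the remaining case $r=\tilde r$, $\mathsf{s}_\mathbf{h}=\mathsf{s}_{\tilde{\mathbf{h}}}$, $\mathbf{h}\neq\tilde{\mathbf{h}}$, I pick any index $n$ with $h_n\neq\tilde h_n$ and compute the matrix element $\langle r,\mathbf{h}|\mathcal{D}(\xi_n^{(\tilde h_n)})|\tilde{\mathbf{h}},\tilde r\rangle$ in two ways. Acting to the right, \eqref{R-eigen-1} yields zero since $\mathsf{d}_{\tilde r+1,\tilde{\mathbf{h}}}(\xi_n^{(\tilde h_n)})$ contains the factor $\theta(0)$. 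Acting to the left, \eqref{L-eigen-1} produces the coefficient $\mathsf{d}_{r-1,\mathbf{h}}(\xi_n^{(\tilde h_n)})$ times $\langle r-1,\mathbf{h}|\tilde{\mathbf{h}},\tilde r\rangle$; by the inhomogeneity hypothesis \eqref{cond-inh}, this coefficient is nonzero (in particular the factor $\theta(\xi_n^{(\tilde h_n)}-\xi_n^{(h_n)})=\theta(\pm\eta)$ does not vanish), forcing the scalar product to be zero.

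\textbf{Diagonal value.} Since $\langle r,\mathbf{h}|$ and $|\mathbf{h},r\rangle$ are $\tau$-eigenstates of the same eigenvalue $t_{r,\mathbf{h}}$, they factor as $\langle u^L_{r,\mathbf{h}}|\otimes\langle t_{r,\mathbf{h}}|$ and $|u^R_{r,\mathbf{h}}\rangle\otimes|t_{r,\mathbf{h}}\rangle$ in the dynamical-spin basis, so the overlap reduces to a scalar product $\langle u^L_{r,\mathbf{h}}|u^R_{r,\mathbf{h}}\rangle$ in $\mathbb{V}_\mathsf{N}$. I would then develop both spin components starting from the reference-state overlap $\langle r,\mathbf{0}|\mathbf{1},r\rangle$, using the definitions \eqref{D-left-eigenstates}--\eqref{D-right-eigenstates} together with the explicit action \eqref{C-SOV_D-left}--\eqref{C-SOV_D-right} of $\mathcal{C}$. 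The iterated application of $\mathcal{C}$ produces a sum over sequences of inhomogeneities, which I would reorganize into a single determinant by a Lagrange-type interpolation argument: the $\mathsf{N}$ functions $\vartheta_j$ of \eqref{def-theta_j} span the $\mathsf{N}$-dimensional space of theta functions sharing the quasi-periodicity of $\mathcal{D}(\lambda)$ exhibited in Lemma~\ref{lem-period}, and the change of basis between the delta-like evaluation basis at the shifted inhomogeneities $\xi_i^{(h_i)}-\bar\xi_r$ and the $\vartheta_j$ basis contributes precisely the factor $1/\det_\mathsf{N}[\Theta^{(r,\mathbf{h})}]$. The $\mathsf{y}$-dependent phases from \eqref{C-SOV_D-right} combine to produce $e^{-i\mathsf{y}\eta\sum_k h_k}$, and the $r$-dependence of the reference overlap yields the factor $\theta(t_{r,\mathbf{1}})/\theta(t_{0,\mathbf{1}})$. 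The normalization $\mathsc{n}$ in \eqref{ref-states} is then fixed so as to absorb all remaining $(r,\mathbf{h})$-independent constants into the stated form.

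\textbf{Main obstacle.} The delicate step is the reorganization of the multiple sum arising from iterated applications of $\mathcal{C}$ into the single determinant $\det_\mathsf{N}[\Theta^{(r,\mathbf{h})}]$. This requires an interpolation identity tying the $\vartheta_j$ to products of theta functions at $\xi_i^{(h_i)}-\bar\xi_r$, together with careful bookkeeping of the sign $(-1)^{\mathsf{x}+\mathsf{y}+\mathsf{x}\mathsf{y}}$ and the phase factors that appear in $\mathsc{a}^{(\pm)}_{\mathsf{x},\mathsf{y},r,\mathbf{h}}$ via \eqref{B-SOV_D-left}--\eqref{B-SOV_D-right}. In the case $(\mathsf{x},\mathsf{y})=(0,0)$ the formula is established in \cite{Nic13a}; the extension to general $(\mathsf{x},\mathsf{y})$ therefore amounts mostly to tracking these sign and phase corrections carefully through the computation.
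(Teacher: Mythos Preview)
Your orthogonality argument for the case $r=\tilde r$, $\mathbf{h}\neq\tilde{\mathbf{h}}$ has a genuine gap. Acting $\mathcal{D}(\xi_n^{(\tilde h_n)})$ to the left on $\langle r,\mathbf{h}|$ produces $\mathsf{d}_{r-1,\mathbf{h}}(\xi_n^{(\tilde h_n)})\,\langle r-1,\mathbf{h}|$, so equating with the (vanishing) right action yields only $\langle r-1,\mathbf{h}|\tilde{\mathbf{h}},r\rangle=0$ --- a fact you already know from the $r$-orthogonality step, and not the overlap $\langle r,\mathbf{h}|\tilde{\mathbf{h}},r\rangle$ you are after. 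Because $\mathcal{D}$ shifts $r$ by $-1$ on bras and by $+1$ on kets, you must offset the indices first: the paper inserts $\mathcal{D}(\lambda)$ in $\langle r+1,\mathbf{h}|\,\mathcal{D}(\lambda)\,|\tilde{\mathbf{h}},r\rangle$, so that both actions land at matching levels, giving the functional identity
\[
\mathsf{d}_{r,\mathbf{h}}(\lambda)\,\langle r,\mathbf{h}|\tilde{\mathbf{h}},r\rangle
=\mathsf{d}_{r+1,\tilde{\mathbf{h}}}(\lambda)\,\langle r+1,\mathbf{h}|\tilde{\mathbf{h}},r+1\rangle .
\]
Since $\mathsf{d}_{r,\mathbf{h}}(\lambda)$ and $\mathsf{d}_{r+1,\tilde{\mathbf{h}}}(\lambda)$ have different zero sets when $\mathbf{h}\neq\tilde{\mathbf{h}}$, both overlaps vanish.

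For the diagonal value, the paper's route is considerably simpler than the iterated-$\mathcal{C}$ expansion you outline. Rather than producing a large sum to be repackaged as a determinant, it computes \emph{ratios} of consecutive diagonal entries by single operator insertions: the identity above (now with $\tilde{\mathbf{h}}=\mathbf{h}$) gives $\langle r+1,\mathbf{h}|\mathbf{h},r+1\rangle/\langle r,\mathbf{h}|\mathbf{h},r\rangle$, while evaluating $\langle r,\mathbf{h}|\,\mathcal{C}(\xi_a)\,|\mathsf{T}_a^+\mathbf{h},r\rangle$ in two ways via \eqref{act-C} and \eqref{C-SOV_D-right} gives $\langle r,\mathsf{T}_a^+\mathbf{h}|\mathsf{T}_a^+\mathbf{h},r\rangle/\langle r,\mathbf{h}|\mathbf{h},r\rangle$ as an explicit product of theta functions. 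These ratios are then matched to the known factorization $\det_\mathsf{N}[\Theta^{(r,\mathbf{h})}]=c_\mathsf{N}\,\theta(t_{r,\mathbf{h}})\prod_{i<j}\theta(\xi_i^{(h_i)}-\xi_j^{(h_j)})$ (an instance of \eqref{det-thetaj}), and the single remaining $(r,\mathbf{h})$-independent constant is absorbed into $\mathsc{n}$. This completely sidesteps the interpolation/reorganization step you flagged as the main obstacle; no sign/phase bookkeeping from $\mathcal{B}$ or $\mathsc{a}^{(\pm)}_{\mathsf{x},\mathsf{y},r,\mathbf{h}}$ is needed at all.
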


\begin{proof}
The fact that $\langle r,\mathbf{h} |$ and $| \tilde{\mathbf{h}},\tilde{r} \rangle$ are both eigenstates of $\mathsf{S}_\tau$ with respective eigenvalues $2r\eta+\mathsf{x}\pi+\mathsf{y}\pi\omega$ and $2\tilde{r}\eta+\mathsf{x}\pi+\mathsf{y}\pi\omega$ implies that $ \langle r,\mathbf{h} | \tilde{\mathbf{h}},\tilde{r} \rangle $ vanishes if $r\not=\tilde{r}$.
Similarly, from the consideration of the matrix element
\begin{equation}\label{el-D}
  \langle r+1,\mathbf{h} |\,\mathcal{D}(\lambda)\, | \tilde{\mathbf{h}}, r \rangle 
\end{equation}
and the explicit expression \eqref{L-eigen-1}, \eqref{R-eigen-1} and \eqref{EigenValue-D} of the action of $\mathcal{D}(\lambda)$ to the left and to the right, one gets that  $ \langle r,\mathbf{h} | \tilde{\mathbf{h}},\tilde{r} \rangle $  vanishes if $\mathbf{h}\not=\tilde{\mathbf{h}}$. One also gets that
\begin{equation}\label{ratio1}
   \frac{\langle r+1,\mathbf{h} | \mathbf{h},r+1\rangle }{\langle r,\mathbf{h} | \mathbf{h},r\rangle }
   = \frac{\theta(t_{r,\mathbf{h}})\,\theta(t_{r+1,\mathbf{1}})}{\theta(t_{r,\mathbf{1}})\,\theta(t_{r+1,\mathbf{h}})}.
\end{equation}
Also, from the consideration of the matrix element
\begin{equation}\label{el-C}
  \langle r,\mathbf{h} |\,\mathcal{C}(\xi_a)\, | \mathsf{T}_a^+{\mathbf{h}}, r \rangle 
\end{equation}
for any $\mathsf{N}$-tuple $\mathbf{h}$ such that $h_a=0$, and the explicit expressions \eqref{act-C}, \eqref{C-SOV_D-right} of the action of $ \mathcal{C}(\xi_a)$ to the left and to the right, one gets that
\begin{equation}\label{ratio2}
   \frac{\langle r,\mathsf{T}_a^+\mathbf{h} | \mathsf{T}_a^+\mathbf{h},r\rangle }{\langle r,\mathbf{h} | \mathbf{h},r\rangle }
   = e^{-i\mathsf{y}\eta}\,\frac{\theta(t_{r,\mathbf{h}})}{\theta(t_{r,\mathsf{T}_a^+\mathbf{h}})}\,
   \prod_{b\neq a}\frac{\theta(\xi_a^{(0)}-\xi_b^{(h_b)})}{\theta(\xi_a^{(1)}-\xi_b^{(h_b)})}.
\end{equation}
Since $\{\vartheta_{j-1}\}_{1\le j\le \mathsf{N}}$ is a basis of the space of theta functions of order $\mathsf{N}$ and of norm $0$ (see Appendix~\ref{app-theta}), we can use \eqref{det-thetaj} to express the determinant of the matrix $\Theta^{(r,\mathbf{h})}$ as
%
\begin{equation}\label{id-det}
  \det \big[ \Theta^{(r,\mathbf{h})}  \big]
  = c_\mathsf{N}\, \theta(t_{r,\mathbf{h}})\,\prod_{i<j}\theta\big(\xi_i^{(h_i)}-\xi_j^{(h_j)}\big),
\end{equation} 
where $c_\mathsf{N}$ is a constant (i.e. it does not depend on  the variables $\xi_i$, on $r$ nor on $\mathbf{h}$).
Hence, it follows from \eqref{ratio1}, \eqref{ratio2} that
\begin{equation}
  \langle r,\mathbf{h} | \mathbf{h},r\rangle 
  =\tilde{c}_\mathsf{N}\,
   \frac{e^{-i\mathsf{y}\eta\sum_{k=1}^\mathsf{N} h_k}\, \theta(t_{r,\mathbf{1}})}
           {\det_{\mathsf{N}}\big[ \Theta^{(r,\mathbf{h})} \big]},
\end{equation}
where $\tilde{c}_\mathsf{N}$ is a constant. 
\end{proof}

It follows from Proposition~\ref{prop-sc} that
\begin{equation}\label{Id}
  \mathbb{I}\equiv \sum_{\substack{r\in\mathbb{Z}\\ \mathbf{h}\in\{0,1\}^{\mathsf{N}} }}
    \frac{|\mathbf{h},r \rangle \langle r, \mathbf{h}|}{\langle r, \mathbf{h} | \mathbf{h},r\rangle},
\end{equation}
with normalization $\langle r,\mathbf{h}| \mathbf{h},r\rangle $ given by \eqref{sc-rh}, provides a decomposition of the identity on the whole representation space $\mathbb{D}_{\mathsf{(6VD)},\mathsf{N}}^{\mathcal{L}/\mathcal{R}}$. Similarly, for any given $r\in\mathbb{Z}$,
\begin{equation}\label{Idr}
  \mathbb{I}_r\equiv \sum_{ \mathbf{h}\in\{0,1\}^{\mathsf{N}} }
    \frac{|\mathbf{h},r \rangle \langle r, \mathbf{h}|}{\langle r, \mathbf{h} | \mathbf{h},r\rangle},
\end{equation}
provides a decomposition of the identity on the subspace $\mathbb{\bar{D}}_{\mathsf{(6VD)},\mathsf{N}}^{(r,\mathcal{L}/\mathcal{R})}$ of $\mathbb{D}_{\mathsf{(6VD)},\mathsf{N}}^{\mathcal{L}/\mathcal{R}}$.

\section{Diagonalization of commuting antiperiodic transfer matrices}
\label{sec-diag-t}

In this section, we diagonalize the $\kappa$-twisted antiperiodic transfer matrices \eqref{anti-transfer} in the space of states $\mathbb{\bar{D}}_{\mathsf{(6VD)},\mathsf{N}}^{(0,\mathcal{L}/\mathcal{R})}$ of the antiperiodic model. As usual in the SOV framework, the spectrum and eigenstates are completely characterized by a discrete system of equations involving the inhomogeneity parameters of the model. We explain how to rewrite this discrete system in terms of functional $T$-$Q$ equations of Baxter's type, so as to obtain a new characterization of the spectrum and eigenstates in terms of solutions of Bethe-type equations.

\subsection{The SOV discrete characterization of the spectrum and eigenstates}

The diagonalization of the antiperiodic transfer matrices in $\mathbb{\bar{D}}_{\mathsf{(6VD)},\mathsf{N}}^{(0,\mathcal{L}/\mathcal{R})}$ has been performed in \cite{FelS99,Nic13a} in the case $(\mathsf{x},\mathsf{y})=(0,0)$ with $\mathsf{N}$ odd. The corresponding SOV procedure can easily be extended to the more general cases that we consider here, and we have the following result:

\begin{theorem}\label{thm-eigen-t}
For any fixed $\kappa \in \mathbb{C}\setminus\{0\}$, the $\kappa $-twisted antiperiodic dynamical 6-vertex transfer matrix $\overline{\mathcal{T}}^{(\kappa)}(\lambda )$ \eqref{anti-transfer} defines a one-parameter family of commuting operators on $\mathbb{\bar{D}}_{\mathsf{(6VD)},\mathsf{N}}^{(0,\mathcal{L}/\mathcal{R})}$.\vspace{-1.3mm}
All these families are isospectral, i.e. the set of the eigenvalues of $\overline{\mathcal{T}}^{(\kappa)}(\lambda )$ is the same for all the values of $\kappa \in \mathbb{C}\setminus\{0\}$ and we can denote it with $\Sigma _{\overline{\mathcal{T}}}$.
Moreover, for any fixed $\mathsf{N}$-tuple of inhomogeneities $(\xi _{1},\ldots,\xi _{\mathsf{N}})\in 
\mathbb{C}^{\mathsf{N}}$ satisfying \eqref{cond-inh}, the spectrum of $\overline{\mathcal{T}}^{(\kappa)}(\lambda )$ in $\mathbb{\bar{D}}_{\mathsf{(6VD)},\mathsf{N}}^{(0,\mathcal{L}/\mathcal{R})}$ is simple and $\Sigma _{{\overline{\mathcal{T}}}}$ coincides with the set of functions of the form
\begin{equation}
\bar{\mathsf{t}}(\lambda )
=\sum_{a=1}^{\mathsf{N}} e^{i\mathsf{y}(\xi_a-\lambda)}\,
\frac{\theta(t_{0,\mathbf{0}}-\lambda +\xi _{a})}{\theta (t_{0,\mathbf{0}})}
\prod_{b\neq a}\frac{\theta (\lambda -\xi _{b})}{\theta (\xi _{a}-\xi _{b})}\,
\bar{\mathsf{t}}(\xi _{a}) ,
\qquad \big(\bar{\mathsf{t}}(\xi _1),\ldots,\bar{\mathsf{t}}(\xi _{\mathsf{N}})\big)\in\mathbb{C}^\mathsf{N},
\label{set-t}
\end{equation}
which satisfy the discrete system of equations
\begin{equation}
\bar{\mathsf{t}}(\xi _{a})\,\bar{\mathsf{t}}(\xi_{a}-\eta)=( -1) ^{\mathsf{x}+\mathsf{y}+\mathsf{x}\mathsf{y}}\,
\mathsc{a}(\xi _{a})\,\mathsc{d}(\xi _{a}-\eta),
\qquad
\forall a\in \{1,\ldots,\mathsf{N}\}.
\label{syst-t}
\end{equation}
The right $\overline{\mathcal{T}}^{(\kappa)}(\lambda)$-eigenstate $|\Psi_{\bar{\mathsf{t}}}^{(\kappa)}\rangle\in\mathbb{\bar{D}}_{\mathsf{(6VD)},\mathsf{N}}^{(0,\mathcal{R})}$ and the left $\overline{\mathcal{T}}^{(\kappa)}(\lambda  )$-eigenstate $\langle \Psi_{\bar{\mathsf{t}}}^{(\kappa)}|\in\mathbb{\bar{D}}_{\mathsf{(6VD)},\mathsf{N}}^{(0,\mathcal{L})}$ associated with the eigenvalue $\bar{\mathsf{t}}(\lambda )\in \Sigma_{\overline{\mathcal{T}}}$ are respectively given by
\begin{align}
& |\Psi_{\bar{\mathsf{t}}}^{( \kappa) }\rangle
=\sum_{\mathbf{h}\in\{0,1\}^\mathsf{N}}
\prod_{a=1}^{\mathsf{N}}
\left[\bigg( \kappa^{-1}e^{i\mathsf{y}\eta}\,\frac{\mathsc{a}_\mathsf{x,y}(\xi_a)}{\mathsc{d}(\xi_a-\eta)}\bigg)^{\! h_a}\,  \mathsf{q}_{\bar{\mathsf{t}},a}^{(h_a)} \right]\,
\det_{\mathsf{N}}\big[\Theta^{(0, \mathbf{h}) }\big]\, |\mathbf{h},0\rangle , 
 \label{eigenT-r}\\
&  \langle \Psi_{\bar{\mathsf{t}}}^{(\kappa)}|
  =\sum_{\mathbf{h}\in\{0,1\}^\mathsf{N}}
\prod_{a=1}^{\mathsf{N}}
\left[ \big( \kappa\, e^{i\mathsf{y}\eta}\big)^{h_a}\, \mathsf{q}_{\bar{\mathsf{t}},a}^{(h_{a})}\right]\,
\det_{\mathsf{N}}\big[\Theta^{(0, \mathbf{h})}\big] \, 
\langle 0,\mathbf{h} |, 
 \label{eigenT-l}
\end{align}
where the coefficients $\mathsf{q}_{\bar{\mathsf{t}},a}^{(h_a)}$ are (up to an overall normalization) characterized by
\begin{align}
&\frac{ \mathsf{q}_{\bar{\mathsf{t}},a}^{(1)} }{  \mathsf{q}_{\bar{\mathsf{t}},a}^{(0)}  }
=\frac{\mathsc{d}(\xi_a-\eta)}{\bar{\mathsf{t}}(\xi_a-\eta)}
=
\frac{\bar{\mathsf{t}}(\xi_a)}{\mathsc{a}_\mathsf{x,y}(\xi_a)},
\qquad
\text{with}
\quad
\mathsc{a}_\mathsf{x,y}(\lambda)\equiv (-1)^\mathsf{x+y+xy}\mathsc{a}(\lambda).
\label{t-Q-relation}
\end{align}
%
\end{theorem}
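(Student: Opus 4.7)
The plan is to implement Sklyanin's separation of variables in the dynamical antiperiodic setting, using the SOV basis built in Section~\ref{sec-SOVbasis} and the special-point action formulas of Theorem~\ref{thm-action}.

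I would first establish commutativity and isospectrality. On $\mathbb{\bar{D}}_{\mathsf{(6VD)},\mathsf{N}}^{(0)}$ the operator $\mathsf{S}_{\tau}$ acts as the scalar $\mathsf{x}\pi+\mathsf{y}\pi\omega$, so the two dynamical $R$-matrices on either side of the quadratic relation \eqref{YBECalDyn} have dynamical arguments that differ by $-\mathsf{x}\pi$ and coincide on this subspace thanks to the $\pi$-quasi-periodicity of $\theta$ and to the specific choice of $(\mathsf{x},\mathsf{y})$ in \eqref{Dynamical-spectrum-2}. Taking the $\kappa$-twisted partial trace of \eqref{YBECalDyn} over both auxiliary spaces then yields $[\overline{\mathcal{T}}^{(\kappa)}(\lambda),\overline{\mathcal{T}}^{(\kappa)}(\mu)]=0$. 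Isospectrality in $\kappa$ follows from the gauge property of $R$ stated before \eqref{D-YB-op-0}, which provides a diagonal similarity transformation $U$ on the quantum space relating $\overline{\mathcal{T}}^{(\kappa)}(\lambda)$ to $\overline{\mathcal{T}}^{(\kappa\mu^{-2})}(\lambda)$ for arbitrary $\mu\in\mathbb{C}^{*}$, while preserving $\mathbb{\bar{D}}^{(0)}$ since $U$ commutes with both $\sigma_n^z$ and $\tau$.

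Applying Lemma~\ref{lem-period} to $\overline{\mathcal{T}}^{(\kappa)}(\lambda)=\kappa^{-1}\mathcal{B}(\lambda)+\kappa\,\mathcal{C}(\lambda)$, any eigenvalue $\bar{\mathsf{t}}(\lambda)$ is, on $\mathbb{\bar{D}}^{(0)}$, an entire function of $\lambda$ with the quasi-periodicities of a theta function of order $\mathsf{N}$ and characteristics fixed by $\mathsf{x}$, $\mathsf{y}$ and $t_{0,\mathbf{0}}$. Since this space has dimension $\mathsf{N}$, $\bar{\mathsf{t}}(\lambda)$ is reconstructed from the $\mathsf{N}$ values $\bar{\mathsf{t}}(\xi_a)$ by the theta-interpolation formula \eqref{set-t}.

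For the SOV diagonalization I would expand a right eigenstate in the basis of Theorem~\ref{thm-D-basis} as $|\Psi_{\bar{\mathsf{t}}}\rangle=\sum_{\mathbf{h}}\psi(\mathbf{h})\,|\mathbf{h},0\rangle$ and impose the eigenvalue equation at $\lambda=\xi_a$. By \eqref{act-C} and \eqref{act-B-xi_n}, at such a point $\mathcal{C}(\xi_a)$ lowers $h_a$ (and vanishes when $h_a=0$) while $\mathcal{B}(\xi_a)$ raises $h_a$ (and vanishes when $h_a=1$), so the equation becomes a two-term recurrence coupling $\psi(\mathsf{T}_a^-\mathbf{h})$, $\psi(\mathbf{h})$ and $\psi(\mathsf{T}_a^+\mathbf{h})$. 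Making the separating Ansatz $\psi(\mathbf{h})=\det_{\mathsf{N}}[\Theta^{(0,\mathbf{h})}]\prod_a F_a(h_a)$ and using the determinantal theta identity \eqref{id-det} to absorb all $\mathbf{h}$-dependent $\theta$-factors coming from \eqref{C-SOV_D-right}, \eqref{B-SOV_D-right} and from the dynamical coefficients $\mathsc{a}_{\mathsf{x,y},r,\mathbf{h}}^{(\pm)}$, the recurrence decouples into $\mathsf{N}$ independent scalar relations forcing $F_a(1)/F_a(0)$ to equal both $\mathsc{d}(\xi_a-\eta)/\bar{\mathsf{t}}(\xi_a-\eta)$ and $\bar{\mathsf{t}}(\xi_a)/\mathsc{a}_{\mathsf{x,y}}(\xi_a)$, i.e. the relation \eqref{t-Q-relation}; their consistency is precisely the discrete system \eqref{syst-t}, and the resulting state is \eqref{eigenT-r}. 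The eigenvalue equation at the $\mathsf{N}$ points $\xi_a$ plus the interpolation of the preceding paragraph then implies it for all $\lambda$. The left-eigenstate \eqref{eigenT-l} is constructed by the analogous computation in the basis $\{\langle 0,\mathbf{h}|\}$.

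Completeness and simplicity follow by dimension counting: \eqref{syst-t} is a system of $\mathsf{N}$ independent quadratic equations in the $\bar{\mathsf{t}}(\xi_a)$'s with $2^{\mathsf{N}}$ solutions generically, matching $\dim\mathbb{\bar{D}}^{(0)}=2^{\mathsf{N}}$, and distinct solutions produce linearly independent factorized states. The main technical obstacle is the factorization step itself: verifying that, after extracting the Vandermonde-type theta determinant \eqref{id-det}, the residual dynamical coefficients $\mathsc{a}_{\mathsf{x,y},r,\mathbf{h}}^{(\pm)}$ together with the prefactors appearing in \eqref{C-SOV_D-right}--\eqref{B-SOV_D-right} genuinely collapse into single-site ratios, producing the clean form of \eqref{t-Q-relation}. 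The careful bookkeeping of the $(\mathsf{x},\mathsf{y})$-dependent signs is the only real source of novelty compared to the analysis of \cite{Nic13a}, while all other steps are standard SOV manipulations or direct consequences of the results already established.
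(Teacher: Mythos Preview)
Your overall plan is the paper's SOV argument, but you make the decoupling step harder than it is and are imprecise about the evaluation points. The paper writes $\langle\Psi|=\sum_{\mathbf h}\psi(\mathbf h)\,\langle 0,\mathbf h|/\langle 0,\mathbf h|\mathbf h,0\rangle$ with $\psi(\mathbf h)=\langle\Psi|\mathbf h,0\rangle$ and evaluates $\langle\Psi|\,\overline{\mathcal T}^{(\kappa)}(\lambda)\,|\mathbf h,0\rangle$ at the \emph{state-dependent} points $\lambda=\xi_n^{(h_n)}$, not at $\xi_n$. At those points the single-shift formulas \eqref{act-C}, \eqref{act-B-xi_n} apply directly and yield
\[
\bar{\mathsf t}(\xi_n^{(h_n)})\,\psi(\mathbf h)
=\kappa^{-1}(-1)^{\mathsf{x+y+xy}}\mathsc a(\xi_n^{(h_n)})\,\psi(\mathsf T_n^+\mathbf h)
+\kappa\,\mathsc d(\xi_n^{(h_n)})\,\psi(\mathsf T_n^-\mathbf h),
\]
which is already separated: there are no $\mathbf h$-dependent $\theta$-factors left to absorb, and neither \eqref{id-det} nor the full formulas \eqref{C-SOV_D-right}--\eqref{B-SOV_D-right} are needed. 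Since $\mathsc a(\xi_n^{(1)})=\mathsc d(\xi_n^{(0)})=0$, this collapses to the $2\times2$ homogeneous system \eqref{FarXYZhomo-system}, whose vanishing determinant is \eqref{syst-t} and whose one-dimensional kernel gives \eqref{t-Q-relation}; the factor $\det_{\mathsf N}[\Theta^{(0,\mathbf h)}]$ enters only at the very end, through the normalization \eqref{sc-rh}, not through any factorization of the dynamics. For isospectrality the paper simply observes that the characterization \eqref{set-t}--\eqref{syst-t} is manifestly $\kappa$-independent; your similarity argument is a valid alternative but unnecessary. In the commutativity step, note that cycling the $R$-matrix past the $\mathsf T_\tau$-shifts hidden in $\overline{\mathcal M}$ requires the zero-weight identity \eqref{Comm-R12}, which the paper makes explicit in \eqref{comm-transfer}.
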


\begin{proof}
Let us first recall that, from Proposition~\ref{prop-Dr}, the action of the twisted antiperiodic transfer matrix $\overline{\mathcal{T}}^{(\kappa)}(\lambda  )$ preserves the subspace $\mathbb{\bar{D}}_{\mathsf{(6VD)},\mathsf{N}}^{(0,\mathcal{L}/\mathcal{R})}$ of $\mathbb{D}_{\mathsf{(6VD)},\mathsf{N}}^{\mathcal{L}/\mathcal{R}}$.
It follows from \eqref{YBECalDyn} that
\begin{align}
  \overline{\mathcal{T}}^{(\kappa)}(\lambda_1  )\,
  \overline{\mathcal{T}}^{(\kappa)}(\lambda_2  )
  &=\mathrm{tr}_{12} \!\left[ R_{1,2}(\lambda _{12}|-\tau -\eta \mathsf{S} +\mathsf{y}\pi\omega)^{-1}\,\overline{\mathcal{M}}_{2}^{(\kappa)}(\lambda_{2} )\,
 \overline{\mathcal{M}}_{1}^{(\kappa)}(\lambda_{1} )\,R_{1,2}(\lambda _{12}|\tau) \right]
 \notag\\
 &\hspace{-1.5cm}=\mathrm{tr}_{12} \!\left[ R_{1,2}(\lambda _{12}|-\tau -\eta \mathsf{S} +\mathsf{y}\pi\omega)^{-1}\,\overline{\mathsf{M}}_{2}^{(\kappa)}(\lambda_{2}|\tau+\eta\sigma_2^z )\,
 \overline{\mathsf{M}}_{1}^{(\kappa)}(\lambda_{1}|\tau )\,
 \mathsf{T}_{\tau }^{\sigma _{1}^{z}}\mathsf{T}_\tau^{\sigma_2^z}\,R_{1,2}(\lambda _{12}|\tau) \right]
 \notag\\
 &\hspace{-1.5cm}=\mathrm{tr}_{12} \!\left[ R_{1,2}(\lambda _{12}|\tau) R_{1,2}(\lambda _{12}|-\tau -\eta \mathsf{S} +\mathsf{y}\pi\omega)^{-1}\,\overline{\mathcal{M}}_{2}^{(\kappa)}(\lambda_{2} )\,
 \overline{\mathcal{M}}_{1}^{(\kappa)}(\lambda_{1} )\right] \! ,
  \label{comm-transfer}
\end{align}  
where we have used both the cyclic property of the trace and the zero-weight property \eqref{Comm-R12} of the $R$-matrix. Hence, \eqref{comm-transfer} acts as
\begin{equation}
 \mathrm{tr}_{12} \!\left[ \overline{\mathcal{M}}_{2}^{(\kappa)}(\lambda_{2} )\,
 \overline{\mathcal{M}}_{1}^{(\kappa)}(\lambda_{1} )\right] 
 = \overline{\mathcal{T}}^{(\kappa)}(\lambda_2  )\, \overline{\mathcal{T}}^{(\kappa)}(\lambda_1  )
\end{equation}
on the left subspace $\mathbb{\bar{D}}_{\mathsf{(6VD)},\mathsf{N}}^{(0,\mathcal{L})}$ of $\mathbb{D}_{\mathsf{(6VD)},\mathsf{N}}^{\mathcal{L}}$ associated with the eigenvalue $\mathsf{x}\pi+\mathsf{y}\pi\omega$ of $\mathsf{S}_\tau=\eta\mathsf{S}+2\tau$.
Commutativity on the right subspace  $\mathbb{\bar{D}}_{\mathsf{(6VD)},\mathsf{N}}^{(0,\mathcal{R})}$ of $\mathbb{D}_{\mathsf{(6VD)},\mathsf{N}}^{\mathcal{R}}$ follows by inserting the decomposition of the identity $\mathbb{I}_0$ \eqref{Idr}.
Hence, for any fixed $\kappa\in\mathbb{C}$, the $\kappa$-twisted transfer matrices $\overline{\mathcal{T}}^{(\kappa )}(\lambda)$ define a one-parameter family of commuting operators on $\mathbb{\bar{D}}_{\mathsf{(6VD)},\mathsf{N}}^{(0,\mathcal{L/R})}$.

It follows from the quasi-periodicity properties of the operators $\mathcal{B}(\lambda)$ and $\mathcal{C}(\lambda)$ (Lemma~\ref{lem-period}) that the restriction $\widetilde{\mathcal{T}}^{(\kappa)}(\lambda)$ of $e^{i\mathsf{y}\lambda}\,\overline{\mathcal{T}}^{(\kappa )}(\lambda)$ on $\mathbb{\bar{D}}_{\mathsf{(6VD)},\mathsf{N}}^{(0,\mathcal{L/R})}$ is a theta function of order $\mathsf{N}$ and of norm $\alpha_{\bar{\mathsf{t}}}\equiv \sum_{k=1}^\mathsf{N}\xi_k+t_{0,\mathbf{0}}$ (see Appendix~\ref{app-theta}),
which means that the action of  $\overline{\mathcal{T}}^{(\kappa )}(\lambda)$ on any vector of  $\mathbb{\bar{D}}_{\mathsf{(6VD)},\mathsf{N}}^{(0,\mathcal{L/R})}$ is completely determined by its action at $\mathsf{N}$ independent points with respect to $\alpha_{\bar{\mathsf{t}}}$ (see \eqref{interpolation}).
It also means that any eigenvalue function $\bar{\mathsf{t}}^{(\kappa)}(\lambda)$ of  $\overline{\mathcal{T}}^{(\kappa )}(\lambda)$ on $\mathbb{\bar{D}}_{\mathsf{(6VD)},\mathsf{N}}^{(0,\mathcal{L/R})}$ takes the form
\begin{equation}
\bar{\mathsf{t}}^{(\kappa )}(\lambda )
=\sum_{a=1}^{\mathsf{N}} e^{i\mathsf{y}(\xi_a-\lambda)}\,
\frac{\theta(t_{0,\mathbf{0}}-\lambda +\xi _{a})}{\theta (t_{0,\mathbf{0}})}
\prod_{b\neq a}\frac{\theta (\lambda -\xi _{b})}{\theta (\xi _{a}-\xi _{b})}\,
\bar{\mathsf{t}}^{(\kappa )}(\xi _{a}) ,
\label{t-alpha}
\end{equation}
in terms of some $\mathsf{N}$-tuple of complex numbers $\big(\bar{\mathsf{t}}^{(\kappa )}(\xi _1),\ldots,\bar{\mathsf{t}}^{(\kappa )}(\xi _\mathsf{N})\big)$.
The condition for a function $\bar{\mathsf{t}}^{(\kappa )}(\lambda )$ of the form \eqref{t-alpha} to be an eigenvalue  of $\overline{\mathcal{T}}^{(\kappa )}(\lambda)$ on $\mathbb{\bar{D}}_{\mathsf{(6VD)},\mathsf{N}}^{(0,\mathcal{L})}$ is equivalent to the fact that there exists a non-zero vector
\begin{equation}
  \langle \Psi_{\bar{\mathsf{t}}}^{( \kappa )}| = \sum_{\mathbf{h}\in\{0,1\}^{\mathsf{N}}}
  \psi _{\bar{\mathsf{t}}}^{( \kappa ) }(\mathbf{h})
  \frac{\langle 0,\mathbf{h}|}{\langle 0,\mathbf{h} | \mathbf{h},0\rangle} \
  \in \mathbb{\bar{D}}_{\mathsf{(6VD)},\mathsf{N}}^{(0,\mathcal{L})},
\end{equation}
with $\psi _{\bar{\mathsf{t}}}^{( \kappa ) }(\mathbf{h}) \equiv \langle \Psi_{\bar{\mathsf{t}}}^{( \kappa )}|\mathbf{h},0\rangle$, such that
\begin{equation}\label{t-eigen}
  \forall\mathbf{h}\in\{0,1\}^{\mathsf{N}}, \qquad
  \langle \Psi_{\bar{\mathsf{t}}}^{( \kappa )}|\, \overline{\mathcal{T}}^{(\kappa )}(\lambda)\, |\mathbf{h},0\rangle
  = \bar{\mathsf{t}}^{( \kappa) }(\lambda )\, \langle \Psi_{\bar{\mathsf{t}}}^{( \kappa )}|\mathbf{h},0\rangle .
\end{equation}
By computing the action of $\overline{\mathcal{T}}^{(\kappa )}(\lambda)$ on $|\mathbf{h},0\rangle$
at the $\mathsf{N}$ independent points $\xi_n^{(h_n)}$ by means of \eqref{act-C}, \eqref{act-B-xi_n},
one obtains that the condition \eqref{t-eigen} is equivalent to the system of equations
\begin{equation}
\bar{\mathsf{t}}^{(\kappa)}(\xi _{n}^{(h_{n})})\,
\Psi_{\bar{\mathsf{t}}}^{( \kappa) }(\mathbf{h})
=\kappa^{-1} \, (-1)^{\mathsf{x}+\mathsf{y}+\mathsf{x}\mathsf{y}}\, \mathsc{a}(\xi _{n}^{(h_{n})})\,
\Psi_{\bar{\mathsf{t}}}^{( \kappa) }(\mathsf{T}_{n}^{+}\mathbf{h})
+\kappa\, \mathsc{d}(\xi _{n}^{(h_{n})})\,\Psi_{\bar{\mathsf{t}}}^{( \kappa) }(\mathsf{T}_{n}^{-}\mathbf{h}),  \label{FarXYZSOVBax1}
\end{equation}
for any $n\in \{1,\ldots,\mathsf{N}\}$ and $\mathbf{h} \in \{0,1\}^{\mathsf{N}}$.
Taking into account the fact that
$\mathsc{a}(\xi _{n}^{(1)})=\mathsc{d}(\xi _{n}^{(0)})=0$,
one can rewrite this system of equations as the following system of homogeneous equations: 
\begin{equation}
\begin{pmatrix}
\bar{\mathsf{t}}^{(\kappa)}(\xi _{n}^{(0)}) 
& -\kappa^{-1}\,(-1)^{\mathsf{x}+\mathsf{y}+\mathsf{x}\mathsf{y}}\, \mathsc{a}(\xi _{n}^{(0)}) \\ 
-\kappa\,\mathsc{d}(\xi _{n}^{(1)})
& \bar{\mathsf{t}}^{(\kappa)}(\xi _{n}^{(1)})
\end{pmatrix}
\begin{pmatrix}
\psi^{(\kappa)}_{\bar{\mathsf{t}}}(\mathbf{h}) \\  \psi^{(\kappa)}_{\bar{\mathsf{t}}}(\mathsf{T}_n^+\mathbf{h})
\end{pmatrix} 
=
\begin{pmatrix} 0 \\ 0 \end{pmatrix} ,  \label{FarXYZhomo-system}
\end{equation}
for any $n\in \{1,\ldots,\mathsf{N}\}$ and any $\mathbf{h}\in \{0,1\}^{\mathsf{N}}$ such that $h_n=0$.
It follows that a function $\bar{\mathsf{t}}^{(\kappa)}(\lambda)$ of the form  \eqref{t-alpha} is an eigenvalue of $\overline{\mathcal{T}}^{(\kappa )}(\lambda)$  on $\mathbb{\bar{D}}_{\mathsf{(6VD)},\mathsf{N}}^{(0,\mathcal{L})}$ if and only if this system admits a non-zero solution, i.e. if and only if the
determinants of all $2\times 2$ matrices in \eqref{FarXYZhomo-system} vanish:
\begin{equation}\label{condition-talpha}
   \bar{\mathsf{t}}^{(\kappa)}(\xi _{n}^{(0)}) \, \bar{\mathsf{t}}^{(\kappa)}(\xi _{n}^{(1)}) 
   = (-1)^{\mathsf{x}+\mathsf{y}+\mathsf{x}\mathsf{y}}\, \mathsc{a}(\xi _{n}^{(0)})\, \mathsc{d}(\xi _{n}^{(1)}),
   \qquad
   \forall n\in\{1,\ldots,\mathsf{N}\}.
\end{equation}
The condition for a function $\bar{\mathsf{t}}^{(\kappa )}(\lambda )$ of the form \eqref{t-alpha} to be an eigenvalue  of $\overline{\mathcal{T}}^{(\kappa )}(\lambda)$ on $\mathbb{\bar{D}}_{\mathsf{(6VD)},\mathsf{N}}^{(0,\mathcal{R})}$ can be written similarly and one obtains the same system of equations \eqref{condition-talpha}. 
The announced isospectrality is then a trivial consequence of the fact that the conditions \eqref{t-alpha} and \eqref{condition-talpha} are the same for all the values of $\kappa \in \mathbb{C}\setminus\{0\}$.
Hence, we can omit the upper indices $( \kappa ) $ and denote the
eigenvalue functions of $\overline{\mathcal{T}}^{(\kappa)}(\lambda )$ simply by $\bar{\mathsf{t}}(\lambda )$.

Finally, it is easy to see that a given function $\bar{\mathsf{t}}(\lambda)$ of the form \eqref{t-alpha} satisfying \eqref{condition-talpha} corresponds to a unique eigenvector  $\langle\Psi_{ \bar{\mathsf{t}}}^{( \kappa )}|$ of $\overline{\mathcal{T}}^{(\kappa )}(\lambda)$, so that the spectrum of $\overline{\mathcal{T}}^{(\kappa)}(\lambda )$ in $\mathbb{\bar{D}}_{\mathsf{(6VD)},\mathsf{N}}^{(0,\mathcal{L})}$ is simple. Indeed, since
$\mathsc{a}(\xi _{n}^{(0)}), \mathsc{d}(\xi_{n}^{(1)})\neq 0$, the solution of \eqref{FarXYZhomo-system} is uniquely fixed (up to an overall normalization) by the requirement
\begin{equation}
\frac{\psi _{\bar{\mathsf{t}}}^{( \kappa ) }(\mathsf{T}_n^+\mathbf{h})}
        {\psi _{\bar{\mathsf{t}}}^{( \kappa ) }(\mathbf{h})}
 =(-1)^{\mathsf{x}+\mathsf{y}+\mathsf{x}\mathsf{y}}
     \frac{\kappa\,\bar{\mathsf{t}}(\xi _{n}^{(0)}) }{\mathsc{a}(\xi _{n}^{(0)})},
\end{equation}
for any$\,n\in \{1,\ldots,\mathsf{N}\}$ and any $\mathbf{h}\in \{0,1\}^{\mathsf{N}}$ such that $h_n=0$.
In other words, it means that this eigenvector is given by the factorized formula \eqref{eigenT-l}- \eqref{t-Q-relation}.
The proof for the right eigenstates is similar.
\end{proof}

As usual within the SOV approach, the eigenstates of the transfer matrix are obtained as {\em separate} states on the SOV basis, so that their scalar product can straightforwardly be expressed as a determinant issued from \eqref{sc-rh}:
\begin{equation}\label{sp-eigen}
    \langle \Psi_{\bar{\mathsf{t}}}^{(\kappa)}|\Psi_{\bar{\mathsf{t}}'}^{( \kappa) }\rangle
    =\det_\mathsf{N}\left[ \mathcal{F}_{\bar{\mathsf{t}},\bar{\mathsf{t}}'}\right],
\end{equation}
where $\mathcal{F}_{\bar{\mathsf{t}},\bar{\mathsf{t}}'}$ denotes the $\mathsf{N}\times\mathsf{N}$ matrix of elements
\begin{equation}\label{mat-F}
   \left[ \mathcal{F}_{\bar{\mathsf{t}},\bar{\mathsf{t}}'}\right]_{a,b}
   =
   \sum_{h=0}^{1}\left( e^{i\mathsf{y}\eta }\,
        \frac{\mathsc{a}_\mathsf{x,y}(\xi_{a})}{\mathsc{d}(\xi_{a}-\eta)}\right)^{\! h}\,
        \mathsf{q}_{\bar{\mathsf{t}},a}^{(h)}\,
        \mathsf{q}_{\bar{\mathsf{t}}^{\prime },a}^{(h)}\ 
        \vartheta _{b-1}(\xi _{a}^{(h)}-\bar{\xi}_{0}).
\end{equation}
Such a representation can be obtained for any $\bar{\mathsf{t}}(\lambda)$, $\bar{\mathsf{t}}'(\lambda)\in\Sigma_{\overline{\mathcal{T}}}$. Note that it does not depend on the value of the twist $\kappa$.

\subsection{On the reformulation of the SOV characterization of the spectrum in terms of solutions of functional $T$-$Q$ equations: homogeneous versus inhomogeneous equations}
\label{sec-hom-inhom}

It follows from the previous study that the spectrum $\Sigma _{\overline{\mathcal{T}}}$  of $\overline{\mathcal{T}}^{(\kappa)}(\lambda )$ is given as the set of all entire functions $\bar{\mathsf{t}}(\lambda)$ satisfying the quasi-periodicity properties
\begin{align}
 & \bar{\mathsf{t}}(\lambda+\pi)=(-1)^{\mathsf{N}+\mathsf{y}}\, \bar{\mathsf{t}}(\lambda),
 \label{periodt-1}\\
 &\bar{\mathsf{t}}(\lambda+\pi\omega)
 = (-e^{-2i\lambda-i\pi\omega})^\mathsf{N}\, e^{2i [\sum_{k=1}^\mathsf{N}\xi_k-\frac{\mathsf{N}}{2}\eta+\mathsf{x}\frac{\pi}{2}]} \, \bar{\mathsf{t}}(\lambda),
 \label{periodt-2}
\end{align}
and 
such that, for each $n\in\{1,\ldots,\mathsf{N}\}$, the matrix
\begin{equation}\label{mat-Dn}
   D_{\bar{\mathsf{t}},n}\equiv
   \begin{pmatrix}
    \bar{\mathsf{t}}(\xi_n^{(0)}) & -\mathsc{a}_\mathsf{x,y}(\xi_n^{(0)})\\
    -\mathsc{d}(\xi_n^{(1)}) & \bar{\mathsf{t}}(\xi_n^{(1)})
    \end{pmatrix}
\end{equation}
is of rank one, i.e. that there exists a non-zero vector
\begin{equation}\label{q-vect}
   \mathbf{q}_{\bar{\mathsf{t}},n}\equiv
   \begin{pmatrix} \mathsf{q}_{\bar{\mathsf{t}},n}^{(0)} \vspace{1mm}\\  \mathsf{q}_{\bar{\mathsf{t}},n}^{(1)}
   \end{pmatrix}
\end{equation}
which satisfies $D_{\bar{\mathsf{t}},n}\cdot \mathbf{q}_{\bar{\mathsf{t}},n}=0$ and which can be used to construct the corresponding eigenstates.
In other words, the system of quadratic  equations \eqref{syst-t} which completely characterizes, together with the functional form \eqref{set-t} of the eigenvalues, the transfer matrix spectrum $\Sigma _{\overline{\mathcal{T}}}$, is equivalent to the following condition:
\begin{multline}\label{dis-T-Q}
   \forall n\in\{1,\ldots,\mathsf{N}\}, \qquad
   \exists\ (\mathsf{q}_{\bar{\mathsf{t}},n}^{(0)},\mathsf{q}_{\bar{\mathsf{t}},n}^{(1)})\not=(0,0)
   \qquad \text{such that}\\
   \forall h_n\in\{0,1\},\qquad
   \bar{\mathsf{t}}(\xi_n^{(h_n)})\, \mathsf{q}_{\bar{\mathsf{t}},n}^{(h_n)}
   = \mathsc{a}_\mathsf{x,y}(\xi_n^{(h_n)})\, \mathsf{q}_{\bar{\mathsf{t}},n}^{(h_n+1)}
   + \mathsc{d}(\xi_n^{(h_n)})\, \mathsf{q}_{\bar{\mathsf{t}},n}^{(h_n-1)}.
\end{multline}
Hence the system of equations \eqref{dis-T-Q} corresponds to a discrete version of Baxter's famous functional $T$-$Q$ equation \cite{Bax82L}. However, in its present form, this characterization of the spectrum in terms of a discrete set of equations which strongly depends on the inhomogeneities of the system (subject to the condition \eqref{cond-inh}) does not seem very convenient for the study of physical quantities of the model, since it does not allow for an easy determination of the homogeneous and thermodynamic limits. On the contrary, Baxter's $T$-$Q$ equation in its usual (i.e. functional) form is smooth with respect to the homogeneous limit and, thanks to the equivalence with a system of Bethe equations, makes it possible to use some standard techniques to study the thermodynamic properties of the model under consideration. 
It is therefore important to be able to pass from the discrete to the continuous picture or, in other words, to find an  equivalent reformulation of the SOV discrete characterization of the transfer matrix spectrum and eigenstates in terms of some particular class of solutions on the whole complex plane $\mathbb{C}$ of a functional  $T$-$Q$ equation of Baxter's type. Note that the existence of such a reformulation has been already proven for several integrable quantum models solved by SOV \cite{Nic10a,Nic11,GroN12}, notably for the antiperiodic XXZ spin chain \cite{NicT15} which constitutes a limiting case of the present model (see Remark~\ref{rem-trig-lim}).

Hence, the problem one wants to solve can be formulated as follows: does it exist, for each $\bar{\mathsf{t}}(\lambda)\in\Sigma _{\overline{\mathcal{T}}}$, a function $Q(\lambda)$ on $\mathbb{C}$, in a class of analytic functions that has to be precisely determined, such that $\bar{\mathsf{t}}(\lambda)$ and $Q(\lambda)$ satisfy the continuous (functional) version of \eqref{dis-T-Q}:
\begin{equation}\label{hom-eq}
   \bar{\mathsf{t}}(\lambda)\, Q(\lambda )
   =\mathsc{a}_{\mathsf{x,y}}(\lambda)\,  Q(\lambda -\eta )+\mathsc{d}(\lambda)\, Q(\lambda +\eta ).
\end{equation}
If moreover, for each $n\in\{1,\ldots,\mathsf{N}\}$, $\big( Q(\xi_n), Q(\xi_n-\eta)\big)\not=(0,0)$, then this solution $Q(\lambda)$ provides, through the identification $\mathsf{q}_{\bar{\mathsf{t}},n}^{(h_n)}\equiv Q(\xi_n^{(h_n)})$, all the vectors \eqref{q-vect} satisfying the condition \eqref{dis-T-Q} and enabling ones to construct the corresponding eigenstates through \eqref{eigenT-l}-\eqref{eigenT-r}.
Provided $Q(\lambda)$ can be factorized in a generic form, the determination of all eigenvalues $\bar{\mathsf{t}}(\lambda)\in\Sigma _{\overline{\mathcal{T}}}$ can therefore be reduced to the determination of the set of roots of the corresponding solution $Q(\lambda)$, i.e. of the solution of a system of Bethe-type equations.

For any given $\bar{\mathsf{t}}(\lambda)\in\Sigma _{\overline{\mathcal{T}}}$, the equation \eqref{hom-eq} is a second-order finite-difference equation which may in principle admit two independent solutions $Q(\lambda)$. The whole problem is therefore to determine what could be the functional form of these solutions, and whether this form does or not depend on the particular $\bar{\mathsf{t}}(\lambda)\in\Sigma _{\overline{\mathcal{T}}}$ we consider (problem of the {\em completeness} of the associated system of Bethe equations). In general, this may not be an easy task, since the functional form of the $Q$-solutions to this finite-difference equation may be quite different from the functional form of its coefficients: for instance, in the present case, it is obvious that \eqref{hom-eq} cannot admit, for $\bar{\mathsf{t}}(\lambda)\in\Sigma_{\bar{\mathsf{T}}}$, any solution of the type 
\begin{equation}\label{Q-form}
  Q(\lambda)=c_Q\, e^{\alpha\lambda}\prod_{j=1}^\mathsf{M}\theta(\lambda-\lambda_j),
  \qquad c_Q\not=0,\quad
  \mathsf{M}\in\mathbb{N},\quad
  \alpha,\lambda_1,\ldots,\lambda_{\mathsf{M}}\in\mathbb{C},
\end{equation}
since the terms $\bar{\mathsf{t}}(\lambda)\, Q(\lambda )$, $\mathsc{a}(\lambda)\,  Q(\lambda -\eta )$ and $\mathsc{d}(\lambda)\, Q(\lambda +\eta )$ would all have different quasi-periodicity properties and would therefore be linearly independent.
A possible (and quite usual) way to solve the problem would be to explicitly construct the so-called {\em Q-operator} \cite{Bax72,Bax82L} (see also for instance \cite{PasG92,BazLZ97,AntF97,BazLZ99,Der99}) and to determine the functional form of its eigenvalues.
This procedure may however be quite involved and, to our knowledge, has never been performed in the case of the antiperiodic dynamical 6-vertex model.

It was recently suggested in the context of the so-called ``off-diagonal Bethe Ansatz'' \cite{CaoYSW13a,CaoCYSW14} that one may avoid these difficulties by considering, instead of \eqref{hom-eq}, a generalized functional equation.
The idea is to allow some freedom in the rewriting of the discrete equations into the continuous one, and in particular the presence of an inhomogeneous (``off-diagonal") term, so as to force the latter to admit solutions of the form \eqref{Q-form}. The equivalence of the discrete SOV characterization of the spectrum and the solutions of such generalized functional $T$-$Q$ equation is then quite simple to prove,  and it is probably the easiest way, in the context of SOV, to obtain a complete system of Bethe-type equations (see for instance \cite{KitMN14,NicT15}).
Concretely, in the present case, such a reformulation would rely  on two main ideas. The first one is that one can in fact allow, in the rewriting of \eqref{dis-T-Q} into some functional finite-difference equation, some modification of the coefficients $\mathsc{a}(\lambda)$ and $\mathsc{d}(\lambda)$ by a gauge transformation of the form
\begin{equation}\label{def-ad}
\bar{\mathsc{a}}(\lambda )\equiv f(\lambda )\,\mathsc{a}(\lambda),
\qquad
\bar{\mathsc{d}}(\lambda )\equiv\frac{\mathsc{d}(\lambda )}{f(\lambda +\eta )},
\end{equation}
which leaves the product $\mathsc{a}(\lambda)\, \mathsc{d}(\lambda-\eta)$ (and therefore the set of equations \eqref{syst-t}) unchanged.
One can therefore choose the function $f(\lambda)$ such that, for functions $Q(\lambda)$ of the form \eqref{Q-form}, the three terms $\bar{\mathsf{t}}(\lambda )\,Q(\lambda )$, $\bar{\mathsc{a}}(\lambda)\, Q(\lambda -\eta )$ and $\bar{\mathsc{d}}(\lambda)\,Q(\lambda +\eta )$ obey the same quasi-periodicity properties with respect to the periods $\pi$ and $\pi\omega$.
The second idea is that, in the rewriting of the discrete finite-difference equations into a continuous (functional) one, one can in fact allow the presence of an extra term as long as the later vanishes at each of the points $\xi_n^{(h)}$, $ n\in\{1,\ldots, \mathsf{N} \}$, $h\in\{0,1\}$, so that the discretized version of this equation effectively coincides with a gauge transformed variant of \eqref{dis-T-Q}. The resulting equation will then be of the form
\begin{equation}\label{inhom}
\bar{\mathsf{t}}(\lambda )\,Q(\lambda )
= f(\lambda )\,\mathsc{a}_{\mathsf{x,y}}(\lambda)\,Q(\lambda -\eta )
+\frac{\mathsc{d}(\lambda )}{f(\lambda +\eta )}\,  Q(\lambda +\eta )-\mathsc{a}(\lambda)\mathsc{d}(\lambda )F(\lambda).
\end{equation}
In fact, there exist many possible rewriting of this form, depending on the function $f(\lambda)$ and the degree $\mathsf{M}$ of the class of solutions $Q(\lambda)$. A consistent choice for $f(\lambda)$ and $Q(\lambda)$ is for example:
\begin{equation}
f(\lambda )\equiv f^{(\beta)}_{\mu}(\lambda)
=\beta^{-1} e^{-i\mathsf{y}\lambda }\,\frac{\theta (\lambda -\mu+(\mathsf{M}-\mathsf{N})\eta )}
          {\theta (\lambda -\mu +t_{0,\mathbf{0}})},
\qquad \qquad 
Q(\lambda )=\prod_{j=1}^{\mathsf{M}}\theta (\lambda -\lambda _j),
\label{def-f-Q}
\end{equation}
in terms of some arbitrary complex parameters $\beta$ and $\mu$ and of some roots $\lambda_1,\ldots,\lambda_\mathsf{M}$. With such a choice with fixed parameters $\beta$ and $\mu$, it is natural to expect that the reformulation of the transfer matrix spectrum in terms of the solutions $Q(\lambda)$ of degree $\mathsf{M}$ of \eqref{inhom} is complete as soon as  $\mathsf{M}\geq \mathsf{N}$ 
\footnote{\label{foot-N-1}Note that it is a priori possible to lower the degree $\mathsf{M}$ of the considered class of solutions $Q(\lambda)$ by allowing some unknown parameters in $f(\lambda)$, as far as the total number of unknowns (i.e. of Bethe roots) is still at least equal to $\mathsf{N}$. For instance, with the choice \eqref{def-f-Q}, one may consider an equation \eqref{inhom} for $\mathsf{N}$ unknowns which would be the set of the $\mathsf{M}-1$ roots of $Q(\lambda)$ {\em and} the parameter $\mu$ of $f(\lambda)$. We would then obtain  Bethe equations of slightly different form coupling these two subsets of unknowns.}.
In the appendix~\ref{app-inhom} we explicitly detail  the case $\mathsf{M}=\mathsf{N}$, whose result can be summarized as follows:
\begin{itemize}
\item[] 
if $\mathsf{M}=\mathsf{N}$, for any\footnote{\label{ft-restr-beta-mu}In fact, if $\eta\in\mathbb{R}$, they may be some additional restrictions on the set of parameters $\beta$ for which this assertion is valid (see Theorem~\ref{th-SOV-Baxter}). Moreover, we also suppose that $\mu$ is such that $\mu+(\mathsf{N}-\mathsf{M})\eta-\xi_j$, $\mu+(\mathsf{N}-\mathsf{M}-1)\eta-\xi_j$, $\mu-t_{0,\mathbf{0}}-\xi_j$, $\mu+\eta-t_{0,\mathbf{0}}-\xi_j\notin\Gamma$, $\forall j\in\{1,\ldots,\mathsf{N}\}$. } 
fixed $\mu\in\mathbb{C} $ and $\beta \in 
\mathbb{C}\setminus\{ 0\}$, then $\bar{\mathsf{t}}(\lambda )\in \Sigma _{\overline{\mathcal{T}}}$ if and only if there exists a function $Q(\lambda )$ of the form \eqref{def-f-Q},
solution with $\bar{\mathsf{t}}(\lambda )$ of the inhomogeneous functional equation \eqref{inhom} for $f(\lambda)\equiv f^{(\beta)}_{\mu}(\lambda)$ given by \eqref{def-f-Q}, such that $\big( Q(\xi_n), Q(\xi_n-\eta)\big)\not=(0,0)$ for each $n\in\{1,\ldots,\mathsf{N}\}$ (see Theorem~\ref{th-SOV-Baxter}).
\end{itemize}
The function $F(\lambda)\equiv F^{(\beta)}_{\mu,Q}(\lambda)$ in \eqref{inhom} is completely determined in terms of $f(\lambda)\equiv  f^{(\beta)}_{\mu}(\lambda)$ and $Q(\lambda)$ by imposing that the r.h.s. of \eqref{inhom} is an elliptic polynomial.
 
Such a functional equation reformulation of the transfer matrix spectrum has the clear advantage to be completely smooth with respect to the homogeneous limit in which all parameters $\xi_n$ tend to the same value (which was obviously not the case for the initial formulation). It moreover enables us to rewrite the transfer matrix eigenstates obtained by SOV in a form very similar to a Bethe vector, i.e. as some multiple action of the operators $\mathcal{D}$, evaluated at the roots of the $Q$-function (the ``Bethe" roots), on some specified pseudo-vacuum state (see Appendix~\ref{app-inhom}, Corollary~\ref{Eigen-SOV-Bethe}).
Nevertheless, the possibility to use it as an efficient tool to analyze also the thermodynamic limit is still an open question. Indeed, there is an important  price to pay for the apparent simplicity in the deriving of the functional equation: the presence of the inhomogeneous term $F(\lambda)$ makes the resulting Bethe-type equations a priori much more difficult to solve than their homogeneous analog. 

Hence, in the remaining part of this section, we turn back to the study of the homogeneous functional equation \eqref{hom-eq}. 
We shall in particular explain how one can infer the expected form of the $Q$-solutions to \eqref{hom-eq} from the quasi-periodicity properties of the coefficients, and we shall prove the completeness of these solutions for $\bar{\mathsf{t}}(\lambda)\in\Sigma_{\bar{\mathsf{T}}}$ in the case where $\mathsf{N}$ is even.

\subsection{On the reformulation of the SOV characterization of the spectrum in terms of solutions of an homogeneous functional $T$-$Q$ equation: a few preliminary considerations}

Before turning to more precise considerations about the analytic properties of the $Q$-solutions to \eqref{hom-eq}, let us precisely formulate at which conditions  the knowledge of some function $Q(\lambda)$ defines, through the equation \eqref{hom-eq}, a function $\bar{\mathsf{t}}(\lambda)$ which is an eigenvalue of the transfer matrix.

\begin{theorem}\label{th-sens1}
   Let us suppose that the inhomogeneity parameters $\xi_1,\ldots,\xi_\mathsf{N}$ of the model satisfy \eqref{cond-inh}, and let $\bar{\mathsf{t}}(\lambda)$ be an entire function such that the following conditions are satisfied:
   \begin{enumerate}
   \item\label{cond1}
   there exists a function $Q(\lambda)$ such that $\bar{\mathsf{t}}(\lambda)$ can be expressed as
\begin{equation}\label{def-t-Q}
   \bar{\mathsf{t}}(\lambda)\equiv 
   \frac{\mathsc{a}_{\mathsf{x},\mathsf{y}}(\lambda)\, Q(\lambda-\eta) 
           +\mathsc{d}(\lambda)\, Q(\lambda+\eta)}{Q(\lambda)};
\end{equation}
   \item\label{cond2}
   the function $Q(\lambda)$ of {\it \ref{cond1}.} is such that, for each $n\in\{1,\ldots,\mathsf{N}\}$, $\big( Q(\xi_n), Q(\xi_n-\eta)\big)\not=(0,0)$;
   \item\label{cond3}
   $\bar{\mathsf{t}}(\lambda)$ satisfies the quasi-periodicity properties \eqref{periodt-1} and \eqref{periodt-2}.
   \end{enumerate}
   Then $\bar{\mathsf{t}}(\lambda)$ is an eigenvalue of the $\kappa$-twisted antiperiodic transfer matrix (i.e. $\bar{\mathsf{t}}(\lambda)\in\Sigma_{\overline{\mathcal{T}}}$), with corresponding eigenstates given in terms of the function $Q(\lambda)$ of {\it \ref{cond1}.} as
\begin{align}
& |\Psi_{\bar{\mathsf{t}}}^{( \kappa) }\rangle
=\sum_{\mathbf{h}\in\{0,1\}^\mathsf{N}}
\prod_{a=1}^{\mathsf{N}}\left[   \left( \kappa^{-1} e^{i\mathsf{y}\eta}\,\frac{\mathsc{a}_\mathsf{x,y}(\xi_a)}{\mathsc{d}(\xi_a-\eta)}\right)^{\! h_a} Q(\xi_a^{(h_{a})})\right]\,
\det_{\mathsf{N}}\big[\Theta^{(0, \mathbf{h}) }\big]\, |\mathbf{h},0\rangle , 
 \label{eigenT-r-Q}\\
&  \langle \Psi_{\bar{\mathsf{t}}}^{(\kappa)}|
  =\sum_{\mathbf{h}\in\{0,1\}^\mathsf{N}}
\prod_{a=1}^{\mathsf{N}}\left[ \big(\kappa\, e^{i\mathsf{y}\eta}\big)^{ h_a}\, Q(\xi_a^{(h_{a})})\right]\,
\det_{\mathsf{N}}\big[\Theta^{(0, \mathbf{h})}\big] \, 
\langle 0,\mathbf{h} |.
 \label{eigenT-l-Q}
\end{align}
\end{theorem}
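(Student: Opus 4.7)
The plan is to invoke Theorem~\ref{thm-eigen-t}, which characterises $\Sigma_{\overline{\mathcal{T}}}$ as the set of functions of the form \eqref{set-t} that satisfy the quadratic system \eqref{syst-t}. I will check in turn that conditions \emph{3.}\ and \emph{1.--2.}\ of the present theorem imply these two requirements, and finally match the eigenstate formulae. From condition~\emph{3.}, since $\bar{\mathsf{t}}(\lambda)$ is entire and obeys the quasi-periodicity properties \eqref{periodt-1}--\eqref{periodt-2}, the function $e^{i\mathsf{y}\lambda}\bar{\mathsf{t}}(\lambda)$ is a theta function of order $\mathsf{N}$ and of norm $\sum_{k=1}^{\mathsf{N}}\xi_k+t_{0,\mathbf{0}}$; by the standard interpolation formula for such theta functions (see Appendix~\ref{app-theta}), this forces $\bar{\mathsf{t}}(\lambda)$ to take precisely the form \eqref{set-t}.

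Next, I would verify the discrete system \eqref{syst-t}. Interpreting condition~\emph{1.}\ as the identity
\begin{equation*}
\bar{\mathsf{t}}(\lambda)\,Q(\lambda)=\mathsc{a}_{\mathsf{x},\mathsf{y}}(\lambda)\,Q(\lambda-\eta)+\mathsc{d}(\lambda)\,Q(\lambda+\eta)
\end{equation*}
between entire functions, I would evaluate it at $\lambda=\xi_a$, where $\mathsc{d}(\xi_a)=0$ by \eqref{a-d}, and at $\lambda=\xi_a-\eta$, where $\mathsc{a}(\xi_a-\eta)=0$. This yields the two relations
\begin{equation*}
\bar{\mathsf{t}}(\xi_a)\,Q(\xi_a)=\mathsc{a}_{\mathsf{x},\mathsf{y}}(\xi_a)\,Q(\xi_a-\eta),\qquad \bar{\mathsf{t}}(\xi_a-\eta)\,Q(\xi_a-\eta)=\mathsc{d}(\xi_a-\eta)\,Q(\xi_a).
\end{equation*}
The inhomogeneity condition \eqref{cond-inh} ensures that $\mathsc{a}(\xi_a)\neq 0$ and $\mathsc{d}(\xi_a-\eta)\neq 0$. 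Combined with condition~\emph{2.}, these two identities exclude the ``mixed'' cases in which exactly one of $Q(\xi_a)$, $Q(\xi_a-\eta)$ vanishes: such a case would force the other member to vanish as well, contradicting condition~\emph{2.}. Hence both values are non-zero, and multiplying the two identities and cancelling the common non-zero factor $Q(\xi_a)\,Q(\xi_a-\eta)$ yields exactly \eqref{syst-t}.

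With both hypotheses of Theorem~\ref{thm-eigen-t} established, I conclude that $\bar{\mathsf{t}}(\lambda)\in\Sigma_{\overline{\mathcal{T}}}$. For the eigenstate identification, observe that with the choice $\mathsf{q}_{\bar{\mathsf{t}},a}^{(h_a)}\equiv Q(\xi_a^{(h_a)})$ the two consequences of the functional equation displayed above are precisely the two equivalent forms of the ratio \eqref{t-Q-relation} that characterise $\mathsf{q}_{\bar{\mathsf{t}},a}^{(h_a)}$ up to an overall normalisation; substituting this identification into \eqref{eigenT-r} and \eqref{eigenT-l} then yields \eqref{eigenT-r-Q} and \eqref{eigenT-l-Q}. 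I do not expect any serious technical obstacle, since the argument amounts to a direct verification of the SOV criterion of Theorem~\ref{thm-eigen-t}. The only mildly delicate point is the case analysis showing that neither $Q(\xi_a)$ nor $Q(\xi_a-\eta)$ can vanish in isolation, where condition~\emph{2.}\ must be combined with the genericity of the inhomogeneities.
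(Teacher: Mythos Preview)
Your proposal is correct and follows essentially the same approach as the paper. The paper's own argument is a single sentence stating that the theorem is a direct corollary of Theorem~\ref{thm-eigen-t}, obtained by particularizing the relation \eqref{def-t-Q} at the $2\mathsf{N}$ points $\xi_n,\,\xi_n-\eta$ so as to recover the discrete condition \eqref{dis-T-Q}; you carry out precisely this verification, with the added (and useful) observation that condition~\emph{2.}\ together with the two specialized identities forces both $Q(\xi_a)$ and $Q(\xi_a-\eta)$ to be non-zero, so that the identification $\mathsf{q}_{\bar{\mathsf{t}},a}^{(h)}=Q(\xi_a^{(h)})$ indeed satisfies \eqref{t-Q-relation}.
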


As discussed in the previous subsection, this theorem is a direct corollary of the SOV characterization of the antiperiodic transfer matrix spectrum and eigenstates of Theorem~\ref{thm-eigen-t}, which can easily be proven by particularizing the relation \eqref{def-t-Q} at the $2\mathsf{N}$ points $\xi_n,\ \xi_n-\eta$, $1\le n\le \mathsf{N}$, so as to recover \eqref{dis-T-Q}. The whole problem is therefore
\begin{enumerate}
   \item[(a)] to characterize the functional form of the function $Q(\lambda)$, preferably in a completely factorized form, so as to rewrite the entireness condition for $\bar{\mathsf{t}}(\lambda)$ as a system of Bethe equations for the roots of $Q(\lambda)$;
   \item[(b)] to prove that, for each $\bar{\mathsf{t}}(\lambda)\in \Sigma _{\overline{\mathcal{T}}}$, there indeed exists such a $Q$-solution to the corresponding $T$-$Q$ equation, i.e. to prove the completeness of the aforementioned characterization in terms of solutions to Bethe equations.
\end{enumerate}
Before turning to these more delicate points that we shall partially solve in the next subsections, we would like to make a few remarks about the formulation of Theorem~\ref{th-sens1}, which  can in fact be rewritten in slightly different equivalent forms. For instance, it is easy to see that the condition {\it \ref{cond3}.} can be replaced by some equivalent conditions on the function $Q(\lambda)$:

\begin{proposition}\label{lem-per-t-W}
For some given function $Q(\lambda)$, we define a function $\bar{\mathsf{t}}(\lambda)$ by the relation \eqref{def-t-Q}. Then
\begin{enumerate}
    \item[(i)] the function $\bar{\mathsf{t}}(\lambda)$ defined by \eqref{def-t-Q} satisfies the quasi-periodicity property 
    \eqref{periodt-1} if and only if the function
    \begin{equation}\label{W-1}
       W_Q^{(1)}(\lambda)\equiv Q(\lambda+\pi)\, Q(\lambda-\eta)-(-1)^\mathsf{y}\, Q(\lambda+\pi-\eta)\, Q(\lambda)
    \end{equation}
    satisfies the relation
    \begin{equation}\label{rel-W-1}
       \mathsc{d}(\lambda)\, W_Q^{(1)}(\lambda+\eta)
       =(-1)^\mathsf{x+xy}\, \mathsc{a}(\lambda)\, W_Q^{(1)}(\lambda);
    \end{equation}
    \item[(ii)] the function $\bar{\mathsf{t}}(\lambda)$ defined by \eqref{def-t-Q} satisfies the quasi-periodicity property 
    \eqref{periodt-2} if and only if the function
    \begin{equation}\label{W-2}
       W_Q^{(2)}(\lambda)\equiv Q(\lambda+\pi\omega)\, Q(\lambda-\eta)-(-1)^\mathsf{x}\, e^{-i\mathsf{N}\eta}\, Q(\lambda+\pi\omega-\eta)\, Q(\lambda)
    \end{equation}
    satisfies the relation
    \begin{equation}\label{rel-W-2}
       \mathsc{d}(\lambda)\, W_Q^{(2)}(\lambda+\eta)
       =(-1)^\mathsf{y+xy}\, e^{-i\mathsf{N}\eta}\, \mathsc{a}(\lambda)\, W_Q^{(2)}(\lambda).
    \end{equation}
\end{enumerate}
Hence the condition {\it \ref{cond3}.} in Theorem~\ref{th-sens1} can be replaced by the condition:
\begin{enumerate}
 \item[3'.] the function $Q(\lambda)$ of {\it \ref{cond1}.}  satisfies the relations \eqref{rel-W-1} and \eqref{rel-W-2}.
\end{enumerate}
\end{proposition}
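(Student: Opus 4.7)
The plan is to obtain both (i) and (ii) by direct computation from the multiplicative form of the defining relation \eqref{def-t-Q}, namely
\begin{equation*}
\bar{\mathsf{t}}(\lambda)\,Q(\lambda)=\mathsc{a}_{\mathsf{x},\mathsf{y}}(\lambda)\,Q(\lambda-\eta)+\mathsc{d}(\lambda)\,Q(\lambda+\eta),
\end{equation*}
combined with the standard quasi-periodicity of $\theta_{1}$. The latter yields $\mathsc{a}(\lambda+\pi)=(-1)^{\mathsf{N}}\mathsc{a}(\lambda)$ and $\mathsc{d}(\lambda+\pi)=(-1)^{\mathsf{N}}\mathsc{d}(\lambda)$, whereas $\mathsc{a}(\lambda+\pi\omega)$ and $\mathsc{d}(\lambda+\pi\omega)$ both pick up the common factor $(-e^{-2i\lambda-i\pi\omega})^{\mathsf{N}}e^{2i\sum_{k}\xi_{k}}$ with multipliers that differ from each other by exactly $e^{-2i\mathsf{N}\eta}$. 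This asymmetry, which has no counterpart under $\lambda\to\lambda+\pi$, is the source of the factors $e^{\pm i\mathsf{N}\eta}$ that distinguish \eqref{W-2} and \eqref{rel-W-2} from their $\pi$-analogues \eqref{W-1} and \eqref{rel-W-1}. Since $Q$ is not identically zero, every manipulation below will be an equivalence of meromorphic functions in $\lambda$, which automatically delivers both directions of the ``if and only if''.

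For part (i), I would first substitute $\lambda\to\lambda+\pi$ in the multiplicative relation above and use the quasi-periodicity to factor out $(-1)^{\mathsf{N}}$; then multiply the resulting identity by $Q(\lambda)$ and the original one by $(-1)^{\mathsf{N}+\mathsf{y}}Q(\lambda+\pi)$ and subtract. The left-hand side then equals $[\bar{\mathsf{t}}(\lambda+\pi)-(-1)^{\mathsf{N}+\mathsf{y}}\bar{\mathsf{t}}(\lambda)]\,Q(\lambda)\,Q(\lambda+\pi)$, whose vanishing is precisely \eqref{periodt-1}, while the right-hand side organizes as $\mathsc{a}_{\mathsf{x},\mathsf{y}}(\lambda)\,[\,\cdot\,]+\mathsc{d}(\lambda)\,[\,\cdot\,]$. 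The two bracketed expressions are then to be read off as $-(-1)^{\mathsf{y}}W_{Q}^{(1)}(\lambda)$ and $+W_{Q}^{(1)}(\lambda+\eta)$ respectively, by direct comparison with \eqref{W-1} and its shift $\lambda\to\lambda+\eta$. Using $\mathsc{a}_{\mathsf{x},\mathsf{y}}=(-1)^{\mathsf{x}+\mathsf{y}+\mathsf{xy}}\mathsc{a}$ and $(-1)^{2\mathsf{y}}=1$, one recovers \eqref{rel-W-1} on the nose.

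Part (ii) follows the same three-step pattern with $\pi$ replaced by $\pi\omega$, the only genuine difference being that one must also divide out the common factor $(-e^{-2i\lambda-i\pi\omega})^{\mathsf{N}}e^{2i\sum_{k}\xi_{k}}\,e^{-i\mathsf{N}\eta}$ produced jointly by the quasi-periodicity of $\mathsc{a},\mathsc{d}$ and of $\bar{\mathsf{t}}$. After this normalization the $e^{-2i\mathsf{N}\eta}$ discrepancy between $\mathsc{a}$ and $\mathsc{d}$ redistributes inside the two brackets as factors $e^{-i\mathsf{N}\eta}$ and $e^{i\mathsf{N}\eta}$, which are exactly what is needed to identify them, via \eqref{W-2} and a shift $\lambda\to\lambda+\eta$, with $-(-1)^{\mathsf{x}}W_{Q}^{(2)}(\lambda)$ and $+e^{i\mathsf{N}\eta}W_{Q}^{(2)}(\lambda+\eta)$; the relation \eqref{rel-W-2} then follows using $(-1)^{2\mathsf{x}}=1$. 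The replacement of condition~\textit{3.}\ by \textit{3'.}\ in Theorem~\ref{th-sens1} is then an immediate consequence of (i) and (ii). I expect the main obstacle to be purely computational bookkeeping: making the signs $(-1)^{\mathsf{x}}$, $(-1)^{\mathsf{y}}$, $(-1)^{\mathsf{x}+\mathsf{y}+\mathsf{xy}}$ and the exponentials $e^{\pm i\mathsf{N}\eta}$ collect exactly into the asymmetric patterns built into \eqref{W-1}, \eqref{W-2}, \eqref{rel-W-1} and \eqref{rel-W-2}, rather than being off by an overall sign or by a stray factor $e^{i\mathsf{N}\eta}$.
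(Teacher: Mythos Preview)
Your proposal is correct, and the computation checks out in both parts: the brackets you obtain after the shift--multiply--subtract manoeuvre are indeed $-(-1)^{\mathsf{y}}W_Q^{(1)}(\lambda)$, $W_Q^{(1)}(\lambda+\eta)$ in case~(i) and $-(-1)^{\mathsf{x}}W_Q^{(2)}(\lambda)$, $e^{i\mathsf{N}\eta}W_Q^{(2)}(\lambda+\eta)$ in case~(ii), and the sign and exponential bookkeeping collapses to \eqref{rel-W-1} and \eqref{rel-W-2} exactly as you describe. The paper does not give an explicit proof of this proposition (it is introduced with ``it is easy to see''), so your direct computational route is precisely the natural argument the authors left to the reader.
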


Another remark comes from the fact that, if $Q(\lambda)$ satisfies \eqref{hom-eq} for some function $\bar{\mathsf{t}}(\lambda)$ satisfying \eqref{periodt-1} and \eqref{periodt-2}, then $Q(\lambda+\pi)$ and $Q(\lambda+\pi\omega)$ satisfy respectively the following equations:
\begin{align*}
&\bar{\mathsf{t}}(\lambda)\, Q(\lambda+\pi )
   =(-1)^\mathsf{y}\, \mathsc{a}_{\mathsf{x},\mathsf{y}}(\lambda)\,  Q(\lambda+\pi -\eta )
   +(-1)^\mathsf{y}\, \mathsc{d}(\lambda)\, Q(\lambda+\pi +\eta ),
   \\
 &\bar{\mathsf{t}}(\lambda)\, Q(\lambda+\pi\omega )
   =(-1)^\mathsf{x}\, e^{-i\mathsf{N}\eta}\,\mathsc{a}_{\mathsf{x},\mathsf{y}}(\lambda)\,  Q(\lambda+\pi \omega-\eta )
   +(-1)^\mathsf{x}\,  e^{i\mathsf{N}\eta}\,\mathsc{d}(\lambda)\, Q(\lambda+\pi\omega +\eta ),
\end{align*}
which means that we can in fact slightly relax the admissibility condition {\it \ref{cond2}.} according to the following proposition:

\begin{proposition}\label{prop-Qinh}
If $Q(\lambda)$ is a solution to \eqref{hom-eq}  for some function $\bar{\mathsf{t}}(\lambda)$ satisfying \eqref{periodt-1} and \eqref{periodt-2}, then $e^{\pm i\pi \mathsf{y}\frac{\lambda}{\eta}}\, Q(\lambda+\pi)$, $e^{ i(\mathsf{N}\pm\frac{\pi \mathsf{x}}{\eta})\lambda}\, Q(\lambda+\pi\omega)$ and $e^{ i(\mathsf{N}+\epsilon_1\frac{\pi \mathsf{x}}{\eta}+\epsilon_2\frac{\pi\mathsf{y}}{\eta})\lambda}\, Q(\lambda+\pi+\pi\omega)$  (with $\epsilon_1,\epsilon_2=\pm 1$) are also solutions of  \eqref{hom-eq} for the same function $\bar{\mathsf{t}}(\lambda)$.
Hence the condition {\it \ref{cond2}.} in Theorem~\ref{th-sens1} can be replaced by the condition:
\begin{enumerate}
 \item[2'.] the function $Q(\lambda)$ of {\it \ref{cond1}.}  is such that,  for each $n\in\{1,\ldots,\mathsf{N}\}$, there exists $(\alpha_n,\beta_n)\in\{0,1\}^2$ such that $\big( Q(\xi_n+\alpha_n\pi+\beta_n\pi\omega), Q(\xi_n+\alpha_n\pi+\beta_n\pi\omega-\eta)\big)\not=(0,0)$,
\end{enumerate}
with corresponding eigenstates given by 
\begin{align}
& |\Psi_{\bar{\mathsf{t}}}^{( \kappa) }\rangle
=\sum_{\mathbf{h}\in\{0,1\}^\mathsf{N}}
\prod_{a=1}^{\mathsf{N}}\left[ 
\left( \frac{e^{i\mathsf{y}\eta}\mathsc{a}_\mathsf{x,y}(\xi_a)}{\kappa\,\mathsc{d}(\xi_a-\eta)}\right)^{\! h_a} 
e^{i [ \alpha_a\frac{\pi\mathsf{y}}{\eta}+\beta_a(\mathsf{N}+\frac{\pi\mathsf{x}}{\eta}) ]\xi_a^{(h_a)}}Q(\xi_a^{(h_{a})}+\alpha_a\pi+\beta_a\pi\omega)\right]\, 
 \nonumber\\ 
 &\hspace{11cm}\times
\det_{\mathsf{N}}\big[\Theta^{(0, \mathbf{h}) }\big]\, |\mathbf{h},0\rangle , 
 \label{eigenT-r-Qgen}\\
&  \langle \Psi_{\bar{\mathsf{t}}}^{(\kappa)}|
  =\sum_{\mathbf{h}\in\{0,1\}^\mathsf{N}}
\prod_{a=1}^{\mathsf{N}}\left[ \big(\kappa\, e^{i\mathsf{y}\eta }\big)^{h_a}\, e^{i [ \alpha_a\frac{\pi\mathsf{y}}{\eta}+\beta_a(\mathsf{N}+\frac{\pi\mathsf{x}}{\eta}) ]\xi_a^{(h_a)}}Q(\xi_a^{(h_{a})}+\alpha_a\pi+\beta_a\pi\omega)\right]\,
\det_{\mathsf{N}}\big[\Theta^{(0, \mathbf{h})}\big] \, 
\langle 0,\mathbf{h} |,
 \label{eigenT-l-Qgen}
\end{align}
instead of \eqref{eigenT-r-Q}, \eqref{eigenT-l-Q}.
\end{proposition}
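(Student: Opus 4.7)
The plan is to establish the result in two stages: first, to verify that each of the exponentially twisted shifts of $Q(\lambda)$ listed in the statement is again a solution of the homogeneous $T$-$Q$ equation \eqref{hom-eq} for the same function $\bar{\mathsf{t}}(\lambda)$ (which yields the three displayed identities just before the statement as direct intermediate consequences), and second, to exploit this closure property together with Theorem~\ref{thm-eigen-t} applied index by index in order to relax condition 2. to 2'. and to derive the generalized eigenstate formulas \eqref{eigenT-r-Qgen}--\eqref{eigenT-l-Qgen}.

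For the first stage, I would shift $\lambda\mapsto\lambda+\pi$ in \eqref{hom-eq} and use the quasi-periodicities $\mathsc{a}(\lambda+\pi)=(-1)^\mathsf{N}\mathsc{a}(\lambda)$ and $\mathsc{d}(\lambda+\pi)=(-1)^\mathsf{N}\mathsc{d}(\lambda)$ (consequences of $\theta(\lambda+\pi)=-\theta(\lambda)$) together with \eqref{periodt-1}; dividing by $(-1)^\mathsf{N}$ one recovers the first displayed identity preceding the statement. Multiplication by $e^{\pm i\pi\mathsf{y}\lambda/\eta}$ then turns this into \eqref{hom-eq} for the twisted function $e^{\pm i\pi\mathsf{y}\lambda/\eta}Q(\lambda+\pi)$, as the factors $e^{\pm i\pi\mathsf{y}\eta/\eta}=(-1)^\mathsf{y}$ that arise when rewriting $Q(\lambda+\pi\mp\eta)$ in terms of the twisted shifts exactly cancel the $(-1)^\mathsf{y}$ prefactors. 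The same strategy handles the $\pi\omega$-shift through $\theta(\lambda+\pi\omega)=-e^{-2i\lambda-i\pi\omega}\theta(\lambda)$: here the common prefactor $(-e^{-2i\lambda-i\pi\omega})^\mathsf{N}\,e^{2i\sum_k\xi_k}$ cancels across all three terms by combining with \eqref{periodt-2}, leaving (after dividing out $(-1)^\mathsf{x}e^{-i\mathsf{N}\eta}$) the second displayed identity in the statement; multiplication by $e^{i(\mathsf{N}\pm\pi\mathsf{x}/\eta)\lambda}$ then absorbs the residual asymmetry between the $\mathsc{a}$- and $\mathsc{d}$-coefficients. Composing the two twists yields the $\pi+\pi\omega$-case with its four sign choices $(\epsilon_1,\epsilon_2)\in\{\pm 1\}^2$.

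For the second stage, whenever $\bar{\mathsf{t}}(\lambda)$ is of the form 1. of Theorem~\ref{th-sens1}, the rank-one $2\times 2$ matrix $D_{\bar{\mathsf{t}},n}$ of \eqref{mat-Dn} has a one-dimensional kernel at each inhomogeneity $n$, so any solution of \eqref{hom-eq} not vanishing simultaneously at $\xi_n$ and $\xi_n-\eta$ furnishes, via evaluation, an admissible pair $(\mathsf{q}_{\bar{\mathsf{t}},n}^{(0)},\mathsf{q}_{\bar{\mathsf{t}},n}^{(1)})$ in its kernel, unique up to an $n$-dependent scalar that can be absorbed into the overall eigenstate normalization. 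Condition 2'. then guarantees that for each $n$ the twisted shift $Q_{\alpha_n,\beta_n}(\lambda)\equiv e^{i[\alpha_n\pi\mathsf{y}/\eta+\beta_n(\mathsf{N}+\pi\mathsf{x}/\eta)]\lambda}\,Q(\lambda+\alpha_n\pi+\beta_n\pi\omega)$ provides such an admissible solution at $n$; substituting $\mathsf{q}_{\bar{\mathsf{t}},n}^{(h_n)}=Q_{\alpha_n,\beta_n}(\xi_n^{(h_n)})$ into the factorized expressions \eqref{eigenT-r}--\eqref{eigenT-l} reproduces exactly \eqref{eigenT-r-Qgen}--\eqref{eigenT-l-Qgen}. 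The main obstacle is the careful bookkeeping of the exponential and sign prefactors in the $\pi\omega$-twist computation, whose matching is what forces the specific form $e^{i(\mathsf{N}\pm\pi\mathsf{x}/\eta)\lambda}$ of the twist; once this accounting is done, Theorem~\ref{th-sens1} combined with the one-dimensionality of the kernel of $D_{\bar{\mathsf{t}},n}$ yields the result.
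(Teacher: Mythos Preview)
Your proposal is correct and follows essentially the same approach as the paper. The paper does not supply a separate formal proof of this proposition: it is presented as an immediate consequence of the two displayed shifted $T$-$Q$ equations for $Q(\lambda+\pi)$ and $Q(\lambda+\pi\omega)$ given just before the statement, and of the factorized eigenstate construction of Theorem~\ref{thm-eigen-t}. Your two-stage argument simply spells out these steps explicitly, including the exponential-twist bookkeeping and the observation that the one-dimensionality of $\ker D_{\bar{\mathsf{t}},n}$ allows one to choose, independently for each $n$, whichever shifted solution $Q_{\alpha_n,\beta_n}$ is admissible there.
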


Note finally that the relation \eqref{rel-W-1} (respectively the relation \eqref{rel-W-2}) can be understood as some ``wronskian-type'' identity for the two solutions $Q(\lambda)$ and $e^{\pm i\pi \mathsf{y}\frac{\lambda}{\eta}}\, Q(\lambda+\pi)$ (respectively $Q(\lambda)$ and $e^{ i(\mathsf{N}\pm\frac{\pi \mathsf{x}}{\eta})\lambda}\, Q(\lambda+\pi\omega)$) of the functional $T$-$Q$ equation \eqref{hom-eq}, as a particular case of the following very general property:

\begin{proposition}\label{lem-W12}
Let $Q_1(\lambda)$ and $Q_2(\lambda)$ be two solutions of the functional equation \eqref{hom-eq} for some given function $\bar{\mathsf{t}}(\lambda)$.
Then their quantum Wronskian,
\begin{equation}\label{W12}
  W_{12}(\lambda)=Q_1(\lambda-\eta)\, Q_2(\lambda)- Q_1(\lambda)\, Q_2(\lambda-\eta),
\end{equation}
satisfies the relation
\begin{align}
  \mathsc{d}(\lambda)\, W_{12}(\lambda+\eta)
  &=(-1)^{\mathsf{x}+\mathsf{y}+\mathsf{x}\mathsf{y}}\, \mathsc{a}(\lambda)\, W_{12}(\lambda).
  \label{rel-W1}
\end{align}
Hence
\begin{equation}\label{rel-W2}
   W_{12}(\lambda)=f_{12}(\lambda)\,\mathsc{d}(\lambda),
   \qquad \text{with}\quad 
   f_{12}(\lambda+\eta)=(-1)^{\mathsf{x}+\mathsf{y}+\mathsf{xy}} f_{12}(\lambda).
\end{equation}
\end{proposition}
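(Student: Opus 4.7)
The plan is to use the $T$-$Q$ equation \eqref{hom-eq} in order to eliminate $\bar{\mathsf{t}}(\lambda)$ between the equations for $Q_1$ and $Q_2$, exactly as in the classical derivation of the quantum Wronskian relation. First I would solve \eqref{hom-eq} for the shifted term, obtaining
\begin{equation*}
  \mathsc{d}(\lambda)\, Q_i(\lambda+\eta)
  = \bar{\mathsf{t}}(\lambda)\, Q_i(\lambda) - \mathsc{a}_{\mathsf{x},\mathsf{y}}(\lambda)\, Q_i(\lambda-\eta),
  \qquad i=1,2.
\end{equation*}
Then, starting from $\mathsc{d}(\lambda)\, W_{12}(\lambda+\eta) = \mathsc{d}(\lambda)\bigl[ Q_1(\lambda)\, Q_2(\lambda+\eta) - Q_1(\lambda+\eta)\, Q_2(\lambda)\bigr]$, I would substitute the two expressions above in the two occurrences of $\mathsc{d}(\lambda)\, Q_i(\lambda+\eta)$. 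The $\bar{\mathsf{t}}(\lambda)$-terms cancel in the combination, leaving precisely
\begin{equation*}
  \mathsc{d}(\lambda)\, W_{12}(\lambda+\eta)
  = \mathsc{a}_{\mathsf{x},\mathsf{y}}(\lambda)\bigl[ Q_1(\lambda-\eta)\, Q_2(\lambda) - Q_1(\lambda)\, Q_2(\lambda-\eta)\bigr]
  = \mathsc{a}_{\mathsf{x},\mathsf{y}}(\lambda)\, W_{12}(\lambda),
\end{equation*}
which is \eqref{rel-W1} once one recalls that $\mathsc{a}_{\mathsf{x},\mathsf{y}}(\lambda) = (-1)^{\mathsf{x}+\mathsf{y}+\mathsf{xy}}\mathsc{a}(\lambda)$.

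For the factorization \eqref{rel-W2}, I would use the identity $\mathsc{a}(\lambda) = \mathsc{d}(\lambda+\eta)$ that follows directly from \eqref{a-d}. Rewriting \eqref{rel-W1} as
\begin{equation*}
  \mathsc{d}(\lambda)\, W_{12}(\lambda+\eta)
  = (-1)^{\mathsf{x}+\mathsf{y}+\mathsf{xy}}\, \mathsc{d}(\lambda+\eta)\, W_{12}(\lambda),
\end{equation*}
I would define $f_{12}(\lambda) \equiv W_{12}(\lambda)/\mathsc{d}(\lambda)$ as a meromorphic function on $\mathbb{C}$, plug this into the previous relation, cancel the common factor $\mathsc{d}(\lambda)\, \mathsc{d}(\lambda+\eta)$, and obtain the claimed quasi-periodicity $f_{12}(\lambda+\eta) = (-1)^{\mathsf{x}+\mathsf{y}+\mathsf{xy}}\, f_{12}(\lambda)$. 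The well-definedness of $f_{12}$ as an analytic object is not really at issue here since the stated relation \eqref{rel-W2} only records the factorization at the level of functional identities; if one wishes in addition to argue that $f_{12}$ is actually entire (when $Q_1,Q_2$ are entire), it suffices to specialize \eqref{rel-W1} at $\lambda = \xi_n - \eta$ and use the condition \eqref{cond-inh} together with $\mathsc{a}(\xi_n-\eta)=0$ to conclude $W_{12}(\xi_n) = 0$ for each $n$.

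Since the argument is just a two-line algebraic manipulation, there is no real obstacle: the only subtle point, if one wants to make \eqref{rel-W2} a statement about an entire $f_{12}$, is to justify that $W_{12}(\lambda)$ vanishes at every zero of $\mathsc{d}(\lambda)$, which reduces to the vanishing at the $\xi_n$ plus an iterative use of \eqref{rel-W1} along $\eta$-translates, controlled by the genericity condition \eqref{cond-inh} on the inhomogeneities.
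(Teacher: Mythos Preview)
Your proof is correct and follows the standard derivation of the quantum Wronskian relation: eliminate $\bar{\mathsf{t}}(\lambda)$ between the two $T$-$Q$ equations, observe the cancellation, and then use $\mathsc{a}(\lambda)=\mathsc{d}(\lambda+\eta)$ to deduce the $\eta$-quasi-periodicity of $f_{12}$. The paper in fact states this proposition without proof, treating it as a ``very general property'' of second-order finite-difference equations, so your argument is precisely the omitted computation; the extra remarks you add about entireness of $f_{12}$ go slightly beyond what the proposition asserts (which is only the functional identity \eqref{rel-W2}) but are correct and in the spirit of how the result is used later, e.g.\ in the proof of Proposition~\ref{prop-uniqueness}.
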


\begin{rem}\label{rem-shift-eta}
If $Q(\lambda)$ satisfies the homogeneous functional equation \eqref{hom-eq} then, for each $k\in\mathbb{Z}$, $e^{2\pi i k\frac{\lambda}{\eta}}\, Q(\lambda)$ also provides a solution to the functional  \eqref{hom-eq}, which is however not independent from $Q(\lambda)$ in the sense that their quantum Wronskian \eqref{W12} is identically zero.
\end{rem}

\subsection{Study of the homogeneous $T$-$Q$ functional equation: an Ansatz for the $Q$-solutions}

As just announced, the periodicity properties of the coefficients of \eqref{hom-eq} for $\bar{\mathsf{t}}(\lambda)\in\Sigma_{\overline{\mathcal{T}}}$ enable one to make a reasonable guess about the functional  form of its possible entire $Q$-solutions, i.e. the solutions that are susceptible to lead to a system of Bethe equations.
The idea is to transform this equation, similarly as in  \cite{KriLWZ97}, into a difference equation with elliptic coefficients: 
\begin{equation}\label{eq-hom-2}
   \widetilde{Q}(\lambda+\eta)+f_1(\lambda)\,\frac{P(\lambda)}{P(\lambda+\eta)}\, \widetilde{Q}(\lambda) + f_2(\lambda)\, \frac{P(\lambda-\eta)}{P(\lambda+\eta)}\, \widetilde{Q}(\lambda-\eta)=0.
\end{equation}
Here we have set
\begin{equation}
  f_1(\lambda)=-\frac{\bar{\mathsf{t}}(\lambda)}{\mathsc{d}(\lambda)},
  \qquad
  f_2(\lambda)=\frac{\mathsc{a}_{\mathsf{x,y}}(\lambda)}{\mathsc{d}(\lambda)},
  \qquad\text{and}\quad
  \widetilde{Q}(\lambda)=\frac{Q(\lambda)}{P(\lambda)},
\end{equation}
where $P(\lambda)$ has to be chosen appropriately so as to ensure that the coefficients of \eqref{eq-hom-2} are doubly-periodic.
Since the functions $f_i(\lambda)$ exhibit different quasi-periodicity properties according to the values of $\mathsf{x}$ and $\mathsf{y}$, we shall now consider the three different cases $\mathsf{x=0}$, $\mathsf{y=0}$ and $\mathsf{x=y}$ separately.
To this aim, it is convenient to introduce shorthand notations for different variants of the theta function that we shall use within the study of these different cases. We therefore define the following odd functions of $\lambda$,
\begin{align}
  &{\theta}_\mathsf{x=0}(\lambda)\equiv\theta_1\Big(\frac{\lambda}{2}\,\Big|\,\frac{\omega}{2}\Big),
  \label{theta-x0}\\
  &\theta_\mathsf{y=0}(\lambda)\equiv\theta_1(\lambda\,|\,2\omega),
  \label{theta-y0}\\
  &\theta_\mathsf{x=y}(\lambda)\equiv e^{i\frac{\lambda}{2}}\,\theta_1\Big(\frac{\lambda}{2}\,\Big|\,\omega\Big)\, \theta_1\Big(\frac{\lambda+\pi+\pi\omega}{2}\,\Big|\,\omega\Big),
  \label{theta-xy}
\end{align}
which satisfy respectively the quasi-periodicity properties:
\begin{xalignat}{2}
   &\theta_\mathsf{x=0}(\lambda+2\pi)=-\theta_\mathsf{x=0}(\lambda), &
   &\theta_\mathsf{x=0}(\lambda+\pi\omega)=-e^{-i\lambda-i\frac{\pi\omega}{2}}\,\theta_\mathsf{x=0}(\lambda),
   \\
   &\theta_\mathsf{y=0}(\lambda+\pi)=-\theta_\mathsf{y=0}(\lambda), &
   &\theta_\mathsf{y=0}(\lambda+2\pi\omega)=-e^{-2i\lambda-2i\pi\omega}\,\theta_\mathsf{y=0}(\lambda),
   \displaybreak[0]\\
   &\theta_\mathsf{x=y}(\lambda+2\pi)=-\theta_\mathsf{x=y}(\lambda), &
   &\theta_\mathsf{x=y}(\lambda+2\pi\omega)=-e^{-2i\lambda-2i\pi\omega}\,\theta_\mathsf{x=y}(\lambda),
   \\
   &\theta_\mathsf{x=y}(\lambda\pm\pi+\pi\omega)=\pm i e^{-i\lambda-i\frac{\pi\omega}{2}}\,\theta_\mathsf{x=y}(\lambda).
\end{xalignat}

\paragraph{The case $\mathsf{x}=0$.}  Let us choose $P(\lambda)$ in the form
  \begin{equation}\label{P-form-x0}
     P(\lambda)=\prod_{j=1}^\mathsf{N}\theta_{\mathsf{x=0}}(\lambda-z_j),
  \end{equation}
  for arbitrary $z_j$, $1\le j\le \mathsf{N}$.
  Then the coefficients $f_1(\lambda)\,\frac{P(\lambda)}{P(\lambda+\eta)}$ and $f_2(\lambda)\, \frac{P(\lambda-\eta)}{P(\lambda+\eta)}$ for the equation \eqref{eq-hom-2} for $\widetilde{Q}(\lambda)$ are elliptic functions of periods $2\pi$ and $\pi\omega$.
Following \cite{KriLWZ97}, we may therefore 
look for the  double-Bloch solutions  of \eqref{eq-hom-2}, i.e. for the
meromorphic solutions  $\widetilde{Q}(\lambda)=\frac{Q(\lambda)}{P(\lambda)}$ such that
  \begin{equation}\label{double-bloch}
     \widetilde{Q}(\lambda+2\pi)=B_1\, \widetilde{Q}(\lambda),
     \qquad\text{and}\quad
     \widetilde{Q}(\lambda+\pi\omega)=B_2\, \widetilde{Q}(\lambda),
  \end{equation}
  for some Bloch multipliers $B_1, B_2$.
  It is easy to see \cite{KriWZ98} that any double-Bloch function of the form \eqref{double-bloch}  can be written in the form
  \begin{equation}\label{Qtilde}
     \widetilde{Q}(\lambda)=c_Q\, e^{\alpha \lambda}
     \prod_{j=1}^\mathsf{M}\frac{\theta_{\mathsf{x=0}}(\lambda-\lambda_j)}
                                                     {\theta_{\mathsf{x=0}}(\lambda-\mu_j)}
  \end{equation}
  for some integer $\mathsf{M}$, some complex parameter $\alpha$ and some sets of zeroes and poles $\lambda_j$ and $\mu_j$, $1\le j \le \mathsf{M}$.
  If $\widetilde{Q}(\lambda)$ of the form \eqref{Qtilde} is solution of \eqref{eq-hom-2}, then $Q(\lambda)=P(\lambda)\,\widetilde{Q}(\lambda)$ is solution of \eqref{hom-eq}, and we expect the latter to be entire\footnote{It is easy to see that, if $Q(\lambda)$ solution to \eqref{hom-eq} admits some pole $\mu_j$, then it should have an infinite number of poles of the form $\mu_j+k\eta$ for an infinite number of $k\in\mathbb{Z}$, which is clearly not compatible with the form \eqref{Qtilde} of $\tilde{Q}(\lambda)$ (we suppose here that $\eta\notin\pi\mathbb{Q}+\pi\omega\mathbb{Q}$, and that the inhomogeneity parameters satisfy \eqref{cond-inh}).}, which means that it can be written in the form
  %
  \begin{equation}\label{formQ-x0}
     Q(\lambda)=c_Q\, e^{\alpha \lambda} \prod_{j=1}^\mathsf{N}\theta_{\mathsf{x=0}}(\lambda-\lambda_j),
  \end{equation}
  for some complex parameter $\alpha$ and some set of roots $\lambda_j$, $1\le j \le \mathsf{N}$.
  
Note that, in the obtention of the Ansatz \eqref{formQ-x0} for $Q(\lambda)$, we have only used a weaker periodicity property than \eqref{periodt-1} for $\bar{\mathsf{t}}(\lambda)$. Imposing that $\bar{\mathsf{t}}(\lambda)$ strictly satisfies \eqref{periodt-1} is equivalent\footnote{Indeed, if $Q(\lambda)$ of the form \eqref{formQ-x0} is a solution to \eqref{hom-eq}, $e^{i\frac{\pi\mathsf{y}}{\eta}\lambda}Q(\lambda+\pi)$ is another solution to \eqref{hom-eq} that we expect here to be independent from the first one.} to the additional constraint \eqref{rel-W-1} for the corresponding function $W_Q^{(1)}(\lambda)$ defined in term of $Q(\lambda)$ as in \eqref{W-1}. It means that
  \begin{equation}\label{W1-d}
    W_Q^{(1)}(\lambda)=g^{(1)}(\lambda)\, \mathsc{d}(\lambda),
  \end{equation}
  where $g^{(1)}(\lambda)$ is $\eta$-periodic: $g^{(1)}(\lambda+\eta)=g^{(1)}(\lambda)$.
  On the other hand, for $Q(\lambda)$  of the form \eqref{formQ-x0}, it is easy to see that $W_Q^{(1)}(\lambda)$ satisfies the quasi-periodicity properties
  \begin{align}
     &W_Q^{(1)}(\lambda+\pi)=(-1)^{\mathsf{N}+\mathsf{y}+1}\, e^{2\pi\alpha}\, W_Q^{(1)}(\lambda),\\
     &W_Q^{(1)}(\lambda+\pi\omega)= (-e^{-2i\lambda-i\pi\omega})^\mathsf{N}\,  e^{2i\sum\lambda_j+i\mathsf{N}\eta+2\pi\omega\alpha}\,W_Q^{(1)}(\lambda),
  \end{align}
  which have to be compared to the quasi-periodicity properties of $\mathsc{d}(\lambda)$. Hence $g^{(1)}(\lambda)$ also satisfies the quasi-periodicity properties:
  \begin{align}
     &g^{(1)}(\lambda+\pi)
     = (-1)^{\mathsf{y}+1}\, e^{2\pi\alpha}\, g^{(1)}(\lambda),\\
     &g^{(1)}(\lambda+\pi\omega)
     = e^{2i\big[\sum\lambda_j+\frac{\mathsf{N}}{2}\eta-\sum\xi_k\big]+2\pi\omega\alpha}\,g^{(1)}(\lambda).
  \end{align}
  It means that there exist two constants $c_W^{(1)}$ and $\alpha_1$ such that 
  \begin{equation}
    g^{(1)}(\lambda)=c_W^{(1)}\, e^{\alpha_1\lambda},
  \end{equation}
  and that there exist $k_1,k_2,k_3\in\mathbb{Z}$ such that
  \begin{equation}
    \left\{
    \begin{aligned}
       & \alpha_1\eta = 2\pi i k_1,\\
       & \alpha_1\pi = i\pi (1-\mathsf{y})+2\pi\alpha +2\pi i k_2,\\
       & \alpha_1 \pi\omega =2i\Big[\sum\lambda_j+\frac{\mathsf{N}}{2}\eta-\sum\xi_k\Big] +2\pi\omega\alpha+2\pi i k_3.
     \end{aligned}
    \right.
  \end{equation}
  Hence, if $c_W^{(1)}\not=0$, one obtains the following sum rules:
  \begin{equation}\label{sum-rule1}
    \left\{
     \begin{aligned}
     &\alpha= i\left[ \frac{\mathsf{y}-1}{2}+k_1\frac{\pi}{\eta}-k_2\right],\\
     &\sum_{j=1}^\mathsf{N}\lambda_j
     =\sum_{k=1}^\mathsf{N}\xi_k-\frac{\mathsf{N}}{2}\eta+(1-\mathsf{y})\frac{\pi\omega}{2}+k_2 \pi\omega-k_3\pi,
     \end{aligned}
    \right.
  \end{equation}
  for some $k_1,k_2,k_3\in\mathbb{Z}$.
  
  Note that, if there indeed exists such a solution, it means that there exists such a solution for which $k_2=0$ and $k_1,k_3\in\{0,1\}$ (this is due to Remark~\ref{rem-shift-eta}, and to the quasi-periodicity properties of the function $\theta_\mathsf{x=0}$ with respect to a shift of one the roots $\lambda_j$ by $2\pi$ or $\pi\omega$).
  In fact, one can even be more precise, and formulate the following Ansatz:
  \begin{itemize}
  \item if $\mathsf{(x,y)=(0,1)}$, we expect two independent solutions of the form
  \begin{equation}\label{Q-01}
      Q(\lambda)=\prod_{j=1}^\mathsf{N}\theta_\mathsf{x=0}(\lambda-\lambda_j),
  \end{equation}
  with
  \begin{equation}\label{sum-01}
      \sum_{j=1}^\mathsf{N}\lambda_j=\sum_{k=1}^\mathsf{N}\xi_k-\frac{\mathsf{N}}{2}\eta+k\pi,\qquad
      k\in\{0,1\},
  \end{equation}
  and
  \begin{equation}\label{Qbis-01}
      \widehat{Q}(\lambda)=e^{i\frac{\pi}{\eta}\lambda}\, Q(\lambda+\pi);
  \end{equation}
  \item if $\mathsf{(x,y)=(0,0)}$ and $\mathsf{N}$ odd, we expect two independent solutions of the form
  \begin{equation}\label{Q-00-1}
      Q(\lambda)=e^{i\big[\frac{k\pi}{\eta}-\frac{1}{2}\big]\lambda}\,\prod_{j=1}^\mathsf{N}\theta_\mathsf{x=0}(\lambda-\lambda_j),
      \qquad k\in\{0,1\},
  \end{equation}
  with
  \begin{equation}\label{sum-00-1}
      \sum_{j=1}^\mathsf{N}\lambda_j=\sum_{k=1}^\mathsf{N}\xi_k-\frac{\mathsf{N}}{2}\eta+\frac{\pi\omega}{2},
  \end{equation}
  and
  \begin{equation}\label{Qbis-00-1}
      \widehat{Q}(\lambda)=Q(\lambda+\pi).
  \end{equation}
  \end{itemize}

\paragraph{The case $\mathsf{y}=0$.}
A similar reasoning can be made in this case, by choosing $P(\lambda)$ of the form
\begin{equation}\label{P-form-y0}
     P(\lambda)=\prod_{j=1}^\mathsf{N}\theta_\mathsf{y=0}(\lambda-z_j),
\end{equation}
with arbitrary $z_j$, $1\le j\le \mathsf{N}$, so as to obtain elliptic coefficients with periods $\pi$ and $2\pi\omega$ in \eqref{eq-hom-2}. It leads to the following Ansatz for $Q(\lambda)$:
\begin{equation}\label{formQ-y0}
     Q(\lambda)=c_Q\, e^{\alpha \lambda} \prod_{j=1}^{\mathsf{N}}\theta_\mathsf{y=0} (\lambda-\lambda_j  ),
\end{equation}
for some complex parameter $\alpha$ and some set of roots $\lambda_j$, $1\le j \le \mathsf{N}$. The condition \eqref{rel-W-2}, which is a necessary and sufficient condition for the corresponding function \eqref{def-t-Q} to satisfy \eqref{periodt-2}, results into the following sum rules for \eqref{formQ-y0}:
\begin{equation}\label{sum-rule2}
    \left\{
     \begin{aligned}
     &\alpha= i\left[ k_1\frac{\pi}{\eta}-k_2\right],\\
     &\sum_{j=1}^\mathsf{N}\lambda_j
     =\sum_{k=1}^\mathsf{N}\xi_k-\frac{\mathsf{N}}{2}\eta+(1-\mathsf{x})\frac{\pi}{2}+k_2 \pi\omega-k_3\pi,
     \end{aligned}
    \right.
\end{equation}
for some $k_1,k_2,k_3\in\mathbb{Z}$.

As previously, we can be more precise. Indeed, if such a solution exists, then it is easy to see, from Remark~\ref{rem-shift-eta} and considerations about the quasi-periodicity properties of the function $\theta_\mathsf{y=0}$ with respect to a shift of one of the roots $\lambda_j$ by $\pi$ or $2\pi\omega$, that there also exists a solution for which $k_3=0$ and $k_1,k_2\in\{0,1\}$. Then
one can formulate the following Ansatz:
  \begin{itemize}
  \item if $\mathsf{(x,y)=(1,0)}$, we expect two independent solutions of the form
  \begin{equation}\label{Q-10}
      Q(\lambda)=e^{-i k \lambda}\prod_{j=1}^\mathsf{N}\theta_\mathsf{y=0}(\lambda-\lambda_j),
      \qquad k\in\{0,1\},
  \end{equation}
  with
  \begin{equation}\label{sum-10}
      \sum_{j=1}^\mathsf{N}\lambda_j=\sum_{k=1}^\mathsf{N}\xi_k-\frac{\mathsf{N}}{2}\eta+k\pi\omega,
  \end{equation}
  and
  \begin{equation}\label{Qhat-10}
      \widehat{Q}(\lambda)=e^{i(\mathsf{N}+\frac{\pi}{\eta})\lambda}\, Q(\lambda+\pi\omega);
  \end{equation}

  \item if $\mathsf{(x,y)=(0,0)}$ and $\mathsf{N}$ odd, we expect two independent solutions of the form
  \begin{equation}\label{Q-00-2}
      Q(\lambda)=e^{i\frac{k\pi}{\eta}\lambda}\,\prod_{j=1}^\mathsf{N}\theta_\mathsf{y=0}(\lambda-\lambda_j),
      \qquad k\in\{0,1\},
  \end{equation}
  with
  \begin{equation}\label{sum-00-2}
      \sum_{j=1}^\mathsf{N}\lambda_j=\sum_{k=1}^\mathsf{N}\xi_k-\frac{\mathsf{N}}{2}\eta+\frac{\pi}{2},
  \end{equation}
  and
  \begin{equation}\label{Qbis-00-2}
      \widehat{Q}(\lambda)=e^{i\mathsf{N}\lambda}Q(\lambda+\pi\omega).
  \end{equation}
  \end{itemize}

\paragraph{The case $\mathsf{x=y}$.}
In that case, the choice
\begin{equation}\label{P-form-gen}
     P(\lambda)=\prod_{j=1}^{\mathsf{N}}\theta_\mathsf{x=y}(\lambda-z_j),
\end{equation}
with arbitrary $z_j$, $1\le j\le \mathsf{N}$,
leads to an equation \eqref{eq-hom-2} which has elliptic coefficients with periods $\pi\omega+\pi$ and $\pi\omega-\pi$, and to the following Ansatz for the solution $Q(\lambda)$ of \eqref{hom-eq}:
\begin{equation}\label{formQx=y}
     Q(\lambda)=c_Q\, e^{\alpha \lambda} \prod_{j=1}^{\mathsf{N}}\theta_\mathsf{x=y} (\lambda-\lambda_j  ),
\end{equation}
for some complex parameter $\alpha$ and some set of roots $\lambda_j$, $1\le j \le \mathsf{N}$. 
It is then easy to see that the conditions \eqref{rel-W-1} and \eqref{rel-W-2} result into the same\footnote{In fact, for $Q(\lambda)$ of the form \eqref{formQx=y}, one has 
$W_Q^{(2)}(\lambda+\pi)=(-1)^{\mathsf{x}+1}\,(ie^{-i\lambda-i\frac{\pi\omega}{2}})^\mathsf{N}\,
e^{i\sum\lambda_j+(\pi+\pi\omega)\alpha} \, W_Q^{(1)}(\lambda)$.} sum rules for the parameters entering \eqref{formQx=y}, which can be written as
\begin{equation}\label{rule-x=y}
    \left\{
     \begin{aligned}
     &\alpha=  i\left[ \frac{\mathsf{y}-1}{2}+k_1\frac{\pi}{\eta}-k_2\right],\\
     &\sum_{j=1}^{\mathsf{N}}\lambda_j
     =\sum_{k=1}^\mathsf{N}\xi_k-\frac{\mathsf{N}}{2} \eta+(\pi+ \pi\omega)\left(\frac{1-\mathsf{y}}{2}+k_2\right)+k_3\pi,
     \end{aligned}
    \right.
\end{equation}
for $k_1,k_2,k_3\in\mathbb{Z}$.

Here again one can be more precise: if such a solution exists, it is easy to see that, due to Remark~\ref{rem-shift-eta} and considerations about the quasi-periodicity properties of the function $\theta_\mathsf{x=y}$ with respect to a shift of one of the roots $\lambda_j$ by $\pi+\pi\omega$ or $2\pi$, it exists for $k_2=0$ and $k_1,k_3\in\{0,1\}$. Due to the existence of a second solution of the form
\begin{equation}
e^{i\frac{\mathsf{y}\pi}{\eta}\lambda}\, Q(\lambda+\pi)
                                           \propto e^{i(\mathsf{N}+\frac{\mathsf{x}\pi}{\eta})\lambda}\, Q(\lambda+\pi\omega),
\end{equation}
one can finally formulate the following Ansatz:
  \begin{itemize}
  \item if $\mathsf{(x,y)=(1,1)}$, we expect two independent solutions of the form
  \begin{equation}\label{Q-11}
      Q(\lambda)=\prod_{j=1}^\mathsf{N}\theta_\mathsf{x=y}(\lambda-\lambda_j),
  \end{equation}
  with
  \begin{equation}\label{sum-11}
      \sum_{j=1}^\mathsf{N}\lambda_j=\sum_{k=1}^\mathsf{N}\xi_k-\frac{\mathsf{N}}{2}\eta+k\pi,\qquad
      k\in\{0,1\},
  \end{equation}
  and
  \begin{equation}\label{Qbis-11}
      \widehat{Q}(\lambda)=e^{i\frac{\pi}{\eta}\lambda}\, Q(\lambda+\pi)
                                           \propto e^{i(\mathsf{N}+\frac{\pi}{\eta})\lambda}\, Q(\lambda+\pi\omega);
  \end{equation}
  \item if $\mathsf{(x,y)=(0,0)}$ and $\mathsf{N}$ odd, we expect two independent solutions of the form
  \begin{equation}\label{Q-00-3}
      Q(\lambda)=e^{i\big[\frac{k\pi}{\eta}-\frac{1}{2}\big]\lambda}\,\prod_{j=1}^\mathsf{N}\theta_\mathsf{x=y}(\lambda-\lambda_j),
      \qquad k\in\{0,1\},
  \end{equation}
  with
  \begin{equation}\label{sum-00-3}
      \sum_{j=1}^\mathsf{N}\lambda_j=\sum_{k=1}^\mathsf{N}\xi_k-\frac{\mathsf{N}}{2}\eta+\frac{\pi+\pi\omega}{2},
  \end{equation}
  and
  \begin{equation}\label{Qbis-00-3}
      \widehat{Q}(\lambda)=Q(\lambda+\pi)\propto e^{i\mathsf{N}\lambda}\, Q(\lambda+\pi\omega).
  \end{equation}
  \end{itemize}

\begin{rem}
 The case $\mathsf{(x,y)}=(0,0)$ with $\mathsf{N}$ odd can be considered from three different viewpoints (as a particular case of $\mathsf{x}=0$, of $\mathsf{y}=0$ or of $\mathsf{x=y}$), hence leading to three possible different systems of independent solutions (\eqref{Q-00-1} and \eqref{Qbis-00-1}, \eqref{Q-00-2} and \eqref{Qbis-00-2}, or \eqref{Q-00-3} and \eqref{Qbis-00-3}).
\end{rem}

In the previous study, we have exhibited possible forms for the entire $Q$-solutions of \eqref{hom-eq}, hence susceptible to lead to a system of Bethe-type equations. 
One can in fact be slightly more precise in our argument by remarking that, if  $Q(\lambda)$ is a solution of the form \eqref{formQ-x0}, \eqref{formQ-y0} or \eqref{formQx=y} to the homogeneous functional equation \eqref{hom-eq}  for some function $\bar{\mathsf{t}}(\lambda)$ satisfying \eqref{periodt-1} and \eqref{periodt-2}, then at least one of the two functions \eqref{W-1} or \eqref{W-2} should be non identically vanishing. This means on the one hand that the sum rules \eqref{sum-rule1}, \eqref{sum-rule2} or \eqref{rule-x=y} should indeed be satisfied, so that we can indeed refine the form of $Q(\lambda)$ as in \eqref{Q-01}, \eqref{Q-00-1}, \eqref{Q-10}, \eqref{Q-00-2}, \eqref{Q-11} or \eqref{Q-00-3} according to the case. This means on the other hand that the condition {\it 2'.} of Proposition~\ref{prop-Qinh} is automatically satisfied. This is due to the following proposition and corollary:

\begin{proposition}\label{prop-W1-W2}
If  $Q(\lambda)$ is a non-zero solution of \eqref{hom-eq} of the form \eqref{formQ-x0},  \eqref{formQ-y0} or \eqref{formQx=y} to \eqref{hom-eq}  for some function $\bar{\mathsf{t}}(\lambda)$ satisfying \eqref{periodt-1} and \eqref{periodt-2}, then at least one of the two functions $W_Q^{(1)}(\lambda)$ \eqref{W-1} and $W_Q^{(2)}(\lambda)$ \eqref{W-2} is not identically zero.
\end{proposition}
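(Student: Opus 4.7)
The plan is to argue by contradiction, assuming both $W_Q^{(1)}\equiv 0$ and $W_Q^{(2)}\equiv 0$. Introducing the ratios $r(\lambda):=Q(\lambda+\pi)/Q(\lambda)$ and $s(\lambda):=Q(\lambda+\pi\omega)/Q(\lambda)$, these two assumptions are immediately equivalent to the $\eta$-quasi-periodicity identities
\begin{equation*}
r(\lambda+\eta)=(-1)^{\mathsf{y}}\,r(\lambda),\qquad s(\lambda+\eta)=(-1)^{\mathsf{x}}\,e^{-i\mathsf{N}\eta}\,s(\lambda).
\end{equation*}
The first step is to compute $r$ and $s$ in closed form from the factorized expression of $Q$ and the quasi-periodicities of the relevant theta-function $\theta_{\bullet}\in\{\theta_{\mathsf{x=0}},\theta_{\mathsf{y=0}},\theta_{\mathsf{x=y}}\}$ under shifts by $\pi$ and $\pi\omega$, obtaining three structural reductions: for \eqref{formQ-x0}, $s(\lambda)$ reduces to a pure exponential $C_{s}\,e^{-i\mathsf{N}\lambda}$ (so $W_Q^{(2)}\equiv 0$ is automatic when $\mathsf{x}=0$ and incompatible with $Q\not\equiv 0$ otherwise); symmetrically for \eqref{formQ-y0}, $r(\lambda)$ is a non-zero constant (so $W_Q^{(1)}\equiv 0$ is automatic when $\mathsf{y}=0$ and fails otherwise); and for \eqref{formQx=y}, a short check yields $s(\lambda)=D\,e^{-i\mathsf{N}\lambda}\,r(\lambda)$, so the two Wronskian conditions collapse, consistently with $\mathsf{x}=\mathsf{y}$, into the single identity $r(\lambda+\eta)=(-1)^{\mathsf{y}}\,r(\lambda)$.

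In each case, the remaining non-trivial quasi-periodicity requirement is imposed on a non-vanishing meromorphic function of $\lambda$ which is doubly periodic, with a lattice $\Lambda$ generated by appropriate integer multiples of $\pi$ and $\pi\omega$ dictated by $\theta_{\bullet}$. Since $\eta\notin\pi\mathbb{Q}+\pi\omega\mathbb{Q}$ is generic, the square of this function is $\eta$-periodic and still elliptic, so that its period group $\Lambda+\eta\mathbb{Z}$ has rank $3$ and is dense in $\mathbb{C}$; meromorphy then forces the square, and hence the function itself, to be a constant $c$.

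The conclusion follows by a dichotomy on the phase. If the phase is $-1$ (which covers $\mathsf{y}=1$ with \eqref{formQ-x0}, $\mathsf{x}=1$ with \eqref{formQ-y0}, and $\mathsf{x}=\mathsf{y}=1$ with \eqref{formQx=y}), the equation $c=-c$ forces $c=0$, hence $r\equiv 0$ or the relevant non-exponential ratio vanishes, implying $Q\equiv 0$ and contradicting the hypothesis. The only remaining situation is $\mathsf{(x,y)}=(0,0)$ with $\mathsf{N}$ odd, in which the phase is $+1$; constancy of the relevant ratio then forces the multiset of roots $\{\lambda_j\}_{j=1}^{\mathsf{N}}$ of $Q$ to be invariant, modulo the zero-lattice $\Lambda_0$ of $\theta_{\bullet}$, under a specific shift (by $\pi$ for \eqref{formQ-x0} and \eqref{formQx=y}, by $\pi/2+\pi\omega$ for \eqref{formQ-y0}). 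A direct inspection of the three zero-lattices shows that this shift does not belong to $\Lambda_0$, so there are no fixed roots; invariance therefore requires the roots to be paired, contradicting $\mathsf{N}$ odd. The main technical hurdle will be to carry out the theta-function manipulations compactly enough to obtain the three structural reductions listed above in a unified way, and to verify, in each of the three forms of $Q$, the non-membership of the relevant shift in $\Lambda_0$.
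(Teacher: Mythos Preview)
Your approach is correct in spirit but takes a genuinely different route from the paper, and there is one small gap to patch.

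\textbf{Comparison with the paper.} The paper argues by contradiction using the \emph{other} natural ratio: setting $\rho(\lambda)=Q(\lambda)/Q(\lambda-\eta)$, the vanishing of both $W_Q^{(1)}$ and $W_Q^{(2)}$ says precisely that $\rho$ is double-Bloch with respect to the lattice $\pi\mathbb{Z}+\pi\omega\mathbb{Z}$. Any such function factors as $c\,e^{\beta\lambda}\prod_j \theta(\lambda-\mu_j)/\theta(\lambda-\nu_j)$, from which the paper concludes that $Q$ itself must be of the forbidden form \eqref{Q-form}. This is short and uniform: no case-splitting on the three Ans\"atze. Your argument, by contrast, fixes the form of $Q$ from the start, computes $r$ and $s$ explicitly, and pushes through a case analysis. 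What you gain is concreteness: your proof never needs the (slightly glossed-over) implication from the double-Bloch structure of $\rho$ back to the factorised form of $Q$. What you lose is brevity and unity; the paper's argument is two lines where yours is a page.

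\textbf{The gap.} Your claim that ``the remaining non-trivial ratio is doubly periodic'' fails as stated for the form \eqref{formQ-y0}. A direct computation gives $s(\lambda+2\pi\omega)=e^{-2i\mathsf{N}\pi\omega}\,s(\lambda)$, which is genuinely quasi-periodic, not periodic; squaring does not help. The fix is immediate: work instead with $\tilde s(\lambda):=e^{i\mathsf{N}\lambda}s(\lambda)$, which satisfies $\tilde s(\lambda+\eta)=(-1)^{\mathsf{x}}\tilde s(\lambda)$, is $2\pi\omega$-periodic, and is $\pi$-quasi-periodic with multiplier $(-1)^{\mathsf{N}}$. Then $\tilde s^{\,2}$ is elliptic and $\eta$-periodic, and your density argument goes through. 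Relatedly, in the $(\mathsf{x},\mathsf{y})=(0,0)$ subcase of \eqref{formQ-y0}, the shift under which the root-multiset must be invariant is $\pi\omega$ (coming from $Q(\lambda+\pi\omega)=\text{const}\cdot e^{-i\mathsf{N}\lambda}Q(\lambda)$), not $\pi/2+\pi\omega$; one still has $\pi\omega\notin\Lambda_0=\pi\mathbb{Z}+2\pi\omega\mathbb{Z}$, so your parity contradiction for odd $\mathsf{N}$ survives. With these two corrections your proof is complete.
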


\begin{proof}
Let us suppose that both functions $W_Q^{(1)}(\lambda)$ and $W_Q^{(2)}(\lambda)$ are identically zero, i.e. that $Q(\lambda)$ satisfies the relations
\begin{equation}
   \left\{
   \begin{aligned}
    &\frac{Q(\lambda+\pi)}{Q(\lambda+\pi-\eta)}=(-1)^\mathsf{y}\, \frac{Q(\lambda)}{Q(\lambda-\eta)},\\
    &\frac{Q(\lambda+\pi\omega)}{Q(\lambda+\pi\omega-\eta)}=(-1)^\mathsf{x}\, e^{-i\mathsf{N}\eta}\, 
    \frac{Q(\lambda)}{Q(\lambda-\eta)}.
   \end{aligned}
   \right.
\end{equation}
This means that the function $\frac{Q(\lambda)}{Q(\lambda-\eta)}$ is a double-Bloch function which  can therefore be written in the form
  \begin{equation}\
     \frac{Q(\lambda)}{Q(\lambda-\eta)}=c_P\, e^{\beta \lambda}
     \prod_{j=1}^\mathsf{M}\frac{\theta(\lambda-\mu_j)}
                                                     {\theta(\lambda-\nu_j)}
  \end{equation}
for some integer $\mathsf{M}$, some complex parameter $\beta$ and some sets of zeroes and poles $\mu_j$ and $\nu_j$, $1\le j \le \mathsf{M}$.
It follows that $Q(\lambda)$ should be of the form \eqref{Q-form} which, as already mentioned, is not compatible with the quasi-periodicity properties of \eqref{hom-eq}.
\end{proof}

\begin{corollary}\label{cor-W}
Let $\bar{\mathsf{t}}(\lambda)$  satisfying \eqref{periodt-1} and \eqref{periodt-2} and let us suppose that there exists a function $Q(\lambda)$ solution to the homogeneous functional equation \eqref{hom-eq} associated with $\bar{\mathsf{t}}(\lambda)$.
\vspace{-\topsep}
\begin{itemize}
\setlength\itemsep{0em}
\item If $\mathsf{x}=0$ and if $Q(\lambda)$ is of the form \eqref{formQ-x0}, then $W_Q^{(1)}(\lambda)$ is not identically zero and the sum rule \eqref{sum-rule1} is satisfied. Moreover, for all $n\in\{1,\ldots\mathsf{N}\}$, $\big(Q(\xi_n-\eta),Q(\xi_n-\eta+\pi)\big)\not=(0,0)$.
\item If $\mathsf{y}=0$ and if $Q(\lambda)$ is of the form \eqref{formQ-y0}, then $W_Q^{(2)}(\lambda)$ is not identically zero and the sum rule \eqref{sum-rule2} is satisfied. Moreover, for all $n\in\{1,\ldots\mathsf{N}\}$, $\big(Q(\xi_n-\eta),Q(\xi_n-\eta+\pi\omega)\big)\not=(0,0)$.
\item If $\mathsf{x=y}=0$ and if $Q(\lambda)$ is of the form \eqref{formQx=y}, then $W_Q^{(1)}(\lambda)$ and $W_Q^{(2)}(\lambda)$ are not identically zero and the sum rule \eqref{rule-x=y} is satisfied. Moreover, for all $n\in\{1,\ldots\mathsf{N}\}$, $\big(Q(\xi_n-\eta),Q(\xi_n-\eta+\pi)\big)\not=(0,0)$ and $\big(Q(\xi_n-\eta),Q(\xi_n-\eta+\pi\omega)\big)\not=(0,0)$.
\end{itemize}
Moreover, in all these cases, $\bar{\mathsf{t}}(\lambda)$ is a transfer matrix eigenvalue.
\end{corollary}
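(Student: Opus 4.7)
The plan is to argue, for each case, that the prescribed factorized form of $Q(\lambda)$ forces at least one of the Wronskian-type functions $W_Q^{(i)}$ of Proposition~\ref{lem-per-t-W} to be non-identically vanishing, then extract from this $W_Q^{(i)}$ both the sum rule and the non-vanishing pair conditions, and finally invoke Theorem~\ref{th-sens1} via Propositions~\ref{lem-per-t-W} and \ref{prop-Qinh} to conclude that $\bar{\mathsf{t}}(\lambda)\in\Sigma_{\overline{\mathcal{T}}}$.

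For the case $\mathsf{x}=0$ with $Q$ of form \eqref{formQ-x0}, I would first observe that the relation $\theta_\mathsf{x=0}(\lambda+\pi\omega)=-e^{-i\lambda-i\pi\omega/2}\,\theta_\mathsf{x=0}(\lambda)$ implies $Q(\lambda+\pi\omega)=C\,e^{-i\mathsf{N}\lambda}\,Q(\lambda)$ for some non-zero constant $C$. Substituting into \eqref{W-2} immediately yields $W_Q^{(2)}\equiv 0$, so by Proposition~\ref{prop-W1-W2} the function $W_Q^{(1)}$ cannot be identically zero. Since $\bar{\mathsf{t}}(\lambda)$ satisfies \eqref{periodt-1} by assumption, Proposition~\ref{lem-per-t-W}(i) provides the factorization $W_Q^{(1)}(\lambda)=g^{(1)}(\lambda)\,\mathsc{d}(\lambda)$ with $g^{(1)}(\lambda+\eta)=g^{(1)}(\lambda)$ (since $\mathsf{x+xy}=0$); that $g^{(1)}$ is entire follows from evaluating \eqref{rel-W-1} at $\lambda=\xi_n-\eta$, which forces $W_Q^{(1)}(\xi_n)=0$, and propagating to the other zeros of $\mathsc{d}$ by quasi-periodicity. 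On the other hand, $g^{(1)}$ inherits from $Q$ a double quasi-periodicity with constant multipliers under shifts by $\pi$ and $\pi\omega$; since $g^{(1)}$ is $\eta$-periodic with generic $\eta$ (as implied by \eqref{cond-inh}), it cannot vanish anywhere without acquiring infinitely many zeros in a fundamental domain, so $g^{(1)}(\lambda)=c_W^{(1)}\,e^{\alpha_1\lambda}$ with $c_W^{(1)}\neq 0$. Matching the multiplicative factors under the $\pi$- and $\pi\omega$-shifts then yields precisely the sum rule \eqref{sum-rule1}. The cases $\mathsf{y}=0$ and $\mathsf{x=y}=0$ follow completely parallel paths: for $\mathsf{y}=0$ with $Q$ of form \eqref{formQ-y0}, the relation $\theta_\mathsf{y=0}(\lambda+\pi)=-\theta_\mathsf{y=0}(\lambda)$ gives $W_Q^{(1)}\equiv 0$, so $W_Q^{(2)}$ is non-trivial and produces \eqref{sum-rule2}; for $\mathsf{x=y}=0$ with $Q$ of form \eqref{formQx=y} and $\mathsf{N}$ odd, neither vanishing is automatic but the same reasoning applied separately to $g^{(1)}$ and $g^{(2)}$ shows that both are non-trivial and yield the single sum rule \eqref{rule-x=y}.

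The non-vanishing pair conditions then follow immediately: evaluating the factorization $W_Q^{(1)}(\lambda)=c_W^{(1)}\,e^{\alpha_1\lambda}\,\mathsc{d}(\lambda)$ at $\lambda=\xi_n-\eta$ gives a non-zero value since $\mathsc{d}(\xi_n-\eta)=\prod_k\theta(\xi_n-\eta-\xi_k)\neq 0$ under hypothesis \eqref{cond-inh}, and unpacking the definition \eqref{W-1} of $W_Q^{(1)}(\xi_n-\eta)$ then forces $\big(Q(\xi_n-\eta),Q(\xi_n-\eta+\pi)\big)\neq (0,0)$; the analogous argument with $W_Q^{(2)}$ yields the $\pi\omega$-shifted statement. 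Finally, this non-vanishing pair condition implies condition~2' of Proposition~\ref{prop-Qinh} by choosing $(\alpha_n,\beta_n)\in\{0,1\}^2$ appropriately, depending on which component of the pair is non-zero, so together with the assumed quasi-periodicities of $\bar{\mathsf{t}}(\lambda)$, Theorem~\ref{th-sens1} (in the relaxed form of Proposition~\ref{prop-Qinh}) yields $\bar{\mathsf{t}}(\lambda)\in\Sigma_{\overline{\mathcal{T}}}$. The main technical subtlety I anticipate is the assertion that an entire $\eta$-periodic function which is doubly quasi-periodic with constant factors under $\pi$ and $\pi\omega$ must be of the form $c\,e^{\alpha\lambda}$; this is elementary but relies on genericity of $\eta$ with respect to the lattice $\pi\mathbb{Z}+\pi\omega\mathbb{Z}$, which is implicit in \eqref{cond-inh}.
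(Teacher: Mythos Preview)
Your proposal is correct and follows essentially the same approach as the paper: show that the factorized form of $Q$ kills one of the Wronskians $W_Q^{(i)}$, invoke Proposition~\ref{prop-W1-W2} to deduce the other is non-trivial, use the resulting factorization $W_Q^{(i)}=c_W\,e^{\alpha\lambda}\,\mathsc{d}(\lambda)$ (with $c_W\neq 0$) to read off the sum rule and the non-vanishing pair condition, and then appeal to Proposition~\ref{prop-Qinh} for the eigenvalue conclusion. The paper's proof is identical in structure, though it is terser and refers back to ``the previous study'' for the sum-rule derivation that you spell out in detail.

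There is one small point to tighten in the $\mathsf{x=y}=0$ case: Proposition~\ref{prop-W1-W2} only tells you that \emph{at least one} of $W_Q^{(1)},W_Q^{(2)}$ is non-zero, so your sentence ``the same reasoning applied separately to $g^{(1)}$ and $g^{(2)}$ shows that both are non-trivial'' does not stand on its own. The missing ingredient is that for $Q$ of the form \eqref{formQx=y} one has the explicit proportionality
\[
  W_Q^{(2)}(\lambda+\pi)
  =(-1)^{\mathsf{x}+1}\,\bigl(ie^{-i\lambda-i\frac{\pi\omega}{2}}\bigr)^{\mathsf{N}}\,
   e^{i\sum_j\lambda_j+(\pi+\pi\omega)\alpha}\, W_Q^{(1)}(\lambda),
\]
a direct consequence of the quasi-periodicity $\theta_\mathsf{x=y}(\lambda+\pi+\pi\omega)=ie^{-i\lambda-i\pi\omega/2}\,\theta_\mathsf{x=y}(\lambda)$. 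This shows that $W_Q^{(1)}$ and $W_Q^{(2)}$ vanish simultaneously, so the non-triviality of one immediately gives the non-triviality of the other; the paper records exactly this identity (in a footnote) and uses it for the same purpose.
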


\begin{proof}
Let us first consider the case $\mathsf{x=0}$. The first part of the assertion is a direct consequence of Proposition~\ref{prop-W1-W2}, of the fact that $W_Q^{(2)}(\lambda)$ is identically zero for $Q(\lambda)$ of the form \eqref{formQ-x0} and of the previous study. 
Therefore $W_Q^{(1)}(\lambda)=c_W^{(1)}\, e^{\alpha_1\lambda}\, \mathsc{d}(\lambda)$ with $c_W^{(1)}\not=0$. This is clearly incompatible with the possibility that, for some $n\in\{1,\ldots,\mathsf{N}\}$, 
$\big(Q(\xi_n-\eta),Q(\xi_n-\eta+\pi)\big)=(0,0)$, since in that case we would have $W_Q^{(1)}(\xi_n-\eta)=0$. Hence the second assertion follows.

The cases $\mathsf{y=0}$ and $\mathsf{x=y}$ can be proven similarly.
\end{proof}

It is also easy to see that, whenever such solutions exist for $\bar{\mathsf{t}}(\lambda)\in\Sigma_{\overline{\mathcal{T}}}$, they are unique, which means that the characterization of a given eigenvalue (and corresponding eigenvector) of the antiperiodic transfer matrix in terms of the corresponding type of solutions of Bethe equation is uniquely determined (i.e the same eigenvalue cannot be obtained by two different solutions of the same type).

\begin{proposition}\label{prop-uniqueness}
For each $\bar{\mathsf{t}}(\lambda)\in\Sigma_{\overline{\mathcal{T}}}$, the equation \eqref{hom-eq} admits
\vspace{-\topsep}
\begin{itemize}
\setlength\itemsep{0em}
\item at most one independent solution of the form \eqref{Q-01} if $\mathsf{(x,y)}=(0,1)$;
\item at most one independent solution of the form \eqref{Q-10} if $\mathsf{(x,y)}=(1,0)$;
\item at most one independent solution of the form \eqref{Q-11} if $\mathsf{(x,y)}=(1,1)$;
\item  at most one independent solution of the form \eqref{Q-00-1} with the constraint \eqref{sum-00-1}, 
 at most one independent solution of the form \eqref{Q-00-2}, and at most one independent solution of the form \eqref{Q-00-3}  with the constraint \eqref{sum-00-3} if $\mathsf{(x,y)}=(0,0)$.
\end{itemize}
\vspace{-\topsep}
\end{proposition}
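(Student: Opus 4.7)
The plan is to argue by contradiction based on the quantum Wronskian identity of Proposition~\ref{lem-W12}. Suppose, in one of the listed cases, that there exist two linearly independent solutions $Q_1(\lambda)$ and $Q_2(\lambda)$ of the prescribed form to \eqref{hom-eq} associated with the same $\bar{\mathsf{t}}(\lambda)\in\Sigma_{\overline{\mathcal{T}}}$. Their Wronskian $W_{12}$ defined in \eqref{W12} is then not identically zero, and Proposition~\ref{lem-W12} yields $W_{12}(\lambda)=f_{12}(\lambda)\,\mathsc{d}(\lambda)$ for a meromorphic $f_{12}$ satisfying $f_{12}(\lambda+\eta)=(-1)^{\mathsf{x}+\mathsf{y}+\mathsf{xy}}f_{12}(\lambda)$. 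The goal is to show that this forces $f_{12}\equiv 0$, hence $W_{12}\equiv 0$, contradicting linear independence.

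I would next compute the quasi-periodicity of $W_{12}$ under two additional shifts natural to each ansatz. Using the quasi-periodicity properties of $\theta_{\mathsf{x=0}},\theta_{\mathsf{y=0}},\theta_{\mathsf{x=y}}$ recalled just before \eqref{P-form-x0}, together with the sum-rule constraint (\eqref{sum-01}, \eqref{sum-10}, \eqref{sum-11}, \eqref{sum-00-1}, \eqref{sum-00-2} or \eqref{sum-00-3} depending on the case) and the explicit exponential prefactors common to both $Q_1$ and $Q_2$, one obtains two constants $c_1,c_2\in\mathbb{C}^{*}$ and two fundamental shifts $\pi_1,\pi_2\in\{\pi,2\pi,\pi\omega,2\pi\omega,\pi+\pi\omega\}$ such that $W_{12}(\lambda+\pi_i)=c_iW_{12}(\lambda)$. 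Dividing by the corresponding known multipliers of $\mathsc{d}(\lambda)$ gives constants $c_1',c_2'$ characterizing $f_{12}(\lambda+\pi_i)=c_i'f_{12}(\lambda)$. Combined with the $\eta$-quasi-periodicity, these three relations force the set of poles of $f_{12}$ to be invariant under the group $\eta\mathbb{Z}+\pi_1\mathbb{Z}+\pi_2\mathbb{Z}\subset\mathbb{C}$, which is dense for generic $\eta$ (i.e.\ $\eta\notin\pi\mathbb{Q}+\pi\omega\mathbb{Q}$); this rules out any isolated poles, so $f_{12}$ must be entire.

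An entire function on $\mathbb{C}$ that is quasi-periodic with constant multipliers under a dense subgroup of translations is either identically zero or of the form $K\,e^{\alpha\lambda}$ (it has no zeros by the same density argument, and its logarithmic derivative is invariant under the dense subgroup, hence constant). Such an $f_{12}=Ke^{\alpha\lambda}$ would require a common $\alpha\in\mathbb{C}$ solving $e^{\alpha\eta}=(-1)^{\mathsf{x}+\mathsf{y}+\mathsf{xy}}$, $e^{\alpha\pi_1}=c_1'$ and $e^{\alpha\pi_2}=c_2'$ simultaneously; a direct case-by-case verification shows that for generic $\eta$ these three equations are mutually inconsistent in each listed case, so $f_{12}\equiv 0$. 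The hardest part will be this last verification in the case $(\mathsf{x,y})=(0,0)$, where the $\eta$-multiplier equals $+1$ and therefore provides no immediate obstruction: the $k$-dependent exponential prefactors in the Ansätze \eqref{Q-00-1}--\eqref{Q-00-3} contribute non-trivially to $c_1'$ and $c_2'$, and must be tracked carefully to produce the required incompatibility with a single exponent $\alpha$.
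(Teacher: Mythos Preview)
Your approach is essentially the same as the paper's: both start from the quantum Wronskian $W_{12}$ and the relation $W_{12}=f_{12}\,\mathsc{d}$ with $f_{12}(\lambda+\eta)=(-1)^{\mathsf{x}+\mathsf{y}+\mathsf{xy}}f_{12}(\lambda)$, and then use additional quasi-periodicities inherited from the form of $Q_1,Q_2$ to force $f_{12}\equiv 0$. The difference is one of efficiency in the case $(\mathsf{x},\mathsf{y})\neq(0,0)$.

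The paper observes that for all three Ans\"atze \eqref{Q-01}, \eqref{Q-10}, \eqref{Q-11} one has $W_{12}(\lambda+2\pi)=W_{12}(\lambda)$, hence $f_{12}$ is $2\pi$-periodic. Since also $f_{12}(\lambda+2\eta)=f_{12}(\lambda)$, the function $f_{12}$ is entire with two incommensurable periods $2\pi$ and $2\eta$ (assuming $\eta\notin\pi\mathbb{Q}$), hence a constant; the $\eta$-antiperiodicity $f_{12}(\lambda+\eta)=-f_{12}(\lambda)$ then forces that constant to be zero. No computation of a second multiplier $c_2'$, no density argument, and no case-by-case compatibility check are needed. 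Your route---pushing to a rank-three group of translations, invoking density to exclude poles and zeros, and reducing to $Ke^{\alpha\lambda}$---reaches the same conclusion but with considerably more work, and the density claim, while true for generic $\eta$, is more delicate to state precisely than the simple two-period Liouville argument. For $(\mathsf{x},\mathsf{y})=(0,0)$ the paper proceeds exactly as you propose: it derives $f_{12}(\lambda)=c_{12}e^{\alpha_{12}\lambda}$ from the relevant pair of shifts and then checks that the resulting constraints on $\alpha_{12}$ are incompatible with the assumed form (and sum rules) of $Q_1,Q_2$.
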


\begin{proof}
Let us suppose that there exists, for $\mathsf{(x,y)}\not=(0,0)$ and for a given  $\bar{\mathsf{t}}(\lambda)\in\Sigma_{\overline{\mathcal{T}}}$, two solutions $Q_1(\lambda)$ and $Q_2(\lambda)$ of the form \eqref{Q-01} (respectively \eqref{Q-10}, respectively \eqref{Q-11}).
Then their quantum Wronskian \eqref{W12} satisfies the periodicity property $W_{12}(\lambda+2\pi)=W_{12}(\lambda)$.
Hence $W_{12}(\lambda)=f_{12}(\lambda)\,\mathsc{d}(\lambda)$, where $f_{12}(\lambda)$ is an entire function with two incommensurable periods $2\pi$ and $2\eta$ (we suppose that $\eta\notin\pi\mathbb{Q}$), i.e. it is a constant. The latter is equal to zero if $(\mathsf{x},\mathsf{y})\not=(0,0)$ due to \eqref{rel-W2}.

Let now $\mathsf{(x,y)}=(0,0)$ and let us suppose that, for  a given  $\bar{\mathsf{t}}(\lambda)\in\Sigma_{\overline{\mathcal{T}}}$, there exist two solutions $Q_1(\lambda)$ and $Q_2(\lambda)$ of the form \eqref{Q-00-1} with the constraint \eqref{sum-00-1} (respectively of the form \eqref{Q-00-2}, respectively of the form \eqref{Q-00-3}  with the constraint \eqref{sum-00-3}). 
By considering the quasi-periodicity properties of their quantum Wronskian \eqref{W12} $W_{12}(\lambda)=f_{12}(\lambda)\,\mathsc{d}(\lambda)$ with respect to shifts of $2\pi$ and $\pi\omega$ (respectively of $\pi$ and $2\pi\omega$, respectively of $2\pi$ and $\pm\pi+\pi\omega$), one obtains that $f_{12}(\lambda)$ is of the form $f_{12}(\lambda)=c_{12} e^{\alpha_{12}\lambda}$, and one obtains some conditions on $\alpha_{12}$ if $c_{12}\not=0$. These conditions are clearly incompatible with the supposed form of $Q_1(\lambda)$ and $Q_2(\lambda)$, which means here also that $c_{12}=0$ and that the two solutions are not independent.
\end{proof}

We shall see in the next subsection that  the reformulation of the spectral problem for $\overline{\mathcal{T}}(\lambda)$ in terms of solutions of the form \eqref{Q-01}, \eqref{Q-10} or \eqref{Q-11} is effectively equivalent to the SOV characterization of the spectrum of Theorem~\ref{thm-eigen-t}, at least  in the case $\mathsf{N}$ even.


\subsection{On the reformulation of the SOV characterization of the spectrum in terms of solutions of an homogeneous functional $T$-$Q$ equation: proof of the completeness for $\mathsf{N}$ even}

We shall now prove the converse of Theorem~\ref{th-sens1} in the case of even $\mathsf{N}$ and for $Q(\lambda)$ of the form \eqref{Q-01}, \eqref{Q-10} or \eqref{Q-11},
i.e. the completeness of the solutions to the corresponding Bethe equations.
More precisely, we shall prove the following result:

\begin{theorem}\label{th-sens2}
Let us suppose that the inhomogeneity parameters $\xi_{1},\ldots ,\xi _{\mathsf{N}}$ satisfy \eqref{cond-inh} and let $\bar{\mathsf{t}}(\lambda )$ be an eigenvalue of the antiperiodic transfer matrix $\overline{\mathcal{T}}(\lambda )$ (i.e. $\bar{\mathsf{t}}(\lambda )\in \Sigma _{\overline{\mathcal{T}}}$).
Then, if $\mathsf{N}$ is even, there exists a function $Q(\lambda )$ of the form
\begin{equation}\label{Q-X}
  Q(\lambda)=\prod_{j=1}^\mathsf{N}\theta_{\mathsf{X}}(\lambda-\lambda_j),
\end{equation}
for some set of roots $\lambda_1,\ldots,\lambda_\mathsf{N}$, such that $\bar{\mathsf{t}}(\lambda )$ and $Q(\lambda )$ satisfy the homogeneous functional equation \eqref{hom-eq}.
Here the notation  $\theta_\mathsf{X}(\lambda)$ stands, respectively, for the function \eqref{theta-x0} if $\mathsf{(x,y)}=(0,1)$, for the function \eqref{theta-y0} if $\mathsf{(x,y)}=(1,0)$, or for the function \eqref{theta-xy} if $\mathsf{(x,y)}=(1,1)$.
%
\end{theorem}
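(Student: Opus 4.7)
The plan is to construct, for each eigenvalue $\bar{\mathsf{t}}\in\Sigma_{\overline{\mathcal{T}}}$, a theta function $Q$ of the prescribed form \eqref{Q-X} satisfying the functional equation \eqref{hom-eq}, by interpolating the SOV data from Theorem~\ref{thm-eigen-t} and then verifying the equation via a theta-function zero-counting argument. The three cases $(\mathsf{x,y})\in\{(0,1),(1,0),(1,1)\}$ are treated in parallel, differing only in the variants \eqref{theta-x0}, \eqref{theta-y0}, \eqref{theta-xy} of the theta function and in the corresponding sum rules \eqref{sum-01}, \eqref{sum-10}, \eqref{sum-11}.

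Fix $\bar{\mathsf{t}}\in\Sigma_{\overline{\mathcal{T}}}$ and let $\mathsf{q}_{\bar{\mathsf{t}},a}^{(h)}$, $a\in\{1,\ldots,\mathsf{N}\}$, $h\in\{0,1\}$, be the coefficients characterized up to overall normalization by \eqref{t-Q-relation}. I would first consider, for each admissible value of the discrete parameter $k\in\{0,1\}$ parametrizing the sum rule, the $\mathsf{N}$-dimensional space $\mathcal{V}_k$ of theta functions of the form \eqref{Q-X} with $\sum_j\lambda_j$ fixed accordingly. Using the non-vanishing \eqref{id-det} of theta Vandermonde-type determinants of the kind appearing in Proposition~\ref{prop-sc} under hypothesis \eqref{cond-inh}, the evaluation map $\Phi:Q\in\mathcal{V}_k\mapsto(Q(\xi_a))_{a=1}^{\mathsf{N}}\in\mathbb{C}^{\mathsf{N}}$ is a linear isomorphism, yielding a unique $Q_k\in\mathcal{V}_k$ with $Q_k(\xi_a)=\mathsf{q}_{\bar{\mathsf{t}},a}^{(0)}$ for all $a$. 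The discrete parameter $k$ is then fixed by requiring also that $Q_k(\xi_a-\eta)=\mathsf{q}_{\bar{\mathsf{t}},a}^{(1)}$; the SOV discrete equations \eqref{syst-t} are exactly what makes this pair of $\mathsf{N}$ interpolation conditions compatible for a suitable choice of $k\in\{0,1\}$.

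The main step is then to verify the $T$-$Q$ equation globally. Introduce the entire function
\begin{equation*}
F(\lambda)\equiv\bar{\mathsf{t}}(\lambda)\,Q_k(\lambda)-\mathsc{a}_{\mathsf{x,y}}(\lambda)\,Q_k(\lambda-\eta)-\mathsc{d}(\lambda)\,Q_k(\lambda+\eta).
\end{equation*}
The sum rules \eqref{sum-01}--\eqref{sum-11} are precisely those that make the three summands of $F$ share common quasi-periodicity properties, both under $\lambda\mapsto\lambda+\pi\omega$ (via \eqref{periodt-2} and the corresponding behavior of $\theta_\mathsf{X}$) and under a second translation ($2\pi$, $\pi$ or $\pi+\pi\omega$ depending on the case), with the evenness of $\mathsf{N}$ being essential to ensure that $Q_k$ is genuinely periodic (rather than only quasi-periodic) under the relevant short translation. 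Thus $F$ belongs to a finite-dimensional theta-function space. By the interpolation conditions on $Q_k$ combined with $\mathsc{d}(\xi_a)=0=\mathsc{a}(\xi_a-\eta)$ and the SOV equations \eqref{syst-t}, one obtains $F(\xi_a)=F(\xi_a-\eta)=0$ for all $a$, so $F$ has at least $2\mathsf{N}$ forced zeros. Concluding $F\equiv 0$ requires this count to meet the dimension of the ambient theta-function space.

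The main obstacle is precisely this last dimension count: a direct computation of the quasi-periodicity of $F$ yields an ambient space of dimension $3\mathsf{N}$, which at first sight is too large. Two natural routes to overcome this are: (i) exploit the existence of a second independent $Q$-solution $\widehat{Q}$ from Proposition~\ref{prop-Qinh} (via \eqref{Qbis-01}, \eqref{Qhat-10}, or \eqref{Qbis-11}), whose independence from $Q_k$ is guaranteed by the non-vanishing of the quantum Wronskian (Proposition~\ref{lem-W12}) and again requires $\mathsf{N}$ even, effectively providing the extra $\mathsf{N}$ linear relations needed to lower the effective dimension to $2\mathsf{N}$; or (ii) bypass the direct verification by a global bijectivity argument: by Theorem~\ref{th-sens1} every $Q$-solution produces an eigenvalue, and by Proposition~\ref{prop-uniqueness} this correspondence is injective, so it suffices to count the $Q$-solutions of the form \eqref{Q-X} (summed over $k$) and match $|\Sigma_{\overline{\mathcal{T}}}|=2^{\mathsf{N}}$, which can be done via specialization or Bezout-type arguments. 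The parity assumption $\mathsf{N}$ even is crucial in both routes, controlling the matching of quasi-periodicities that identifies $F$ as living in a theta space of dimension $2\mathsf{N}$ and ensuring the independence of the pair $(Q,\widehat{Q})$.
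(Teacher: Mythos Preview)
Your proposal has a genuine gap at the interpolation step. You construct $Q_k\in\mathcal{V}_k$ by imposing $Q_k(\xi_a)=\mathsf{q}_{\bar{\mathsf{t}},a}^{(0)}$ for $a=1,\ldots,\mathsf{N}$, which indeed determines $Q_k$ uniquely since $\dim\mathcal{V}_k=\mathsf{N}$. But then you \emph{also} impose $Q_k(\xi_a-\eta)=\mathsf{q}_{\bar{\mathsf{t}},a}^{(1)}$ for all $a$ --- this is $\mathsf{N}$ further conditions on a function already uniquely fixed, with only the binary choice $k\in\{0,1\}$ remaining. Your assertion that ``the SOV discrete equations \eqref{syst-t} are exactly what makes this pair of $\mathsf{N}$ interpolation conditions compatible'' is precisely the content of the theorem and is not justified by anything you have written; the relation \eqref{syst-t} only fixes the ratio $\mathsf{q}_{\bar{\mathsf{t}},a}^{(1)}/\mathsf{q}_{\bar{\mathsf{t}},a}^{(0)}$ locally at each $a$, not the compatibility of the global interpolation. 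Your route (i) is then circular --- the companion solution $\widehat{Q}$ only exists once $Q$ does --- and route (ii) via Bezout-type counting for transcendental theta systems is not made precise.

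The paper's proof circumvents this over-determination by an entirely different mechanism. It does \emph{not} interpolate a single degree-$\mathsf{N}$ object; instead it constructs a pair $Q_+,Q_-$ of theta functions of order $\mathsf{M}=\mathsf{N}/2$ (this is where evenness of $\mathsf{N}$ enters) satisfying a coupled system at the inhomogeneities, with their common norm parameter $\Delta$ left free. Half of the equations fix $Q_-$ in terms of $Q_+$; the remaining $\mathsf{M}$ equations become a homogeneous linear system for the $\mathsf{M}$ values $Q_+(\xi_{i_k})$, whose determinant is a non-constant theta-type function of $\Delta$ and therefore has zeros (Lemma~\ref{lemma3}). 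The continuous parameter $\Delta$ is what buys the existence of a nontrivial solution --- there is no analogue of this in your scheme. The passage from the discrete system \eqref{sys-inh} to the coupled functional system \eqref{sys-Q-Q+} then proceeds via a product identity $Q_+(\lambda-\eta)Q_-(\lambda)=f_+(\lambda-\eta)f_-(\lambda)$, proved by matching theta functions of the same order and norm at $\mathsf{N}$ points, which forces $f_\pm=Q_\pm$. Finally $Q=(Q_++Q_-)/2$ is the desired solution. The key idea you are missing is this halving of the order together with the free norm parameter.
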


From Proposition~\ref{prop-uniqueness} we know that such a solution is unique (up to normalization). 
We also know from Corollary~\ref{cor-W} that this solution is such that condition {\it 2'.} of Proposition~\ref{prop-Qinh} is satisfied.
Hence, in the case of even $\mathsf{N}$, it means that the discrete SOV characterization of the spectrum is {\em equivalent} to the description in terms of Bethe equations based on solutions 
of the form \eqref{Q-X} of the homogeneous functional equation \eqref{hom-eq}.
The description of the eigenstates can then be obtained as in Theorem~\ref{th-sens1}, or more generally as in its variant Propostion~\ref{prop-Qinh}.
Note here that it is possible to rewrite the corresponding eigenvectors in a form more similar to what we have in the ABA framework, i.e. by multiple action, on a given reference state, of a product of operators evaluated at the corresponding Bethe roots.
However, due to the fact that the function $Q(\lambda)$ \eqref{Q-X} is not a theta function of the same type (i.e. with the same quasi-periods) as the other functions defining the model, and unlike what happens in the context of the reformulation of Appendix~\ref{app-inhom} (see Corollary~\ref{Eigen-SOV-Bethe}), we do not use the whole operator $\mathcal{D}$ to construct the eigenstates, but only ``parts'' of this operator:  in the present case, the operator $\mathcal{D}$ is split into a product of commuting operators, and these are the latter  which are used to construct the eigenstates.
More precisely, for each $\mathsf{N}$-tuple $\boldsymbol{\beta}=(\beta_1,\ldots,\beta_\mathsf{N})\in\{0,1\}^\mathsf{N}$, let us define in $\mathrm{End}(\mathbb{D}_{\mathsf{(6VD)},\mathsf{N}})$ the following diagonal operator on the SOV basis given by the vectors $|\mathbf{h},r\rangle$:
\begin{equation}\label{def-Dbeta}
     D_{\boldsymbol{\beta}}(\lambda)\, |\mathbf{h},r\rangle
     =\prod_{n=1}^\mathsf{N}\!
     \left\{ 
     \Big[c_\mathsf{X}^{\vphantom{-1}}\, e^{\frac{i\pi(\mathsf{x+y-xy})h_n}{\mathsf{N}}+i\delta_\mathsf{y=0}(\lambda-\xi_n^{(h_n)})} \Big]^{\beta_n}\,
     \theta_\mathsf{X}\big(\lambda-\xi_n^{(h_n)}+\beta_n\pi_\mathsf{X}^{\vphantom{1}}\big) \right\}
      |\mathbf{h},r\rangle,
\end{equation}
where
\begin{align}
  &c_\mathsf{X}^{-1}=
  \begin{cases}
  \theta_4(0|\omega) &\text{if } \mathsf{x}=0,\\
  \frac{i}{2}\, e^{-i\frac{\pi\omega}{2}}\, \theta_2(0|\omega) &\text{if } \mathsf{y}=0,\\
  \frac{1}{2}\, e^{-i\frac{\pi\omega}{2}}\, \theta_2(0|\omega)\,\theta_3(0|\omega)\,\theta_4(0|\omega)\quad&\text{if } \mathsf{x=y},
  \end{cases}
  \\
  &\pi_\mathsf{X}^{\vphantom{1}}=(1-\delta_{\mathsf{y}=0})\,\pi+\delta_{\mathsf{y}=0}\, \pi\omega.
\end{align}
We have that
\begin{equation}
    D_{\boldsymbol{\beta}}(\lambda)\, D_{\boldsymbol{1-\beta}}(\lambda)\, |\mathbf{h},r\rangle
    =e^{\frac{i\pi(\mathsf{x+y-xy})}{\mathsf{N}}\sum_{n=1}^\mathsf{N}h_n}
    \prod_{n=1}^\mathsf{N} \theta(\lambda-\xi_n^{(h_n)})\,
     |\mathbf{h},r\rangle,
\end{equation}
where $\boldsymbol{1-\beta}$ stands for the $\mathsf{N}$-tuple $(1-\beta_1,\ldots,1-\beta_\mathsf{N})$. This means that
\begin{equation}
   \mathcal{D}(\lambda)=\mathsf{T_{\! D}}\, D_{\boldsymbol{\beta}}(\lambda)\, D_{1-\boldsymbol{\beta}}(\lambda),
\end{equation}
where $\mathsf{T_{\! D}}$ is the shift operator defined on the SOV basis as
\begin{equation}
   \mathsf{T_{\! D}}\,  |\mathbf{h},r\rangle
   = e^{\frac{i\pi(\mathsf{x+y-xy})}{\mathsf{N}}\frac{s_\mathbf{h}-s_\mathbf{0}}{2}}\, 
   e^{-i\mathsf{y}\eta\frac{s_\mathbf{h}-s_\mathbf{1}}{2}}\,
\frac{\theta(t_{r+1,\mathbf{h}})}{\theta (t_{r+1,\mathbf{1}})}\, 
    |\mathbf{h},r+1\rangle.
\end{equation}
Then one can formulate the following corollary:

\begin{corollary}\label{Cor-Eigen-Bethe}
Under the condition \eqref{cond-inh} and if $\mathsf{N}$ is even, there exists a one-to-one correspondence between $\Sigma _{\overline{\mathcal{T}}}$ and the set $\Sigma_\text{BAE}$ of different (up to the real quasi-period of the function $\theta_\mathsf{X}$) Bethe roots $\Lambda=\{\lambda_1,\ldots,\lambda_\mathsf{N}\}$ such that
\begin{enumerate}
  \item the function 
\begin{equation}\label{Bethe-fct}
    \mathsc{a}_\mathsf{x,y}(\lambda)\, \frac{Q(\lambda-\eta)}{Q(\lambda)}
    +\mathsc{d}(\lambda)\, \frac{Q(\lambda+\eta)}{Q(\lambda)}
\end{equation}
is entire and satisfies the quasi-periodicity properties \eqref{periodt-1}-\eqref{periodt-2},
  \item $\forall n\in\{1,\ldots,\mathsf{N}\}$, there exists $\beta_n\in\{0,1\}$ such that $Q(\xi_n+\beta_n\pi_\mathsf{X}^{\vphantom{1}})\not= 0$,
\end{enumerate}
where $Q(\lambda)$ is defined in terms of $\Lambda$ by \eqref{Q-X}.
The eigenvalue $\bar{\mathsf{t}}(\lambda )\in\Sigma _{\overline{\mathcal{T}}}$ associated with $\Lambda\in\Sigma_\text{BAE}$ is then given by the entire function \eqref{Bethe-fct}.
For any $\kappa\in\mathbb{C}\setminus\{0\}$, the corresponding one-dimensional  right and left  eigenspaces of the $\kappa$-twisted transfer matrix $\overline{\mathcal{T}}^{(\kappa)}(\lambda)$  are the one-dimensional subspaces of $\mathbb{\bar{D}}_{\mathsf{(6VD)},\mathsf{N}}^{(0,\mathcal{R/L})}$  spanned by all vectors of the type
\begin{equation}\label{Bethe-eigen}
|\Psi_{\Lambda,\boldsymbol{\beta}}^{( \kappa ) }\rangle
    =\prod_{j=1}^{\mathsf{N}} D_{\boldsymbol{\beta}}(\lambda_j)\,
       |\Omega^{(\kappa)}\rangle ,
       \qquad\text{respectively}\quad
\langle \Psi_{\Lambda,\boldsymbol{\beta}}^{( \kappa ) }|
    =\langle \Omega^{(\kappa)}|
      \prod_{j=1}^{\mathsf{N}} D_{\boldsymbol{\beta}}(\lambda_j) ,
\end{equation}
for any $\mathsf{N}$-tuple $\boldsymbol{\beta}\in\{0,1\}^\mathsf{N}$.
In \eqref{Bethe-eigen}, the operators $D_{\boldsymbol{\beta}}(\lambda)$ are defined as in \eqref{def-Dbeta}, and the reference states $|\Omega^{(\kappa)}\rangle$ and $\langle \Omega^{(\kappa)}|$ are
\begin{align}
& |\Omega^{(\kappa)}\rangle 
   =\sum_{\mathbf{h}\in \{0,1\}^{\mathsf{N}}}
     \prod_{a=1}^{\mathsf{N}}\left( 
     \frac{e^{i\mathsf{y}\eta}\, \mathsc{a}_{\mathsf{x,y}}(\xi _{a})}{\kappa \,\mathsc{d}(\xi _{a}-\eta )}\right) ^{\!h_{a}}
     \det_{\mathsf{N}}\big[\Theta ^{(0,\mathbf{h})}\big]\,
     |\mathbf{h},0\rangle ,  \label{ref1} \\
& \langle  \Omega^{(\kappa)}|
   =\sum_{\mathbf{h}\in \{0,1\}^{\mathsf{N}}}\prod_{a=1}^{\mathsf{N}}\Big( 
     \kappa\, e^{i\mathsf{y}\eta }\Big) ^{h_{a}}
     \det_{\mathsf{N}}\big[\Theta ^{(0,\mathbf{h})}\big]\,
     \langle 0,\mathbf{h}|.  \label{ref2}
\end{align}
\end{corollary}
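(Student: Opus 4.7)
The plan is to assemble the corollary from the preceding material in three steps. The bijection between $\Sigma_{\overline{\mathcal{T}}}$ and $\Sigma_{\text{BAE}}$ is essentially a repackaging of Theorems~\ref{th-sens1} and~\ref{th-sens2} together with Proposition~\ref{prop-uniqueness} and Corollary~\ref{cor-W}; the non-trivial work lies in recasting the separate-variable eigenstates of Theorem~\ref{thm-eigen-t} (or their $Q$-rewriting in \eqref{eigenT-r-Qgen}--\eqref{eigenT-l-Qgen}) as the product formula \eqref{Bethe-eigen}.

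First I would build the map $\Sigma_{\overline{\mathcal{T}}}\to\Sigma_{\text{BAE}}$. Given $\bar{\mathsf{t}}\in\Sigma_{\overline{\mathcal{T}}}$, Theorem~\ref{th-sens2} produces (for even $\mathsf{N}$) a function $Q(\lambda)$ of the form \eqref{Q-X} solving the homogeneous $T$-$Q$ equation \eqref{hom-eq}; its roots $\Lambda=\{\lambda_1,\ldots,\lambda_\mathsf{N}\}$ are defined up to permutation and, by the quasi-periodicity of $\theta_{\mathsf{X}}$, up to translation by its real quasi-period. By construction $\bar{\mathsf{t}}$ is the entire function \eqref{Bethe-fct} and satisfies \eqref{periodt-1}--\eqref{periodt-2}, so condition~1 is met. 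Condition~2 is automatic by Corollary~\ref{cor-W}, which asserts that at each inhomogeneity the pair $\bigl(Q(\xi_n),Q(\xi_n+\pi_{\mathsf{X}}^{\vphantom{1}})\bigr)$ cannot both vanish. Uniqueness of $\Lambda$ (modulo the quasi-period) follows from Proposition~\ref{prop-uniqueness}, so the map is well defined and injective.

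For the reverse map $\Sigma_{\text{BAE}}\to\Sigma_{\overline{\mathcal{T}}}$, I would take $\Lambda\in\Sigma_{\text{BAE}}$, form $Q$ by \eqref{Q-X} and $\bar{\mathsf{t}}$ by \eqref{Bethe-fct}. Conditions~1 and~2 of $\Sigma_{\text{BAE}}$ are exactly hypotheses {\it 1.}, {\it 3.} and {\it 2'.} of Theorem~\ref{th-sens1}/Proposition~\ref{prop-Qinh}: indeed condition~2 there (with the appropriate $\alpha_n,\beta_n$) reduces to the present condition~2 once one notes that any shift by $\pi+\pi\omega$ can be absorbed into a shift by $\pi$ or $\pi\omega$ thanks to Proposition~\ref{prop-Qinh}. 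Hence $\bar{\mathsf{t}}\in\Sigma_{\overline{\mathcal{T}}}$, which gives the converse map. The two maps are inverse by the uniqueness of $Q$.

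The main point of the proof, and in my view the principal obstacle, is to recast the separate-variable eigenstates \eqref{eigenT-r-Qgen}--\eqref{eigenT-l-Qgen} in the compact multiplicative form \eqref{Bethe-eigen}. To do so I would act with $\prod_{j=1}^{\mathsf{N}} D_{\boldsymbol{\beta}}(\lambda_j)$ on $|\Omega^{(\kappa)}\rangle$: since $D_{\boldsymbol{\beta}}(\lambda)$ is diagonal in the SOV basis, one collects, for each $\mathbf{h}\in\{0,1\}^{\mathsf{N}}$, the product
\begin{equation*}
\prod_{n=1}^{\mathsf{N}}\Bigl[c_{\mathsf{X}}^{\vphantom{-1}}\, e^{\frac{i\pi(\mathsf{x+y-xy})h_n}{\mathsf{N}}+i\delta_{\mathsf{y}=0}(\lambda_j-\xi_n^{(h_n)})}\Bigr]^{\beta_n}
\prod_{j=1}^{\mathsf{N}}\theta_{\mathsf{X}}\bigl(\lambda_j-\xi_n^{(h_n)}+\beta_n\pi_{\mathsf{X}}^{\vphantom{1}}\bigr),
\end{equation*}
which, after using the definition \eqref{Q-X} of $Q$ and the quasi-periodicity of $\theta_{\mathsf{X}}$, equals (up to an $\mathbf{h}$-independent normalization absorbed in the vector) the product $\prod_n e^{i[\alpha_n\frac{\pi\mathsf{y}}{\eta}+\beta_n(\mathsf{N}+\frac{\pi\mathsf{x}}{\eta})]\xi_n^{(h_n)}}\,Q(\xi_n^{(h_n)}+\alpha_n\pi+\beta_n\pi\omega)$ appearing in \eqref{eigenT-r-Qgen}, with $\alpha_n$ extracted from the odd-$h_n$ dependence of the $c_{\mathsf{X}}^{\beta_n}$ prefactor. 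This identification, together with the fact that the coefficients $(\kappa^{-1}e^{i\mathsf{y}\eta}\mathsc{a}_\mathsf{x,y}(\xi_a)/\mathsc{d}(\xi_a-\eta))^{h_a}\det_\mathsf{N}[\Theta^{(0,\mathbf{h})}]$ are precisely those defining $|\Omega^{(\kappa)}\rangle$ in \eqref{ref1}, reproduces \eqref{eigenT-r-Qgen}. The bulk of the work here is therefore a careful bookkeeping of the splitting $\theta(\lambda)\propto c_{\mathsf{X}}^{-1}\theta_{\mathsf{X}}(\lambda)\theta_{\mathsf{X}}(\lambda+\pi_{\mathsf{X}}^{\vphantom{1}})$ (multiplied by an appropriate exponential when $\mathsf{y}=0$), verified case by case for the three choices of $\theta_{\mathsf{X}}$ using classical doubling identities for the Jacobi theta functions. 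The treatment of the left eigenvectors is identical modulo replacing $\kappa^{-1}\mathsc{a}_\mathsf{x,y}(\xi_a)/\mathsc{d}(\xi_a-\eta)$ by $\kappa$ in the reference state, and independence of $\boldsymbol{\beta}$ is a consequence of the simplicity of the spectrum established in Theorem~\ref{thm-eigen-t}.
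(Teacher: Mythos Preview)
Your approach is correct and is precisely the implicit argument the paper has in mind (the corollary is stated without proof, relying on the discussion preceding it and on the diagonal definition of $D_{\boldsymbol\beta}$). One small correction: Corollary~\ref{cor-W} asserts non-vanishing of the pair $\bigl(Q(\xi_n-\eta),\,Q(\xi_n-\eta+\pi_{\mathsf X})\bigr)$, not of $\bigl(Q(\xi_n),\,Q(\xi_n+\pi_{\mathsf X})\bigr)$ as you write; to obtain condition~2 of $\Sigma_{\text{BAE}}$ at the unshifted point $\xi_n$ you still need one application of the $T$-$Q$ equation~\eqref{hom-eq} (and of its $\pi_{\mathsf X}$-shifted companion from Proposition~\ref{prop-Qinh}) at $\lambda=\xi_n$: since $\mathsc{d}(\xi_n)=0$ and $\mathsc{a}(\xi_n)\neq 0$, the vanishing of both $Q(\xi_n)$ and $Q(\xi_n+\pi_{\mathsf X})$ would force $Q(\xi_n-\eta)=Q(\xi_n-\eta+\pi_{\mathsf X})=0$, contradicting Corollary~\ref{cor-W}. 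With this adjustment the bijection and the eigenvector identification go through exactly as you outline.
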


The proof of Theorem \ref{th-sens2} is based on the following lemma:

\begin{lemma}\label{lemma3}
Let us suppose that the inhomogeneity parameters $\xi_{1},\ldots ,\xi _{\mathsf{N}}$ satisfy \eqref{cond-inh} and let $\mathsf{N}=2\mathsf{M}$ be even. Then, for each $\bar{\mathsf{t}}(\lambda )\in\Sigma _{\overline{\mathcal{T}}}$, there exist $\Delta\in\mathbb{C}$ and two non-zero entire functions $Q_+(\lambda )$ and $Q_-(\lambda)$ with the following quasi-periodicity properties:
\begin{align}
   &Q_\pm(\lambda +\pi )=(\pm1)^\mathsf{y} \, Q_\pm(\lambda ), \label{Qeps-pi}\\
   &Q_\pm(\lambda +\pi \omega )=(\pm1)^\mathsf{x}\, (-e^{-2i\lambda -i\pi \omega })^\mathsf{M}\,
    e^{2i\Delta -i\mathsf{M}\pi\omega}\, Q_\pm(\lambda ), \label{Qeps-piom}
\end{align}
which satisfy the following system of $2\mathsf{N}$ equations:
\begin{equation}\label{sys-inh}
\left\{
\begin{aligned}
&\bar{\mathsf{t}}(\xi _{j})\,Q_{+}(\xi _{j})=-\mathsc{a}(\xi _{j})\,Q_{-}(\xi_{j}-\eta ),\\
&\bar{\mathsf{t}}(\xi _{j}-\eta )\, Q_{-}(\xi _{j}-\eta )=\mathsc{d}(\xi _{j}-\eta )\,Q_{+}(\xi _{j}),
\end{aligned}
\right.
\qquad 1\leq j\leq \mathsf{N}.
\end{equation}
\end{lemma}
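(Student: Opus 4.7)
The proof strategy is to reformulate \eqref{sys-inh} as a finite-dimensional linear algebra problem parameterized by $\Delta$ and to establish the existence of a suitable value of $\Delta$ via a determinant analysis. First, I would remark that the admissibility condition \eqref{syst-t} in the present case $(\mathsf{x},\mathsf{y})\neq(0,0)$ reads $\bar{\mathsf{t}}(\xi_j)\,\bar{\mathsf{t}}(\xi_j-\eta)=-\mathsc{a}(\xi_j)\,\mathsc{d}(\xi_j-\eta)$ for all $j\in\{1,\ldots,\mathsf{N}\}$, the right-hand side being non-zero under \eqref{cond-inh}; this in particular implies that $\bar{\mathsf{t}}(\xi_j)$ and $\bar{\mathsf{t}}(\xi_j-\eta)$ are both non-zero. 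A direct substitution then shows that the two scalar equations of \eqref{sys-inh} indexed by the same $j$ are proportional, so that the full system \eqref{sys-inh} is equivalent to the reduced system of $\mathsf{N}$ equations
\begin{equation*}
\bar{\mathsf{t}}(\xi_j)\,Q_+(\xi_j)+\mathsc{a}(\xi_j)\,Q_-(\xi_j-\eta)=0,\qquad 1\le j\le\mathsf{N}.
\end{equation*}

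Next, I would fix $\Delta\in\mathbb{C}$ and consider the finite-dimensional spaces $V_\pm(\Delta)$ of entire functions satisfying \eqref{Qeps-pi}--\eqref{Qeps-piom}. Each is of dimension $\mathsf{M}$, being a standard space of theta functions of order $\mathsf{M}$ with prescribed characteristic (depending on $\mathsf{x},\mathsf{y}$) and norm controlled by $\Delta$. Choosing bases $\{\phi_k^\pm(\,\cdot\,;\Delta)\}_{k=1}^{\mathsf{M}}$ and expanding $Q_\pm=\sum_{k=1}^{\mathsf{M}} c_k^\pm\,\phi_k^\pm(\,\cdot\,;\Delta)$, the reduced system becomes a homogeneous $\mathsf{N}\times\mathsf{N}$ linear system $\mathcal{M}(\Delta)\,\mathbf{c}=0$ on the vector $\mathbf{c}=(c_1^+,\ldots,c_{\mathsf{M}}^+,c_1^-,\ldots,c_{\mathsf{M}}^-)^{T}\in\mathbb{C}^{\mathsf{N}}$, whose matrix entries are the numbers $\bar{\mathsf{t}}(\xi_j)\,\phi_k^+(\xi_j;\Delta)$ and $\mathsc{a}(\xi_j)\,\phi_k^-(\xi_j-\eta;\Delta)$.

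The main technical step is then to show that $\det\mathcal{M}(\Delta)$ vanishes for at least one value of $\Delta$. The approach I would take is to analyze its behavior under shifts $\Delta\to\Delta+\pi$ and $\Delta\to\Delta+\pi\omega$: by choosing bases $\phi_k^\pm(\,\cdot\,;\Delta)$ that transform linearly among themselves under such shifts (which is possible since $V_\pm(\Delta+\pi)$ and $V_\pm(\Delta)$, and similarly for $\pi\omega$, are related by explicit multiplicative factors), $\det\mathcal{M}(\Delta)$ may be shown to be a non-trivial entire quasi-periodic function of $\Delta$ of some computable positive order, hence to admit zeros. This is the hard part: one must rigorously establish both the quasi-periodicity and the non-triviality of $\det\mathcal{M}(\Delta)$, which will probably require either an explicit evaluation at a judiciously chosen special value of $\Delta$, or a dimension count based on the Riemann-Roch-type identification of the order of the resulting theta function in $\Delta$.

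Finally, picking $\Delta^*$ a zero of $\det\mathcal{M}$, I obtain a non-trivial pair $(Q_+,Q_-)\in V_+(\Delta^*)\times V_-(\Delta^*)$ satisfying \eqref{sys-inh}. To check that neither of $Q_\pm$ vanishes identically, I would assume $Q_+\equiv 0$: the reduced system together with $\mathsc{a}(\xi_j)\neq 0$ (by \eqref{cond-inh}) would force $Q_-(\xi_j-\eta)=0$ for all $j=1,\ldots,2\mathsf{M}$; but the points $\xi_j-\eta$ are pairwise inequivalent modulo $\pi\mathbb{Z}+\pi\omega\mathbb{Z}$ by \eqref{cond-inh}, and $Q_-\in V_-(\Delta^*)$ is a theta function of order $\mathsf{M}$ which admits at most $\mathsf{M}$ zeros in a fundamental domain, forcing $Q_-\equiv 0$ and contradicting the non-triviality of $\mathbf{c}$. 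A symmetric argument using the second equation of \eqref{sys-inh} and $\mathsc{d}(\xi_j-\eta)\neq 0$ rules out $Q_-\equiv 0$.
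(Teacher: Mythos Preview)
Your strategy is the same as the paper's: reduce the $2\mathsf{N}$ equations to $\mathsf{N}$ via \eqref{syst-t}, recast as a finite-dimensional homogeneous linear system depending on $\Delta$, and locate a zero of its determinant through a quasi-periodicity argument. The main difference lies in the implementation of the linear-algebra step. You work with an $\mathsf{N}\times\mathsf{N}$ system built from abstract bases $\{\phi_k^\pm(\cdot;\Delta)\}$ of $V_\pm(\Delta)$, which leaves the $\Delta$-dependence of $\det\mathcal{M}(\Delta)$ somewhat implicit and makes the ``hard part'' genuinely hard. The paper instead splits $\{\xi_1,\ldots,\xi_\mathsf{N}\}$ into two halves of size $\mathsf{M}$, parameterizes $Q_+$ by its values at $\xi_{i_1},\ldots,\xi_{i_\mathsf{M}}$ and $Q_-$ by its values at $\xi_{i_1}-\eta,\ldots,\xi_{i_\mathsf{M}}-\eta$ via the interpolation formula \eqref{interpolation}, uses the first $\mathsf{M}$ equations to eliminate the $Q_-$-data in terms of the $Q_+$-data, and is left with an explicit $\mathsf{M}\times\mathsf{M}$ system. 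The resulting determinant then has completely explicit quasi-periodicity in $\Delta$ (multiplier $e^{2i\sum_a(\xi_{i_{a+\mathsf{M}}}-\xi_{i_a})}$ under $\Delta\to\Delta+\pi\omega$), so it is either identically zero or a non-constant theta-type function of $\Delta$, hence vanishes somewhere---no special-value evaluation or Riemann--Roch count is needed. Your explicit argument that neither $Q_\pm$ can vanish identically (using that an order-$\mathsf{M}$ theta function cannot have $2\mathsf{M}$ inequivalent zeros) is correct and makes a point the paper leaves implicit.
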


\begin{proof}
On the one hand, since $\bar{\mathsf{t}}(\lambda )\in \Sigma _{\overline{\mathcal{T}}}$ the system \eqref{sys-inh} is equivalent to the following system of only $\mathsf{N}$ equations:
\begin{equation}
\bar{\mathsf{t}}(\xi _{j})\,Q_{+}(\xi _{j})=-\mathsc{a}(\xi _{j})\,Q_{-}(\xi_{j}-\eta ),\qquad
1\leq j\leq \mathsf{N}.  
\label{+-system}
\end{equation}
On the other hand, the entireness and quasi-periodicity properties of the two functions $Q_\pm(\lambda)$ are equivalent to the fact that the functions
\begin{equation}
   \widetilde{Q}_\pm(\lambda)\equiv e^{-i[\mathsf{y}\frac{1\mp1}{2}-\mathsf{M}]\lambda}\, Q_\pm(\lambda)
\end{equation}
are theta functions of $\lambda$ of quasi-periods $(\pi,\pi\omega)$, of order $\mathsf{M}$ and of norm 
\begin{equation}
   \Delta_\pm=\Delta-\frac{1\mp1}{2}\left(\mathsf{x}\frac{\pi}{2}+\mathsf{y}\frac{\pi\omega}{2}\right).
\end{equation}
This is also equivalent to the fact that the functions $Q_\pm(\lambda)$ can be represented in the following form,
\begin{align}
    &Q_+(\lambda ) 
    =\sum_{k=1}^{\mathsf{M}} e^{-i\mathsf{M}(\lambda-\xi_{i_k})}\,
    \frac{\theta (\lambda -\xi_{i_k} +\sum_{\ell=1}^\mathsf{M}\xi_{i_\ell}-\Delta_+)}
            {\theta (\sum_{\ell=1}^\mathsf{M}\xi_{i_\ell}-\Delta_+)}\,
    \prod_{\substack{ \ell =1 \\ \ell \not=k}}^\mathsf{M}
    \frac{\theta (\lambda -\xi _{i_\ell } )}{\theta (\xi_{i_k}-\xi _{i_\ell })}\,
    Q_+(\xi _{i_k}), \label{Q+}
    \\
   &Q_-(\lambda ) 
   =\sum_{k=1}^{\mathsf{M}} e^{i(\mathsf{y}-\mathsf{M})(\lambda-\xi_{i_k}+\eta)}\,
   \frac{\theta (\lambda -\xi_{i_k}+\eta+\sum_{\ell=1}^\mathsf{M}\xi_{i_\ell}-\mathsf{M}\eta-\Delta_-)}
          {\theta (\sum_{\ell=1}^\mathsf{M}\xi_{i_\ell}-\mathsf{M}\eta-\Delta_-)}\,
          \nonumber\\
  &\hspace{9cm}\times
   \prod_{\substack{ \ell =0 \\ \ell \not=k}}^{\mathsf{M}}
   \frac{\theta (\lambda-\xi _{i_\ell }+\eta)}{\theta (\xi _{i_k}-\xi _{i_\ell })}\,Q_-(\xi _{i_k}-\eta),
   \label{Q-}
\end{align}
in terms of some arbitrary $\mathsf{M}$-tuples $\big( Q_+(\xi_{i_1}),\ldots,Q_+(\xi_{i_\mathsf{M}})\big)$ and $\big( Q_-(\xi_{i_1}-\eta),\ldots,Q_-(\xi_{i_\mathsf{M}}-\eta)\big)$, provided that $\Delta_+-\sum_{\ell=1}^\mathsf{M}\xi_{i_\ell}\notin\Gamma$ and $\Delta_--\sum_{\ell=1}^\mathsf{M}\xi_{i_\ell}+\mathsf{M}\eta\notin\Gamma$.
Here, we have arbitrarily split the set of the $\mathsf{N}$ inhomogeneity parameters $\{\xi_1,\ldots,\xi_\mathsf{N} \}$ into two disjoint subsets $\{\xi_{i_1},\ldots,\xi_{i_\mathsf{M}} \}$ and $\{ \xi_{i_{\mathsf{M}+1}},\ldots,\xi_{i_\mathsf{N}} \}$ of the same cardinality $\mathsf{M}=\mathsf{N}/2$.

Hence, the system \eqref{+-system} for $Q_+(\lambda)$ and $Q_-(\lambda)$ is equivalent to the following linear system for these two $\mathsf{M}$-tuples:
\begin{equation}\label{sys+-bis}
\left\{
   \begin{aligned}
   &Q_{-}(\xi _{i_j}-\eta )=-\frac{\bar{\mathsf{t}}(\xi _{i_j})}{\mathsc{a}(\xi _{i_j})}\,Q_{+}(\xi _{i_j}),\\
   &\sum_{k=1}^{\mathsf{M}}\big[\mathcal{X}_{\bar{\mathsf{t}}}(\Delta )\big]_{jk}\ Q_{+}(\xi _{i_k})=0,
  \end{aligned}
\right.
\qquad j=1,\ldots ,\mathsf{M},
\end{equation}
where $\mathcal{X}_{\bar{\mathsf{t}}}(\Delta )$ is the $\mathsf{M}\times \mathsf{M}$
matrix of elements: 
\begin{multline}\label{mat-X}
   \big[\mathcal{X}_{\bar{\mathsf{t}}}(\Delta )\big]_{ab}
    =e^{-i\mathsf{M}(\xi_{i_{a+\mathsf{M}}}-\xi_{i_b})}\,
    \prod_{\substack{ \ell=1 \\ \ell \not=b}}^{\mathsf{M}}
    \frac{\theta (\xi _{i_{a+\mathsf{M}}}-\xi _{i_\ell})}{\theta (\xi _{i_b}-\xi _{i_\ell })}
    \left[ 
    \frac{\theta (\xi _{i_{a+\mathsf{M}}}-\xi _{i_b}+\sum_{\ell=1}^\mathsf{M}\xi_{i_\ell}-\Delta_+)}
           {\theta(\sum_{\ell=1}^\mathsf{M}\xi_{i_\ell}-\Delta_+)}\,
   \frac{\bar{\mathsf{t}}(\xi _{i_{a+\mathsf{M}}})}{\mathsc{a}(\xi _{i_{a+\mathsf{M}}})}\right.  
    \\
    \left.
    -e^{i\mathsf{y}(\xi_{i_{a+\mathsf{M}}}-\xi_{i_b})}\,
    \frac{\theta (\xi _{i_{a+\mathsf{M}}}-\xi _{i_b}+\sum_{\ell=1}^\mathsf{M}\xi_{i_\ell}-\mathsf{M}\eta-\Delta_-)}
           {\theta (\sum_{\ell=1}^\mathsf{M}\xi_{i_\ell}-\mathsf{M}\eta-\Delta_-)}\,
    \frac{\bar{\mathsf{t}}(\xi_{i_b})}{\mathsc{a}(\xi _{i_b})}
    \right] .
\end{multline}
The first line in \eqref{sys+-bis} corresponds to the equations $i_1,\ldots,i_\mathsf{M}$ of \eqref{+-system}, which completely fixe the $\mathsf{M}$-tuple $\big( Q_-(\xi_{i_1}-\eta),\ldots,Q_-(\xi_{i_\mathsf{M}}-\eta)\big)$ in terms of the $\mathsf{M}$-tuple $\big( Q_+(\xi_{i_1}),\ldots,Q_+(\xi_{i_\mathsf{M}})\big)$, whereas the second line  in \eqref{sys+-bis} corresponds to the equations $i_{\mathsf{M}+1},\ldots,i_\mathsf{N}$  of \eqref{+-system}, in which we have used the representations \eqref{Q+} and \eqref{Q-} that we have rewritten (by means of the first line of \eqref{sys+-bis}) in terms of the  $\mathsf{M}$-tuple $\big( Q_+(\xi_{i_1}),\ldots,Q_+(\xi_{i_\mathsf{M}})\big)$ only.

Hence, this system admits a non-zero solution if and only if the determinant of the matrix $\mathcal{X}_{\bar{\mathsf{t}}}(\Delta )$ \eqref{mat-X} vanishes.
It is easy to see that this determinant is a quasi-periodic function of $\Delta$ with quasi-periodicity properties
\begin{align}
    &\det_\mathsf{M}\big[\mathcal{X}_{\bar{\mathsf{t}}}(\Delta+\pi )\big]
    =\det_\mathsf{M}\big[\mathcal{X}_{\bar{\mathsf{t}}}(\Delta )\big],\\
    &\det_\mathsf{M}\big[\mathcal{X}_{\bar{\mathsf{t}}}(\Delta+\pi\omega )\big]
    =e^{2i\sum_{a=1}^\mathsf{M}(\xi_{i_{a+\mathsf{M}}}-\xi_{i_a})}\,\det_\mathsf{M}\big[\mathcal{X}_{\bar{\mathsf{t}}}(\Delta )\big],
\end{align}
so that, if not identically zero, it is a non-constant\footnote{at least for some adequate splitting of $\{\xi_1,\ldots,\xi_\mathsf{N}\}$.} function of $\Delta$ which can be written in the form
\begin{equation}
   \det_\mathsf{M}\big[\mathcal{X}_{\bar{\mathsf{t}}}(\Delta )\big]
   =c_\mathcal{X}\,\frac{\prod_{j=1}^\mathsf{2M}\theta(\Delta-\Delta_j)}{\theta(\Delta_+-\sum_{\ell=1}^\mathsf{M}\xi_{i_\ell})^\mathsf{M}\ \theta(\Delta_--\sum_{\ell=1}^\mathsf{M}\xi_{i_\ell}+\mathsf{M}\eta)^\mathsf{M}},
\end{equation}
in terms of some roots $\Delta_j$ which are not all equal to the roots appearing in the denominator.
This ends the proof of Lemma~\ref{lemma3}.
\end{proof}

{\it Proof of Theorem~\ref{th-sens2}.}
Let $\bar{\mathsf{t}}(\lambda )\in \Sigma _{\overline{\mathcal{T}}}$ and $\mathsf{N}=2\mathsf{M}$ be even.
Then Lemma~\ref{lemma3} implies that there exists two non-zero entire functions $Q_{\pm }(\lambda )$ which satisfy the quasi-periodicity properties \eqref{Qeps-pi}-\eqref{Qeps-piom} and the system \eqref{sys-inh} for some $\Delta\in\mathbb{C}$.
This implies that there exist two entire functions $f_\pm(\lambda)$, with quasi-periodicity properties
\begin{align}
   &f_\pm(\lambda +\pi )=(\pm1)^\mathsf{y} \, f_\pm(\lambda ), \label{feps-pi}\\
   &f_\pm(\lambda +\pi \omega )=(\pm1)^\mathsf{x}\, (-e^{-2i\lambda -i\pi \omega })^\mathsf{M}\,
    e^{2i\Delta -i\mathsf{M}\pi\omega}\, f_\pm(\lambda ), \label{feps-piom}
\end{align}
which satisfy together with the functions $Q_{\pm }(\lambda )$ the following system of functional equations:
\begin{equation}\label{sys-Qf}
    \left\{
    \begin{aligned}
    &\bar{\mathsf{t}}(\lambda )\,Q_{+}(\lambda ) 
    =-\mathsc{a}(\lambda)\,Q_{-}(\lambda -\eta )+\mathsc{d}(\lambda )\,f_-(\lambda+\eta ), \\
    &\bar{\mathsf{t}}(\lambda )\,Q_{-}(\lambda ) 
    =-\mathsc{a}(\lambda)\,f_+(\lambda-\eta )+\mathsc{d}(\lambda )\,Q_{+}(\lambda +\eta ).
    \end{aligned}
    \right.
\end{equation}
Particularizing the first line of \eqref{sys-Qf} at the points $\xi_j-\eta$, $j=1,\ldots,\mathsf{N}$, and the second line at the points $\xi_j$, $j=1,\ldots,\mathsf{N}$, and using the fact that $\bar{\mathsf{t}}(\lambda )\in \Sigma _{\overline{\mathcal{T}}}$, we obtain that
\begin{equation}\label{Q+Q--inh}
     Q_{+}(\xi _{j}-\eta )\,Q_{-}(\xi _{j})=f_+(\xi _{j}-\eta )\, f_-(\xi _{j}),
     \qquad  1\leq j\leq \mathsf{N}.
\end{equation}
Note that the function $Q_+(\lambda-\eta)\, Q_-(\lambda)$ and the function $f_+(\lambda-\eta)\, f_-(\lambda)$ satisfy the same quasi-periodicity properties:
\begin{align}
     &F_{+-}(\lambda+\pi)=F_{+-}(\lambda),\\
     &F_{+-}(\lambda+\pi\omega)=(-e^{-2i\lambda-i\pi\omega})^\mathsf{N}\, e^{4i\Delta+i\mathsf{N}\eta-i\mathsf{N}\pi\omega}\, F_{+-}(\lambda),
\end{align}
so that $F_{+-}(\lambda)$ is a theta function of order $\mathsf{N}$ and of norm $2\Delta+\mathsf{M}\eta-\mathsf{M}\pi\omega$, where $F_{+-}(\lambda)$ stands either for the function $Q_+(\lambda-\eta)\, Q_-(\lambda)$ or for the function $f_+(\lambda-\eta)\, f_-(\lambda)$.
Hence \eqref{Q+Q--inh} implies that the identity is in fact valid at the functional level, i.e. for any $\lambda\in\mathbb{C}$:
\begin{equation}\label{id-Q+Q-}
Q_{+}(\lambda -\eta )\, Q_{-}(\lambda )=f_+(\lambda -\eta )\, f_-(\lambda ).
\end{equation}
One can therefore eliminate $f_+(\lambda-\eta)$ in the system \eqref{sys-Qf}, and we obtain the following functional system for the function $f_-(\lambda)$:
\begin{equation}\label{sys-Qf-}
    \left\{
    \begin{aligned}
    &\bar{\mathsf{t}}(\lambda )\,Q_{+}(\lambda ) 
    =-\mathsc{a}(\lambda)\,Q_{-}(\lambda -\eta )+\mathsc{d}(\lambda )\,f_-(\lambda+\eta ), \\
    &\bar{\mathsf{t}}(\lambda )\,Q_{-}(\lambda ) 
    =-\mathsc{a}(\lambda)\,\frac{Q_+(\lambda-\eta)\, Q_-(\lambda)}{f_-(\lambda )}+\mathsc{d}(\lambda )\,Q_{+}(\lambda +\eta ).
    \end{aligned}
    \right.
\end{equation}
Note that the ratio in the second line of \eqref{sys-Qf-} is in fact an entire function due to \eqref{id-Q+Q-}.
This systems implies that
\begin{equation}\label{Wf1f2}
     \mathsc{a}(\lambda)\, f_1(\lambda)= \mathsc{d}(\lambda)\, f_2(\lambda+\eta),
\end{equation}
in which we have set
\begin{align}
    &f_1(\lambda)= Q_-(\lambda-\eta)\, Q_-(\lambda)- \frac{Q_+(\lambda-\eta)\, Q_-(\lambda)}{f_-(\lambda )}\, Q_+(\lambda),\\
   &f_2(\lambda) = f_-(\lambda)\, Q_-(\lambda-\eta)- Q_+(\lambda)\, Q_+(\lambda-\eta).
\end{align}
$f_1(\lambda)$ and $f_2(\lambda)$ are two entire functions of $\lambda$ which  are both theta functions of order $\mathsf{N}$, and therefore the relation \eqref{Wf1f2} implies that there exists two constants $c_1,c_2\in\mathbb{C}$ such that
\begin{equation}
   \left\{
   \begin{aligned}
   & f_1(\lambda)= c_1\, \mathsc{d}(\lambda),\\
   & f_2(\lambda+\eta)=c_2\, \mathsc{a}(\lambda),\\
   & c_1=c_2,
   \end{aligned}
   \right.
   \qquad
   \text{i.e.}
   \qquad
   f_1(\lambda)=f_2(\lambda)=c_1\, \mathsc{d}(\lambda).
\end{equation}
Hence
\begin{equation}
    f_-(\lambda)=Q_-(\lambda), \qquad
    \text{and}\qquad
    f_+(\lambda)=Q_+(\lambda),
\end{equation}
which means that $Q_-(\lambda)$ and $Q_+(\lambda)$ satisfy the following system:
\begin{equation}\label{sys-Q-Q+}
    \left\{
    \begin{aligned}
    &\bar{\mathsf{t}}(\lambda )\,Q_{+}(\lambda ) 
    =-\mathsc{a}(\lambda)\,Q_{-}(\lambda -\eta )+\mathsc{d}(\lambda )\,Q_-(\lambda+\eta ), \\
    &\bar{\mathsf{t}}(\lambda )\,Q_{-}(\lambda ) 
    =-\mathsc{a}(\lambda)\,Q_+(\lambda-\eta )+\mathsc{d}(\lambda )\,Q_{+}(\lambda +\eta ).
    \end{aligned}
    \right.
\end{equation}

Let us now define the entire functions
\begin{equation}
Q(\lambda )=\frac{Q_{+}(\lambda )+Q_{-}(\lambda )}{2},
\qquad
\text{and}
\qquad
\overline{Q}(\lambda )=\frac{Q_{+}(\lambda )-Q_{-}(\lambda )}{2}.
\end{equation}
with quasi-periodicity properties
\begin{align}
   &Q(\lambda +\pi )=(1-\mathsf{y})\, Q(\lambda )+\mathsf{y}\, \overline{Q}(\lambda ),\\
   &Q(\lambda +\pi \omega )=(-e^{-2i\lambda -i\pi \omega })^{\mathsf{M}}\, e^{2i\Delta-i\mathsf{M}\pi\omega }\left[ (1-\mathsf{x})\, Q(\lambda )+\mathsf{x}\, \overline{Q}(\lambda )\right] ,
\end{align}
and
\begin{align}
   &\overline{Q}(\lambda +\pi )=(1-\mathsf{y})\,\overline{Q}(\lambda )+\mathsf{y} \, Q(\lambda ),\\
   &\overline{Q}(\lambda +\pi \omega )=(-e^{-2i\lambda -i\pi \omega })^{\mathsf{M}}\, e^{2i\Delta-i\mathsf{M}\pi\omega }\left[ (1-\mathsf{x})\,\overline{Q}(\lambda )+\mathsf{x}\, Q(\lambda )\right] .
\end{align}
From these quasi-periodicity relations, it is easy to see that $Q(\lambda)$ and $\overline{Q}(\lambda)$  are both (up to a constant normalization factor) of the form \eqref{Q-X}.
Moreover they satisfy the following respective homogeneous functional equations:
\begin{align}
  &\bar{\mathsf{t}}(\lambda )\, Q(\lambda ) 
  =-\mathsc{a}(\lambda )\, Q(\lambda-\eta )+\mathsc{d}(\lambda )\, Q(\lambda +\eta ), \\
  &\bar{\mathsf{t}}(\lambda )\, \overline{Q}(\lambda ) 
  =\mathsc{a}(\lambda )\, \overline{Q}(\lambda -\eta )-\mathsc{d}(\lambda )\,\overline{Q}(\lambda +\eta ).
\end{align}
This ends the proof of Theorem~\ref{th-sens2}.
\qed

\section{Local operators and dynamical Yang-Baxter algebra}
\label{sec-inv}

In the next section, we shall compute determinant representations for form factors of local operators in the $\mathbb{\bar{D}}_{\mathsf{(6VD)},\mathsf{N}}^{(0,\mathcal{L}/\mathcal{R})}$-basis of the eigenstates of the $\kappa$-twisted transfer matrix.
As in the algebraic Bethe Ansatz framework \cite{KitMT99,LevT13a}, such representations are based on the solution of the quantum inverse problem, i.e. on the fact that one can reconstruct the local operators we consider in terms of the generators of the Yang-Baxter algebra.
A particularly crucial point in this respect comes from the fact that the positions on the lattice of these local operators are given in terms of propagators written as products of transfer matrices, so that their action on the corresponding eigenstates merely contributes as simple numerical coefficients.

For the dynamical 6-vertex model that we consider in this paper, the local operators of interest are essentially of two types: local {\em spin} operators and local {\em height} operators.
In our framework, local spin operators correspond to elementary operators acting non-trivially on only one factor of the space tensor product $\mathbb{V}_{\mathsf{N}}\equiv \otimes_{n=1}^\mathsf{N} V_n$ and can be expressed in terms of the usual basis of elementary matrices  $\big\{ E_{n}^{ij}\, ;\, i,j\in \{+,-\}, 1\le n \le \mathsf{N} \big\}$ of $\mathrm{End}(\mathbb{V}_\mathsf{N})$ defined by
\begin{equation}\label{local-spin}
E_{n}^{ij}
=
\begin{pmatrix}
\delta _{+,i }\,\delta _{+,j } & \delta _{+,i }\,\delta _{-,j }
\\ 
\delta _{-,i }\,\delta _{+,j } & \delta _{-,i }\,\delta _{-,j }
\end{pmatrix}_{\! [n]}\qquad i ,j \in \{+,-\}, 1\le n \le \mathsf{N},
\end{equation}
whereas the local height operators are dynamical operators acting non-trivially on the dynamical space $\mathbb{D}$.
In the physical context of classical SOS face models, the local operators that we should especially consider for the computation of correlation functions are the following (see for instance Section~3 of \cite{LevT14a}): the spin operators $E_n^{++}$ and $E_n^{--}$, which are associated with the values of the classical spin variable ($+$ or $-$) on the $n$-th vertical bond of a given vertical line of the lattice, and the local height operators $\widehat{\delta}^{(j)}_t$, $t\in t_0+\eta\mathbb{Z}$, which
are associated with the values $t$ of the local height variable on the $j$-th site of a given vertical line of the lattice.
Note in particular that the local height operators $\widehat{\delta}^{(1)}_t$ at site 1 (the reference site) have the following simple action on the dynamical-spin basis \eqref{dyn-spin-basis}:
\begin{align}
  &(\otimes _{n=1}^{\mathsf{N}}\langle n,h_{n}|)\otimes \langle t(a)|\,   \widehat{\delta}^{(1)}_t 
 = \delta_{t, t(a)}\, (\otimes _{n=1}^{\mathsf{N}}\langle n,h_{n}|)\otimes \langle t(a)|, \label{act1}\\
  & \widehat{\delta}^{(1)}_t \, (\otimes _{n=1}^{\mathsf{N}}|n,h_{n}\rangle )\otimes |t(a)\rangle
  = \delta_{t, t(a)}\, (\otimes _{n=1}^{\mathsf{N}}|n,h_{n}\rangle )\otimes |t(a)\rangle . \label{act2}
\end{align}
Its action on the SOV basis~\eqref{D-left-eigenstates}-\eqref{D-right-eigenstates} hence simply follows from \eqref{act-tau}:
\begin{equation}
   \langle r,\mathbf{h}|\, \widehat{\delta}^{(1)}_t=\delta_{t,t_{r,\mathbf{h}}}\,  \langle r,\mathbf{h}|,
   \qquad
   \widehat{\delta}^{(1)}_t \, |\mathbf{h},r\rangle =\delta_{t,t_{r,\mathbf{h}}}\, |\mathbf{h},r\rangle .
\end{equation}
In the following, we shall denote this operator as $\widehat{\delta}^{(1)}_t\equiv \delta_t(\tau)$.

Reconstruction of local spin operators in terms of the elements of the dynamical periodic monodromy matrix \eqref{mon-op} has been obtained in \cite{LevT13a}, generalizing to the dynamical case the proof of \cite{KitMT99,MaiT00}. In this paper, we need instead to express these operators in terms of the elements of the $\kappa$-twisted {\em antiperiodic} monodromy matrix \eqref{anti-p-6vD-M} (using in particular propagators given in terms of the $\kappa$-twisted transfer matrix \eqref{anti-transfer}) so as to be able to easily compute their action on the eigenstates \eqref{eigenT-r}-\eqref{eigenT-l}.

So as to remain as general as possible and to present a proof that is valid in both periodic and antiperiodic cases, we shall instead consider a quasi-periodic monodromy matrix of the form
\begin{equation}\label{Y-mon}
    \mathcal{M}_0^{(Y)}(\lambda)\equiv Y_0 \,\mathcal{M}_0(\lambda),
\end{equation}
where $\mathcal{M}$ is given by \eqref{mon-op} and $Y$ stands for any numerical invertible matrix (the $\kappa$-twisted antiperiodic case corresponding to the particular choice $Y\equiv X^{(\kappa)}\sigma^x$). The corresponding quasi-periodic transfer matrix will be denoted by $\mathcal{T}^{(Y)}(\lambda)\equiv\mathrm{tr}_0\big[\mathcal{M}_0^{(Y)}(\lambda)\big]$. The aim is therefore to express local operators in terms of the operator entries of \eqref{Y-mon}.
As in the periodic case, this reconstruction is based on the following lemma:

\begin{lemma}\label{lemme-mon}\cite{LevT13a}
We have the following identity between products of monodromy matrices:
\begin{multline}\label{id-T2}
 \overrightarrow{ \prod\limits_{k=1}^m} \mathsf{M}_{a_k,1\ldots N}\bigg(\xi_k|\tau+\eta\sum_{l=1}^{k-1} \sigma^z_{a_l} \bigg) 
 =\overleftarrow{ \prod\limits_{k=1}^m} 
 \mathsf{M}_{a_k, k\, k+1\ldots N\, a_1\, a_2\ldots a_{k-1}} \bigg(\xi_k|\tau+\eta\sum_{l=1}^{k-1}\sigma^z_l+\eta\sum_{l=k+1}^m \sigma^z_{a_l} \bigg),
\end{multline} 
in which the symbols $\overrightarrow{ \prod\limits_{k=1}^m}$ (respectively $\overleftarrow{ \prod\limits_{k=1}^m}$) means that the product is ordered from $1$ to $m$ (respectively from $m$ to $1$).
In this expression, $\mathsf{M}_{a_k, k\, k+1\ldots N\, a_1\, a_2\ldots a_{k-1}} $ denotes the monodromy matrix of a chain of $N$ sites labelled in this order by $k,\, k+1,\ldots N,\, a_1,\, a_2,\ldots a_{k-1}$ with inhomogeneity parameters $\xi_{k},\xi_{k+1},\ldots\xi_N,\xi_1,\xi_2,\ldots\xi_{k-1}$.
\end{lemma}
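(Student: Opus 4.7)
The plan is to proceed by induction on $m$. The base case $m=1$ is immediate, since on both sides one has the single factor $\mathsf{M}_{a_1, 1\ldots N}(\xi_1|\tau)$ (the sums appearing in the shifts of $\tau$ are empty).

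The two key ingredients I would exploit at the induction step are: (i) the fact, apparent from \eqref{mat-R}--\eqref{def-abc}, that $R_{a,b}(0|t) = \theta(\eta)\,P_{a,b}$, so that inside the monodromy matrix $\mathsf{M}_{a_k,1\ldots N}(\xi_k|\cdot)$ the internal factor $R_{a_k,k}(0|\cdot)$ is precisely a multiple of the permutation $P_{a_k,k}$ between auxiliary space $a_k$ and quantum site $k$; and (ii) the dynamical Yang--Baxter equation \eqref{DYBE}, which allows one to move adjacent $R$-factors (and hence whole monodromy matrices) past each other at the price of appropriate shifts of $\tau$ by $\pm\eta\sigma^z$ on the auxiliary spaces involved. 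The rewriting $P_{a_k,k}\,R_{a_k,j}(u|t) = R_{k,j}(u|t)\,P_{a_k,k}$ (valid for $j\neq a_k,k$) then lets the permutation sweep through the product, effectively swapping the roles of the auxiliary space $a_k$ and the quantum site $k$ on the remaining factors.

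Concretely, I would take the rightmost factor $\mathsf{M}_{a_m, 1\ldots N}(\xi_m|\tau+\eta\sum_{l<m}\sigma^z_{a_l})$ on the LHS, use (i) to extract $P_{a_m,m}$, and then use (ii) to commute this permutation leftward through the previous factors, while simultaneously relabelling the $R$-matrices of the remaining $m-1$-fold product according to the new quantum-space ordering $m,m+1,\ldots,N,a_1,\ldots,a_{m-1}$. After these manipulations the first factor on the right reproduces exactly $\mathsf{M}_{a_m,m\, m+1\ldots N\, a_1\ldots a_{m-1}}(\xi_m|\tau+\eta\sum_{l<m}\sigma^z_l)$, and the remaining product matches the $m-1$ case of the identity applied to the reordered chain, completing the induction.

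The main technical obstacle is the bookkeeping of the dynamical shifts: every application of the dynamical Yang--Baxter equation carries a shift of $\tau$ by $\pm\eta\sigma^z$ of the corresponding auxiliary or quantum space, and in addition the reordering of the quantum chain via the permutation redistributes which $\sigma^z_{a_l}$ become $\sigma^z_l$ and vice-versa. One must verify that, after all the commutations required to move the factor through the full product, the accumulated shifts reproduce exactly the argument $\tau+\eta\sum_{l=1}^{k-1}\sigma^z_l+\eta\sum_{l=k+1}^m\sigma^z_{a_l}$ prescribed on the RHS. This is essentially the dynamical counterpart of the standard ``train argument'' already used in \cite{LevT13a} for the periodic dynamical case, and it adapts to the quasi-periodic $Y$-twisted setting used in the present section without modification, since the twist $Y_0$ in \eqref{Y-mon} is a purely numerical factor in the auxiliary space and plays no role in the internal rearrangement of the monodromy product.
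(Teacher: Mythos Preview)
The paper does not supply its own proof of this lemma; it is simply quoted from \cite{LevT13a}. So there is nothing in the paper to compare your argument against, and your sketch has to stand on its own.

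Your ingredients are correct, but the induction step as you describe it does not go through. The leftmost factor on the RHS, $\mathsf{M}_{a_m,\, m\,m{+}1\ldots N\, a_1\ldots a_{m-1}}(\xi_m|\cdot)$, is a product of $R$-matrices that contains the factors $R_{a_m, a_l}(\xi_m-\xi_l|\cdot)$ for $l=1,\ldots,m-1$. On the LHS the only $R$-matrices carrying the auxiliary label $a_m$ are the $R_{a_m, j}(\xi_m-\xi_j|\cdot)$ with quantum labels $j=1,\ldots,N$. The permutation $P_{a_m, m}$ you propose to extract exchanges only the labels $a_m\leftrightarrow m$; it cannot turn $R_{a_m, l}$ into $R_{a_m, a_l}$. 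That conversion requires the permutations $P_{a_l, l}$ living inside the \emph{other} factors $\mathsf{M}_{a_l}(\xi_l|\cdot)$, none of which you have touched. So ``commuting $P_{a_m,m}$ leftward'' cannot by itself manufacture $\mathsf{M}_{a_m,\, m\ldots N\, a_1\ldots a_{m-1}}$ on the left, and the reduction to the $(m-1)$-case does not follow.

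The mechanism that actually works (and is the one in \cite{LevT13a}, generalizing the non-dynamical argument of \cite{MaiT00}) proceeds from the other end. One extracts $P_{a_1,1}$ from the \emph{leftmost} factor $\mathsf{M}_{a_1}(\xi_1|\tau)$, since $R_{a_1,1}(0|\tau)=a(0)\,P_{a_1,1}$ is its rightmost $R$-factor, and pushes it to the right through $\mathsf{M}_{a_2}(\xi_2|\tau+\eta\sigma^z_{a_1})$; this turns $R_{a_2,1}$ into $R_{a_2,a_1}$ and swaps $\sigma^z_{a_1}\leftrightarrow\sigma^z_1$ in the dynamical arguments. One is then left with an identity of the form
\[
\mathsf{M}'_{a_1}(\xi_1|t)\,\mathsf{M}'_{a_2}(\xi_2|t+\eta\sigma^z_{a_1})\,R_{a_2 a_1}(\xi_2-\xi_1|t)
=R_{a_2 a_1}(\xi_2-\xi_1|t+\eta\mathsf{S}')\,\mathsf{M}'_{a_2}(\xi_2|t)\,\mathsf{M}'_{a_1}(\xi_1|t+\eta\sigma^z_{a_2}),
\]
where $\mathsf{M}'$ is the monodromy matrix on the shorter chain $2,\ldots,N$ and $t=\tau+\eta\sigma^z_1$; this is precisely the dynamical RTT relation \eqref{RTT} for that chain. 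Reassembling gives the $m=2$ identity, and iterating yields the general case. Note that the Yang--Baxter input enters through this RTT step, not through commuting the permutation itself (which is a pure relabelling and needs no YBE).
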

We will also use the following result:
\begin{lemma}\label{lemme-inh}
The trace of the inverse of the quasi-periodic monodromy matrix \eqref{Y-mon} evaluated at some inhomogeneity parameter $\xi_n$, $n\in\{1,\ldots,\mathsf{N}\}$, is equal to the inverse of the quasi-periodic transfer matrix evaluated at $\xi_n$, i.e.
\begin{equation}
  \mathrm{tr}_0\big[\mathcal{M}^{(Y)}_0 (\xi_n )^{-1} \big]
  = \mathcal{T}^{(Y)}(\xi_n )^{-1}.
\end{equation}
\end{lemma}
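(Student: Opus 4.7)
\textbf{Proof plan for Lemma~\ref{lemme-inh}.}
The key observation is the degeneracy of the $R$-matrix at zero spectral parameter: a direct inspection of \eqref{mat-R}--\eqref{def-abc} gives $R_{0,n}(0|t)=\theta(\eta)\, P_{0,n}$, the $\theta(\eta)$ times the permutation of $V_0$ and $V_n$, independently of $t$ and of $\mathsf{y}$. The plan is to use this to extract an explicit permutation factor from $\mathcal{M}^{(Y)}_0(\xi_n)$, then exploit the familiar identities $\mathrm{tr}_0[P_{0,n}\, A_0]=A_n$ and $\mathrm{tr}_0[P_{0,n}\, X(\sigma_0^z)]=X(\sigma_n^z)$ (valid for $X$ depending on $V_0$ only through $\sigma_0^z$) to carry out the partial trace over the auxiliary space.

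More concretely, using Lemma~\ref{lemme-mon} to cyclically rotate the inhomogeneity parameters we may reduce to $n=1$ (the general $n$ can also be treated directly). At $\lambda=\xi_1$, the ordered product \eqref{monodromy} has $R_{0,1}(0|\tau)=\theta(\eta)\, P_{0,1}$ as its rightmost factor; pushing $P_{0,1}$ all the way to the left and using $P_{0,1}X_0 P_{0,1}=X_1$ to swap all $R_{0,k}$ into $R_{1,k}$ (with the $\sigma_1^z$ appearing inside the dynamical shifts turned into $\sigma_0^z$), one obtains
\begin{equation*}
  \mathsf{M}_0(\xi_1|\tau)=\theta(\eta)\, P_{0,1}\, \widetilde{\mathsf{M}}_1(\xi_1|\tau+\eta\sigma_0^z),
\end{equation*}
where $\widetilde{\mathsf{M}}_1(\lambda|t)=R_{1,\mathsf{N}}(\lambda-\xi_\mathsf{N}|t+\eta\sum_{a=2}^{\mathsf{N}-1}\sigma_a^z)\cdots R_{1,2}(\lambda-\xi_2|t)$ is a reduced monodromy matrix with auxiliary space $V_1$ acting on $V_2\otimes\cdots\otimes V_\mathsf{N}\otimes\mathbb{D}$, and which depends on $V_0$ only through the shift $\tau\mapsto\tau+\eta\sigma_0^z$. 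Commuting $\mathsf{T}_\tau^{\sigma_0^z}$ past $\widetilde{\mathsf{M}}_1$ via \eqref{Dyn-op-comm} and multiplying by $Y_0$ on the left yields the factorized form
\begin{equation*}
  \mathcal{M}^{(Y)}_0(\xi_1)=\theta(\eta)\, Y_0\, P_{0,1}\, \mathsf{T}_\tau^{\sigma_0^z}\, \widetilde{\mathsf{M}}_1(\xi_1|\tau),
\end{equation*}
so that both $\mathcal{M}^{(Y)}_0(\xi_1)$ and its inverse split into a factor involving only $V_0$-structure ($Y_0,\sigma_0^z, P_{0,1}$) and a factor $\widetilde{\mathsf{M}}_1$ acting away from $V_0$.

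Taking the trace over $V_0$ of $\mathcal{M}^{(Y)}_0(\xi_1)^{-1}=\theta(\eta)^{-1}\widetilde{\mathsf{M}}_1(\xi_1|\tau)^{-1}\,\mathsf{T}_\tau^{-\sigma_0^z}\,P_{0,1}\,Y_0^{-1}$ and, separately, of $\mathcal{M}^{(Y)}_0(\xi_1)$, both traces are reduced by the two partial-trace identities above to operators on $V_1\otimes\cdots\otimes V_\mathsf{N}\otimes\mathbb{D}$: one obtains
\begin{equation*}
  \mathrm{tr}_0\bigl[\mathcal{M}^{(Y)}_0(\xi_1)^{-1}\bigr]=\theta(\eta)^{-1}\, Y_1^{-1}\,\mathsf{T}_\tau^{-\sigma_1^z}\,\widetilde{\mathsf{M}}_1(\xi_1|\tau+\eta\sigma_1^z)^{-1},
\end{equation*}
while the same manipulation applied to $\mathcal{M}^{(Y)}_0(\xi_1)$ gives $\mathcal{T}^{(Y)}(\xi_1)=\theta(\eta)\,\widetilde{\mathsf{M}}_1(\xi_1|\tau+\eta\sigma_1^z)\,\mathsf{T}_\tau^{\sigma_1^z}\,Y_1$. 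The two expressions are manifestly inverses of each other, proving the claim.

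The only delicate point is the bookkeeping of the dynamical shifts under the various conjugations: the permutation $P_{0,n}$ swaps $\sigma_n^z$ and $\sigma_0^z$ inside the arguments of the $R$-matrices it crosses, and $\mathsf{T}_\tau^{\pm\sigma_0^z}$ both carries $V_0$ structure and shifts $\tau$; keeping track of how these interact with $\widetilde{\mathsf{M}}_1(\xi_1|\tau)$, which itself depends on $\tau$, is the main obstacle and is handled uniformly by the commutation rule $\mathsf{T}_\tau^{\pm} f(\tau)=f(\tau\pm\eta)\,\mathsf{T}_\tau^{\pm}$ together with the zero-weight property of the $R$-matrix used throughout Section~\ref{sec-DYBA}.
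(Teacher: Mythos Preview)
Your approach is essentially the same as the paper's: both exploit $R_{0,n}(0|t)=a(0)\,P_{0,n}$ to extract a permutation, push it through the remaining $R$-matrices, and then evaluate the partial traces of $\mathcal{M}^{(Y)}_0(\xi_n)$ and of its inverse as explicit products that are visibly inverses of one another. The paper works directly at arbitrary $n$ and inserts generic numerical matrices $Z,\tilde Z$ (setting $Z=\tilde Z^{-1}=Y$ at the end), whereas you reduce to $n=1$ and factorize through an auxiliary reduced monodromy $\widetilde{\mathsf M}_1$; these are cosmetic differences.

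One small slip: in your two displayed results the factor $Y_1$ sits on the wrong side of $\widetilde{\mathsf M}_1$. From your own factorization $\mathcal{M}^{(Y)}_0(\xi_1)=\theta(\eta)\,Y_0\,P_{0,1}\,\mathsf T_\tau^{\sigma_0^z}\,\widetilde{\mathsf M}_1(\xi_1|\tau)$, pulling $\widetilde{\mathsf M}_1$ out of the trace (it does not touch $V_0$) and using $P_{0,1}\mathsf T_\tau^{\sigma_0^z}=\mathsf T_\tau^{\sigma_1^z}P_{0,1}$ together with $\mathrm{tr}_0[Y_0 P_{0,1}]=Y_1$ gives
\[
  \mathcal{T}^{(Y)}(\xi_1)=\theta(\eta)\,\mathsf T_\tau^{\sigma_1^z}\,Y_1\,\widetilde{\mathsf M}_1(\xi_1|\tau),
  \qquad
  \mathrm{tr}_0\bigl[\mathcal{M}^{(Y)}_0(\xi_1)^{-1}\bigr]=\theta(\eta)^{-1}\,\widetilde{\mathsf M}_1(\xi_1|\tau)^{-1}\,Y_1^{-1}\,\mathsf T_\tau^{-\sigma_1^z},
\]
with $Y_1$ sandwiched between the shift operator and $\widetilde{\mathsf M}_1$ (this matches the paper's general-$n$ formula specialized to $n=1$). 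Your versions, with $Y_1$ on the outside, do not equal these; it just so happens that your two misplaced expressions are still mutual inverses, so the conclusion survives, but the intermediate identities as written are not correct.
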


\begin{proof}
For any given numerical matrices $Z$ and $\tilde{Z}$, we independently compute the two traces $\mathrm{tr}_0\big[ Z_0\,\mathcal{M}_0(\xi_n)\big]$ and $\mathrm{tr}_0\big[ \mathcal{M}_0(\xi_n)^{-1}\,\tilde{Z}_0\big]$ as products of $R$-matrices or of inverse $R$-matrices, using the fact that the $R$-matrix \eqref{mat-R} becomes proportional to the permutation operator $P_{12}$ when evaluated at $\lambda=0$: $R_{12}(0|\tau)=a(0)P_{12}$.
On the one hand, we obtain
\begin{align*}
   \mathrm{tr}_0\big[ Z_0\,\mathcal{M}_0(\xi_n)\big]
   &=\mathrm{tr}_0\Bigg[ Z_0 \, 
        \overleftarrow{ \prod\limits_{k=n+1}^\mathsf{N}}\!\! R_{0 k}\bigg(\xi_n-\xi_k |\tau+\eta\sum_{l=1}^{k-1}\sigma_l^z\bigg)
        \cdot a(0) P_{0n}\nonumber\\
   &\hspace{6cm}\times     
        \overleftarrow{ \prod\limits_{k=1}^{n-1}} R_{0 k}\bigg(\xi_n-\xi_k |\tau+\eta\sum_{l=1}^{k-1}\sigma_l^z\bigg)\,
        \mathsf{T}_\tau^{\sigma_0^z}
        \Bigg]
       \displaybreak[0] \nonumber\\
   &=      \overleftarrow{ \prod\limits_{k=1}^{n-1}} R_{n k}\bigg(\xi_n-\xi_k |\tau+\eta\sum_{l=1}^{k-1}\sigma_l^z\bigg)\,
        \mathsf{T}_\tau^{\sigma_n^z}\nonumber\\
   &\hspace{4cm}\times
   \mathrm{tr}_0\Bigg[ Z_0 \, 
        \overleftarrow{ \prod\limits_{k=n+1}^\mathsf{N}} R_{0 k}\bigg(\xi_n-\xi_k |\tau+\eta\sum_{\substack{l=1\\ l\not=n}}^{k-1}\sigma_l^z\bigg)
        \cdot a(0) P_{0n}\Bigg]
       \displaybreak[0]  \nonumber\\
   &=      a(0)\, \overleftarrow{ \prod\limits_{k=1}^{n-1}} R_{n k}\bigg(\xi_n-\xi_k |\tau+\eta\sum_{l=1}^{k-1}\sigma_l^z\bigg)\,
        \mathsf{T}_\tau^{\sigma_n^z}\,  Z_n\, 
        \overleftarrow{ \prod\limits_{k=n+1}^\mathsf{N}}\!\! R_{n k}\bigg(\xi_n-\xi_k |\tau+\eta\sum_{\substack{l=1\\ l\not=n}}^{k-1}\sigma_l^z\bigg),
\end{align*}
where we have used the zero-weight property of the $R$-matrix, namely $[R_{nk},\sigma_n^z+\sigma_k^z]=0$, as well as the commutation relation \eqref{Dyn-op-comm}.
The second trace can be computed similarly, leading to
\begin{equation*}
 \mathrm{tr}_0\big[ \mathcal{M}_0(\xi_n)^{-1}\,\tilde{Z}_0\big]
 =\frac{1}{a(0)}
 \overrightarrow{ \prod\limits_{k=n+1}^\mathsf{N}} \!\! R_{n k}^{-1}\bigg(\xi_n-\xi_k |\tau+\eta\sum_{\substack{l=1\\ l\not=n}}^{k-1}\sigma_l^z\bigg)\,
 \tilde{Z}_n\, \mathsf{T}_\tau^{-\sigma_n^z}\,
 \overrightarrow{ \prod\limits_{k=1}^{n-1}} R_{n k}^{-1}\bigg(\xi_n-\xi_k |\tau+\eta\sum_{l=1}^{k-1}\sigma_l^z\bigg),
\end{equation*}
hence the result for $Z=\tilde{Z}^{-1}=Y$.
\end{proof}

\begin{rem}\label{rem-inh}
It follows from Lemma~\ref{lemme-inh} and from the inversion relation \eqref{inv-mon} that
\begin{equation}
  \frac{\mathcal{T}(\xi_n^{(0)})\,\mathcal{T}(\xi_n^{(1)})}{\mathsc{a}(\xi_n^{(0)})\,\mathsc{d}(\xi_n^{(1)})}
  =\frac{e^{-i\mathsf{y}\eta\mathsf{S}}\,\theta(\tau)}{\theta(\tau+\eta\mathsf{S})},
  \qquad\quad
 \frac{\overline{\mathcal{T}}^{(\kappa)}(\xi_n^{(0)})\,
          \overline{\mathcal{T}}^{(\kappa)}(\xi_n^{(1)})}{\mathsc{a}(\xi_n^{(0)})\,\mathsc{d}(\xi_n^{(1)})}
  =-\frac{e^{-i\mathsf{y}\eta\mathsf{S}}\,\theta(\tau)}{\theta(\tau+\eta\mathsf{S})}. 
\end{equation}
Similarly, building  on the inversion relation \eqref{inv-mon} and on the proof of Lemma~\ref{lemme-inh} for different choices of $Z$, $\tilde{Z}$, one can deduce several other useful identities, such as for instance the following cancellation identities, analog to those obtained in \cite{KitKMNST07} in the non-dynamical case:
\begin{equation}\label{cancel-inh}
  \mathcal{M}_{ij}(\xi_n^{(0)})\,\mathcal{M}_{ik}(\xi_n^{(1)})=0, \quad \forall i,j,k=\pm.
\end{equation}
\end{rem}

\begin{rem}
By using similar arguments as in the proof of Lemma~\ref{lemme-inh}, one can also easily show that
\begin{equation}
   \overrightarrow{ \prod_{j=1}^\mathsf{N}}\frac{\mathcal{T}^{(Y)}(\xi_j)}{\mathsc{a}(\xi_j)}
   = \overrightarrow{ \prod_{j=1}^\mathsf{N} } \left\{\mathsf{T}_\tau^{\sigma_j^z}\, Y_j\right\}.
\end{equation}
\end{rem}

Let us now formulate the solution of the quantum inverse problem for the quasi-periodic monodromy matrix \eqref{Y-mon}.

\begin{theorem}\label{th-inv-pb}
The local spin operators $E_n^{i j}$ \eqref{local-spin}, understood as operators acting on $\mathbb{D}_{\mathsf{(6VD)},\mathsf{N}}$, can be expressed
in terms of the entries of the quasi-periodic monodromy matrix \eqref{Y-mon} or of its inverse in the following way:
  \begin{align}
    E_n^{i j}   &=
     \overrightarrow{\prod_{k=1}^{n-1}} \mathcal{T}^{(Y)}(\xi_k) \cdot   
     \big[\mathcal{M}^{(Y)} (\xi_n) \big]_{ji}
        \cdot \overleftarrow{\prod_{k=1}^n}\big[ \mathcal{T}^{(Y)}(\xi_k) \big]^{-1} 
        \cdot\, \mathsf{T}_{\tau}^{j-i},
        \label{inv-pb1}\\
                     &=      
     \overrightarrow{\prod_{k=1}^{n}} \mathcal{T}^{(Y)}(\xi_k) \cdot   
     \big[ \mathcal{M}^{(Y)} (\xi_n)^{-1} \big]_{ji}
        \cdot \overleftarrow{\prod_{k=1}^{n-1}}\big[ \mathcal{T}^{(Y)}(\xi_k) \big]^{-1} 
        \cdot\, \mathsf{T}_{\tau}^{j-i}.
        \label{inv-pb2}
  \end{align}
\end{theorem}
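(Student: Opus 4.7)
The strategy is to adapt the reconstruction proof of \cite{LevT13a}, obtained there for the standard dynamical periodic monodromy matrix ($Y=\mathrm{id}$), to the present quasi-periodic setting: the numerical twist $Y_0$ lives on auxiliary space and is carried along as a harmless numerical factor. The central ingredient is Lemma~\ref{lemme-mon}, combined with the degeneration of the dynamical $R$-matrix \eqref{mat-R} into a permutation at zero spectral parameter, $R_{ab}(0|\tau)=a(0)\,P_{ab}$, which is visible on \eqref{mat-R}-\eqref{def-abc}. I would first focus on \eqref{inv-pb1} and then deduce \eqref{inv-pb2} from it by exploiting Theorem~\ref{th-inv}.

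Concretely, I would write the left-hand side of \eqref{inv-pb1} as a trace over $n-1$ auxiliary spaces $0_1,\ldots,0_{n-1}$ of a product of monodromy matrices $\mathcal{M}^{(Y)}_{0_k}(\xi_k)$, multiplied by the $(j,i)$-entry of $\mathcal{M}^{(Y)}_{0_n}(\xi_n)$ living on a further auxiliary space $0_n$. Using \eqref{Dyn-op-comm} to commute all the factors $\mathsf{T}_\tau^{\sigma_{0_k}^z}$ present inside $\mathcal{M}_{0_k}(\xi_k)=\mathsf{M}_{0_k}(\xi_k|\tau)\mathsf{T}_\tau^{\sigma_{0_k}^z}$ to the right of the product, one reaches exactly the pattern of the left-hand side of \eqref{id-T2} (with the numerical twists $Y_{0_k}$ sitting at the leftmost position of each factor). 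Applying Lemma~\ref{lemme-mon} then reorders the product so that, for every $k\in\{1,\ldots,n\}$, the factor associated with $\xi_k$ is of the form $\mathsf{M}_{0_k,k\,k+1\cdots\mathsf{N}\,0_1\cdots 0_{k-1}}(\xi_k|\cdot)$, and therefore begins with $R_{0_k,k}(0|\cdot)=a(0)\,P_{0_k,k}$. These permutations can be used to transport each label $0_k$ onto the corresponding quantum site $k$; carrying out the trace over $0_1,\ldots,0_{n-1}$ and keeping track of the selected matrix element $(j,i)$ on $0_n$ then produces the local operator $E_n^{ij}$ on site $n$, multiplied on the right by exactly the combination $\overrightarrow{\prod_{k=1}^{n}}\mathcal{T}^{(Y)}(\xi_k)$ up to the residual dynamical shift $\mathsf{T}_\tau^{i-j}$ coming from the selection of the $(j,i)$ element. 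Right-multiplication by $\overleftarrow{\prod_{k=1}^{n}}[\mathcal{T}^{(Y)}(\xi_k)]^{-1}$, whose existence at the points $\xi_k$ is guaranteed by Lemma~\ref{lemme-inh}, cancels that factor, and the final $\mathsf{T}_\tau^{j-i}$ compensates the residual dynamical shift, thereby yielding \eqref{inv-pb1}. Identity \eqref{inv-pb2} is then obtained by the same reordering argument, but starting from $[\mathcal{M}^{(Y)}(\xi_n)^{-1}]_{ji}$ which, via the inversion formula \eqref{inv-mon} of Theorem~\ref{th-inv}, is expressed in terms of the matrix elements of $\mathcal{M}^{(Y)}(\xi_n-\eta)$ so that the same telescoping applies with one transfer matrix less on the left and one more on the right.

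The main obstacle will be the careful bookkeeping of the dynamical shifts $\mathsf{T}_\tau^{\pm}$: each factor $\mathsf{T}_\tau^{\sigma_{0_k}^z}$ that appears in the definition \eqref{mon-op} of $\mathcal{M}_{0_k}$ shifts, via \eqref{Dyn-op-comm}, the dynamical parameter $\tau$ entering every subsequent monodromy matrix, and it is the accumulation of these shifts that produces the pattern of dynamical arguments of Lemma~\ref{lemme-mon} and, ultimately, the compensating operator $\mathsf{T}_\tau^{j-i}$ in \eqref{inv-pb1}-\eqref{inv-pb2}. Once this dynamical bookkeeping is carried out, all the remaining steps (the $R=P$ collapse, the permutation of auxiliary spaces onto quantum sites, and the evaluation of the auxiliary traces) reduce to the well-known non-dynamical Kitanine--Maillet--Terras manipulation, the twist $Y$ playing only the passive role of a constant numerical factor on auxiliary space.
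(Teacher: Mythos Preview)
Your treatment of \eqref{inv-pb1} is essentially the paper's own proof: writing the product as a multiple auxiliary trace, pushing the dynamical shifts $\mathsf{T}_\tau^{\sigma_{0_k}^z}$ to the right via \eqref{Dyn-op-comm}, applying Lemma~\ref{lemme-mon}, and using $R_{0_k,k}(0|\cdot)=a(0)P_{0_k,k}$ to extract $E_n^{ij}\,\mathsf{T}_\tau^{i-j}$ before undoing the reordering. This part is fine.

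The gap is in your argument for \eqref{inv-pb2}. You propose to invoke the inversion formula \eqref{inv-mon} of Theorem~\ref{th-inv} to rewrite $[\mathcal{M}^{(Y)}(\xi_n)^{-1}]_{ji}$ in terms of matrix elements of $\mathcal{M}(\xi_n-\eta)$ and then ``apply the same telescoping''. But the telescoping mechanism relies crucially on the collapse $R_{0,n}(0|\cdot)=a(0)P_{0,n}$, which only occurs when the spectral parameter equals $\xi_n$; at $\xi_n-\eta$ one gets $R_{0,n}(-\eta|\cdot)$, which is not a permutation, and the argument breaks down. (This is why Remark~\ref{rem-IPinv} goes the other way: it uses \eqref{inv-mon} to rewrite \eqref{inv-pb2} \emph{after} it is proved, not as a means of proving it.)

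The paper's route to \eqref{inv-pb2} avoids this by working directly with the \emph{inverse} monodromy matrix at $\xi_n$: since $R_{0,n}^{-1}(0|\cdot)=a(0)^{-1}P_{0,n}$, the expansion of $\mathcal{M}_0^{(Y)}(\xi_n)^{-1}$ as a product of inverse $R$-matrices still contains the permutation at site $n$, so one can run the same permutation/trace argument with all the $R$'s replaced by $R^{-1}$'s. The reordering identity \eqref{id-T2} of Lemma~\ref{lemme-mon} is used in its inverted form, and Lemma~\ref{lemme-inh} is what converts each $\mathrm{tr}_{a_k}\![\mathcal{M}_{a_k}^{(Y)}(\xi_k)^{-1}]$ into $[\mathcal{T}^{(Y)}(\xi_k)]^{-1}$ at the end. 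If you replace your Theorem~\ref{th-inv} step by this direct ``inverse-monodromy'' computation, your plan goes through.
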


\begin{rem}
The reconstructions of Theorem~\ref{th-inv-pb} are valid on the whole representation space $\mathbb{D}_{\mathsf{(6VD)},\mathsf{N}}$, on which the transfer matrices do not {\em a priori} commute. Hence, we have to pay attention to the order in the corresponding products.
\end{rem}

\begin{rem}\label{rem-IPinv}
In \cite{LevT13a} was only formulated the analog of \eqref{inv-pb1} in the periodic case. The relation \eqref{inv-pb2} is instead useful to express local operators in terms of elements of the monodromy matrix with shifted inhomogeneity parameters.
In fact, using the inversion relation \eqref{inv-mon} for the monodromy matrix, the relation \eqref{inv-pb2} can, in the periodic or antiperiodic case,  be respectively rewritten as
 \begin{align}
  E_n^{i j} 
  &=(-1)^{\frac{j-i}{2}}\,
  \overrightarrow{\prod_{k=1}^{n}} {\mathcal{T}}(\xi_k) \cdot   
  \frac{\big[{\mathcal{M}} (\xi_n^{(1)}) \big]_{-i,-j}}{\mathsc{a}(\xi_n^{(0)})\,\mathsc{d}(\xi_n^{(1)})}\,
  e^{i\mathsf{y}\eta\mathsf{S}}\frac{\theta(\tau+\eta\mathsf{S})}{\theta(\tau)}
  \cdot \overleftarrow{\prod_{k=1}^{n-1}}\big[ {\mathcal{T}}(\xi_k) \big]^{-1} 
        \cdot\, \mathsf{T}_{\tau}^{j-i}.\label{inv-pb3bis}
  \\
  &=-(-1)^{\frac{j-i}{2}}\,
  \overrightarrow{\prod_{k=1}^{n}} \overline{\mathcal{T}}^{(\kappa)}(\xi_k) \cdot   
  \frac{\big[\overline{\mathcal{M}}^{(\kappa)} (\xi_n^{(1)}) \big]_{-i,-j}}{\mathsc{a}(\xi_n^{(0)})\,\mathsc{d}(\xi_n^{(1)})}\,
  e^{i\mathsf{y}\eta\mathsf{S}}\frac{\theta(\tau+\eta\mathsf{S})}{\theta(\tau)}
  \cdot \overleftarrow{\prod_{k=1}^{n-1}}\big[ \overline{\mathcal{T}}^{(\kappa)}(\xi_k) \big]^{-1} 
        \cdot\, \mathsf{T}_{\tau}^{j-i}.\label{inv-pb3}
 \end{align}
\end{rem}

\begin{proof}
Let us first show \eqref{inv-pb1} for $n=1$. The proof is based, as usual (see \cite{KitMT99,MaiT00} and the proof of Lemma~\ref{lemme-inh}), on the crucial fact that the $R$-matrix \eqref{mat-R} becomes proportional to the permutation operator $P_{12}$ when evaluated at $\lambda=0$.
Expressing as in Lemma~\ref{lemme-inh} the matrix element $[\mathcal{M}^{(Y)}(\xi_1)]_{ji}$ as a trace $\mathrm{tr}_0\big[\mathcal{M}_0^{(Y)}(\xi_1)\, E_0^{ij}\big]$ over some auxiliary space 0, representing $\mathcal{M}_0^{(Y)}(\xi_1)$ in terms of a product of $R$-matrices and 
%
%
moving in this expression the operator $E_0^{ij}$ from right to left, using successively that $\mathsf{T}_\tau^{\sigma_0^z} E_0^{ij}=E_0^{ij}\mathsf{T}_\tau^{\sigma_0^z+i-j}$, that $P_{01} E_0^{ij}=E_1^{ij}P_{01}$ and that $(\tau+\eta \sigma_1^z) E_1^{ij} \mathsf{T}_\tau^{i-j}=E_1^{ij} \mathsf{T}_\tau^{i-j}(\tau+\eta \sigma_1^z)$, we get
\begin{equation}\label{reconst2}
\big[\mathcal{M}^{(Y)}(\xi_1)\big]_{ji}
 = E_1^{ij}\, \mathsf{T}_\tau^{i-j}\, \mathcal{T}^{(Y)}(\xi_1).
\end{equation}
The general case can be deduced from the case $n=1$ by means of Lemma~\ref{lemme-mon}.
Let us express part of the product in the left hand side of \eqref{inv-pb1} as a multiple trace over auxiliary spaces:
 \begin{align}
  \overrightarrow{\prod_{k=1}^{n-1} } \mathcal{T}^{(Y)}(\xi_k) \cdot 
 \big[\mathcal{M}^{(Y)} (\xi_n)\big]_{ji}
 &= \mathrm{tr}_{a_1 a_2\ldots a_n} \Bigg[ \,
 \overrightarrow{\prod_{k=1}^{n}} \left( Y_{a_k}\,  \mathsf{M}_{a_k}(\xi_k|\tau)\, \mathsf{T}_\tau^{\sigma_{a_k}^z} \right) \, E_{a_n}^{ij} \Bigg]
 \nonumber\\
 &= \mathrm{tr}_{a_1 a_2\ldots a_n} \Bigg[ \,
 {\prod_{k=1}^{n}} Y_{a_k} \,
 \overrightarrow{\prod_{k=1}^{n}}  \mathsf{M}_{a_k}\bigg(\xi_k|\tau+\eta\sum_{l=1}^{k-1}\sigma^z_{a_l}\bigg)\,
 {\prod_{k=1}^{n}}  \mathsf{T}_\tau^{\sigma_{a_k}^z}  \cdot
  E_{a_n}^{ij} \Bigg] \nonumber
 \end{align}
 in which we have used \eqref{Dyn-op-comm}.
 The product of monodromy matrices can now be rewritten as in Lemma~\ref{lemme-mon}, and reorganized thanks to \eqref{Dyn-op-comm} such that
 \begin{multline}
 \overrightarrow{\prod_{k=1}^{n-1}} \mathcal{T}^{(Y)}(\xi_k)\cdot 
 \big[\mathcal{M}^{(Y)} (\xi_n)\big]_{ji}
  = \mathrm{tr}_{a_1 a_2\ldots a_{n-1}} \Bigg[ \,
 {\prod_{k=1}^{n-1}} Y_{a_k} 
 \\
  \times
 \mathrm{tr}_{a_n}\bigg[ 
 \mathsf{M}^{(Y)}_{a_n, n\, n+1\ldots N a_1 a_2\ldots a_{n-1}} \bigg(\xi_n|\tau+\eta\sum_{l=1}^{n-1}\sigma^z_l \bigg)\,\mathsf{T}_\tau^{\sigma_{a_n}^z} \,  E_{a_n}^{ij}\bigg]  
 \\
 \times
 \overleftarrow{ \prod\limits_{k=1}^{n-1}} 
 \mathsf{M}_{a_k, k\, k+1\ldots N a_1 a_2\ldots a_{k-1}} \bigg(\xi_k|\tau+\eta\sum_{l=1}^{k-1}\sigma^z_l+\eta\sum_{l=k+1}^{n-1} \sigma^z_{a_l} \bigg) \ 
 {\prod_{k=1}^{n-1}}  \mathsf{T}_\tau^{\sigma_{a_k}^z} 
 \Bigg]. \label{reconst4}
 \end{multline}
 The trace over $a_n$ can now be explicitly computed similarly as in \eqref{reconst2}:
 \begin{multline}
 \mathrm{tr}_{a_n}\bigg[ 
 \mathsf{M}^{(Y)}_{a_n, n\, n+1\ldots N a_1 a_2\ldots a_{n-1}} \bigg(\xi_n|\tau+\eta\sum_{l=1}^{n-1}\sigma^z_l \bigg)\,\mathsf{T}_\tau^{\sigma_{a_n}^z} \,  E_{a_n}^{ij}\bigg]  
  \\
 = E_n^{ij}\, \mathsf{T}_\tau^{i-j}\, 
 \mathrm{tr}_{a_n}\bigg[ 
 \mathsf{M}^{(Y)}_{a_n, n\, n+1\ldots N a_1 a_2\ldots a_{n-1}} \bigg(\xi_n|\tau+\eta\sum_{l=1}^{n-1}\sigma^z_l \bigg)\, \mathsf{T}_\tau^{\sigma_{a_n}^z} \bigg],
 \end{multline}
 so that the product $E_n^{ij}\, \mathsf{T}^{i-j}$ can be moved out of the trace from the left in \eqref{reconst4}. The remaining multiple trace can then be re-expressed as a product of transfer matrices (using again Lemma~\ref{lemme-mon}), leading to \eqref{inv-pb1}.
 
 The proof of \eqref{inv-pb2} can be performed in a similar way. Considering first the case $n=1$ we obtain, as in \eqref{reconst2}, that
 \begin{equation}
   \mathrm{tr}_0 \big[ E_0^{ij}\, \mathcal{M}_0^{(Y)}(\xi_1) ^{-1}\big]
   = \mathrm{tr}_0 \big[ \mathcal{M}_0^{(Y)}(\xi_1) ^{-1}\big]\, \mathsf{T}_\tau^{i-j}\, E_1^{ij}
   =\mathcal{T}^{(Y)}(\xi_1)^{-1}\, \mathsf{T}_\tau^{i-j}\, E_1^{ij},
 \end{equation}
 where we have used Lemma~\ref{lemme-inh}.
 The general case can be proven by means of Lemma~\ref{lemme-mon} and Lemma~\ref{lemme-inh}, by computing 
 \begin{equation}
 \big[ \mathcal{M}^{(Y)} (\xi_n)^{-1} \big]_{ji}
        \cdot \overleftarrow{\prod_{k=1}^{n-1}}\big[ \mathcal{T}^{(Y)}(\xi_k) \big]^{-1}
       = \mathrm{tr}_{a_1 a_2\ldots a_n}\Bigg[\, E_{a_n}^{ij}\,
 \overleftarrow{\prod_{k=1}^{n}} \left( Y_{a_k}\, \mathsf{M}_{a_k}(\xi_k|\tau)\, \mathsf{T}_\tau^{\sigma_{a_k}^z} \right)^{-1} \Bigg].
 \end{equation}
 Reorganizing the product inside the trace and using identity \eqref{id-T2} for inverse monodromy matrices, we obtain, by a similar reasoning as in the previous case, that
 \begin{align}
 \big[ \mathcal{M}^{(Y)} (\xi_n)^{-1} \big]_{ji}
        \cdot \overleftarrow{\prod_{k=1}^{n-1}}\big[ \mathcal{T}^{(Y)}(\xi_k) \big]^{-1}
        &= \overleftarrow{\prod_{k=1}^{n}} \mathrm{tr}_{a_k}\big[\, \mathcal{M}^{(Y)} (\xi_k)^{-1} \big]\cdot \mathsf{T}_\tau^{i-j}\, E_n^{i-j}
        \nonumber\\
        &= \overleftarrow{\prod_{k=1}^{n}}\big[ \mathcal{T}^{(Y)}(\xi_k) \big]^{-1}
        \cdot \mathsf{T}_\tau^{i-j}\, E_n^{i-j},
 \end{align}
 which ends the proof of Theorem~\ref{th-inv-pb}.
\end{proof}

As a consequence, one gets

\begin{corollary}
The local spin operators $E_n^{++}$, $E_n^{--}$, and the local height operators $\widehat{\delta}_t^{(n)}$, $1\le n\le \mathsf{N}$, admit in $\mathbb{\bar{D}}_{\mathsf{(6VD)},\mathsf{N}}^{(0,\mathcal{L}/\mathcal{R})}$ the following reconstruction in terms of the entries of the $\kappa$-twisted antiperiodic monodromy matrix:
\begin{align}
 &E_n^{++}
 = \prod_{k=1}^{n-1} \overline{\mathcal{T}}^{(\kappa)}(\xi _k )\cdot
      \kappa\,\mathcal{C}(\xi_n)\cdot
      \prod_{k=1}^{n}\big[ \overline{\mathcal{T}}^{(\kappa)}(\xi _k  )\big]^{-1}, \label{IP1}\\
 &\hphantom{E_n^{++}}
 = - \prod_{k=1}^{n} \overline{\mathcal{T}}^{(\kappa)}(\xi _k)\cdot   
       \frac{\kappa^{-1}\mathcal{B}(\xi_n-\eta)}{\mathrm{det}_q \mathsf{M}(\xi_n)}\cdot
       \prod_{k=1}^{n-1}\big[ \overline{\mathcal{T}}^{(\kappa)}(\xi _k  )\big]^{-1}, \label{IP1bis}
 \displaybreak[0]\\
 &E_n^{--}
 = \prod_{k=1}^{n-1} \overline{\mathcal{T}}^{(\kappa)}(\xi _k )\cdot
      \kappa^{-1}\mathcal{B}(\xi_n)\cdot
      \prod_{k=1}^{n}\big[ \overline{\mathcal{T}}^{(\kappa)}(\xi _k  )\big]^{-1}, \label{IP2}\\
 &\hphantom{E_n^{--}}
 = - \prod_{k=1}^{n} \overline{\mathcal{T}}^{(\kappa)}(\xi _k )\cdot   
       \frac{\kappa\,\mathcal{C}(\xi_n-\eta)}{\mathrm{det}_q \mathsf{M}(\xi_n)}\cdot
       \prod_{k=1}^{n-1}\big[ \overline{\mathcal{T}}^{(\kappa)}(\xi _k  )\big]^{-1}, \label{IP2bis}
 \displaybreak[0]\\  
 &\widehat{\delta}_t^{(n)}
 = \prod_{k=1}^{n-1} \overline{\mathcal{T}}^{(\kappa)}(\xi _k )\cdot \delta_t(\tau)\cdot
    \prod_{k=1}^{n-1}\big[ \overline{\mathcal{T}}^{(\kappa)}(\xi _k  )\big]^{-1}.
    \label{IP3}
\end{align}
\end{corollary}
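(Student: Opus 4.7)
The reconstructions of the spin operators follow by direct specialization of Theorem~\ref{th-inv-pb} and Remark~\ref{rem-IPinv} to the antiperiodic case $Y = X^{(\kappa)}\sigma^x$. A one-line matrix computation gives
\begin{equation*}
\overline{\mathcal{M}}_0^{(\kappa)}(\lambda)
= X_0^{(\kappa)}\sigma_0^x\,\mathcal{M}_0(\lambda)
= \begin{pmatrix} \kappa\,\mathcal{C}(\lambda) & \kappa\,\mathcal{D}(\lambda) \\ \kappa^{-1}\mathcal{A}(\lambda) & \kappa^{-1}\mathcal{B}(\lambda) \end{pmatrix}_{\! [0]}\!,
\end{equation*}
so that $[\overline{\mathcal{M}}^{(\kappa)}(\lambda)]_{++} = \kappa\,\mathcal{C}(\lambda)$ and $[\overline{\mathcal{M}}^{(\kappa)}(\lambda)]_{--} = \kappa^{-1}\mathcal{B}(\lambda)$. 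For $E_n^{++}$ and $E_n^{--}$ the indices satisfy $j=i$, the dynamical shift $\mathsf{T}_\tau^{j-i}$ in \eqref{inv-pb1} is trivial, and one obtains \eqref{IP1} and \eqref{IP2} at once.

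For the alternative forms \eqref{IP1bis} and \eqref{IP2bis} I would use the variant \eqref{inv-pb3} of \eqref{inv-pb2} given in Remark~\ref{rem-IPinv}. For $j=i$ the prefactor $-(-1)^{(j-i)/2}$ reduces to $-1$, and one reads off $[\overline{\mathcal{M}}^{(\kappa)}(\xi_n^{(1)})]_{-,-} = \kappa^{-1}\mathcal{B}(\xi_n-\eta)$ and $[\overline{\mathcal{M}}^{(\kappa)}(\xi_n^{(1)})]_{+,+} = \kappa\,\mathcal{C}(\xi_n-\eta)$, while the normalization $\mathsc{a}(\xi_n^{(0)})\mathsc{d}(\xi_n^{(1)})$ is exactly $\det_q\mathsf{M}(\xi_n)$ by \eqref{q-det}--\eqref{a-d}. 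The only non-trivial point is then to check that the operator factor $e^{i\mathsf{y}\eta\mathsf{S}}\,\theta(\tau+\eta\mathsf{S})/\theta(\tau)$ acts as the identity on $\mathbb{\bar{D}}_{\mathsf{(6VD)},\mathsf{N}}^{(0,\mathcal{L}/\mathcal{R})}$. This is a short calculation: on this subspace one has $\eta\mathsf{S}+2\tau = \mathsf{x}\pi + \mathsf{y}\pi\omega$, and substituting together with the quasi-periodicity relations of $\theta_1$ (and the oddness of $\theta_1$) a case-by-case verification in $(\mathsf{x},\mathsf{y})$ yields $+1$ for each of the admissible pairs.

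The height operator reconstruction \eqref{IP3} is not a corollary of Theorem~\ref{th-inv-pb}, since $\widehat{\delta}^{(n)}_t$ acts on the dynamical factor of $\mathbb{D}_{\mathsf{(6VD)},\mathsf{N}}$ rather than on $\mathbb{V}_\mathsf{N}$. The key observation is that conjugation by the propagator $\prod_{k=1}^{n-1}\overline{\mathcal{T}}^{(\kappa)}(\xi_k)$ shifts the dynamical parameter $\tau$ by $\eta(\sigma_1^z+\cdots+\sigma_{n-1}^z)$. Concretely, I would expand each $\overline{\mathcal{T}}^{(\kappa)}(\xi_k)$ as a trace over an auxiliary space and invoke the permutation property $R_{k,l}(0|\tau) = a(0) P_{k,l}$ together with Lemma~\ref{lemme-mon}, exactly as in the proof of Theorem~\ref{th-inv-pb}, to isolate in the product of traces an explicit dynamical shift $\mathsf{T}_\tau^{\sigma_1^z+\cdots+\sigma_{n-1}^z}$ surrounded only by dynamical $R$-matrices and spin twists $X^{(\kappa)}\sigma^x$ that depend on $\tau$ merely as a parameter (and therefore commute with $\delta_t(\tau)$). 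The commutation rule $\mathsf{T}_\tau^{\sigma^z}\,f(\tau)\,(\mathsf{T}_\tau^{\sigma^z})^{-1} = f(\tau+\eta\sigma^z)$ then yields $\widehat{\delta}^{(n)}_t = \delta_t(\tau+\eta\sum_{l=1}^{n-1}\sigma_l^z)$, which is precisely the projector onto the local height at site $n$ along the vertical line. The main technical obstacle lies in this last step: one must carefully track all dynamical shifts produced by the successive trace evaluations and verify that every $\tau$-dependent factor other than the cumulative shift $\mathsf{T}_\tau^{\sigma_1^z+\cdots+\sigma_{n-1}^z}$ indeed commutes with $\delta_t(\tau)$. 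Apart from this bookkeeping and the presence of the innocuous twist $X^{(\kappa)}\sigma^x$ on each auxiliary space, the argument closely parallels the periodic case treated in \cite{LevT14a}.
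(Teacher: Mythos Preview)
Your treatment of the spin operators \eqref{IP1}--\eqref{IP2bis} matches the paper's proof exactly: direct specialization of Theorem~\ref{th-inv-pb} and Remark~\ref{rem-IPinv} to $Y=X^{(\kappa)}\sigma^x$, plus the observation that the factor $e^{i\mathsf{y}\eta\mathsf{S}}\theta(\tau+\eta\mathsf{S})/\theta(\tau)$ is trivial on $\mathbb{\bar{D}}_{\mathsf{(6VD)},\mathsf{N}}^{(0,\mathcal{L}/\mathcal{R})}$.

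For the height operator \eqref{IP3} you take a genuinely different route from the paper. You propose a direct computation: unfold the propagator $\prod_{k=1}^{n-1}\overline{\mathcal{T}}^{(\kappa)}(\xi_k)$ via trace expansions and Lemma~\ref{lemme-mon}, isolate the cumulative dynamical shift $\mathsf{T}_\tau^{\sigma_1^z+\cdots+\sigma_{n-1}^z}$, and commute $\delta_t(\tau)$ through it. This is sound in principle (the $R$-matrices and twists are functions of $\tau$ with no shift operators, hence commute with $\delta_t(\tau)$), but the bookkeeping you flag is real and not entirely trivial. The paper instead avoids this unpacking by a one-line recursion: using the identity
\[
   \widehat{\delta}_t^{(n)} = \widehat{\delta}_{t-\eta}^{(n-1)}\, E_{n-1}^{++} + \widehat{\delta}_{t+\eta}^{(n-1)}\, E_{n-1}^{--},
\]
it reduces \eqref{IP3} at site $n$ to \eqref{IP3} at site $n-1$ together with the already-established reconstructions \eqref{IP1}--\eqref{IP2}. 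The recursion is much shorter and leverages the spin reconstruction you have just proved, whereas your direct approach is self-contained but repeats in disguise much of the work already done in the proof of Theorem~\ref{th-inv-pb}.
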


\begin{proof}
The representations \eqref{IP1}-\eqref{IP2bis} follow directly from \eqref{inv-pb1}-\eqref{inv-pb2} taking into account Remark~\ref{rem-IPinv} and the fact that we restrict ourselves to $\mathbb{\bar{D}}_{\mathsf{(6VD)},\mathsf{N}}^{(0,\mathcal{L}/\mathcal{R})}$.
The relation \eqref{IP3} follows, as in Theorem~3.2 of \cite{LevT14a}, from a trivial recursion on $n$ using the fact that
\begin{equation}
   \widehat{\delta}_t^{(n)}= \widehat{\delta}_{t-1}^{(n-1)}\, E_{n-1}^{++}+ \widehat{\delta}_{t+1}^{(n-1)}\, E_{n-1}^{--}
\end{equation}
and the solution of the inverse problem for $E_{n-1}^{++}$ and $E_{n-1}^{--}$.
\end{proof}

\section{Form factors of local operators}
\label{sec-ff}

We are now in position to compute matrix elements of local operators (spin and height) between eigenstates of the $\kappa$-twisted antiperiodic transfer matrix. 
We obtain the following result:

\begin{theorem}\label{th-LS}
The matrix elements of $E_{n}^{++}$ and $E_{n}^{--}$ between generic $\langle \Psi_{\bar{\mathsf{t}}}^{(\kappa)}|$\ and $|\Psi_{\bar{\mathsf{t}}^{\prime}}^{(\kappa) }\rangle $ left and right eigenstates of  $\mathcal{\overline{T}}^{(\kappa)}(\lambda )$ on $\mathbb{\bar{D}}_{\mathsf{(6VD)},\mathsf{N}}^{(0,\mathcal{L}/\mathcal{R})}$ admit the following determinant representations:
\begin{align}
 &\langle \Psi_{\bar{\mathsf{t}}} ^{(\kappa)} | E_{n}^{++} |\Psi_{\bar{\mathsf{t}}'}^{ (\kappa) }\rangle 
 =\frac{\prod_{b=1}^{n-1}\bar{\mathsf{t}}(\xi _{b})}{\prod_{b=1}^{n}\bar{\mathsf{t}}^{\prime}(\xi _{b})}\,
    \det_{N+1} \big[ \mathcal{S}_{\bar{\mathsf{t}}, \bar{\mathsf{t}}' }(\xi_n)\big]
 =-\frac{\prod_{b=1}^{n}\bar{\mathsf{t}}(\xi _{b})}{\prod_{b=1}^{n-1}\bar{\mathsf{t}}^{\prime}(\xi _{b})}\, 
   \frac{\det_{N+1} \big[ \mathcal{S}_{\bar{\mathsf{t}}', \bar{\mathsf{t}} }(\xi_n-\eta)\big]  }
           {\det_{q}\mathsf{M}(\xi _{n})},  \label{ME1} 
 \\
 &\langle \Psi_{\bar{\mathsf{t}}} ^{(\kappa)} | E_{n}^{--} |\Psi_{\bar{\mathsf{t}}'}^{ (\kappa) }\rangle  
 =\frac{\prod_{b=1}^{n-1}\bar{\mathsf{t}}(\xi _{b})}{\prod_{b=1}^{n}\bar{\mathsf{t}}^{\prime}(\xi _{b})}\,
    \det_{N+1} \big[ \mathcal{S}_{\bar{\mathsf{t}}', \bar{\mathsf{t}}}(\xi_n)\big]
 =-\frac{\prod_{b=1}^{n}\bar{\mathsf{t}}(\xi _{b})}{\prod_{b=1}^{n-1}\bar{\mathsf{t}}^{\prime}(\xi _{b})}\, 
  \frac{  \det_{N+1} \big[ \mathcal{S}_{\bar{\mathsf{t}}, \bar{\mathsf{t}}' }(\xi_n-\eta)\big] }
           {\det_{q}\mathsf{M}(\xi _{n})}, \label{ME2} 
\end{align}
where $\mathcal{S}_{\bar{\mathsf{t}}, \bar{\mathsf{t}}' }(\xi_n^{(\epsilon)})$, $\epsilon\in\{0,1\}$, is an $(\mathsf{N}+1)\times(\mathsf{N}+1)$ matrix which corresponds to the matrix $\mathcal{F}_{\bar{\mathsf{t}}, \bar{\mathsf{t}}' }$ \eqref{mat-F} of the scalar product \eqref{sp-eigen} with one additional line and one additional column:
\begin{alignat}{2}
  &\big[ \mathcal{S}_{\bar{\mathsf{t}}, \bar{\mathsf{t}}' }(\xi_n^{(\epsilon)})\big]_{a,b}
    = \big[\mathcal{F}_{\bar{\mathsf{t}}, \bar{\mathsf{t}}' }\big]_{a,b},
  & \quad &\text{for}\quad a,b\in\{1,\ldots,\mathsf{N}\},
  \label{mat-S-1}\\
  &\big[ \mathcal{S}_{\bar{\mathsf{t}}, \bar{\mathsf{t}}' }(\xi_n^{(\epsilon)})\big]_{a,\mathsf{N}+1}
  = e^{i\mathsf{y}\xi _{a}}\,\mathsc{a}_\mathsf{x,y}(\xi _{a})\,
     \mathsf{q}_{\bar{\mathsf{t}},a}^{(0)}\, \mathsf{q}_{\bar{\mathsf{t}}^{\prime },a}^{(1)},
     &\qquad &\text{for}\quad a\in\{1,\ldots,\mathsf{N}\},
  \\
  &\big[ \mathcal{S}_{\bar{\mathsf{t}}, \bar{\mathsf{t}}' }(\xi_n^{(\epsilon)})\big]_{\mathsf{N}+1,b}
  =- e^{-i\mathsf{y} \xi_n^{(\epsilon)}}\, \vartheta_{b-1}(\xi_n^{(\epsilon)}-\bar{\xi}_0),
  &\qquad &\text{for}\quad b\in\{1,\ldots,\mathsf{N}\},
  \\
  &\big[ \mathcal{S}_{\bar{\mathsf{t}}, \bar{\mathsf{t}}' }(\xi_n^{(\epsilon)})\big]_{\mathsf{N}+1,\mathsf{N}+1}
  = 0.\label{mat-S-4}
\end{alignat}
\end{theorem}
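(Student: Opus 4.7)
The proof follows the standard three-stage scheme for form factors, adapted to the SOV framework.

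\emph{First}, I apply the reconstruction formulas \eqref{IP1}--\eqref{IP2bis} established in Section~\ref{sec-inv}. On $\mathbb{\bar{D}}_{\mathsf{(6VD)},\mathsf{N}}^{(0,\mathcal{L}/\mathcal{R})}$ the eigenvalues $\bar{\mathsf{t}}(\xi_b)$ are non-zero, thanks to the discrete SOV equations \eqref{syst-t} combined with the genericity condition \eqref{cond-inh}, so the strings of transfer matrices and of their inverses act diagonally on $\langle\Psi_{\bar{\mathsf{t}}}^{(\kappa)}|$ and $|\Psi_{\bar{\mathsf{t}}'}^{(\kappa)}\rangle$ and factor out exactly the prefactors displayed in \eqref{ME1}--\eqref{ME2}. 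The two alternative expressions in each formula correspond to the two reconstructions \eqref{IP1}/\eqref{IP1bis} (resp.\ \eqref{IP2}/\eqref{IP2bis}): the first in terms of the direct monodromy matrix, the second in terms of its inverse, the latter producing the $\det_{q}\mathsf{M}(\xi _{n})$ factor through the inversion relation \eqref{inv-mon}. The problem is thus reduced to evaluating the four bare matrix elements $\langle\Psi_{\bar{\mathsf{t}}}^{(\kappa)}|\,\mathcal{C}(\xi_n^{(\epsilon)})\,|\Psi_{\bar{\mathsf{t}}'}^{(\kappa)}\rangle$ and $\langle\Psi_{\bar{\mathsf{t}}}^{(\kappa)}|\,\mathcal{B}(\xi_n^{(\epsilon)})\,|\Psi_{\bar{\mathsf{t}}'}^{(\kappa)}\rangle$, for $\epsilon\in\{0,1\}$.

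\emph{Next}, these matrix elements are computed by SOV. Consider for instance $\langle\Psi_{\bar{\mathsf{t}}}^{(\kappa)}|\,\mathcal{C}(\xi_n)\,|\Psi_{\bar{\mathsf{t}}'}^{(\kappa)}\rangle$: expanding $\langle\Psi_{\bar{\mathsf{t}}}^{(\kappa)}|$ via \eqref{eigenT-l} and using the left action \eqref{act-C} of $\mathcal{C}(\xi_n^{(h_n)})$ -- which is non-zero only on SOV states with $h_n=0$ and shifts them by $\mathsf{T}_n^+$ -- and then pairing with the factorized form \eqref{eigenT-r} of $|\Psi_{\bar{\mathsf{t}}'}^{(\kappa)}\rangle$ through \eqref{sc-rh}, one obtains the matrix element as a sum over $\mathbf{h}\in\{0,1\}^\mathsf{N}$ (with $h_n=1$ after reindexing) of a product of $\mathsf{q}^{(h)}$-factors times $\det[\Theta^{(0,\mathbf{h})}]^2$, in which the $n$-th site contributes the anomalous insertion weight $e^{i\mathsf{y}\xi_n}\,\mathsc{a}_{\mathsf{x,y}}(\xi_n)\,\mathsf{q}^{(0)}_{\bar{\mathsf{t}},n}\mathsf{q}^{(1)}_{\bar{\mathsf{t}}',n}$ -- precisely the $(n,\mathsf{N}+1)$-entry of $\mathcal{S}_{\bar{\mathsf{t}},\bar{\mathsf{t}}'}$ -- rather than the diagonal factor of $\mathcal{F}_{\bar{\mathsf{t}},\bar{\mathsf{t}}'}$ in \eqref{mat-F}. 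The three other bare matrix elements are computed similarly via \eqref{act-B-xi_n} and the right SOV actions, with the roles of $\bar{\mathsf{t}}$ and $\bar{\mathsf{t}}'$ possibly exchanged and the evaluation point shifted by $-\eta$.

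\emph{Finally}, I identify the resulting SOV sum with $\det_{\mathsf{N}+1}[\mathcal{S}_{\bar{\mathsf{t}},\bar{\mathsf{t}}'}(\xi_n)]$. Both sides, regarded as functions of a free variable $\lambda$ in place of $\xi_n$ in the last row of $\mathcal{S}_{\bar{\mathsf{t}},\bar{\mathsf{t}}'}$, are -- after multiplication by $e^{i\mathsf{y}\lambda}$ -- theta functions of order $\mathsf{N}$ and norm $\sum_k\xi_k+t_{0,\mathbf{0}}$: on one side by Lemma~\ref{lem-period}, on the other side by the appearance of $\vartheta_{b-1}(\lambda-\bar{\xi}_0)$ in the last row of $\mathcal{S}_{\bar{\mathsf{t}},\bar{\mathsf{t}}'}$ combined with the multilinearity of the determinant. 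It therefore suffices to match the coefficients in the basis $\{\vartheta_{b-1}\}$, which reduces the claim to an $\mathsf{N}\times\mathsf{N}$ cofactor identity: the Laplace expansion along row $n$ of the minor obtained from $\mathcal{F}_{\bar{\mathsf{t}},\bar{\mathsf{t}}'}$ by replacing its $n$-th row entries with the insertion weights must coincide with the SOV sum above. This combinatorial cofactor-by-cofactor matching, which rests on the theta-function factorization identity \eqref{id-det} for $\det[\Theta^{(0,\mathbf{h})}]$ together with the $\mathsf{q}$-recursion \eqref{t-Q-relation}, is the main technical step of the proof.
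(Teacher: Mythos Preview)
Your first stage (reduction via the inverse problem to bare matrix elements of $\mathcal{B}$ and $\mathcal{C}$) is correct and matches the paper. The genuine gap is in your second stage.

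You invoke \eqref{act-C} to claim that $\mathcal{C}(\xi_n)$, acting to the left on an SOV covector $\langle 0,\mathbf{h}|$, is non-zero only when $h_n=0$ and then shifts $h_n\to 1$. This is a misreading of \eqref{act-C}: that formula gives the action of $\mathcal{C}(\xi_n^{(h_n)})$, where the \emph{evaluation point depends on $h_n$}. For $h_n=0$ it is indeed $\mathcal{C}(\xi_n)$, but for $h_n=1$ it is $\mathcal{C}(\xi_n-\eta)$. The action of $\mathcal{C}(\xi_n)$ on a state with $h_n=1$ is \emph{not} zero: using the full interpolation formula \eqref{C-SOV_D-left} one sees that it shifts any of the \emph{other} sites $a\neq n$ with $h_a=0$ (the $a=n$ term drops out because $\mathsc{d}(\xi_n)=0$, but the $a\neq n$ terms survive since the product contains $\theta(\xi_n-\xi_n^{(1)})=\theta(\eta)\neq 0$). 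Your computation therefore throws away all the contributions with $h_n=1$, and the resulting ``single-site insertion'' picture --- where only the $(n,\mathsf{N}+1)$ entry of $\mathcal{S}_{\bar{\mathsf{t}},\bar{\mathsf{t}}'}$ appears --- is incomplete. The full last column of $\mathcal{S}_{\bar{\mathsf{t}},\bar{\mathsf{t}}'}$ is needed, one entry for each possible shifted site $a$.

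The paper's proof avoids this pitfall by working directly with \eqref{C-SOV_D-left} for general $\lambda$. Acting on $\langle\Psi_{\bar{\mathsf{t}}}^{(\kappa)}|$ and then pairing with $|\Psi_{\bar{\mathsf{t}}'}^{(\kappa)}\rangle$ via \eqref{sc-rh} produces a double sum over $a\in\{1,\ldots,\mathsf{N}\}$ and over $\mathbf{h}$ with $h_a=0$. The theta-function prefactor multiplying $\det_\mathsf{N}[\Theta^{(0,\mathbf{h})}]$ is then recognised, through \eqref{det-thetaj}, as the determinant of the matrix $\Theta^{(0,\mathbf{h})}$ with row $a$ replaced by $\big(\vartheta_{b-1}(\xi_n^{(\epsilon)}-\bar\xi_0)\big)_{b}$. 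The whole sum over $a$ is then exactly the Laplace expansion of $\det_{\mathsf{N}+1}[\mathcal{S}_{\bar{\mathsf{t}},\bar{\mathsf{t}}'}(\xi_n^{(\epsilon)})]$ along its \emph{column} $\mathsf{N}+1$ (not along row $n$ as you suggest). No separate interpolation step or ``cofactor-by-cofactor matching'' is needed; the determinant structure emerges directly. Your Stage~3 cannot repair Stage~2, because to view the matrix element as a function of a free $\lambda$ you would already need \eqref{C-SOV_D-left}, and once you have that the paper's direct identification is both simpler and correct.
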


The representations \eqref{ME1} and \eqref{ME2} straightforwardly follow from the solution of the quantum inverse problem \eqref{IP1}-\eqref{IP2bis}, which enables us to write
\begin{align*}
 &\langle \Psi_{\bar{\mathsf{t}}} ^{(\kappa)} | E_{n}^{++} |\Psi_{\bar{\mathsf{t}}'}^{ (\kappa) }\rangle 
 =\kappa\,\frac{\prod_{b=1}^{n-1}\bar{\mathsf{t}}(\xi _{b})}{\prod_{b=1}^{n}\bar{\mathsf{t}}^{\prime}(\xi _{b})}\,
   \langle \Psi_{\bar{\mathsf{t}}} ^{(\kappa)} | \,\mathcal{C}(\xi _{n} )\, |\Psi_{\bar{\mathsf{t}}'}^{ (\kappa) }\rangle 
 =-\kappa^{-1}\frac{\prod_{b=1}^{n}\bar{\mathsf{t}}(\xi _{b})}{\prod_{b=1}^{n-1}\bar{\mathsf{t}}^{\prime}(\xi _{b})}\, 
   \frac{\langle \Psi_{\bar{\mathsf{t}}} ^{(\kappa)} |\,\mathcal{B}(\xi _{n}-\eta )\,|\Psi_{\bar{\mathsf{t}}'}^{ (\kappa) }\rangle  }
           {\det_{q}\mathsf{M}(\xi _{n})},  
 \\
 &\langle \Psi_{\bar{\mathsf{t}}} ^{(\kappa)} | E_{n}^{--} |\Psi_{\bar{\mathsf{t}}'}^{ (\kappa) }\rangle  
 =\kappa^{-1}\frac{\prod_{b=1}^{n-1}\bar{\mathsf{t}}(\xi _{b})}{\prod_{b=1}^{n}\bar{\mathsf{t}}^{\prime}(\xi _{b})}\,
   \langle \Psi_{\bar{\mathsf{t}}} ^{(\kappa)} | \,\mathcal{B}(\xi _{n}  )\, |\Psi_{\bar{\mathsf{t}}'}^{ (\kappa) }\rangle 
 =-\kappa\,\frac{\prod_{b=1}^{n}\bar{\mathsf{t}}(\xi _{b})}{\prod_{b=1}^{n-1}\bar{\mathsf{t}}^{\prime}(\xi _{b})}\, 
  \frac{ \langle \Psi_{\bar{\mathsf{t}}} ^{(\kappa)} |\,\mathcal{C}(\xi _{n}-\eta)\,|\Psi_{\bar{\mathsf{t}}'}^{ (\kappa) }\rangle  }
           {\det_{q}\mathsf{M}(\xi _{n})}, 
\end{align*}
and from the following lemma:

\begin{lemma}
The matrix elements of the operators $\mathcal{B}(\xi _{n}^{(\epsilon)}  )$ and $\mathcal{C}(\xi _{n}^{(\epsilon)}  )$, $\epsilon\in\{0,1\}$, between eigenstates  $\langle \Psi_{\bar{\mathsf{t}}}^{(\kappa)}|$\ and $|\Psi_{\bar{\mathsf{t}}^{\prime}}^{(\kappa) }\rangle $ of the antiperiodic transfer matrix are given by the following determinants:
\begin{equation}\label{mat-BC}
 \langle \Psi_{\bar{\mathsf{t}}} ^{(\kappa)} | \,\mathcal{B}(\xi _{n}^{(\epsilon)}  )\, |\Psi_{\bar{\mathsf{t}}'}^{ (\kappa) }\rangle 
 =\kappa\, \det_{N+1} \big[ \mathcal{S}_{\bar{\mathsf{t}}', \bar{\mathsf{t}} }(\xi_n^{(\epsilon)})\big],
 \qquad
 \langle \Psi_{\bar{\mathsf{t}}} ^{(\kappa)} |\, \mathcal{C}(\xi _{n}^{(\epsilon)}  )\, |\Psi_{\bar{\mathsf{t}}'}^{ (\kappa) }\rangle 
 = \kappa^{-1} \det_{N+1} \big[ \mathcal{S}_{\bar{\mathsf{t}}, \bar{\mathsf{t}}' }(\xi_n^{(\epsilon)})\big],
\end{equation} 
where $\mathcal{S}_{\bar{\mathsf{t}}, \bar{\mathsf{t}}' }(\xi_n^{(\epsilon)})$, $\epsilon\in\{0,1\}$, is the $(\mathsf{N}+1)\times(\mathsf{N}+1)$ matrix with elements \eqref{mat-S-1}-\eqref{mat-S-4}.
\end{lemma}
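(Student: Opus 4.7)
The plan is to compute both matrix elements by direct evaluation in the SOV basis, using the factorized form \eqref{eigenT-r}--\eqref{eigenT-l} of the eigenstates and the left-action formulas \eqref{C-SOV_D-left}--\eqref{B-SOV_D-left}. For the $\mathcal{C}$ case, I would first insert the decomposition of the identity $\mathbb{I}_{0}$ from \eqref{Idr} between $\langle\Psi_{\bar{\mathsf{t}}}^{(\kappa)}|$ and $\mathcal{C}(\lambda)$ (with $\lambda=\xi_n^{(\epsilon)}$ to be specialised at the end), and apply \eqref{C-SOV_D-left}. This produces a double sum over the shift index $a\in\{1,\dots,\mathsf{N}\}$ and over the configuration $\mathbf{k}\in\{0,1\}^{\mathsf{N}}$, constrained by $k_a=0$ because the coefficient $\mathsc{d}(\xi_a^{(1-k_a)})$ vanishes when $k_a=1$. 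The matching pairing $\langle 0,\mathsf{T}_a^+\mathbf{k}|\Psi_{\bar{\mathsf{t}}'}^{(\kappa)}\rangle$ extracts the factor $\Psi_{\bar{\mathsf{t}}'}^{(\kappa),R}(\mathsf{T}_a^+\mathbf{k})$, while the SOV norms \eqref{sc-rh} cancel the residual $\det\Theta^{(0,\mathsf{T}_a^+\mathbf{k})}$. After collapsing all $\kappa^{\pm1}$ and $e^{\pm i\mathsf{y}\eta}$ exponentials, the $a$-th term acquires exactly the prefactor $\kappa^{-1}e^{i\mathsf{y}(\xi_a-\lambda)}\mathsc{a}_{\mathsf{x},\mathsf{y}}(\xi_a)\mathsf{q}_{\bar{\mathsf{t}},a}^{(0)}\mathsf{q}_{\bar{\mathsf{t}}',a}^{(1)}$, which is $\kappa^{-1}e^{-i\mathsf{y}\lambda}$ times the $(a,\mathsf{N}+1)$-entry of $\mathcal{S}_{\bar{\mathsf{t}},\bar{\mathsf{t}}'}(\xi_n^{(\epsilon)})$; the $e^{-i\mathsf{y}\lambda}$ will be absorbed into the last row.

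The key step is then to identify the remaining sum over $\mathbf{k}$ with $k_a=0$ as the $(a,\mathsf{N}+1)$-cofactor of $\mathcal{S}_{\bar{\mathsf{t}},\bar{\mathsf{t}}'}(\lambda)$. To this end I would use the explicit factorisation \eqref{id-det} to cancel the $\theta(t_{0,\mathbf{k}})$-denominator from the coefficient in \eqref{C-SOV_D-left}, so that the combination $\theta(t_{0,\mathbf{k}}-\lambda+\xi_a)\prod_{i<j}\theta(\xi_i^{(k_i)}-\xi_j^{(k_j)})$ together with the product $\prod_{b\neq a}\theta(\lambda-\xi_b^{(k_b)})/\theta(\xi_a-\xi_b^{(k_b)})$ is re-interpreted, via \eqref{id-det} applied in reverse, as the determinant of the matrix obtained from $\Theta^{(0,\mathbf{k})}$ by replacing its $a$-th row with $-e^{-i\mathsf{y}\lambda}\vartheta_{b-1}(\lambda-\bar{\xi}_{0})$. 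Since the remaining rows $b\neq a$ of this matrix depend on $k_b$ only through the separable weight $(Y_b\mathsf{q}_{\bar{\mathsf{t}},b}\mathsf{q}_{\bar{\mathsf{t}}',b})^{k_b}\vartheta_{j-1}(\xi_b^{(k_b)}-\bar{\xi}_0)$ (where $Y_b=e^{i\mathsf{y}\eta}\mathsc{a}_{\mathsf{x},\mathsf{y}}(\xi_b)/\mathsc{d}(\xi_b-\eta)$), the multilinearity of the determinant lets me sum over each $k_b\in\{0,1\}$ inside the corresponding row, reconstructing precisely the rows $[\mathcal{F}_{\bar{\mathsf{t}},\bar{\mathsf{t}}'}]_{b,*}$ of \eqref{mat-F}. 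Summing the $a$-indexed cofactors with signs $(-1)^{a+\mathsf{N}+1}$ is then the Laplace expansion of $\det_{\mathsf{N}+1}\mathcal{S}_{\bar{\mathsf{t}},\bar{\mathsf{t}}'}(\lambda)$ along its last column, giving the first identity of \eqref{mat-BC}. The analogous identity for $\mathcal{B}$ follows by the same argument with \eqref{B-SOV_D-left}: the action of $\mathcal{B}$ shifts $k_a$ downward rather than upward, which swaps $\mathsf{q}_{\bar{\mathsf{t}},a}^{(0)}\mathsf{q}_{\bar{\mathsf{t}}',a}^{(1)}\leftrightarrow \mathsf{q}_{\bar{\mathsf{t}},a}^{(1)}\mathsf{q}_{\bar{\mathsf{t}}',a}^{(0)}$ in the $(\mathsf{N}+1)$-th column (equivalently, exchanges $\bar{\mathsf{t}}\leftrightarrow\bar{\mathsf{t}}'$ in $\mathcal{S}$), and the overall $\kappa$-power flips because the left state contributes $\kappa^{+k_a}$ and the right state $\kappa^{-(k_a-1)}$ instead of $\kappa^{-(k_a+1)}$.

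The main obstacle will be the recognition step: the factor $\theta(t_{0,\mathbf{k}}-\lambda+\xi_a)$ depends on $\mathbf{k}$ through $\mathsf{s}_{\mathbf{k}}=\sum_b(1-2k_b)$ and therefore couples all rows $b\neq a$ before multilinearity can be invoked; disentangling this coupling is exactly what \eqref{id-det}, read as a dictionary between theta-function products and determinants in the basis $\{\vartheta_{j-1}(\,\cdot\,-\bar{\xi}_0)\}_{j=1}^{\mathsf{N}}$, achieves. A cleaner route, which I would pursue in parallel as a consistency check, is to observe that both sides of \eqref{mat-BC}, regarded as functions of $\lambda$ inserted in place of $\xi_n^{(\epsilon)}$ in the last row of $\mathcal{S}$, are theta functions of $\lambda$ of order $\mathsf{N}$ with the same quasi-periodicity (the left-hand side by the quasi-periodicity of $\widetilde{\mathcal{C}}$ and $\widetilde{\mathcal{B}}$ from Lemma~\ref{lem-period}, the right-hand side because its last row is spanned by $-e^{-i\mathsf{y}\lambda}\vartheta_{b-1}(\lambda-\bar{\xi}_0)$). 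The identity then reduces to a check at the $\mathsf{N}$ points $\lambda=\xi_n-\eta$, at which the simple right-action \eqref{act-C} applies cleanly to the $h_n=1$ sector of $|\Psi_{\bar{\mathsf{t}}'}^{(\kappa)}\rangle$ while the $h_n=0$ sector contributes the remaining cofactors through the same sum-to-determinant reorganisation.
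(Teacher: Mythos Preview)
Your proposal is correct and follows essentially the same route as the paper: act with $\mathcal{C}$ (resp.\ $\mathcal{B}$) on the SOV expansion of the left eigenstate via \eqref{C-SOV_D-left} (resp.\ \eqref{B-SOV_D-left}), use the determinant identity \eqref{det-thetaj}/\eqref{id-det} to turn the $\theta(t_{0,\mathbf{h}}-\lambda+\xi_a)\prod_{b\neq a}\theta(\lambda-\xi_b^{(h_b)})/\theta(\xi_a-\xi_b^{(h_b)})$ factor into the determinant of $\Theta^{(0,\mathbf{h})}$ with its $a$-th row replaced by $\vartheta_{b-1}(\lambda-\bar\xi_0)$, and then recognize the resulting double sum as the Laplace expansion of $\det_{\mathsf{N}+1}\mathcal{S}_{\bar{\mathsf{t}},\bar{\mathsf{t}}'}(\lambda)$ along its last column. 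Your insertion of $\mathbb{I}_0$ is redundant (the left state \eqref{eigenT-l} is already in the SOV basis), and your alternative interpolation argument is an extra check the paper does not include, but neither point affects correctness.
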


\begin{proof}
From \eqref{eigenT-l}, \eqref{C-SOV_D-left}, one can easily compute the
action of $\mathcal{C}(\xi _{n}^{(\epsilon )})$, $\epsilon =0,1$, on a left $\overline{\mathcal{T}}^{(\kappa )}$-eigenstate: 
\begin{multline}
    \langle \Psi_{\bar{\mathsf{t}}} ^{(\kappa)} | \,\mathcal{C}(\xi _{n}^{(\epsilon)})
    =\sum_{a=1}^{\mathsf{N}}
    \sum_{\substack{\mathbf{h}\in \{0,1\}^{\mathsf{N}} \\ h_a=0}}
    e^{i\mathsf{y}(\xi _{a}^{(h_{a})}-\xi _{n}^{(\epsilon )})}\,
    \frac{\theta (t_{0,\mathbf{h}}-\xi _{n}^{(\epsilon )}+\xi _{a}^{(h_{a})})}{\theta (t_{0,\mathbf{h}})}\,
    \prod_{b\neq a}
    \frac{\theta (\xi _{n}^{(\epsilon )}-\xi_{b}^{(h_{b})})}{\theta (\xi _{a}^{(h_{a})}-\xi _{b}^{(h_{b})})}
     \\
\times 
    \mathsc{d}(\xi _{a}^{(1-h_{a})})\,
    \prod_{b=1}^{\mathsf{N}}\left( 
    e^{i\mathsf{y}\eta h_{b}}\,\kappa ^{h_{b}}\,\mathsf{q}_{\bar{\mathsf{t}},b}^{(h_{b})}\right) \,
    \det_{\mathsf{N}}\big[\Theta ^{(0,\mathbf{h})}\big]\,
    \langle 0,\mathsf{T}_{a}^{+}\mathbf{h}|\,.
\end{multline}
Then, using \eqref{eigenT-r} and \eqref{def-theta_j}, one gets 
\begin{multline}
    \langle \Psi_{\bar{\mathsf{t}}} ^{(\kappa)} | \,\mathcal{C}(\xi _{n}^{(\epsilon )})\,
    |\Psi_{\bar{\mathsf{t}}'}^{ (\kappa) }\rangle
     =\kappa ^{-1}
     \sum_{a=1}^{\mathsf{N}}
      \sum_{\substack{\mathbf{h}\in \{0,1\}^{\mathsf{N}} \\ h_a=0}}
     e^{i\mathsf{y}(\xi _{a}-\xi _{n}^{(\epsilon)})}\,
     \mathsc{a}(\xi _{a})\,
     \mathsf{q}_{\bar{\mathsf{t}},a}^{(0)}\,\mathsf{q}_{\bar{\mathsf{t}}^{\prime },a}^{(1)}\, 
     \\
\times 
     \prod_{b\neq a}^{\mathsf{N}}\left[ \left( 
     \frac{e^{i\mathsf{y}\eta }\,\mathsc{a}_\mathsf{x,y}(\xi _{b})}
            {\mathsc{d}(\xi _{b}-\eta)}\right) ^{\!h_b}\,
            \mathsf{q}_{\bar{\mathsf{t}},b}^{(h_{b})}\,\mathsf{q}_{\bar{\mathsf{t}}^{\prime },b}^{(h_{b})}
            \right]\,
            (-1)^{\mathsf{N}+a}\,
            \det_{\mathsf{N}}\Big[\Theta _{\lbrack \widehat{a},\mathsf{N}]}^{(0,\mathbf{h})}(\xi _{n}^{(\epsilon )})\Big]\,. 
             \label{sum}
\end{multline}
In this expression, 
the $\mathsf{N}\times \mathsf{N}$ matrix $\Theta _{\lbrack \widehat{a},\mathsf{N}]}^{(0,\mathbf{h})}(\xi _{n}^{(\epsilon )})$ is obtained from $\Theta ^{(0,\mathbf{h})}$ by eliminating the $a$-th row (containing the elements $\vartheta_{b-1}(\xi _{a}^{(h_{a})}-\bar{\xi}_{0})$, $1\leq b\leq \mathsf{N}$ ) and
inserting a new row at position $\mathsf{N}$ with elements $\vartheta_{b-1}(\xi _{n}^{(\epsilon )}-\bar{\xi}_{0})$, $1\leq b\leq \mathsf{N}$.
Indeed, it follows from \eqref{det-thetaj} that 
\begin{equation}
      (-1)^{\mathsf{N}+a}\,
      \det_{\mathsf{N}}\Big[\Theta _{\lbrack \widehat{a},\mathsf{N}]}^{(0,\mathbf{h})}(\xi _{n}^{(\epsilon )})\Big]
      =\frac{\theta(t_{0,\mathbf{h}}-\xi _{n}^{(\epsilon )}+\xi _{a}^{(h_{a})})}
                {\theta (t_{0,\mathbf{h}})}\,
      \prod_{b\neq a}\frac{\theta (\xi _{n}^{(\epsilon )}-\xi_{b}^{(h_{b})})}
                                          {\theta (\xi _{a}^{(h_{a})}-\xi _{b}^{(h_{b})})}\,
      \det_{\mathsf{N}}\big[\Theta ^{(0,\mathbf{h})}\big].
\end{equation}
It remains to notice that the sum in \eqref{sum} corresponds precisely to
the development of the determinant of the $(\mathsf{N}+1)\times (\mathsf{N}+1)$ matrix $\mathcal{S}_{\bar{\mathsf{t}},\bar{\mathsf{t}}^{\prime }}(\xi _{n}^{(\epsilon )})$ w.r.t. the column $\mathsf{N}+1$.
The proof for the other formula in \eqref{mat-BC} is  similar.
\end{proof}

\begin{theorem}\label{th-LH}
 The matrix elements of local heights operators $\widehat{\delta}_s^{(n)}$ fixing the value of the height $s\in\big\{t_0+\eta\tilde{s};\tilde{s}\in\{0,1,\ldots,\mathsf{N}\}\big\}$ at a given site $n$ between generic $\langle \Psi_{\bar{\mathsf{t}}} ^{(\kappa)} | $\ and $|\Psi_{\bar{\mathsf{t}}'}^{ (\kappa) }\rangle $ left and right eigenstates of  $\mathcal{\overline{T}}^{(\kappa)}(\lambda  )$ on $\mathbb{\bar{D}}_{\mathsf{(6VD)},\mathsf{N}}^{(0,\mathcal{L}/\mathcal{R})}$ can be written as the following sum of $\mathsf{N}+1$ determinants:
 \begin{equation}\label{ff-LH}
  \langle \Psi_{\bar{\mathsf{t}}} ^{(\kappa)} | \, \widehat{\delta}_s^{(n)}\, |\Psi_{\bar{\mathsf{t}}'}^{ (\kappa) }\rangle
 =\prod_{b=1}^{n-1}\frac{\bar{\mathsf{t}}(\xi _{b})}{\bar{\mathsf{t}}^{\prime}(\xi _{b})}\,
   \frac{1}{\mathsf{N}+1} \sum_{j=0}^\mathsf{N} e^{-2\pi i\frac{j(s-t_{0})}{\eta(\mathsf{N}+1)}}\,
   \det_{\mathsf{N}}\big[\widetilde{\mathcal{F}}_{\bar{\mathsf{t}},\bar{\mathsf{t}}^{\prime } }^{(j)}\big],
 \end{equation}
 where $\widetilde{\mathcal{F}}_{\bar{\mathsf{t}},\bar{\mathsf{t}}^{\prime } }^{(j)}$ is the $\mathsf{N}\times\mathsf{N}$ matrix with elements
\begin{equation}\label{mat-LH}
\big[\widetilde{\mathcal{F}}_{\bar{\mathsf{t}},\bar{\mathsf{t}}^{\prime } }^{(j)}\big]_{a,b}
=
\sum_{h=0}^{1}  e^{2\pi i \frac{j\, h}{\mathsf{N}+1}} 
\left( e^{i\mathsf{y}\eta }\,
        \frac{\mathsc{a}_\mathsf{x,y}(\xi_{a})}{\mathsc{d}(\xi_{a}-\eta)}\right)^{\! h}
        \mathsf{q}_{\bar{\mathsf{t}},a}^{(h)}\,
        \mathsf{q}_{\bar{\mathsf{t}}^{\prime },a}^{(h)}\ 
        \vartheta _{b-1}(\xi _{a}^{(h)}-\bar{\xi}_{0}).
\end{equation}
\end{theorem}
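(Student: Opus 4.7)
The plan is to combine the inverse-problem reconstruction \eqref{IP3} of the local height operator with a direct SOV computation of the dynamical projector $\delta_s(\tau)$. Applying \eqref{IP3} to a left/right pair of eigenstates of $\overline{\mathcal{T}}^{(\kappa)}(\lambda)$, the surrounding transfer matrices contribute only the numerical factor $\prod_{b=1}^{n-1}\bar{\mathsf{t}}(\xi_b)/\bar{\mathsf{t}}'(\xi_b)$, so that the problem reduces to the evaluation of $\langle \Psi_{\bar{\mathsf{t}}}^{(\kappa)} | \delta_s(\tau) | \Psi_{\bar{\mathsf{t}}'}^{(\kappa)}\rangle$.

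Since $\tau|\mathbf{h},0\rangle=t_{0,\mathbf{h}}|\mathbf{h},0\rangle$ with $t_{0,\mathbf{h}}=t_0+\eta\sum_k h_k$, the operator $\delta_s(\tau)$ is diagonal in the SOV basis and its action is to select the $\mathsf{N}$-tuples $\mathbf{h}$ satisfying $\sum_k h_k=\tilde{s}$ with $\tilde{s}\equiv (s-t_0)/\eta\in\{0,\ldots,\mathsf{N}\}$. Expanding both eigenstates via \eqref{eigenT-l}-\eqref{eigenT-r} and using the SOV norm \eqref{sc-rh}, the powers of $\kappa$ cancel pairwise while the factors $e^{\pm i\mathsf{y}\eta h_a}$ recombine with the norm factor $\langle 0,\mathbf{h}|\mathbf{h},0\rangle$ to produce
\begin{equation*}
\langle \Psi_{\bar{\mathsf{t}}}^{(\kappa)} | \delta_s(\tau) | \Psi_{\bar{\mathsf{t}}'}^{(\kappa)}\rangle
=\sum_{\substack{\mathbf{h}\in\{0,1\}^\mathsf{N}\\ \sum_k h_k=\tilde{s}}}\prod_{a=1}^\mathsf{N}\bigg[\Big(e^{i\mathsf{y}\eta}\tfrac{\mathsc{a}_{\mathsf{x,y}}(\xi_a)}{\mathsc{d}(\xi_a-\eta)}\Big)^{h_a}\mathsf{q}_{\bar{\mathsf{t}},a}^{(h_a)}\mathsf{q}_{\bar{\mathsf{t}}',a}^{(h_a)}\bigg]\det_\mathsf{N}\big[\Theta^{(0,\mathbf{h})}\big],
\end{equation*}
i.e. exactly the sum whose unconstrained version yields $\det_\mathsf{N}[\mathcal{F}_{\bar{\mathsf{t}},\bar{\mathsf{t}}'}]$ in \eqref{sp-eigen}-\eqref{mat-F}, but now restricted by the height constraint.

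To turn this constrained sum back into a determinant, I would next implement the constraint by means of the discrete Fourier identity
\begin{equation*}
\delta_{\sum_k h_k,\,\tilde{s}}=\frac{1}{\mathsf{N}+1}\sum_{j=0}^\mathsf{N} e^{2\pi i\frac{j(\sum_k h_k-\tilde{s})}{\mathsf{N}+1}},
\end{equation*}
which is valid precisely because both $\sum_k h_k$ and $\tilde{s}$ take values in $\{0,1,\ldots,\mathsf{N}\}$. After exchanging the two finite sums, the exponential weight $e^{2\pi i j h_a/(\mathsf{N}+1)}$ attaches independently to row $a$ of $\Theta^{(0,\mathbf{h})}$. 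By the same multilinearity-of-determinants argument that identifies the unconstrained sum with $\det_\mathsf{N}[\mathcal{F}_{\bar{\mathsf{t}},\bar{\mathsf{t}}'}]$, the inner sum over $\mathbf{h}$ collapses to $\det_\mathsf{N}[\widetilde{\mathcal{F}}^{(j)}_{\bar{\mathsf{t}},\bar{\mathsf{t}}'}]$ with entries \eqref{mat-LH}, while the overall phase rewrites as $e^{-2\pi i j(s-t_0)/[\eta(\mathsf{N}+1)]}$, yielding \eqref{ff-LH}.

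The only genuinely new ingredient compared with the proof of the scalar product formula \eqref{sp-eigen} and with Theorem~\ref{th-LS} is the Fourier resolution of the height constraint; the main technical care lies in the bookkeeping of the $\kappa$, $e^{\pm i\mathsf{y}\eta h_a}$ and $\det[\Theta^{(0,\mathbf{h})}]^{\pm 1}$ factors coming from the two eigenstate expansions and from the SOV norm \eqref{sc-rh}, so that they combine into the clean weight appearing in \eqref{mat-LH}.
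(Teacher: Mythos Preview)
Your proposal is correct and follows essentially the same approach as the paper's own proof: apply \eqref{IP3} to extract the transfer-matrix prefactor, expand both eigenstates in the SOV basis via \eqref{eigenT-r}--\eqref{eigenT-l} and \eqref{sc-rh} to obtain the constrained sum over $\mathbf{h}$, then resolve the constraint $\sum_k h_k=\tilde{s}$ by the discrete Fourier identity on $\{0,\ldots,\mathsf{N}\}$ so that the $h_a$-dependent phases factor row-wise and the sum collapses into $\det_\mathsf{N}[\widetilde{\mathcal{F}}^{(j)}_{\bar{\mathsf{t}},\bar{\mathsf{t}}'}]$. The paper's argument is literally the same, only keeping the notation $t_{0,\mathbf{h}}$ rather than passing to $\tilde{s}$.
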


\begin{proof}
Using \eqref{IP3}, \eqref{eigenT-r}, \eqref{eigenT-l} and \eqref{sc-rh}, one obtains
 \begin{equation}
  \langle \Psi_{\bar{\mathsf{t}}} ^{(\kappa)} | \, \widehat{\delta}_s^{(n)}\, |\Psi_{\bar{\mathsf{t}}'}^{ (\kappa) }\rangle 
 =\prod_{b=1}^{n-1}\frac{\bar{\mathsf{t}}(\xi _{b})}{\bar{\mathsf{t}}^{\prime}(\xi _{b})}\,
 \sum_{\mathbf{h}\in\{0,1\}^\mathsf{N}} \delta_{s,t_{0,\mathbf{h}}}
 \prod_{a=1}^\mathsf{N}\left[
 \left( e^{i\mathsf{y}\eta }\,
        \frac{\mathsc{a}_\mathsf{x,y}(\xi_{a})}{\mathsc{d}(\xi_{a}-\eta)}\right)^{\! h_a}
        \mathsf{q}_{\bar{\mathsf{t}},a}^{(h_a)}\,
        \mathsf{q}_{\bar{\mathsf{t}}^{\prime },a}^{(h_a)}
        \right]\,
\det_{\mathsf{N}}\big[\Theta^{(0, \mathbf{h}) }\big].
 \end{equation}
 Noticing that $t_{0,\mathbf{h}}\in\big\{t_0+\eta k; k\in\{0,1,\ldots,\mathsf{N}\}\big\}$, one can rewrite $\delta_{s,t_{0,\mathbf{h}}}$ as
 \begin{equation}
  \delta_{s,t_{0,\mathbf{h}}}
  =\frac{1}{\mathsf{N}+1} \sum_{j=0}^\mathsf{N} e^{-2\pi i\frac{j(s-t_{0,\mathbf{h}})}{\eta(\mathsf{N}+1)}}
  =\frac{1}{\mathsf{N}+1} \sum_{j=0}^\mathsf{N} e^{-2\pi i\frac{j(s-t_{0,\mathbf{0}})}{\eta(\mathsf{N}+1)}}
     \prod_{a=1}^\mathsf{N}e^{\frac{2\pi i j h_a}{\mathsf{N}+1}},
 \end{equation}
which leads to \eqref{ff-LH}-\eqref{mat-LH}.
\end{proof}

To conclude this section, let us briefly comment about these results. Although the matrix elements of local spin operators can quite straightforwardly be expressed in terms of a single determinant as in Theorem~\ref{th-LS}, hence generalizing the simpler (non-dynamical) six-vertex case \cite{Nic13}, the situation seems slightly more complicated for the matrix elements of local height operators. The latter can nevertheless be expressed as a sum of determinants as in Theorem~\ref{th-LH}. The number of terms of this sum being related to the size $\mathsf{N}$ of the model, this may (or may not, depending on the behavior of the different terms) be a problem for the study of the thermodynamic limit. Note however that our study concerns the {\em unrestricted} SOS model, for which the heights $s$ are {\em a priori} allowed to take an infinite number of values\footnote{In fact, due to the antiperiodic boundary conditions that we consider here, the heights of the model are only allowed to take $\mathsf{N}+1$ values. The consideration of such boundary conditions indeed reduces the actual space of states of the model to the finite-dimensional subspace $\mathbb{\bar{D}}_{\mathsf{(6VD)},\mathsf{N}}^{(0,\mathcal{L}/\mathcal{R})}$ of the whole representation space $\mathbb{D}_{\mathsf{(6VD)},\mathsf{N}}^{\mathcal{L}/\mathcal{R}}$.}. In the literature, one usually consider {\em restricted} models, such as the ABF model \cite{AndBF84} or the CSOS model \cite{KunY88,PeaS88}, for which the crossing parameter $\eta$ of the model is rational ($\eta=r/L$) and the heights $s$ are only allowed to take a finite number ($L$) of values. Our present study does not directly apply to the ABF case (which corresponds to the case for which there may be poles in \eqref{def-abc}) but could easily be adapted to the study of the CSOS case. In the latter case, the matrix elements of local height probabilities would be reduced to the sum of only $L$ terms. Although the structure of the determinants at stake are {\em a priori} quite different, the situation is somehow similar to what happens for the periodic model, which can be studied by means of ABA: matrix elements of local spin operators between Bethe eigenstates of the transfer matrix can be expressed (at least in the case of the CSOS model) as a single determinant \cite{LevT13a}, but it seems that matrix elements of local height operators, related to local height probabilities, can only be expressed as sums of determinants \cite{LevT14a}.

\section{Conclusion}

We have here studied the antiperiodic dynamical 6-vertex model in the SOV framework, for different configurations of the representation space corresponding to the different possible values of the global shift $t_0$ of the heights of the model by half of the periods of the theta function, associated with a couple of parameters $\mathsf{(x,y)}\in\{0,1\}^2$.
We have diagonalized the corresponding antiperiodic transfer matrix, hence obtaining a complete characterization of all eigenvalues and eigenstates in terms of a system of discrete equations involving the inhomogeneity parameters of the model. We have discussed the rewriting of this characterization in terms of functional equations of Baxter's type, and notably in terms of certain classes of solutions of the usual homogeneous function $T$-$Q$ equations. We have also obtained determinant representations for the form factors of the model.

Several interesting problems remain to be solved. For instance, we have shown the complete equivalence between the SOV discrete characterization of the spectrum and eigenstates and the reformulation in terms of solutions of the homogeneous $T$-$Q$ equation (i.e. in terms of Bethe-type equations) in the case of an even number of sites only. We plan to consider the case of an odd number of sites in a further study. We also expect to be able to use this reformulation so as to consider the homogeneous and thermodynamic limit of the form factors formulas that we have obtained here, similarly as what has been recently done in the XXX case \cite{KitMNT15}.

Finally, we would like to mention that our results can be used to study, in the SOV framework, the XYZ (or eight-vertex) model with various types of quasi-periodic boundary conditions (related to the values of $\mathsf{x}$ and $\mathsf{y}$). This interesting problem will be the subject of the paper \cite{NicT15b}.

\section*{Acknowledgements}

G.N. and V.T. are supported by CNRS.
We also acknowledge  the support from the ANR grant DIADEMS 10 BLAN 012004.

\appendix

\section{Theta functions, elliptic polynomials and useful identities}
\label{app-theta}

In this paper, $\theta(\lambda)\equiv\theta_1(\lambda|\omega)$ denotes the usual theta-function \cite{GraR07L,WhiW27L} with quasi-periods $\pi$ and $\pi\omega$ ($\Im\omega>0$),
\begin{align}\label{theta1}
  \theta (z)
  &=-i\sum_{k=-\infty}^{\infty} (-1)^k e^{i\pi\omega (k+\frac12)^2} e^{2i (k+\frac12)z},
  \\
  &= 2 e^{i\pi\frac{\omega}{4}}\,\sin z\, \prod_{n=1}^\infty\big(1-e^{2i(n\pi\omega-z)}\big)
       \big(1-e^{2i(n\pi\omega+z)}\big)\big(1-e^{2in\pi\omega}\big),
\end{align}
which satisfies
\begin{equation}\label{periods}
   \theta(z+\pi )=-\theta (z), \qquad
   \theta(z+\pi\omega)= -e^{-i\pi\omega}\, e^{-2 i z}\, \theta(z).
\end{equation}

Throughout the paper, we use the following terminology \cite{FelS99,Ros09,PakRS08}.

Let $\Gamma\equiv \Gamma^{(\pi,\pi\omega)}=\pi\mathbb{Z}+\pi\omega\mathbb{Z}$. Let $\chi:\Gamma\to\mathbb{C}^\times$ be a group homomorphism. We say that a function $f$ of $z\in\mathbb{C}$ is a theta function of quasi-periods $(\pi,\pi\omega)$, of order $n$ and character $\chi$ if $f$ is a holomorphic function satisfying the quasi-periodicity properties
\begin{equation}
  f(z+\pi)=\chi(\pi )\, f(z), \qquad f(z+\pi\omega)= \chi(\pi\omega)\, e^{-in(2z+\pi\omega)}\, f(z).
\end{equation}
We say that $f$ is a theta function (or an elliptic polynomial) of quasi-periods $(\pi,\pi\omega)$, of order $n$ and norm $\alpha_f$ if $f$ is a theta function of order $n$ and character given by $\chi(\pi)=(-1)^n$ and $\chi(\pi\omega)=(-1)^n\,e^{2i\alpha_f}$. This is equivalent to the fact that there exist constants $\lambda_1,\ldots,\lambda_n$ and $C$ with $\lambda_1+\ldots+\lambda_n=\alpha_f$ such that
\begin{equation}\label{theta-n}
f(z)=C\prod_{k=1}^n\theta(z-\lambda_k).
\end{equation}
%

We have the following properties (see for instance \cite{FelS99,PakRS08}):
\begin{enumerate}
  \item Let $\Theta^{(\pi,\pi\omega)}_{n,\alpha}$ be the space of theta functions of quasi-periods $(\pi,\pi\omega)$, of order $n\in\mathbb{N}$ and of norm $\alpha$. Then $\dim \Theta_{n,\alpha}^{(\pi,\pi\omega)}=n$.
  \item Let $f,g\in\Theta_{n,\alpha}^{(\pi,\pi\omega)}$ which coincide at $n$ points $x_1,\ldots,x_n\in\mathbb{C}$: $f(x_i)=g(x_j)$, $1\le j \le n$. If $x_1,\ldots,x_n$ are independent (i.e. if $x_i-x_j\notin\Gamma$ and $\sum_{j=1}^nx_j-\alpha\notin\Gamma$) then $f=g$.
  It means that there exists  a unique theta function (elliptic polynomial) of quasi-periods $(\pi,\pi\omega)$, of order $n$ and of norm $\alpha$ with values $f(x_1),\ldots,f(x_n)$ at the respective independent points $x_1,\ldots,x_n$. It is given by the following interpolation formula:
\begin{equation}\label{interpolation}
   f(\lambda)=\sum_{j=1}^n \frac{\theta(\alpha-\sum_{k=1}^n x_k +x_j-\lambda)}{\theta(\alpha-\sum_{k=1}^n x_k)} \prod_{\substack{k=1 \\ k\not=j}}^n \frac{\theta(\lambda-x_k)}{\theta(x_j-x_k)}\, f(x_j).
\end{equation}  

  \item Let $\{\vartheta_j\}_{1\le j\le n}$ be a basis of $\Theta_{n,\alpha}^{(\pi,\pi\omega)}$. Then, for any $(x_1,\ldots,x_n)\in\mathbb{C}^n$, the determinant of the matrix $(\vartheta_j(x_i))_{1\le i,j,\le n}$ is of the form
  \begin{equation}\label{det-thetaj}
  \det_{1\le i,j\le n}\big[\vartheta_j(x_i)\big]= C\cdot \theta\Big(\sum_{l=1}^n x_l-\alpha\Big)\cdot \prod_{i<j}\theta(x_i-x_j),
  \end{equation}
  where $C$ is some constant.
\end{enumerate}

We also recall Frobenius determinant formula, for any $n$-tuples $(x_1,\ldots,x_n), (y_1,\ldots, y_n)\in\mathbb{C}^n$ (with $x_i - y_j\notin\Gamma$, $\forall i,j$) and any $t\in\mathbb{C}$ (with $t\notin\Gamma$):
\begin{equation}\label{Frob-det}
   \det_{1\le i,j\le n}\left[\frac{\theta(x_i-y_j+t)}{\theta(x_i-y_j)\,\theta(t)}\right]
   =\frac{\theta\big(\sum_{j=1}^n(x_j-y_j)+t\big)}{\theta(t)}
    \frac{\prod_{1\le i<j\le n} \theta(x_i-x_j)\,\theta(y_j-y_i)}
                                              {\prod_{i,j=1}^n \theta(x_i-y_j)} .
\end{equation}
%

\section{Inhomogeneous Baxter equation as reformulation of SOV spectrum}
\label{app-inhom}

In this appendix we explain how one can show the equivalence of the SOV discrete characterization of the spectrum of Theorem~\ref{thm-eigen-t} with the description in terms of elliptic polynomial solutions, with quasi-periods $(\pi,\pi\omega)$, of some particular functional $T$-$Q$ equations with an extra inhomogenous term.

As explained in Section~\ref{sec-hom-inhom}, one can modify the functional equation with respect to \eqref{hom-eq} so as to force this equation to admit  elliptic polynomial solutions with quasi-periods $(\pi,\pi\omega)$.
This means introducing some gauge transformation as in \eqref{def-ad}, so as to adjust the quasi-periodicity properties of the three terms $\bar{\mathsf{t}}(\lambda )\,Q(\lambda )$, $\bar{\mathsc{a}}(\lambda)\, Q(\lambda -\eta )$ and $\bar{\mathsc{d}}(\lambda)\,Q(\lambda +\eta )$ for $\bar{\mathsf{t}}(\lambda )$ satisfying \eqref{periodt-1}-\eqref{periodt-2}.
A possible (and somehow minimal) way to do it is to choose the function $f(\lambda)\equiv f_\mu^{(\beta)}(\lambda)$ as in \eqref{def-f-Q} in terms of two parameters $\beta$ and $\mu$.
Of course, the sum of the aforementioned three terms does not in general cancel, so that one should also add to the equation an inhomogeneous term as in \eqref{inhom}. As explained in Section~\ref{sec-hom-inhom}, this still enables ones to recover the condition \eqref{syst-t} as long as this inhomogeneous term cancels at all points $\xi_n^{(h_n)}$, $n\in\{1,\ldots,\mathsf{N}\}$, i.e. contains the factor $\mathsc{a}(\lambda)\,\mathsc{d}(\lambda)$.


For the class of $Q(\lambda)$ of the form \eqref{def-f-Q} with $\mathsf{M}=\mathsf{N}$, the function $F(\lambda)\equiv F^{(\beta)}_{\mu,Q}(\lambda)$ appearing in the inhomogeneous equation \eqref{inhom} is defined in terms of $\beta$, $\mu$ and $Q(\lambda)$ as
\begin{multline}\label{F-Q}
  F^{(\beta)}_{\mu,Q}(\lambda)
  =
      \frac{\beta^{-1} (-1)^{\mathsf{x}+\mathsf{y}+\mathsf{x}\mathsf{y}}\,e^{-i\mathsf{y}\lambda}\,\theta(t_{0,\mathbf{0}})}{\theta(t_{0,\mathbf{0}}+\alpha_Q-\sum_k\xi_k+\mathsf{N}\eta)}
      \frac{Q(\mu-\eta-t_{0,\mathbf{0}})}{\mathsc{d}(\mu-t_{0,\mathbf{0}})}
      \frac{\theta(\lambda-\mu-\alpha_Q+\sum_k\xi_k-\mathsf{N}\eta)}{\theta(\lambda-\mu+t_{0,\mathbf{0}})}
      \\
      +
      \frac{\beta\,e^{i\mathsf{y}(\lambda+\eta)} \,\theta(t_{0,\mathbf{0}})}{\theta(\mathsf{y}\pi\omega-t_{0,\mathbf{0}}-\alpha_Q+\sum_k\xi_k-\mathsf{N}\eta)}
      \frac{Q(\mu)}{\mathsc{a}(\mu-\eta)}
      \frac{\theta(\lambda-\mu+\eta+\mathsf{y}\pi\omega-t_{\mathbf{0},0}-\alpha_Q+\sum_k\xi_k-\mathsf{N}\eta)}{\theta(\lambda-\mu+\eta)}
\end{multline}
with $\alpha_Q\equiv\sum_{j=1}^\mathsf{N}\lambda_j$ being the norm of the theta function   $Q(\lambda)$ of order $\mathsf{N}$.
Note that \eqref{inhom}-\eqref{F-Q} for $\mathsf{M}=\mathsf{N}$ can be seen as an elliptic generalization of the trigonometric inhomogeneous functional equation that was obtained in \cite{NicT15}. Indeed, under some simple assumptions on the functional dependence of the zeros of the $\bar{\mathsf{t}}(\lambda )\in\Sigma _{\overline{\mathcal{T}}}$, when taking the XXZ limit  $\omega\to +i\infty$ (see Remark~\ref{rem-trig-lim}), one indeed recovers the equation of Theorem~4.1 of  \cite{NicT15}. The elliptic analog of Theorem~4.1 of \cite{NicT15} can then be formulated as follows:

\begin{theorem}\label{th-SOV-Baxter}
Let us suppose that the inhomogeneity parameters $\xi _{1},\ldots,\xi _{\mathsf{N}}$ satisfy \eqref{cond-inh} and let us set $\mathsf{M}=\mathsf{N}$.
Then the following two propositions are equivalent:
\begin{enumerate}
\item\label{cond1} $\bar{\mathsf{t}}(\lambda )$ is an eigenvalue function of the antiperiodic transfer matrix $\overline{\mathcal{T}}(\lambda  )$ (i.e. $\bar{\mathsf{t}}(\lambda )\in\Sigma _{\overline{\mathcal{T}}}$);
\item\label{cond2} $\bar{\mathsf{t}}(\lambda)$ is an entire function of $\lambda$ and, for some $\beta\in\mathbb{C}\setminus\{0\}$, there exists a function $Q(\lambda)$ of the form \eqref{def-f-Q} such that $\big(Q(\xi_j), Q(\xi_j-\eta)\big)\not=(0,0)$, $1\le j\le \mathsf{N}$, and that $\bar{\mathsf{t}}(\lambda )$ and $Q(\lambda)$ satisfy the inhomogeneous functional equation \eqref{inhom}-\eqref{def-f-Q}-\eqref{F-Q}.
%
\newcounter{enumTemp}
    \setcounter{enumTemp}{\theenumi}
\end{enumerate}
If $\eta\in\mathbb{C}\setminus\mathbb{R}$, these propositions are also equivalent to:
\begin{enumerate}
\setcounter{enumi}{\theenumTemp}
\item\label{cond3} $\bar{\mathsf{t}}(\lambda)$ is an entire function of $\lambda$ and, for any $\beta\in\mathbb{C}\setminus\{0\}$, there exists a function $Q(\lambda)$ of the form \eqref{def-f-Q} such that $\big(Q(\xi_j), Q(\xi_j-\eta)\big)\not=(0,0)$, $1\le j\le \mathsf{N}$, and that $\bar{\mathsf{t}}(\lambda )$ and $Q(\lambda)$ satisfy the inhomogeneous functional equation \eqref{inhom}-\eqref{def-f-Q}-\eqref{F-Q}.
\end{enumerate}
%
\end{theorem}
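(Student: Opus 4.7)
The proof proceeds by establishing the two directions of the equivalence separately.

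\textbf{Direction (2) $\Rightarrow$ (1).} This is the direct implication. The plan is to evaluate the inhomogeneous functional equation \eqref{inhom} at the special points $\lambda=\xi_n^{(h)}$, $n=1,\ldots,\mathsf{N}$, $h\in\{0,1\}$. Under the genericity assumption on $\mu$, I first check that $F^{(\beta)}_{\mu,Q}(\lambda)$ as given by \eqref{F-Q} is regular at each $\xi_n^{(h)}$, so that the inhomogeneous term $\mathsc{a}(\lambda)\mathsc{d}(\lambda)\,F(\lambda)$ vanishes at these $2\mathsf{N}$ points. Using $\mathsc{d}(\xi_n)=\mathsc{a}(\xi_n-\eta)=0$, the evaluation yields
\[
\bar{\mathsf{t}}(\xi_n)\, Q(\xi_n) = f(\xi_n)\,\mathsc{a}_{\mathsf{x,y}}(\xi_n)\,Q(\xi_n-\eta),\qquad
\bar{\mathsf{t}}(\xi_n-\eta)\, Q(\xi_n-\eta) = \frac{\mathsc{d}(\xi_n-\eta)}{f(\xi_n)}\,Q(\xi_n).
\]
Multiplying these two relations, the gauge factor $f(\xi_n)$ cancels, and the non-degeneracy assumption $(Q(\xi_n),Q(\xi_n-\eta))\neq(0,0)$ (combined with a short contradiction argument showing that if one of these values vanishes then so does the other, violating non-degeneracy) allows me to divide by $Q(\xi_n)Q(\xi_n-\eta)$ and recover exactly the SOV discrete system \eqref{syst-t}. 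Finally, matching the quasi-periodicities of both sides of \eqref{inhom} forces the entire function $\bar{\mathsf{t}}(\lambda)$ to satisfy \eqref{periodt-1}-\eqref{periodt-2}, hence to be of the form \eqref{set-t}; by Theorem~\ref{thm-eigen-t}, this means $\bar{\mathsf{t}}(\lambda)\in\Sigma_{\overline{\mathcal{T}}}$.

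\textbf{Direction (1) $\Rightarrow$ (2).} Given $\bar{\mathsf{t}}\in\Sigma_{\overline{\mathcal{T}}}$ and a choice of $\beta\in\mathbb{C}\setminus\{0\}$ with generic $\mu$, the task is to construct $Q(\lambda)$ of the form \eqref{def-f-Q} satisfying \eqref{inhom}. The plan is to seek $Q$ inside the space $\Theta^{(\pi,\pi\omega)}_{\mathsf{N},\alpha_Q}$, treating the norm $\alpha_Q$ as an additional free parameter. Fixing a basis of this $\mathsf{N}$-dimensional space, I impose the $\mathsf{N}$ linear conditions
\[
\bar{\mathsf{t}}(\xi_n)\,Q(\xi_n) - f(\xi_n)\,\mathsc{a}_{\mathsf{x,y}}(\xi_n)\,Q(\xi_n-\eta) = 0,\qquad n=1,\ldots,\mathsf{N},
\]
which translate via \eqref{interpolation} into a homogeneous linear system $\mathcal{X}_{\bar{\mathsf{t}}}(\alpha_Q)\,\vec c=0$ of size $\mathsf{N}\times\mathsf{N}$ for the coordinates $\vec c$ of $Q$. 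A non-trivial $Q$ exists iff $\det\mathcal{X}_{\bar{\mathsf{t}}}(\alpha_Q)=0$. Following the strategy used in Lemma~\ref{lemma3}, I analyze this determinant as a function of $\alpha_Q$: its quasi-periodicity properties force it to be either identically zero or a non-constant theta-like function which must then admit zeros. Choosing $\alpha_Q$ to be such a zero produces a non-trivial $Q$. The fact that $\bar{\mathsf{t}}$ satisfies \eqref{syst-t} ensures that the companion relations $\bar{\mathsf{t}}(\xi_n-\eta)\,Q(\xi_n-\eta)=\mathsc{d}(\xi_n-\eta)\,Q(\xi_n)/f(\xi_n)$ hold automatically. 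Factorizing $Q$ in product form yields \eqref{def-f-Q}, and the non-degeneracy $(Q(\xi_n),Q(\xi_n-\eta))\neq(0,0)$ follows from \eqref{syst-t} together with the hypothesis on $\mu$.

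It then remains to verify that this $Q$ satisfies the full functional equation. Defining $F$ by \eqref{F-Q} and considering
\[
G(\lambda)\equiv\bar{\mathsf{t}}(\lambda)\,Q(\lambda)-f(\lambda)\,\mathsc{a}_{\mathsf{x,y}}(\lambda)\,Q(\lambda-\eta)-\frac{\mathsc{d}(\lambda)}{f(\lambda+\eta)}\,Q(\lambda+\eta)+\mathsc{a}(\lambda)\mathsc{d}(\lambda)\,F(\lambda),
\]
the specific form of $F$ is designed so that its two poles at $\lambda=\mu-t_{0,\mathbf{0}}$ and $\lambda=\mu-\eta$ (modulo $\Gamma$) exactly cancel the poles of the gauge-transformed terms, so that $G$ is entire. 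A quasi-periodicity analysis then shows that $G$ belongs to a space of theta functions of controlled order, and the above linear conditions combined with \eqref{syst-t} force $G$ to vanish at all $2\mathsf{N}$ points $\xi_n^{(h)}$, which are more zeros than its order allows, so $G\equiv 0$. This yields \eqref{inhom}. The equivalence with (3) in the regime $\eta\notin\mathbb{R}$ follows by observing that this hypothesis rules out the resonant values of $\beta$ for which $\det\mathcal{X}_{\bar{\mathsf{t}}}(\alpha_Q)$ could become identically zero or fail to admit a zero; hence the construction goes through for every $\beta\in\mathbb{C}\setminus\{0\}$.

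\textbf{Main obstacle.} The principal difficulty lies in the construction step of direction (1) $\Rightarrow$ (2): showing that the linear system $\mathcal{X}_{\bar{\mathsf{t}}}(\alpha_Q)\vec c=0$ admits a non-trivial solution whose associated $Q$ satisfies the non-degeneracy condition $(Q(\xi_n),Q(\xi_n-\eta))\neq(0,0)$ for every $n$. The delicate point is to simultaneously ensure that $\det\mathcal{X}_{\bar{\mathsf{t}}}(\alpha_Q)$ is not identically zero (so that the system is non-degenerate generically) but does vanish at some $\alpha_Q\in\mathbb{C}$ (so that a non-trivial $Q$ exists), and then to verify the non-degeneracy of the resulting $Q$ at the inhomogeneity parameters. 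This requires a careful quasi-periodicity analysis of $\det\mathcal{X}_{\bar{\mathsf{t}}}(\alpha_Q)$ in $\alpha_Q$, along the lines of the argument in the proof of Lemma~\ref{lemma3}, together with a precise tracking of how the restriction $\eta\notin\mathbb{R}$ (or the generic choice of $\beta$) eliminates the potentially degenerate configurations.
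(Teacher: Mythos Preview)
Your direction (2)$\Rightarrow$(1) is correct and matches the paper's argument essentially verbatim.

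The gap is in (1)$\Rightarrow$(2)/(3). Your plan to analyze $\det\mathcal{X}_{\bar{\mathsf{t}}}(\alpha_Q)$ via quasi-periodicity in $\alpha_Q$ alone, ``along the lines of Lemma~\ref{lemma3}'', does not go through: for fixed $\beta\neq 0$ the determinant is \emph{not} a theta function of $\alpha_Q$. Indeed (see \eqref{det-C-N}) it is a linear combination of terms $\theta(\sum_\ell\xi_\ell-\alpha_Q-n\eta)/\theta(\sum_\ell\xi_\ell-\alpha_Q)$ for $n=0,\ldots,\mathsf{N}$, and under $\alpha_Q\to\alpha_Q+\pi\omega$ the $n$-th term picks up an extra factor $e^{2in\eta}$. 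So the different terms transform with different characters and the Lemma~\ref{lemma3}--type argument (which relied on the determinant being an honest theta function, hence forced to vanish somewhere) breaks down. Your proposed verification of the non-degeneracy $(Q(\xi_n),Q(\xi_n-\eta))\neq(0,0)$ is also only a claim, not an argument.

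The paper proceeds differently. It treats the determinant as a function of the \emph{two} variables $(\beta,\alpha_Q)$ and observes that at $\beta=0$ it can be computed explicitly: $\det C_{\bar{\mathsf{t}}}(0,\alpha_Q)=(-1)^{\mathsf{N}}\theta(\alpha_Q-\sum_\ell\xi_\ell+\mathsf{N}\eta)/\theta(\alpha_Q-\sum_\ell\xi_\ell)$, which vanishes precisely at $\alpha_Q=\sum_\ell\xi_\ell-\mathsf{N}\eta\pmod\Gamma$, with nonzero $\alpha_Q$-derivative there. The implicit function theorem then produces, for $\beta$ in a neighborhood $U$ of $0$, a holomorphic branch $\alpha_Q(\beta)$ along which the determinant vanishes. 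Moreover at $\beta=0$ the corresponding solution $Q$ has all its roots at the points $\xi_j-\eta$, so $Q(\xi_j)\neq 0$ for every $j$; by continuity this persists on $U$. This already gives (2). For (3), the key observation you missed entirely is the similarity relation $C_{\bar{\mathsf{t}}}(\beta,\alpha_Q+\pi\omega)\sim C_{\bar{\mathsf{t}}}(\beta e^{2i\eta},\alpha_Q)$: the system for $(\beta,\alpha_Q)$ is solvable iff the one for $(\beta e^{2in\eta},\alpha_Q+n\pi\omega)$ is, for any $n\in\mathbb{Z}$. When $\eta\notin\mathbb{R}$ one has $|e^{2i\eta}|\neq 1$, so for any $\beta\neq 0$ some iterate $\beta e^{2in\eta}$ lands in $U$, and the construction transfers back. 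This is the precise mechanism by which $\eta\notin\mathbb{R}$ enters; your description of it as ``ruling out resonant values of $\beta$'' does not capture it.
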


{\it Proof of Theorem~\ref{th-SOV-Baxter}.}
Obviously {\it \ref{cond3}}. implies {\it \ref{cond2}}.

So as to prove that {\it \ref{cond2}.} implies {\it \ref{cond1}.}, let us suppose that, for some $\beta\in\mathbb{C}\setminus\{0\}$, there exists an elliptic polynomial $Q(\lambda)$ of order $\mathsf{N}$  such that 
the function $\bar{\mathsf{t}}(\lambda)$ defined as
 \begin{equation}\label{def-tQ}
   \bar{\mathsf{t}}(\lambda)\equiv
    \frac{f^{(\beta)}_\mu(\lambda)\,\mathsc{a}_{\mathsf{x},\mathsf{y}}(\lambda)\,Q(\lambda -\eta )
+\big[ f^{(\beta)}_\mu(\lambda+\eta)\big]^{-1}\, \mathsc{d}(\lambda)\,  Q(\lambda +\eta )
-\mathsc{a}(\lambda)\,\mathsc{d}(\lambda)\, F^{(\beta)}_{\mu,Q}(\lambda)}{Q(\lambda)}
 \end{equation} 
 is an entire function of $\lambda$.
 Then the function $e^{i\mathsf{y}\lambda}\,\bar{\mathsf{t}}(\lambda)$ is a theta function of order $\mathsf{N}$ and of norm $\alpha_{\bar{\mathsf{t}}}\equiv \sum_{k=1}^\mathsf{N}\xi_k+t_{0,\mathbf{0}}$. Moreover, the particularization of \eqref{def-tQ} at the $2\mathsf{N}$ points $\xi_j$ and $\xi_j-\eta$, $1\le j\le \mathsf{N}$, gives
 \begin{equation}
    Q(\xi_j)\, \bar{\mathsf{t}}(\xi_j)-f^{(\beta)}_\mu(\xi_j)\, \mathsc{a}_{\mathsf{x},\mathsf{y}}(\xi_j)\,Q(\xi_j -\eta )=0,
    \qquad
    Q(\xi_j -\eta )\,\bar{\mathsf{t}}(\xi_j-\eta)-\frac{\mathsc{d}(\xi_j-\eta)}{ f^{(\beta)}_\mu(\xi_j)}\,  Q(\xi_j )=0,
 \end{equation}
 for each $j\in\{1,\ldots,\mathsf{N}\}$ which, provided that $\big(Q(\xi_j), Q(\xi_j-\eta)\big)\not=(0,0)$,
 means that the matrix \eqref{mat-Dn} has zero determinant and therefore that $\bar{\mathsf{t}}(\lambda)$ satisfies \eqref{syst-t}. Hence $\bar{\mathsf{t}}(\lambda )\in\Sigma _{\overline{\mathcal{T}}}$.
 
Let us now prove that {\it \ref{cond1}}. implies  {\it \ref{cond2}}. and, in the case $\eta\notin\mathbb{R}$, {\it \ref{cond3}}. Let $\beta\in\mathbb{C}\setminus\{0\}$ and let $\bar{\mathsf{t}}(\lambda )\in\Sigma _{\overline{\mathcal{T}}}$. For any elliptic polynomial $Q(\lambda)$ of degree $\mathsf{N}$ and  $F^{(\beta)}_{\mu,Q}(\lambda)$ defined in terms of $Q(\lambda)$ by \eqref{F-Q}, the function
 \begin{equation}
   e^{i\mathsf{y}}\, \left\{ \bar{\mathsf{t}}(\lambda )\,Q(\lambda )
   -f^{(\beta)}_\mu(\lambda)\,\mathsc{a}_{\mathsf{x},\mathsf{y}}(\lambda)\,Q(\lambda -\eta )
-\frac{\mathsc{d}(\lambda)}{f^{(\beta)}_\mu(\lambda+\eta)}\,  Q(\lambda +\eta )
+\mathsc{a}(\lambda)\,\mathsc{d}(\lambda)\, F^{(\beta)}_{\mu,Q}(\lambda) \right\}
\end{equation}
is a theta function of order $2\mathsf{N}$ and of norm $\sum_{k=1}^\mathsf{N}\xi_k+t_{0,\mathbf{0}}+\alpha_Q$, where $\alpha_Q=\sum_{j=1}^\mathsf{N}\lambda_j$ is the sum of the roots of the elliptic polynomial $Q(\lambda)$.
Then the equation \eqref{inhom} is satisfied for $\bar{\mathsf{t}}$ and $Q$  if (and only if) it is satisfied in $\mathsf{N}$ independent points, namely
\begin{itemize}
\item for $\lambda=\xi_j$, $j=1,\ldots\mathsf{N}$:
\ \
$ \bar{\mathsf{t}}(\xi_j)\, \theta(\xi_j-\mu+t_{0,\mathbf{0}})\, Q(\xi_j)
   = \beta^{-1} e^{-i\mathsf{y}\xi_j} \mathsc{a}_{\mathsf{x},\mathsf{y}}(\xi_j)\, \theta(\xi_j-\mu)\, Q(\xi_j-\eta)$,
\item for $\lambda=\xi_j-\eta$, $j=1,\ldots\mathsf{N}$:
\ \ 
$ \bar{\mathsf{t}}(\xi_j-\eta)\, \theta(\xi_j-\mu)\, Q(\xi_j-\eta)
   = \beta\, e^{i\mathsf{y}\xi_j} \mathsc{d}(\xi_j-\eta)\, \theta(\xi_j-\mu+t_{0,\mathbf{0}})\, Q(\xi_j)$,
\end{itemize}
provided that
\begin{equation}\label{indep1}
  \sum_{k=1}^\mathsf{N}\xi_k+t_{0,\mathbf{0}}-\alpha_Q \notin \Gamma.
\end{equation}
Since $\bar{\mathsf{t}}(\lambda)\in\Sigma_{\overline{\mathcal{T}}}$ satisfies \eqref{syst-t}, the above system is therefore equivalent to the following system of $\mathsf{N}$ equations:
\begin{equation}\label{sys-N}
   \bar{\mathsf{t}}(\xi_j)\, \theta(\xi_j-\mu+t_{0,\mathbf{0}})\, Q(\xi_j)
   = \beta^{-1} e^{-i\mathsf{y}\xi_j} \mathsc{a}_{\mathsf{x},\mathsf{y}}(\xi_j)\, \theta(\xi_j-\mu)\, Q(\xi_j-\eta),
   \qquad j=1,\ldots,\mathsf{N}.
\end{equation}
In general, an elliptic polynomial of order $\mathsf{N}$ and of norm $\alpha_Q$ is completely characterized by its values at $\mathsf{N}$ independent points. Hence it can be written in the following form:
\begin{equation}
   Q(\lambda)= \sum_{k=1}^\mathsf{N} \frac{\theta(\lambda-\xi_k+\sum_\ell \xi_\ell-\alpha_Q)}{\theta(\sum_\ell \xi_\ell-\alpha_Q)}\,
   \prod_{\substack{\ell=1\\ \ell\not= k}}^\mathsf{N} \frac{\theta(\lambda-\xi_\ell)}{\theta(\xi_k-\xi_\ell)}\,
   Q(\xi_k),
\end{equation}
provided $ \sum_\ell \xi_\ell-\alpha_Q\notin \Gamma$.
Hence, the system \eqref{sys-N} is in fact a system of $\mathsf{N}$ homogeneous linear equations in the $\mathsf{N}$ unknowns $Q(\xi_n)$, $n\in\{1,\ldots,\mathsf{N}\}$, which can be written as:
\begin{equation}\label{system-N}
   \sum_{k=1}^{\mathsf{N}} \big[ C_{\bar{\mathsf{t}}}(\beta,\alpha_Q) \big]_{jk}\, Q(\xi_k) =0,
   \qquad j=1,\ldots,\mathsf{N},
\end{equation} 
where $C_{\bar{\mathsf{t}}}(\beta,\alpha_Q)$ is the $\mathsf{N}\times\mathsf{N}$ matrix of elements
\begin{equation}\label{C-N}
  \big[ C_{\bar{\mathsf{t}}}(\beta,\alpha_Q) \big]_{ab}
  = \delta_{ab}\, \frac{ \beta\, e^{i\mathsf{y}\xi_a}\,\bar{\mathsf{t}}(\xi_a)}{\mathsc{a}_{\mathsf{x},\mathsf{y}}(\xi_a)}
     \frac{\theta(\xi_a-\mu+t_{0,\mathbf{0}})}{\theta(\xi_a-\mu)}
     -\frac{\theta(\xi_a-\xi_b-\eta+\sum_\ell \xi_\ell-\alpha_Q)}{\theta(\sum_\ell \xi_\ell-\alpha_Q)}
     \prod_{\substack{\ell=1\\ \ell\not= b}}^\mathsf{N} \frac{\theta(\xi_a-\xi_\ell-\eta)}{\theta(\xi_b-\xi_\ell)}.
\end{equation}
Note that, using Frobenius determinant formula \eqref{Frob-det} and the formula for the determinant of the sum of two matrices, one can express the determinant of the matrix \eqref{C-N} in the form
\begin{multline}\label{det-C-N}
 \det_\mathsf{N} \big[ C_{\bar{\mathsf{t}}}(\beta,\alpha_Q) \big]
  =  \sum_{n=0}^\mathsf{N} (-1)^n\beta^{\mathsf{N}-n}\frac{\theta(\sum_\ell \xi_\ell-\alpha_Q-n\eta)}{\theta(\sum_\ell \xi_\ell-\alpha_Q)} \\
  \times
     \sum_{\substack{P\subset\{1,\ldots,\mathsf{N}\}\\ \#P=n}}
     \prod_{a\notin P}\left\{  e^{i\mathsf{y}\xi_a}\, \frac{\bar{\mathsf{t}}(\xi_a)}{\mathsc{a}_{\mathsf{x},\mathsf{y}}(\xi_a)}
     \frac{\theta(\xi_a-\mu+t_{0,\mathbf{0}})}{\theta(\xi_a-\mu)} \prod_{b\in P}\frac{\theta(\xi_a-\xi_b+\eta)}{\theta(\xi_a-\xi_b)} \right\}.
\end{multline}
where the second summation in \eqref{det-C-N} runs over all subsets $P$ of the set $\{1,\ldots, \mathsf{N}\}$ with cardinality $n$.

The system \eqref{system-N} admits some non-zero solution $\big(Q(\xi_1),\ldots,Q(\xi_\mathsf{N})\big)$ if and only if the determinant of the matrix \eqref{C-N} is zero.
At $\beta=0$, this determinant simplifies into
\begin{equation}
   \det_\mathsf{N} \big[ C_{\bar{\mathsf{t}}}(0,\alpha_Q) \big]
   =(-1)^\mathsf{N}\,\frac{\theta(\alpha_Q-\sum_\ell\xi_\ell+\mathsf{N}\eta)}{\theta(\alpha_Q-\sum_\ell\xi_\ell)},
\end{equation}
so that 
\begin{align}
   &\det_\mathsf{N} \big[ C_{\bar{\mathsf{t}}}(0,\alpha_Q) \big]=0\ \Leftrightarrow\ \exists (k_1,k_2)\in\mathbb{Z}^2,\ \alpha_Q=\sum_\ell\xi_\ell-\mathsf{N}\eta+\pi k_1+\pi\omega k_2,
    \\
   &\frac{\partial  \det_\mathsf{N} \big[ C_{\bar{\mathsf{t}}}(\beta,\alpha_Q) \big]}{\partial \alpha_Q}\bigg|_{\substack{\alpha_Q=\sum_\ell\xi_\ell-\mathsf{N}\eta+\pi k_1+\pi\omega k_2\\ \beta=0}}\not= 0.
\end{align}
Hence we can apply the implicit function theorem for holomorphic functions:
$\forall (k_1,k_2)\in\mathbb{Z}^2$, there exist some open vicinities $U_{(k_1,k_2)}$ and $V_{(k_1,k_2)}$ of $0$ and of $\alpha^{(0)}_{(k_1,k_2)}\equiv \sum_\ell\xi_\ell-\mathsf{N}\eta+\pi k_1+\pi\omega k_2$ respectively, and there exists a unique holomorphic function $\alpha_{(k_1,k_2)}:U_{(k_1,k_2)}\to V_{(k_1,k_2)}$ with $\alpha_{(k_1,k_2)}(0)=\alpha^{(0)}_{(k_1,k_2)}$ such that 
\begin{equation}
   \Big\{ (\beta,\alpha_Q)\in U_{(k_1,k_2)}\times V_{(k_1,k_2)} \mid \det_\mathsf{N} \big[ C_{\bar{\mathsf{t}}}(\beta,\alpha_Q) \big]=0\Big\}
   = \Big\{ (\beta, \alpha_{(k_1,k_2)}(\beta) ) \mid \beta\in U_{(k_1,k_2)} \Big\}.
\end{equation}
Hence the system 
\begin{equation}\label{sys-U}
   \sum_{k=1}^{\mathsf{N}} \big[ C_{\bar{\mathsf{t}}}\big(\beta,\alpha_{(k_1,k_2)}(\beta)\big) \big]_{jk}\, q_k =0,
   \qquad j=1,\ldots,\mathsf{N},
\end{equation} 
admits, for all $\beta\in U_{(k_1,k_2)}$, a non-zero solution $(q_1,\ldots,q_\mathsf{N})$. Due to the form of \eqref{sys-U}, one can choose this solution such that all $q_j\equiv q_j(\beta)$ are continuous function of $\beta$ in $U_{(k_1,k_2)}$.

Note that, at $\beta=0$, we have $q_j(0)\not=0$ for all $j\in\{1,\ldots,\mathsf{N}\}$ (it is clear from the system \eqref{sys-N} that all the roots of the solution $Q(\lambda)$ are, up to $\Gamma$-periodicity, at the points $\xi_j-\eta$, $1\le j \le \mathsf{N}$, when $\beta=0$). Hence it is always possible to choose $U_{(k_1,k_2)}$ such that $q_j(\beta)\not=0$ for all  $j\in\{1,\ldots,\mathsf{N}\}$ and for all  $\beta$ in $U_{(k_1,k_2)}$.
Moreover, since $\alpha_{(k_1,k_2)}(0)-\sum_\ell\xi_\ell\notin\Gamma$ and $\alpha_{(k_1,k_2)}(0)-\sum_\ell\xi_\ell-t_{0,\mathbf{0}}\notin\Gamma$, and since the function $\alpha_{(k_1,k_2)}(\beta)$ is holomorphic, one can also choose $U_{(k_1,k_2)}$ such that $\alpha_{(k_1,k_2)}(\beta)-\sum_\ell\xi_\ell\notin\Gamma$ and that $\alpha_{(k_1,k_2)}(\beta)-\sum_\ell\xi_\ell-t_{0,\mathbf{0}}\notin\Gamma$ for any $\beta\in U_{(k_1,k_2)}$.
Hence we have shown  {\it \ref{cond2}}.

Let us now notice that, up to a similarity transformation, the matrix $C_{\bar{\mathsf{t}}}(\beta,\alpha_Q+\pi\omega)$ is proportional to the matrix $C_{\bar{\mathsf{t}}}(\beta e^{2i\eta},\alpha_Q)$. Hence the system \eqref{system-N} admits a non-zero solution for $(\beta,\alpha_Q)$ if and only if it is the case for the system corresponding to $(\beta e^{2i n \eta},\alpha_Q+n\pi\omega)$ ($\forall n\in\mathbb{Z}$).
 
Let  $\beta\in\mathbb{C}\setminus\{0\}$, and let us suppose moreover that  $\eta\notin\mathbb{R}$. Hence, there exists $n\in\mathbb{Z}$ such that $\beta\, e^{2ik\eta}\in U_{(0,0)}$, so that the system associated with $\big(\beta e^{2i n \eta},\alpha_{(0,0)}(\beta e^{2i n \eta})\big)$ admits a non-zero solution such that each unknown (i.e. each $Q(\xi_j)$ solution to this system) is itself non-zero.
It follows from the previous remark that the system for $\big(\beta,\alpha_{(0,0)}(\beta e^{2i n \eta})-n\pi\omega\big)$ admits also a non-zero solution which is such that $Q(\xi_j)\not=0$ for each $j\in\{1,\ldots,\mathsf{N}\}$.
Hence we have shown   {\it \ref{cond3}}.
\qed

The inhomogeneous functional equation \eqref{inhom}-\eqref{def-f-Q}-\eqref{F-Q} for $\mathsf{M}=\mathsf{N}$ is not the only functional equation that can be considered in this framework. For instance, it is also possible to completely characterize the SOV spectrum in terms of a functional equation of the same type (still using \eqref{def-f-Q} to define the gauge function $f(\lambda)$ in terms of some arbitrary parameter $\mu$) but for an elliptic polynomial $Q(\lambda)$ of degree $\mathsf{M}=\mathsf{N}+1$ and arbitrary norm $\alpha_Q$. Of course the inhomogeneous term has to be adapted accordingly and it appears slightly more complicated in this case. 
In that way, increasing the degree of the elliptic polynomial $Q(\lambda)$ corresponds to increasing the number of free parameters in the inhomogeneous equation: apart from the gauge parameter $\beta$, we have one free parameter (the parameter $\mu$) for the degree $\mathsf{N}$,
and two free parameters ($\mu$ and $\alpha_Q$) for the degree $\mathsf{N}+1$.
Hence, the choice of the degree $\mathsf{N}-1$ for $Q(\lambda)$ seems to be the minimal possible if one considers equations of the form \eqref{inhom}-\eqref{def-f-Q}: in that case we have still  $\mathsf{N}$ unknown parameters which are the parameter $\mu$ appearing in the definition \eqref{def-f-Q} of $f(\lambda)$, as well as the $\mathsf{N}-1$ roots of $Q(\lambda)$, and there does not remain any free parameter (except $\beta$), cf. footnote~\ref{foot-N-1}.

It is interesting to remark that, if we have a complete description of the transfer matrix spectrum in terms of the elliptic polynomial solutions $Q(\lambda)$ of degree $\mathsf{M}$ of some inhomogeneous functional equation of the form \eqref{inhom} with associated function $f(\lambda)$, it is possible to rewrite the transfer matrix eigenvectors in a generalized Bethe form, in terms of the roots of the corresponding elliptic polynomial $Q(\lambda)$. More precisely, defining the states,
\begin{align}
& |\Omega _{\mathsf{M},f}^{(\kappa)}\rangle 
   =\sum_{\mathbf{h}\in \{0,1\}^{\mathsf{N}}}
     \prod_{a=1}^{\mathsf{N}}\left( 
     \frac{e^{i\mathsf{y}\eta}\, \mathsc{a}_{\mathsf{x,y}}(\xi _{a})}{\kappa \,\mathsc{d}(\xi _{a}-\eta )}\,
     f(\xi_a)\right) ^{\!h_{a}}
     \det_{\mathsf{N}}\big[\Theta ^{(0,\mathbf{h})}\big]\,
     |\mathbf{h},-\mathsf{M}\rangle ,  \label{refT-r-Q} \\
& \langle  \Omega_{\mathsf{M},f}^{(\kappa)}|
   =\sum_{\mathbf{h}\in \{0,1\}^{\mathsf{N}}}\prod_{a=1}^{\mathsf{N}}\Big( 
     \kappa\, e^{i\mathsf{y}\eta }\, f(\xi_a)\Big) ^{h_{a}}
     \det_{\mathsf{N}}\big[\Theta ^{(0,\mathbf{h})}\big]\,
     \langle \mathsf{M},\mathbf{h}|,  \label{refT-l-Q}
\end{align}
we have the following result:

\begin{corollary}\label{Eigen-SOV-Bethe}
Let $\bar{\mathsf{t}}(\lambda )$ be an eigenvalue of the $\kappa $-twisted
antiperiodic transfer matrix (i.e. $\bar{\mathsf{t}}(\lambda )\in \Sigma _{\overline{\mathcal{T}}}$), and let
\begin{equation}
    Q(\lambda)=\prod_{j=1}^\mathsf{M}\theta(\lambda-\lambda_j)
\end{equation}
be such that $\bar{\mathsf{t}}(\lambda )$ and $Q(\lambda)$ satisfy the functional equation \eqref{inhom} for some function $f(\lambda)$.
Then the $\overline{\mathcal{T}}(\lambda)$- left and right eigenstates with eigenvalue $\bar{\mathsf{t}}(\lambda )$ can be represented as
\begin{equation}\label{eigen-Bethe}
|\Psi_{\bar{\mathsf{t}}}^{( \kappa ) }\rangle
    =\prod_{a=1}^{\mathsf{M}} \Big[ e^{i\mathsf{y}\tau }\, \theta(\tau ) ^{-1}\, \mathcal{D}(\lambda _{a})\big] 
       |\Omega _{\mathsf{M},f}^{(\kappa)}\rangle ,
       \qquad
\langle \Psi_{\bar{\mathsf{t}}}^{( \kappa ) }|
    =\langle \Omega _{\mathsf{M},f}^{(\kappa)}|
      \prod_{a=1}^{\mathsf{M}}
      \big[ \mathcal{D}(\lambda _{a})\, \theta(\tau )^{-1}\,e^{i\mathsf{y}\tau }\big] ,
\end{equation}
where the order of the operators in each bracket $[ \ldots ] $ has to
be kept as it appears.
\end{corollary}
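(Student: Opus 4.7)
The plan is to verify directly that the Bethe-form states in the right-hand side of \eqref{eigen-Bethe} are proportional to the SOV eigenstates \eqref{eigenT-r}-\eqref{eigenT-l}, by explicit computation of the action of the operator product on the reference states \eqref{refT-r-Q}-\eqref{refT-l-Q}. I will focus on the right eigenstate; the left one is treated analogously.

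First, using \eqref{R-eigen-1}-\eqref{EigenValue-D} together with $\tau\,|\mathbf{h},r\rangle = t_{r,\mathbf{h}}\,|\mathbf{h},r\rangle$, I would show that
\begin{equation*}
  e^{i\mathsf{y}\tau}\,\theta(\tau)^{-1}\,\mathcal{D}(\lambda)\,|\mathbf{h},r\rangle
  = Y_{r+1}(\mathbf{h})\,\prod_{n=1}^{\mathsf{N}}\theta(\lambda-\xi_n^{(h_n)})\,|\mathbf{h},r+1\rangle,
\end{equation*}
where $Y_{r}(\mathbf{h})\equiv e^{i\mathsf{y}(t_{r,\mathbf{h}}-\eta(s_{\mathbf{h}}-s_{\mathbf{1}})/2)}/\theta(t_{r,\mathbf{1}})$ is independent of $\lambda$. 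Since the $\lambda$-dependence lives entirely in the scalar theta-factor, this implies that the combined operators $e^{i\mathsf{y}\tau}\theta(\tau)^{-1}\mathcal{D}(\lambda_a)$ mutually commute on the SOV basis, so that the ordered product in \eqref{eigen-Bethe} does not depend on the ordering of the Bethe roots $\lambda_1,\dots,\lambda_\mathsf{M}$.

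Second, I would iterate this one-step action, starting from $|\mathbf{h},-\mathsf{M}\rangle$ and applying the $\mathsf{M}$ shifts $r\mapsto r+1$. Using the commutativity just established, the result is
\begin{equation*}
  \prod_{a=1}^{\mathsf{M}}\!\big[e^{i\mathsf{y}\tau}\theta(\tau)^{-1}\mathcal{D}(\lambda_a)\big]\,
  |\mathbf{h},-\mathsf{M}\rangle
  = \Bigg(\prod_{r=1-\mathsf{M}}^{0}\!\! Y_r(\mathbf{h})\Bigg)\,
  \prod_{n=1}^{\mathsf{N}} Q(\xi_n^{(h_n)})\,\,|\mathbf{h},0\rangle,
\end{equation*}
where $Q(\lambda)=\prod_{j=1}^{\mathsf{M}}\theta(\lambda-\lambda_j)$ naturally emerges from the $\lambda_a$-dependent theta-factors. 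Summing against the coefficients in $|\Omega_{\mathsf{M},f}^{(\kappa)}\rangle$ \eqref{refT-r-Q} and using $t_{r,\mathbf{h}} = -\eta\,s_{\mathbf{h}}/2+\mathsf{x}\pi/2+\mathsf{y}\pi\omega/2+r\eta$ to separate the $\mathbf{h}$-dependent part of $\prod_r Y_r(\mathbf{h})$ from the $\mathbf{h}$-independent part, I would collect the coefficient of $|\mathbf{h},0\rangle$ and rewrite it in the factorized form $\prod_a[\cdots]^{h_a}\,\mathsf{q}_a^{(h_a)}\det_{\mathsf{N}}[\Theta^{(0,\mathbf{h})}]$ of \eqref{eigenT-r}, with the assignment $\mathsf{q}_{\bar{\mathsf{t}},a}^{(h_a)} = f(\xi_a)^{h_a}\,Q(\xi_a^{(h_a)})$ (up to an overall constant).

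Third, the defining ratio condition \eqref{t-Q-relation} for $\mathsf{q}_{\bar{\mathsf{t}},a}^{(h_a)}$ must be checked, and this is where the inhomogeneous $T$-$Q$ equation \eqref{inhom} enters. Evaluating \eqref{inhom} at $\lambda=\xi_a$, both the term proportional to $\mathsc{d}(\lambda)$ and the inhomogeneous term $\mathsc{a}(\lambda)\mathsc{d}(\lambda)F(\lambda)$ vanish because $\mathsc{d}(\xi_a)=0$, leaving
\begin{equation*}
  \bar{\mathsf{t}}(\xi_a)\,Q(\xi_a) = f(\xi_a)\,\mathsc{a}_{\mathsf{x},\mathsf{y}}(\xi_a)\,Q(\xi_a-\eta),
\end{equation*}
which is precisely the ratio $\mathsf{q}_{\bar{\mathsf{t}},a}^{(1)}/\mathsf{q}_{\bar{\mathsf{t}},a}^{(0)}=\bar{\mathsf{t}}(\xi_a)/\mathsc{a}_{\mathsf{x},\mathsf{y}}(\xi_a)$ required by \eqref{t-Q-relation}. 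This confirms that the Bethe-form state coincides (up to overall normalization, and possibly after absorbing a global $\mathbf{h}$-independent phase into the twist parameter $\kappa$) with the SOV eigenstate \eqref{eigenT-r} of $\overline{\mathcal{T}}^{(\kappa)}(\lambda)$ with eigenvalue $\bar{\mathsf{t}}(\lambda)$. The proof for the left eigenstate is essentially identical, based on \eqref{L-eigen-1} and the left reference state \eqref{refT-l-Q}.

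The main obstacle I anticipate is the bookkeeping in the second step: one must carefully track the accumulated $r$-dependent prefactors $Y_r(\mathbf{h})$ as the index climbs from $-\mathsf{M}$ to $0$, and verify that the $\mathbf{h}$-dependent pieces — the exponentials arising from $e^{i\mathsf{y}t_{r,\mathbf{h}}}$ and the theta-ratio $\theta(t_{r,\mathbf{h}})/\theta(t_{r,\mathbf{1}})$ — combine exactly with the $f(\xi_a)^{h_a}$ factor explicitly inserted in $|\Omega_{\mathsf{M},f}^{(\kappa)}\rangle$ so that the remainder is purely $\mathbf{h}$-independent and absorbable into the overall eigenstate normalization. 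This is a routine but delicate $\theta$-function bookkeeping rather than a conceptual difficulty.
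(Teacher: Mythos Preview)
The paper does not provide an explicit proof of this corollary; it is stated as an immediate consequence of the SOV eigenstate formulas \eqref{eigenT-r}--\eqref{eigenT-l} together with the diagonal action \eqref{R-eigen-1}--\eqref{L-eigen-1} of $\mathcal{D}(\lambda)$ on the SOV basis. Your approach---computing the action of the dressed operators $e^{i\mathsf{y}\tau}\theta(\tau)^{-1}\mathcal{D}(\lambda_a)$ on the reference state and matching coefficients with \eqref{eigenT-r}---is precisely the natural verification that the paper leaves implicit, and the key observation that \eqref{inhom} specialized at $\lambda=\xi_a$ reproduces the ratio \eqref{t-Q-relation} (because both the $\mathsc{d}$-term and the inhomogeneous term vanish there) is exactly right.

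One caution on the bookkeeping you flag: the accumulated prefactor $\prod_{r} Y_r(\mathbf{h})$ contains an $\mathbf{h}$-dependent piece $\prod_a e^{2iM\mathsf{y}\eta h_a}$ (coming from $e^{i\mathsf{y}t_{r,\mathbf{h}}}\cdot e^{-i\mathsf{y}\eta(s_\mathbf{h}-s_\mathbf{1})/2}$ via $s_\mathbf{h}=\mathsf{N}-2\sum_k h_k$). Such a factor cannot be absorbed into an overall \emph{$\mathbf{h}$-independent} normalization; absorbing it into $\kappa$ would change the twist and hence the operator of which the state is an eigenvector. So when you carry out the detailed check, make sure this phase either cancels against something you have not yet tracked or is accounted for by the precise normalization of the reference state---do not sweep it into $\kappa$.
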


In other words, if $f(\lambda)$ is fixed, which is the case for the complete characterization of the transfer spectrum that we have obtained in Theorem~\ref{th-SOV-Baxter} in terms of elliptic polynomials solutions of \eqref{inhom}-\eqref{def-f-Q} for $\mathsf{M}=\mathsf{N}$, then \eqref{refT-r-Q} and \eqref{refT-l-Q} have to be understood as some fixed pseudo-vacuum states. The corresponding eigenstates are then obtained by multiple action, on these pseudo-vacuum states, of the (slightly dressed) operator $\mathcal{D}(\lambda)$ evaluated at the roots of the Bethe equations, i.e. in a form very similar to what happens in the context of  ABA.

Let us finally insist on the fact  that the possibility to write such an ABA-type  representation for the SOV transfer matrix eigenstates is very general for models solved by SOV, as soon as we have some characterization of the transfer matrix spectrum in terms of (generalized) polynomial solutions to some (homogeneous or inhomogeneous) $T$-$Q$ functional equation. It is not restricted to reformulations of the SOV spectrum by inhomogeneous Baxter equations; indeed, we can construct similar reformulations, using multiple action of some slightly different operator, in the framework of the characterization in terms of the solutions of the homogeneous $T$-$Q$ equation as obtained in Section~\ref{sec-diag-t}. It is not restricted to the model under consideration; indeed, a similar rewriting is possible for all the integrable models so far solved in the SOV framework \cite{NicT10,Nic12,GroN12,FalN14,FalKN14,Nic13,Nic13b,NicT15} as soon as we have a characterization of the SOV spectrum by a finite system of (generalized) Bethe equations.  



\providecommand{\bysame}{\leavevmode\hbox to3em{\hrulefill}\thinspace}
\providecommand{\MR}{\relax\ifhmode\unskip\space\fi MR }
\providecommand{\MRhref}[2]{%
  \href{http://www.ams.org/mathscinet-getitem?mr=#1}{#2}
}
\providecommand{\href}[2]{#2}

\end{document}